\newcommand\NOTYETSOUMISSIONLICS[1]{}

\newcommand\PREUVEPASENANNEXE{}

\newcommand\CORPSPREUVEDE[1]{}
\newcommand\ANNEXEPREUVEDE[2]{
	\subsection{Proof of #1}
	#2
}
\providecommand\PREUVEPASENANNEXE[1]{#1}
\PREUVEPASENANNEXE{
	\renewcommand\CORPSPREUVEDE[1]{#1}
	\renewcommand\ANNEXEPREUVEDE[2]{}
}


\newcommand\SANSCOMMENTAIRE[1]{}
\newcommand\ANONYME[1]{#1}
\NOTYETSOUMISSIONLICS{
\renewcommand\SANSCOMMENTAIRE[1]{#1}
\renewcommand\ANONYME[1]{#1}
}


\documentclass[conference,10pt]{IEEEtran}
\newcommand\NOLICSSTYLE[1]{#1}

\usepackage{hyperref}

\usepackage{mathrsfs}  

\usepackage{color}

\usepackage{todonotes}


\newcommand{\olivier}[2][]{\SANSCOMMENTAIRE{\todo[inline,color=blue!40,caption={2do}, #1]{\begin{minipage}{\textwidth-4pt}
Olivier:			#2\end{minipage}}}}
\newcommand{\olivierpasimportant}[2][]{\SANSCOMMENTAIRE{\todo[inline,color=blue!10,caption={2do}, #1]{\begin{minipage}{\textwidth-4pt}
\tiny Olivier:			#2\end{minipage}}}}
\newcommand{\olivierplan}[2][]{
}
\newcommand{\olivierplusimportant}[2][]{}

\newcommand{\olivierpourmanon}[2][]{\SANSCOMMENTAIRE{\todo[inline,color=blue!40,caption={2do}, #1]{\begin{minipage}{\textwidth-4pt}
Olivier $\to$ Manon:			#2\end{minipage}}}}
\newcommand{\olivierpourmanoninfo}[2][]{\SANSCOMMENTAIRE{\todo[inline,color=red!20,caption={2do}, #1]{\begin{minipage}{\textwidth-4pt}
\small Olivier $\to$ Manon:   *POUR INFO*	 		#2\end{minipage}}}}


\newcommand\mylabelbt[1]{\cite[#1]{brattka2008tutorial}}
\newcommand\mylabelwei[1]{\cite[#1]{Wei00}}

\NOLICSSTYLE{
%
\ifCLASSINFOpdf
\else
\fi
%
%
}

%
\usepackage{amsmath}

\usepackage{amsthm}
\usepackage{amsfonts}
\usepackage{amssymb}
\newtheorem{theorem}{Theorem}
\newtheorem{lemma}[theorem]{Lemma}
\newtheorem{proposition}[theorem]{Proposition}
\newtheorem{corollary}[theorem]{Corollary}
\newtheorem{definition}[theorem]{Definition}
\newtheorem{remark}[theorem]{Remark}
\newtheorem{theoremd}[theorem]{Theorem}
\newtheorem{lemmad}[theorem]{Lemma}
\newtheorem{propositiond}[theorem]{Proposition}
\newtheorem{corollaryd}[theorem]{Corollary}

\newcommand\COULEUR[1]{#1}
\SANSCOMMENTAIRE{\renewcommand\COULEUR[1]{\textcolor{blue}{#1}}}

\newtheorem{theoremaprouver}[theorem]{\COULEUR{Conjecture/Theorem to be proved}}
\newtheorem{corollarymanon}[theorem]{\COULEUR{Corollary}}
\newenvironment{proofmanon}{\begin{proof}}{\end{proof}}

 \newcommand\R{\mathbb{R}}
\newcommand\Q{\mathbb{Q}}
\newcommand\N{\mathbb{N}}

\hyphenation{op-tical net-works semi-conduc-tor}


\newcommand\vectorl[1]{{\mathbf#1}}
\newcommand\tu[1]{\vectorl{#1}}

\newcommand\vx{\vectorl{x}}

\newcommand\vy{\vectorl{y}}
\newcommand\vz{\vectorl{z}}

\newcommand\mG{\mathcal{G}}
\newcommand\mP{\mathcal{P}}
\newcommand\mL{\mathcal{L}}
\newcommand\mF{\mathcal{F}}

\newcommand{\cp}[1]{\operatorname{#1}}
\newcommand{\myop}[1]{\operatorname{#1}}

\newcommand\DTIME{\cp{DTIME}}
\newcommand{\Ptime}{\cp{PTIME}}      
\newcommand\PTIME{\Ptime}

\newcommand{\FP}{\ensuremath{\operatorname{FP}}}
 \newcommand{\PSPACE}{\ensuremath{\operatorname{PSPACE}}}
  \newcommand{\NLOGSPACE}{\ensuremath{\operatorname{NLOGSPACE}}}
    \newcommand{\coNLOGSPACE}{\ensuremath{\operatorname{coNLOGSPACE}}}

  \newcommand{\NSPACE}{\ensuremath{\operatorname{NSPACE}}}
  \newcommand{\sSPACE}{\ensuremath{\operatorname{SPACE}}}

\newcommand\motnouv[1]{\emph{#1}}


\newcommand\bd{\operatorname{bd}}
\newcommand\cB{\overline{B}} 
\newcommand\length{\ell}
\newcommand\distance[2]{d(#1,#2)}
\newcommand\closure[1]{cls(#1)}

\newcommand\CONFIGMT[3]{(#1,#2,#3)}
\newcommand\CONFIGURATIONSMT{\mathcal{C}_{\M}}

\newcommand\M{\mathcal{M}}
\renewcommand\P{\mathcal{P}}
\newcommand\TIME[2]{TIME(#1,#2)}
\newcommand\EXACTTIME[2]{L^{[#2]}(#1)}
\newcommand\SPACE[2]{SPACE(#1,#2)}
\newcommand\EXACTSPACE[2]{L_{[#2]}(#1)}
\newcommand\LENGTH[2]{\mL(#1,#2)}
\newcommand\EXACTLENGTH[2]{L^{(#2)}(#1)}
\newcommand\PERTURBEDS[2]{L_{#2}(#1)}
\newcommand\PERTURBEDSPACE[2]{L_{\{#2\}}(#1)}
\newcommand\PERTURBEDT[2]{L^{#2}(#1)}
\newcommand\PERTURBEDTIME[2]{L^{\{#2\}}(#1)}

\newcommand\REACH[1]{R^{#1}}
\newcommand\REACHONE[1]{R^{1,#1}}
\newcommand\REACHEXACTIME[2]{R^{#1,[#2]}}
\newcommand\REACHPERTURBEDL[2]{R^{#1,#2}}
\newcommand\REACHPERTURBEDLENGTH[2]{R^{#1,(#2)}}
\newcommand\REACHEXACTSPACE[2]{{R^{#1}}_{[#2]}}
\newcommand\REACHPERTURBEDSPACE[2]{{R^{#1}}_{\{#2\}}}

\newcommand\REACHG{\REACH{G}}
\newcommand\REACHGm{\REACH{G_{m}}}
\newcommand\REACHP{\REACH{\mP}}
\newcommand\REACHPe{R^{\mP*}}
\newcommand\REACHPeomega{R^{\mP*}_{\omega}}
\newcommand\REACHPomega{\REACHP_{\omega}}
\newcommand\REACHPepsilon{\REACHP_{\varepsilon}}
\newcommand\REACHPn{\REACHP_{n}}

\newcommand{\LOOP}{\ensuremath{\operatorname{NOPATH}}}
\newcommand{\PATH}{\ensuremath{\operatorname{PATH}}}

\newcommand\logsize{\operatorname{log-size}}
\newcommand{\PATHONE}{\ensuremath{\operatorname{EDGE}}}
\newcommand{\VERTEX}{\mathcal{V}} 



\begin{document}
%
\title{Bare Demo of IEEEtran.cls\\ for IEEE Conferences}

\title{Measuring robustness of dynamical systems. Relating time and space to length and precision}

\ANONYME{
\author{\IEEEauthorblockN{Manon Blanc}\IEEEauthorblockA{Institut Polytechnique de Paris \\ 
Ecole Polytechnique, \\
LIX Laboratory \\
91128 Palaiseau Cedex\\
 France \\
 manon.blanc@lix.polytechnique.fr
}
 \and  \IEEEauthorblockN{Olivier Bournez}
\IEEEauthorblockA{Institut Polytechnique de Paris \\ 
Ecole Polytechnique, \\
LIX Laboratory \\
91128 Palaiseau Cedex\\
 France \\
 oilvier.bournez@lix.polytechnique.fr
}}
}

\maketitle

\begin{abstract}
	Verification of discrete time or continuous time dynamical systems over the reals is known to be undecidable. It is however known that undecidability does not hold  for various classes of systems when considering \emph{robust} systems: if robustness is defined as the fact that reachability relation is stable under infinitesimal perturbation, then their reachability relation is decidable. 
	In other words, undecidability implies sensitivity under infinitesimal perturbation, a property usually not expected in systems considered ``in practice'', and hence can be seen (somehow informally) as an artifact of the theory, that always assumes exactness. 
	In a similar vein, it is known that, while undecidability holds for logical formulas over the reals, it does not hold when considering $\delta$-undecidability:  one must determine whether a property is true, or $\delta$-far from being true.
	
	We first extend the previous statements to a theory for general (discrete time, continuous-time, and even hybrid) dynamical systems, and we relate the two approaches. We also relate robustness to some geometric properties of reachability relation. 
	
	But mainly, when a system is robust, it then makes sense to quantify at which level of perturbation. We prove that assuming robustness to polynomial perturbations on precision leads to reachability verifiable in complexity class PSPACE, and even to a characterization of this complexity class. We prove that assuming  robustness to polynomial perturbations on time or length of trajectories leads to similar statements, but with PTIME.
	
	Actually, these results can also be read in another way, in terms of computational power of analog models.   It has been recently unexpectedly shown that the length of a solution of a polynomial ordinary differential equation corresponds to a time of computation: PTIME corresponds to solutions of polynomial differential equations of polynomial length. Can we do something similar for PSPACE? How should we measure space robustly for dynamical systems? Our results argue that the answer is given by precision: space corresponds to the involved precision.
	
%
%
%
%
%
%
  \end{abstract}

\olivierplusimportant{Utiliser macros 
\begin{itemize}
\item Nouvelle macro pour noté config de MT
  \CONFIGMT{q}{left}{right}
\item pour classes de complexité, la et partout
\item $\varphi$ pour la fonction d'abstraction partout: celle qui à $\vx$ associe son sommet dans le graphe
\item macro $\VERTEX$ pour un sommet de ce graphe.
\item $\vx$ plutot que $x$, et meme chose pour toute variable, si c'est  un vecteur.
\item recursively enumerable se dit computably enumerable en plus moderne. Changer partout.
\item éviter normes $\|$ et $\|$, plutôt utiliser $\distance{\cdot}{\cdot}$.
\item $\cdot$ plutot que . pour argument.
\item eviter $\subset$, plutôt $\subseteq$ pour enlever ambiguité stric, pas strict, car certains pensent que $\subset$ c'st strict d'expérience.
\item dans le meme genre, je propose de dire ''computable`` plutot que ``recursif``, car c'est plus moderne: mais ca peut se discuter.
\item bar vs overline
\end{itemize}
}

\renewcommand\subset{\subseteq}
%
%
\olivierplusimportant{Il y avait conflit de notations.
Utiliser (j'ai essayé)
\begin{itemize}
\item $t$ pour temps, indice dans trajectoire.
\item $n$ pour précision de $\REACHPepsilon$
\item $m$ pour préicision du graphe $G_{\delta}$
\end{itemize}
}

\section{Introduction}

Several research communities have studied the relations between dynamical systems and computation.  This includes studies of the unpredictability and undecidability in dynamical systems \cite{Moo91,KCG94}, questions related to the hardness of verification for discrete, continuous and the so-called hybrid systems \cite{AMP95}, questions related to the computational power of models of recurrent neural networks \cite{SS95}, or control theory questions \cite {BloTsi00,bournez2021survey}.

\paragraph*{Motivation related to verification} 
while several undecidability results were stated for hybrid systems (such as Linear Hybrid Automata  \cite{HKPV95} or Piecewise Constant Derivative systems \cite{AMP95}), and while it was observed that some practical tools were  ``working well'' (terminating) in practice, a folklore (sometimes informal) conjecture appeared in the literature of verification or in several talks, stating that this undecidability is due to non-stability, non-robustness, sensitivity to the initial values of the systems, and that it never occurs in “real” systems. There were several attempts to formalize and to prove (or to disprove) this conjecture, including \cite{Franzle99,HR00}. Refer to \cite{asarin01perturbed,bournez2021survey} for some discussions.

In this article, we are inspired by the approach of \cite{asarin01perturbed}, where several models for discrete, continuous time, and hybrid dynamical systems are considered: Turing machines, Piecewise affine maps, Linear hybrid automata and Piecewise Constant Derivative systems.  To each of these models is associated a notion of perturbed dynamics by a small $\epsilon$ (with respect  to a suitable metrics), and a perturbed reachability relation is defined as the intersection of all reachability relations obtained by $\epsilon$-perturbations, for all possible values of $\epsilon$. The authors show that for all these models, the perturbed reachability relation is co-computably enumerable (co-c.e., $\Pi_1^0$), and that any co-c.e. relation can be defined as the perturbed reachability relation of such models. A corollary  is, that if we define robustness as stability of the reachability relation under infinitesimal perturbations, then robust systems have a decidable reachability relation, and hence a decidable verification.

In a similar vein, \cite{gao2012delta} observed that some logics such as real arithmetic are decidable, but even the set of $\Sigma_{1}$ sentences in a language extending real arithmetic with the sine function is already undecidable, and so deciding satisfaction of such ``simple'' formulas is impossible in the general case. However, such theoretical negative result only refers to the problem of deciding logic formula symbolically and precisely. If a relaxed notion of correctness is considered, then verification becomes algorithmically solvable: one asks to answer true  when a given formula $\phi$ is true, and to return false when it is $\delta$-robustly wrong, namely $\phi$ is wrong, but actually some $\delta$-strengthening of $\phi$ is  false. In other words, undecidability happens only for  exact questions, whose decision is not stable by infinitesimal perturbations, while robust-satisfaction is decidable.

These approaches are  relating robustness to the decidability of their verification, or robust satisfaction to a decidability of satisfaction, i.e. a \emph{computability} question. Our purpose in the current article is to extend all this, and in particular \cite{asarin01perturbed}, to more general classes of dynamical systems, but also, mainly to discuss also \emph{complexity} issues: when is the verification of a system in $\PTIME$ or in $\PSPACE$?  How does it relate to the robustness of the system with respect to perturbations? 

Basically, when a system is robust, it makes sense to quantify at which level it is: which level of perturbation $\epsilon$ is allowed? We basically prove that assuming this perturbation is polynomial in the data leads to a characterization of $\PSPACE$. This provides ways to guarantee complexity of the verification of a system. Actually, we also show that this idea can be used to define various complexity classes, by playing with various concepts of robustness: we introduce the concept of time and length perturbation (basically trajectories cannot be too long), and we prove that this  leads to $\PTIME$, and hence leads to systems where verification becomes polynomial.


\paragraph*{Motivation related to models of computation}
another field of research is related to understanding how analog (continuous-time) models of computation compare to more classical discrete models such as Turing machines. In particular, a famous historical celebrated mathematical model is the 
General Purpose Analog Computer  (GPAC) model, introduced by Shanon in 1941 in \cite{Sha41}, as a
model for the Differential Analyzer \cite{Bus31}, a mechanical programmable
machine, on which he worked
as an operator.   It is known that functions computed by a GPAC correspond
to a (component of the) solution of a system of Ordinary
Differential Equations (ODE) with polynomial right-hand side (also called \motnouv{pODE}) \cite{Sha41}, \cite{GC03}.

Relating the computational power of this model to classical models such as the Turing machines, at the complexity level, is not a trivial task. In short, contrarily to discrete models of computation, continuous time models of computation (not only the GPAC, but many others) may exhibit the so-called ``Zeno phenomena'', where time can be arbitrarily contracted in a continuous system, thus allowing an arbitrary speed-up of the computation. This forbids to take the naive approach of using the time variable of the ODE as a well-defined measure of time complexity: see \cite{CIEChapter2007,bournez2021survey} for discussions. 

A celebrated recent breakthrough has been obtained in \cite{TheseAmauryenglish,JournalACM2017vantardise}, where it has been proved that a key idea to solve this issue is, when using pODEs (in principle this idea can also be used for others continuous models of computation) to compute a given function $f$,
the cost should be measured as a function of the length of the solution curve of the polynomial initial value computing the function $f$.
As ODEs is a kind of universal language in experimental sciences, this  breakthrough led to solve several open problems in various other contexts. 
This includes the proof of the existence of a universal (in the sense of Rubel)  ODE \cite{LMCS2018}; proof of the strong Turing-completeness of biochemical reactions \cite{CMSB17}, or  statements about the completeness of reachability problems (e.g. $\PTIME$-completeness of bounded reachability) for ODEs \cite{JournalACM2017vantardise}.

While it is now known that time should  be measured by length, the question on how space should be measured is unclear. How should be measure the ``memory'' used by some 
ordinary differential equation? 
Can we provide a characterization of $\PSPACE$ using ODEs similar to the characterization of $\PTIME$ obtained in \cite{JournalACM2017vantardise}?  If time corresponds to length, what space corresponds to? We give some arguments to state that  basically, over a compact domain space  corresponds to precision, while it corresponds to the log of the size of some graph for systems over more general domains.

\paragraph*{Related work and main results}
%

the current article lies its foundations on some extensions of \cite{asarin01perturbed}. We reformulate some of their statements, and extend some of their results. In particular, we allow more general discrete time and continuous time dynamical systems, while \cite{asarin01perturbed} was restricting to Piecewise Affine Maps (PAM), and hybrid systems, both of them assumed to work on a bounded domain. 
Our statements are established in a wide generality by considering only computability of the domain or the dynamics. We know about generalizations that have also been obtained in \cite{BGHRobust10}, but focusing on dynamical systems as language recognizers, and mainly focusing somehow on generalizations of Theorem \ref{thbizarre} (\cite[Theorem 4]{asarin01perturbed}), while we discuss here reachability in a more natural and general way.

We also connect our approach to  $\delta$-decidability introduced in \cite{gao2012delta}. Observe that the logics considered in the latter are not sufficiently expressive to cover our results (or \cite{asarin01perturbed}).

Another clear difference is also that we talk about complexity issues (space, time), while all these references are only talking about computability issues. Furthermore, we consider not only space perturbations, but also time and length perturbations, and we prove that many of the results, even for computability, can also be extended to this framework.


%
In Section \ref{sec:graphs}, we recall some known facts about the hardness of reachability (accessibility) problems in graphs, discussing in particular their encoding.

Before going to general discrete time dynamical systems, and later continuous and hybrid time dynamical systems, we consider a specific case of discrete time dynamical systems, namely Turing machines (TMs). We think this helps to understand the coming statements.  We basically introduce in  in Section \ref{sec:mt} the concept of (space) perturbed Turing machine, following \cite{asarin01perturbed}, and recall some of the results established in this paper. The original contributions in this section are the extension of this framework to complexity. We prove that considering a polynomial robustness corresponds precisely to polynomial space, i.e. $\PSPACE$ (Theorem \ref{mainpspaceone}). We consider here space perturbations, but we prove later (in Section  \ref{sec:otherperturb}) that a natural concept of time perturbation  could also be considered in order to lead to  a characterisation of polynomial time, i.e. $\PTIME$ (Theorem \ref{mainptimeone}), using similar ideas.

Section \ref{sec:simulation}, recalls how several undecidability results for dynamical systems are established using the embedding of a class of dynamical systems (e.g. TMs) into another. We believe this helps to understand the related sources of non-robustness.

Section \ref{sec:discretetime} is considering general discrete time dynamical systems (over $\R^{d}$). We use the concept of perturbation and robustness from \cite{asarin01perturbed}. But unlike the latter article, mainly restricting to Piecewise Affine Maps (PAMs), we consider general systems, discussing the importance of the domain, and of the preservation of rational numbers as in PAMs. A main technical result is then Theorem \ref{thcinq}: this is an extension of \cite[Theorem 5]{asarin01perturbed} to the general settings, established using similar ideas, but not exactly\footnote{As the proof there turns out to be ambiguous about  whether ``can make a transition'' means in one step vs in one or many steps, and we are not sure to fully follow it in its exact form.}. We also extend the framework to the case of systems that would not preserve rationals. This leads to an extension of main statement of \cite{asarin01perturbed}: robustness implies decidability of reachability  (Corollary \ref{coro6} and Corollary \ref{coro6p}).  Then we prove that this theory can actually be related to the approach of $\delta$-decidability proposed in \cite{gao2012delta}: ($\epsilon$-)robust means true or ($\epsilon$-)far from being true (Theorem \ref{prop28}). This is an also important original contribution relating the two approaches, giving arguments in the spirit of \cite{gao2012delta}. 
Our key point is then to be able to extend all at the complexity level. One key result is that polynomially robust implies reachability in  $\PSPACE$ (Theorem \ref{thmaindeux}). This is even a characterization of $\PSPACE$ (Theorem \ref{recipir}).
Concretely, Section \ref{sec:discretetime} is split into two subsections. In the first part, we only need to talk about rational numbers, and classical computability. In the second part, as functions may take non-rational values, we use the framework of Computable Analysis (CA). The first part helps for the intuition of the second. But actually, as a computable function in CA is continuous, the first is not a consequence of the second, even if several methods and reasoning are shared.

Section \ref{Sec:analysecalculable} provides a nice and original view. Using various results established in the community of CA, we prove that robustness can also be seen as having a reachability relation that can be drawn. We use there the fact that the latter relates to computability for subsets of $\R^{p}$. 

In Section \ref{sec:continuoustime}, we state that similar statements hold also for continuous time, and even hybrid dynamical systems. 

Section \ref{sec:otherperturb} is proving that other perturbations could be considered and would lead to talk about time complexity $\PTIME$ instead of $\PSPACE$. Section \ref{sec:jebrode} discusses how the current work relates to the known characterizations of complexity classes with continuous time dynamical systems.  
Notice that some characterizations of $\PSPACE$ have been obtained recently in \cite{TheseRiccardo,BGDPRiccardo2022} (discussed in this section), but at the price of technical conditions, that we believe not to provide a fully satisfying answer to previous questions.



\paragraph*{Preliminaries}
a * means that a (possibly more extended) proof can be found in appendix.  
We write $\distance{\cdot}{\cdot}$ for the norm-sup distance. A (open) (respectively close) \motnouv{rational ball} is a subset of the form $B(\vx,\delta)=\{\vy: \distance{\vx}{\vy} < \delta\}$ (resp. $\cB(\vx,\delta)=\{\vy: \distance{\vx}{\vy} \le \delta\}$) for some rational $\vx$ and $\delta$. We could in principle use the Euclidean distance, but this distance has the advantage that its balls correspond directly to rounding at some precision. A set of the form $\prod_{i=1}^{d} [a_{i},b_{i}]$, for rational $(a_{i})$, $(b_{i})$, will be called a \motnouv{rational closed box}.  An open rational box is obtained by considering open intervals in previous definition. 
The least closed set containing $X$ is denoted by $\closure{X}$. We write $\length(\cdot)$ for the function that measures the binary size of its argument.

\section{About reachability in graphs}
\label{sec:graphs}


Consider a directed graph $G=(V,\rightarrow)$, with $\rightarrow \subseteq V^{2}$. It is said \motnouv{deterministic} if any node has at most one outgoing edge.

\begin{lemma}[Reachability for graphs] \label{lemmapath}
Consider the following decision problem $\PATH(G,u,v)$: 
given a directed graph $G=(V,\rightarrow)$, and some vertices $u,v \in V$, determine whether there is some path between $u$ and $v$ in $G$, denoted by $u \stackrel{*}{\rightarrow} v$.

Then $\PATH(G,u,v) \in \NLOGSPACE$. 
\end{lemma}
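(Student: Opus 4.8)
The plan is to prove the standard result that $\PATH$ (often called $\mathrm{STCON}$ or $\mathrm{REACHABILITY}$) lies in $\NLOGSPACE$ by exhibiting a nondeterministic Turing machine that uses only logarithmic working space. The idea is that a nondeterministic machine need not store an entire path (which could have length up to $|V|$ and thus require linear space); instead it can guess the path one vertex at a time, keeping in memory only the current vertex, the target, and a step counter.

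**The construction.** I would have the machine operate as follows. It maintains three items on its work tape: the identity of the "current" vertex (initialized to $u$), the identity of the target $v$, and a counter $c$ recording how many edges have been traversed so far (initialized to $0$). At each stage, if the current vertex equals $v$, the machine accepts. Otherwise, it nondeterministically guesses a vertex $w \in V$, verifies by inspecting the input encoding of $G$ that there is an edge from the current vertex to $w$ (rejecting this branch if not), sets the current vertex to $w$, and increments $c$. If at any point $c$ exceeds $|V|-1$ without having reached $v$, the machine rejects on that branch. Thus every accepting branch corresponds to an actual walk from $u$ to $v$, and conversely, if $u \stackrel{*}{\rightarrow} v$, there is a simple path of length at most $|V|-1$, which the machine can guess, so some branch accepts.

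**The space analysis**, which is the crux, goes as follows. A vertex is named by an index in $\{1,\dots,|V|\}$, so storing the current vertex, the target, and a candidate $w$ each costs $O(\log |V|)$ bits, which is logarithmic in $\length$ of the input. The counter $c$ ranges over $\{0,\dots,|V|-1\}$ and hence also fits in $O(\log |V|)$ bits. Checking whether a guessed edge belongs to $\rightarrow$ is done by a read-only scan of the input, which uses no additional work space beyond a pointer. Since the total work-tape usage is $O(\log |V|) = O(\log \length(G,u,v))$, the machine witnesses $\PATH \in \NLOGSPACE$.

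**The main obstacle** — such as it is — is really just bookkeeping about the input encoding: one must be sure that vertices are encoded so that a single vertex index occupies logarithmic space and that edge-membership can be tested within the input (read-only) tape. Given the encoding conventions recalled in this section, this is routine; the genuinely essential observation is simply that guessing the path vertex-by-vertex, rather than storing it wholesale, keeps the work space logarithmic, and that bounding the counter by $|V|$ suffices because a reachable vertex is reachable by a simple path.
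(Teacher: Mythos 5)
Your proof is correct and is exactly the standard argument the paper invokes: the paper's proof is a one-line citation to the nondeterministic guess-the-next-vertex construction (Sipser, Example 8.19), which you have simply written out in full, including the counter bounded by $|V|$ and the logarithmic space accounting.
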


\begin{proof}
Basically, a non-deterministic TM can guess the intermediate nodes of the path: see e.g. \cite[Example 8.19]{Sip97}. 
\end{proof}

\begin{lemmad}[Immerman–Szelepcsényi's theorem \cite{immerman1988nondeterministic,Szelepcsenyi}] \label{lemma2}
$\NLOGSPACE=\coNLOGSPACE$.
\end{lemmad}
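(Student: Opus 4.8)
The plan is to prove both inclusions, and since $\NLOGSPACE\subseteq\coNLOGSPACE$ would follow by the same argument applied to complements, it suffices to establish $\coNLOGSPACE\subseteq\NLOGSPACE$. For this it is enough to place a single $\coNLOGSPACE$-complete problem into $\NLOGSPACE$: I would take the complement of the reachability problem $\PATH$ of Lemma \ref{lemmapath}, which, together with its standard $\NLOGSPACE$-completeness under logspace reductions, is $\coNLOGSPACE$-complete. Thus the whole theorem reduces to showing that \emph{non-reachability}, i.e.\ given $G$ and vertices $u,v$ accepting exactly when there is \emph{no} path $u\stackrel{*}{\rightarrow}v$, is decidable by a nondeterministic logspace machine.

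The engine is the \emph{inductive counting} technique. Write $n=|V|$ and let $c_i$ be the number of vertices reachable from $u$ by a path of length at most $i$, so that $c_0=1$ and $c_{n-1}$ equals the total number $R$ of vertices reachable from $u$. The first key step is a subroutine that, given $c_i$ stored in $O(\log n)$ bits, nondeterministically computes $c_{i+1}$, with the property that at least one branch outputs the correct value and every branch either outputs the correct value or aborts. To decide, for a candidate vertex $w$, whether $w$ is reachable in $\le i+1$ steps, I scan all vertices $x$; for each $x$ the machine nondeterministically either skips it or attempts to \emph{certify} that $x$ is reachable in $\le i$ steps by guessing, one step at a time, a path of length $\le i$ from $u$ (storing only the current vertex and a step counter). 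A counter records how many $x$ have been successfully certified. If at the end of the scan this counter is not exactly $c_i$, the branch aborts; this guard is the crucial point, since only a branch that has rediscovered \emph{all} $c_i$ reachable vertices is trusted, and such a branch correctly knows, for every $w$, whether $w$ is reachable in $\le i+1$ steps: $w$ qualifies iff $w$ itself was certified, or some certified $x$ satisfies $x\to w$. Negative facts thereby become certifiable precisely because the branch has seen all reachable vertices. Summing these decisions over all $w$ yields $c_{i+1}$.

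Iterating this subroutine from $i=0$ to $n-1$ computes $R=c_{n-1}$ in logspace, reusing the same $O(\log n)$ work tape at each stage: one keeps only $i$, the current $w$ and $x$, the two counts $c_i,c_{i+1}$, the certification counter, and the current vertex and step of a single guessed path, none of which exceeds $O(\log n)$ bits. The final step uses $R$ to decide non-reachability of $v$: scan all vertices $x\neq v$, nondeterministically certify reachability of some of them as above, and accept iff a branch certifies exactly $R$ vertices, all of them different from $v$. Since there are exactly $R$ reachable vertices, such a branch exists iff $v$ is not reachable, which gives the desired $\NLOGSPACE$ algorithm for the complement of $\PATH$ and hence the theorem.

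The main obstacle, as opposed to the (routine) logspace bookkeeping, is the correctness of the nondeterministic census: I expect the delicate argument to be that the ``abort unless the certified count equals $c_i$'' guard makes the whole computation behave as though it had oracle access to the exact reachability predicate at level $i$, so that no branch can report an inflated count while at least one branch reports the exact value $c_{i+1}$.
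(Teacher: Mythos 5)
Your proof is correct: it is the standard inductive-counting argument applied to the complement of $\PATH$, which is exactly the proof the paper delegates to its citation of \cite[Theorem 8.22]{Sip97}. The reduction to a single $\coNLOGSPACE$-complete problem, the census subroutine with the ``abort unless exactly $c_i$ vertices are certified'' guard, and the logspace bookkeeping are all as in that reference, so there is nothing to add.
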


\begin{proof}
See e.g. \cite[Theorem 8.22]{Sip97}.
\end{proof}

Actually, as we will see, we mainly focus  its complement:

\begin{corollary}\label{coropath}
Consider the following decision problem 
\noindent $\LOOP(G,u,v)$: given a directed graph $G=(V,\rightarrow)$, and some vertices $u,v \in V$, determine whether there is  no path between $u$ and $v$ in $G$.

Then $\LOOP(G,u,v) \in \NLOGSPACE$. 
\end{corollary}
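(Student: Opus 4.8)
The plan is to observe that $\LOOP$ is literally the complement of $\PATH$, and then invoke the two results already in hand. Concretely, for a fixed instance $(G,u,v)$, the statement ``there is no path between $u$ and $v$'' is by definition the negation of ``there is a path between $u$ and $v$'', i.e. the negation of $\PATH(G,u,v)$. Hence, as a language, $\LOOP$ is the complement of $\PATH$ (the two problems share the same instance encoding, so complementation is over the same ambient set of well-formed inputs).

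From Lemma \ref{lemmapath} we know $\PATH(G,u,v) \in \NLOGSPACE$. Taking complements gives $\LOOP(G,u,v) \in \coNLOGSPACE$ immediately, since the complement of a language in $\NLOGSPACE$ lies in $\coNLOGSPACE$ by definition of the latter class. Now I would apply the Immerman--Szelepcs\'enyi theorem (Lemma \ref{lemma2}), which asserts $\NLOGSPACE = \coNLOGSPACE$; combining this equality with the membership just obtained yields $\LOOP(G,u,v) \in \NLOGSPACE$, as required.

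There is essentially no obstacle here: the corollary is a direct consequence of Lemma \ref{lemmapath} and the closure-under-complement equality of Lemma \ref{lemma2}, and the only point worth stating carefully is that $\LOOP$ and $\PATH$ are genuinely complementary decision problems on the same class of inputs, so that passing to the complement is legitimate. One could alternatively give a self-contained nondeterministic logspace algorithm for $\LOOP$ that computes and counts the set of vertices reachable from $u$ (the inductive counting argument underlying Immerman--Szelepcs\'enyi), but since Lemma \ref{lemma2} is already available, the shortest route is simply to quote it.
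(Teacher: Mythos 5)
Your proposal is correct and follows exactly the route the paper intends: the corollary is placed immediately after Lemma \ref{lemmapath} and the Immerman--Szelepcs\'enyi theorem (Lemma \ref{lemma2}) precisely so that $\LOOP$, being the complement of $\PATH$, inherits membership in $\NLOGSPACE$ from $\coNLOGSPACE = \NLOGSPACE$. Nothing is missing.
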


\begin{theoremd}[Savitch's theorem] \label{savitch} For any function $f: \N \to \N$ with $f(n) \ge \log n$, we have 
$\NSPACE(f(n)) \subseteq \sSPACE(f^{2}(n))$.
\end{theoremd}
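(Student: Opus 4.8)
The plan is to simulate a nondeterministic machine running in space $f(n)$ by a deterministic machine that searches its configuration graph, using the classical divide-and-conquer (midpoint) recursion. First I would fix an $\NSPACE(f(n))$ machine $\M$ and, on an input of length $n$, consider the graph $G$ whose vertices are the configurations $\CONFIGMT{q}{u}{v}$ of $\M$ using at most $f(n)$ cells, with an edge whenever one configuration yields another in a single step. Since a configuration is described by the state, the head position and the bounded tape contents, the number of such configurations is $N = 2^{O(f(n))}$; in particular each configuration has size $O(f(n))$, and $\M$ accepts iff there is a path in $G$ from the initial configuration to an accepting one. After a standard normalisation (erasing the tape and returning the head to the origin upon acceptance) we may assume there is a single target configuration $c_{\mathrm{acc}}$, so acceptance reduces to $\PATH(G, c_{\mathrm{init}}, c_{\mathrm{acc}})$.

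Then I would implement the recursive predicate $\mathrm{REACH}(c_1,c_2,t)$, returning true iff $c_2$ is reachable from $c_1$ in at most $t$ steps. The base case $t \le 1$ is decided directly by checking whether $c_1 = c_2$ or whether $c_1$ yields $c_2$ in one step of $\M$. For $t > 1$, I would loop over every candidate midpoint configuration $c_m$ of size $O(f(n))$ and return true as soon as both $\mathrm{REACH}(c_1, c_m, \lceil t/2\rceil)$ and $\mathrm{REACH}(c_m, c_2, \lfloor t/2\rfloor)$ succeed. Calling $\mathrm{REACH}(c_{\mathrm{init}}, c_{\mathrm{acc}}, N)$ with $N = 2^{O(f(n))}$ is then correct, because any reachable configuration is reachable by a simple path, hence within $N$ steps.

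The heart of the argument is the space analysis, and this is the step I expect to be the main obstacle to carry out cleanly. The key point is that the deterministic simulation stores only the current branch of the recursion, i.e. a stack of activation records, never the exponentially large recursion tree: when a recursive call returns, its frame (its midpoint $c_m$, its counter $t$, and its loop position) is freed and reused. Since $t$ is halved at each level, the recursion depth is $\log N = O(f(n))$, and each of the $O(f(n))$ stacked frames stores objects of size $O(f(n))$, giving total space $O(f(n)) \cdot O(f(n)) = O(f(n)^2)$. Making this reuse argument rigorous (as opposed to merely bounding the running time, which would be exponential) is the genuinely delicate part.

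A remaining technical wrinkle is that the simulator must be able to generate and iterate over configurations of width exactly $f(n)$; this is immediate when $f$ is space constructible, and otherwise one runs the whole procedure for successive budgets $s = 1, 2, \dots$ (taking $t = 2^{O(s)}$ at budget $s$), which affects the space bound only by a constant factor. Combining these observations yields a deterministic machine deciding the language of $\M$ in space $O(f(n)^2)$, and hence $\NSPACE(f(n)) \subseteq \sSPACE(f^2(n))$ as claimed.
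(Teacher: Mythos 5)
Your proposal is the classical midpoint-recursion proof of Savitch's theorem, which is exactly the argument the paper relies on by citing \cite[Theorem 8.5]{Sip97}; the configuration-graph setup, the $\mathrm{REACH}(c_1,c_2,t)$ recursion, the depth-times-frame-size space accounting, and the budget-iteration fix for non-constructible $f$ all match the standard treatment. The proof is correct and takes essentially the same route as the paper's (cited) proof.
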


\begin{proof}
See e.g. \cite[Theorem 8.5]{Sip97}.
\end{proof}

\begin{corollary}
$\PATH(G,u,v) \in \sSPACE(log^{2}(n))$ and 

\noindent $\LOOP(G,u,v) \in \sSPACE(log^{2}(n))$.
\end{corollary}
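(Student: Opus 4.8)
The plan is to derive both memberships as immediate consequences of Savitch's theorem (Theorem~\ref{savitch}), combined with the two preceding nondeterministic-logspace bounds. The key observation is that $\NLOGSPACE$ is by definition $\NSPACE(f)$ for the function $f(n) = \log n$, which satisfies the hypothesis $f(n) \ge \log n$ required by Savitch's theorem.

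First I would instantiate Theorem~\ref{savitch} with $f(n) = \log n$, obtaining $\NLOGSPACE = \NSPACE(\log n) \subseteq \sSPACE(\log^{2}(n))$. Then I would invoke Lemma~\ref{lemmapath}, which gives $\PATH(G,u,v) \in \NLOGSPACE$, and Corollary~\ref{coropath}, which gives $\LOOP(G,u,v) \in \NLOGSPACE$; chaining each of these with the Savitch inclusion yields both claims at once. Note that the $\LOOP$ half needs no separate work beyond its already-established membership in $\NLOGSPACE$, since that membership is exactly what Corollary~\ref{coropath} provides (itself resting on the Immerman–Szelepcsényi theorem, Lemma~\ref{lemma2}).

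There is essentially no obstacle here; the construction is a direct composition of results stated above. The only point deserving the slightest care is that the parameter $n$ against which space is measured should be the same binary size of the encoded input $(G,u,v)$ for the logspace machines of Lemma~\ref{lemmapath} and Corollary~\ref{coropath} as for the deterministic Savitch simulation, so that the inclusions genuinely compose. This is a routine matter of fixing the graph encoding already discussed in Section~\ref{sec:graphs}.
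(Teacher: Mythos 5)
Your proposal is correct and is exactly the route the paper intends: the corollary follows immediately by combining Lemma~\ref{lemmapath} and Corollary~\ref{coropath} with Savitch's theorem instantiated at $f(n)=\log n$, as the paper's subsequent remark on the $\cp{CANYIELD}$ recursion confirms. Nothing further is needed.
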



\begin{remark}
Notice that detecting whether there is no path between $u$ and $v$ is basically equivalent to determine whether all paths starting from $u$ ``loop'', i.e. remain disjoint from $v$. 
The above statement is established using a more subtle method that a simple depth of width search of the graph, using the trick of the proof of Savitch's theorem, i.e. a recursive procedure (expressing reachability in less than $2^{t}$ steps, called $\cp{CANYIELD(C_{1},C_{2},t})$ in \cite{Sip97}) guaranteeing the above space complexity. 
\end{remark}

Our purpose is to talk about various discrete time dynamical systems (and latter even continuous time dynamical systems). 

\begin{definition}[Discrete Time Dynamical System]  A (general) \motnouv{discrete-time dynamical system} $\mP$ is given by a set $X$, called \motnouv{domain}, and some  (possibly partial) function  $ f$ from $X$ to $X$.
\end{definition}

A trajectory of $\mP$ is a sequence $(x_t)$ evolving according to $f$, i.e. such that ${x}_{t+1}= f\left({x}_t\right)$ for all $t$. We say that $x^{*}$ (or a set $X^{*}$) is reachable from $x$ if there is a trajectory with $x_{0}=x$ and $x_{t}=x^{*}$ (respectively $x_{t} \in X^{*}$) for some $t$.

 In other words, any discrete time dynamical system $\mP$ can be seen as a particular (deterministic) directed graph but where $V$ is not necessarily finite. 
 This graph corresponds to $V=X$, and $\rightarrow$ corresponds to the graph of function $f$.
If it remains finite, we can generalize some of the previous statements, but working over representations in order to make things feasible.

\begin{corollaryd}[Reachability for finite graphs] \label{corosuccint}
Let $s(n) \ge \log(n)$. 
Assume the vertices of $G$ can be encoded in binary using words of length $s(n)$. Assume the relation $\rightarrow$ is decidable using a space polynomial in $s(n)$  with this encoding. 

Then, given the encoding of $u \in V$ and of $v \in V$, we can decide whether there is a (respectively: no) path from $u$ to $v$,  in a space polynomial in $s(n)$.
\end{corollaryd}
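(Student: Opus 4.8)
The plan is to adapt the proof of Savitch's theorem (Theorem \ref{savitch}), working with the succinct binary encoding of vertices rather than with an explicit adjacency structure. The key observation is that the recursive reachability procedure only ever needs to store a bounded number of vertex encodings at a time, each of length $s(n)$, plus a logarithmic amount of bookkeeping. First I would define the predicate $\cp{CANYIELD}(x,y,t)$ meaning ``there is a path of length at most $2^{t}$ from $x$ to $y$ in $G$'', where $x,y$ are given by their length-$s(n)$ encodings. This is exactly the construction used in \cite{Sip97} for Savitch's theorem, except that here a ``configuration'' is a vertex encoding and the single-step relation is $\rightarrow$ rather than the Turing machine transition relation.

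The recursion is standard: $\cp{CANYIELD}(x,y,0)$ holds iff $x=y$ or $x \rightarrow y$, and $\cp{CANYIELD}(x,y,t+1)$ holds iff there exists some intermediate vertex $z$ (enumerated over all binary strings of length $s(n)$) such that both $\cp{CANYIELD}(x,z,t)$ and $\cp{CANYIELD}(z,y,t)$ hold. To decide $\PATH(G,u,v)$ it then suffices to evaluate $\cp{CANYIELD}(u,v,s(n)+c)$ for a suitable constant $c$: since $G$ has at most $2^{s(n)}$ vertices, any path that exists can be taken of length at most $2^{s(n)}$, so a recursion depth of $s(n)+O(1)$ captures all reachability. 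The complement $\LOOP(G,u,v)$ is handled by negating the final answer.

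Next I would bound the space. Each level of the recursion stores the current triple $(x,y,t)$ together with the vertex $z$ currently being tried; this costs $O(s(n))$ bits per level. The recursion depth is $s(n)+O(1)$, so the stack uses $O(s(n)^{2})$ bits of space. The only other cost is evaluating the base case, which requires deciding whether $x \rightarrow y$; by hypothesis this is decidable in space polynomial in $s(n)$, and this computation is reused (not accumulated) at each base-case call, so it contributes only a polynomial-in-$s(n)$ additive term. The total is therefore polynomial in $s(n)$, as required.

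The main obstacle, and the point that genuinely differs from the textbook Savitch argument, is that we must never write down the graph explicitly: the set of vertices is given only implicitly as the set of length-$s(n)$ binary strings, and edges are available only through the space-$\mathrm{poly}(s(n))$ decision procedure for $\rightarrow$. I would therefore be careful to verify that enumerating the intermediate vertices $z$ ranges precisely over all strings of length $s(n)$ (so that no vertex is missed and the machine stays within its space budget), and that each invocation of the $\rightarrow$-oracle can be run in place and its work space freed before returning, so that the polynomial cost of the edge test does not multiply by the recursion depth. Once this reuse of the edge-test space is justified, the space accounting reduces to the standard $O(s(n)^{2})$ bound of Savitch's theorem composed with a polynomial, which is still polynomial in $s(n)$.
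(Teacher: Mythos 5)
Your proposal is correct and follows exactly the route the paper intends: its own proof is a one-line appeal to ``the trick of Savitch's theorem\ldots{} working over representations of vertices,'' and your write-up simply fills in the details of that same $\cp{CANYIELD}$ recursion over length-$s(n)$ encodings, with the standard $O(s(n)^{2})$ stack bound plus a reusable $\mathrm{poly}(s(n))$ edge test. Nothing further is needed.
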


\begin{proof}
Use similar arguments and algorithms (in particular the trick of Savitch's theorem) as in previous corollary, but working over representations of vertices. 
\end{proof}

We  will still write (abusively) $\PATH$ and $\LOOP$  for these problems.  Notice that assuming that the vertices of $G$ can be encoded in binary using words of length $s(n)$ requires the graph $G$ to be finite, with less than $2^{s(n)}$ vertices. 

\begin{remark}
If write $\logsize(G)$ for the log of the number of vertices of a finite graph, we see that this provides some relevant complexity measure of the hardness of the space complexity of these problems (with above  assumptions on $G$).
\end{remark}

\section{Turing machines}
\label{sec:mt}

Let us recall the definition of a (bi-infinite tape) Turing machine (TM):  let $\Sigma$ be a finite alphabet, and let $B \not\in \Sigma$ be the blank symbol. A TM over $\Sigma$ is a tuple $\left(Q, q_{\text {init }}, F, R, \Gamma\right)$ where $Q$ is a finite set of control states, $q_{0} \in Q$ is the initial control state, $F \subseteq Q$ (respectively $R \subseteq Q$) is a set of accepting (respectively rejecting) states,  with $F \cap R =\emptyset$, and $\Gamma$ is a set of transitions of the form $(q, a) \rightarrow\left(q^{\prime}, b, \delta\right)$ where $q, q^{\prime} \in Q$, $a, b \in \Sigma \cup\{B\}$, and $\delta \in\{-1,0,1\}$. When the machine has accepted or rejected, decision is unchanged: when $q \in F$, then $q' \in F$, and when $q \in R$ then $q' \in R$.

A configuration $C$ of the machine is given by the current control state $q$, and the current content of the bi-infinite tape: ${\cdots  a_{-2} a_{-1}}{ a_{0} a_1 a_2 \cdots}$,  where the $a_i$'s are symbols in $\Sigma \cup\{B\}$: this means that the head of the machine is in front of symbol $a_0$. 
We write $\CONFIGURATIONSMT$ for the set of the configurations of a TM, and write such a configuration as the triple $\CONFIGMT{q}{\cdots  a_{-2} a_{-1}}{ a_{0} a_1 a_2 \cdots}$.
%
Given a transition $(q, a) \rightarrow\left(q^{\prime}, b, \delta\right)$ in $\Gamma$, if the control state is $q$ and the symbol pointed by the head of the machine is equal to $a$, then the machine can change its configuration $C$ to the configuration $C'$ in the following manner: the control state is now $q'$, the symbol pointed by the head is replaced by $b$ and then the head is moved to the left or to the right, or it stays at the same position according to whether $\delta$ is $-1,1$, or 0, respectively. We write $C \vdash C'$ when this holds, i.e. $C'$ is the one-step next configuration of the configuration $C$. Then $(\CONFIGURATIONSMT,\vdash)$ corresponds to a particular dynamical system. 

Word $w=a_1 \cdots a_n \in \Sigma^*$ is accepted by $\M$ if, starting from the initial configuration $C_{0}=C_{0}[w]=\CONFIGMT{q_{0}}{\cdots B B B}{ a_1 a_{2} \cdots a_n B B B \cdots}$ the machine eventually stops in an accepting control state: that is, if we write $\mF$ for the configurations where $q \in F$, iff $C_{0}  \stackrel{*}{\vdash} C^{*}$ for some $C^{*} \in \mF$. Let $L(\M)$ denote the set of such words, i.e., the computably enumerable (c.e) language semi-recognized by $\M$.
We say that $w$ is rejected by $\M$ if, starting from the configuration $C_{0}$  the machine $\M$ eventually stops in a rejecting state. $\M$ is said to always halt if for all $w$, either $w$ is accepted or $w$ is rejected.

\olivierplan{section{Space perturbation (for Turing machines)}}


The article \cite{asarin01perturbed} introduces the concept of space perturbed Turing machine: the idea is, given $n>0$, that the $n$-perturbed version of the machine $\M$ is unable to remain correct at distance more than $n$ from the head of the machine.
Namely,  the $n$-perturbed version $\M_{n}$  of the machine is defined exactly as $\M$ except that before any transition all the symbols at the distance $n$ or more from the head of the machine can be altered: 
given a configuration
$\CONFIGMT{q}{\cdots a_{-n-1} a_{-n} a_{-n+1} \cdots a_{-1}}{ a_{0 }a_1 \cdots a_{n-1} a_n a_{n+1} \cdots}$, 
$\M_{n}$ may replace any symbol to the left of $a_{-n}$ (starting from $\left.a_{-n-1}\right)$ and to the right of $a_n$ (starting from $a_{n+1}$) by any other symbols in $\Sigma \cup\{B\}$ before executing a transition of $\M$ at head position $a_{0}$. Hence $\M_{n}$ is nondeterministic. 
\olivier{Des figures? Là, ou ailleurs?}

A word $w$ is accepted by the $n$-perturbed version of $\M$ iff there exists a run of this machine which stops in an accepting state. Let $\PERTURBEDS{\M}{n}$  be the $n$-perturbed language of $\M$, i.e., the set of words in $\Sigma^*$ that are accepted by the  $n$-perturbed version of $\M$.
From definitions, if a word is accepted by $\M$, then it can also be recognized by all the $n$-perturbed versions of $\M$, for every $n>0$: perturbed machines have more behaviours. Moreover, if the $(n+1)$-perturbed version accepts a word $w$, the $n$-perturbed version will also accept it since $n$-perturbed machines have more behaviours than $(n+1)$-perturbed machines.


Let $L_\omega(\M)=\bigcap_n \PERTURBEDS{\M}{n}$: this 
consists of all the words that can be accepted by $\M$ when subject to arbitrarily ``small" perturbations. 
%
%
%
From definitions: 

\begin{lemma}
[{\cite[Lemma $2$]{asarin01perturbed}}] \ \\
\centerline{
$L(\M) \subseteq  L_\omega(\M)  \subseteq \cdots \subseteq\PERTURBEDS{\M}{2}  \subseteq \PERTURBEDS{\M}{1}.$}
\end{lemma}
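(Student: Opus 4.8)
The plan is to establish the chain by proving its three kinds of links separately, each reducing to a single monotonicity principle: as the perturbation level decreases, the set of available runs of the perturbed machine grows, so more words are accepted. Since acceptance is existential (a word is accepted as soon as \emph{one} run reaches an accepting state), a larger set of runs can only enlarge the accepted language.

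First I would fix the notion of a run of $\M_{n}$ precisely: a run is a sequence of configurations $C_{0} \vdash' C_{1} \vdash' \cdots$ where each step $\vdash'$ consists of first replacing some symbols at distance at least $n$ from the head by arbitrary symbols of $\Sigma \cup \{B\}$, and then applying one transition of $\M$. With this formalization, the key observation is that every alteration permitted to $\M_{n+1}$ (touching only cells at distance $\geq n+1$) is also permitted to $\M_{n}$ (which may touch any cell at distance $\geq n$), since distance $\geq n+1$ implies distance $\geq n$. Consequently, any run of $\M_{n+1}$ is verbatim a run of $\M_{n}$, so an accepting run of $\M_{n+1}$ on $w$ is an accepting run of $\M_{n}$ on $w$; this yields $\PERTURBEDS{\M}{n+1} \subseteq \PERTURBEDS{\M}{n}$.

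Next, for $L(\M) \subseteq \PERTURBEDS{\M}{n}$, I would note that the unperturbed machine $\M$ is the special case of $\M_{n}$ in which the alteration step always replaces every symbol by itself (the empty perturbation). Hence any accepting run of $\M$ on $w$ lifts to an accepting run of $\M_{n}$ on $w$, giving $L(\M) \subseteq \PERTURBEDS{\M}{n}$ for every $n > 0$. Taking the intersection over $n$ then gives $L(\M) \subseteq \bigcap_{n} \PERTURBEDS{\M}{n} = L_\omega(\M)$.

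Finally, the inclusions $L_\omega(\M) \subseteq \PERTURBEDS{\M}{n}$ for each $n$ — and in particular the displayed tail $L_\omega(\M) \subseteq \cdots \subseteq \PERTURBEDS{\M}{2} \subseteq \PERTURBEDS{\M}{1}$ — follow immediately from the definition $L_\omega(\M) = \bigcap_{n} \PERTURBEDS{\M}{n}$ as an intersection sitting below each of its members, combined with the decreasing chain proved above. Assembling the three parts produces the full inclusion chain. No step is genuinely hard here; the only point demanding care is the faithful bookkeeping in the formalization of a run, namely checking that a run of the more constrained machine $\M_{n+1}$ really is admissible for $\M_{n}$ step by step, which is exactly where the containment of permitted alterations is used.
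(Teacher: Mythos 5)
Your proof is correct and follows exactly the route the paper takes: the paper states the lemma "from definitions," relying on the same monotonicity observation that perturbed machines have more behaviours and that $n$-perturbed machines have more behaviours than $(n+1)$-perturbed ones, combined with $L_\omega(\M)=\bigcap_n \PERTURBEDS{\M}{n}$. Your version merely makes the run-containment bookkeeping explicit, which is a faithful elaboration rather than a different argument.
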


The  $\omega$-perturbed language of a TM is the complement of a computably enumerable language:

\begin{theorem}
[{Perturbed reachability is co-c.e. \cite[Theorem 3]{asarin01perturbed}}]\label{LspaceCoRec}
 $L_\omega(\M)$ is in the class $\Pi_1^0$.
\end{theorem}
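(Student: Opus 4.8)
The plan is to establish the theorem by showing that the complement $\overline{L_\omega(\M)}$ is computably enumerable. Since $L_\omega(\M) = \bigcap_n \PERTURBEDS{\M}{n}$, a word $w$ fails to lie in $L_\omega(\M)$ precisely when there exists some $n$ with $w \notin \PERTURBEDS{\M}{n}$ (the nestedness $\PERTURBEDS{\M}{n+1} \subseteq \PERTURBEDS{\M}{n}$ recalled just above even lets us take this $n$ as a threshold). Hence it suffices to prove that the predicate ``$w \notin \PERTURBEDS{\M}{n}$'' is decidable uniformly in $(w,n)$: a semi-decision procedure for $\overline{L_\omega(\M)}$ then dovetails over all pairs $(w,n)$, tests this predicate, and outputs $w$ whenever it holds. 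This gives $L_\omega(\M) \in \Pi_1^0$.

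The core step is to reduce membership in $\PERTURBEDS{\M}{n}$, for fixed $n$, to reachability in a finite graph. The key observation is that the perturbation renders everything outside a window of $2n+1$ cells centred on the head irrelevant: define a \emph{local configuration} to be a pair $(q,u)$ where $q \in Q$ and $u \in (\Sigma \cup \{B\})^{2n+1}$ records the tape contents at positions $-n,\dots,n$ relative to the head. There are only finitely many such pairs. I put an edge $(q,u) \to (q',u')$ whenever $\M$ has a transition applicable to the centre cell of $u$ yielding control state $q'$, and $u'$ is obtained by rewriting that centre cell, shifting the window according to $\delta$, and filling the newly entered boundary cell with an \emph{arbitrary} symbol of $\Sigma \cup \{B\}$. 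The nondeterministic choice of entering symbol is exactly what models the perturbation: a cell at distance $n$ or more from the head may have been altered before the transition, so when it enters the window it may hold any symbol, and in particular its genuine value when the perturbation chooses to leave it untouched. This yields a finite graph $G_n$ with a decidable edge relation, computable from $n$.

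I then argue that $w \in \PERTURBEDS{\M}{n}$ iff, in $G_n$, there is a path from the initial local configuration read off from $C_0[w]$ (control state $q_0$, window the cells $-n,\dots,n$ of $C_0[w]$) to an accepting local configuration, i.e. one whose control state lies in $F$. The forward direction holds because any run of the $n$-perturbed machine projects to a walk in $G_n$; the backward direction because any walk in $G_n$ is realisable by a run of $\M_n$, choosing the perturbations so that the boundary cells take the prescribed values. Membership in $\PERTURBEDS{\M}{n}$ is therefore decidable by finite graph reachability (as in Lemma~\ref{lemmapath}, indeed a bounded search suffices), which combined with the first paragraph proves the theorem.

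The main obstacle is not cleverness but the faithfulness of the reduction for inputs longer than the window: when $\length(w) > 2n+1$, the suffix of $w$ outside the initial window is already perturbable before the first transition, so in $G_n$ it must be treated as arbitrary once it enters, and one has to verify that this neither adds nor removes accepting runs, while keeping the ``distance $n$ or more'' off-by-one convention consistent between the perturbed machine and the window bookkeeping. One should also confirm that, since accepting and rejecting states are absorbing, reaching an accepting control state in $G_n$ genuinely corresponds to acceptance by the $n$-perturbed version of $\M$.
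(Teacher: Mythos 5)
Your proposal is correct and follows essentially the same route as the paper: the paper's proof also abstracts each configuration to the window of $2n+1$ cells around the head (the map $\varphi_n$), builds the same finite graph $G_n$ with nondeterministically chosen symbols entering the window, decides $\PERTURBEDS{\M}{n}$ by finite graph reachability, and concludes that the complement of $L_\omega(\M)$ is a uniform union of decidable sets. The long-input issue you flag is handled in the paper by writing $L_n(\M)$ as $\operatorname{Short} \cup \operatorname{Basis}\,\Sigma^*$, i.e.\ observing that only the length-$(n+1)$ prefix of the input matters, which is exactly the resolution you sketch.
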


\olivierplan{subsection{Mix de leur preuve et de la preuve de Manon}}

\begin{proof}
Given a bi-infinite configuration $C$ of $M$ of the form $\CONFIGMT{q}{\cdots a_{-n-1} a_{-n} \dots  a_{-1}}{a_{0} a_1  \dots a_n a_{n+1} \cdots}$, 
we define 
$
\varphi_{n}(C) = 
\CONFIGMT{q}{a_{-n}  \cdots a_{-1}}{a_0  a_1 \cdots  a_n}
$ made of  words 
of length $n$ and $n+1$.

For every $n \in \N$, we associate to the $n$-perturbed version $\M_{n}$ of TM $\M$ some  graph $G_{n}=(V_{n},\rightarrow_{n})$: the vertices, denoted  $(\VERTEX_{i})_{i}$, of this graph correspond  to the $|Q| \times|\Sigma+1|^{2 n+1}$ possible values of $\varphi_{n}(C)$ for a configuration $C$ of $\M$.   There is an edge between  $\VERTEX_i$ and $\VERTEX_j$ in $G_{n}$  iff  $\M_{n}$ can go from configuration $C$ to configuration $C'$ in one step, with $\varphi_{n}(C)=\VERTEX_i$ and $\varphi_{n}(C')= \VERTEX_j$. 

Determining whether  $\VERTEX_{i} \rightarrow \VERTEX_{j}$  holds is easy (and in particular polynomial space computable) by  considering that, when the head is moved to the left (resp. to the right) of $\VERTEX_{i}$ a symbol in $\Sigma \cup\{B\}$ is nondeterministically chosen and appended to the left (resp. right) of the configuration and the right-most (resp. left-most) one is lost (it belongs now to the perturbed area of the configuration and hence it can be replaced by any other symbol).

Let $\mF_{n} =\varphi_{n}(\mF)$ correspond to the accepting control states.
By construction, the  $n$-perturbed version $\M_{n}$ of $\M$ has an accepting run starting from a configuration $C$, iff $\mF_{n}$ is reachable from $\varphi(C)$, that is to say $\PATH (G_{n},\varphi(C),\mF_{n})$. By Corollary \ref{corosuccint}, this is decidable in a space polynomial  in $n$.

Let $Basis_{n}$  be the finite set of sequences $s_{n}  \in \Sigma^{n+1}$, such that $\mF_{n}$ is reachable from $C_{0}[s_{n}]$. Let $Short_{n}$ be the finite set of sequences $s_{k}  \in \Sigma^{k}$ with $k<n$, such that $\mF_{n}$ is reachable from $C_{0}[s_{k}B^{n-k}]$.  Then $L_n(\M)=\operatorname{Short} \cup \operatorname{Basis} \Sigma^*$.
Consequently,   $L_{n}(\M)$ is decidable in space polynomial  in $n$, and hence its complement also is. Thus, $L_\omega(\M)$ is c.e., as it is a (uniform in $n$) union of decidable sets. 
%
\end{proof}

\olivierplusimportant{subsection{Preuve de Manon}
Let $K \subseteq \R^3$ the domain of $f$, which is a compact. Thus, by Heine-Borel theorem, $K$ can be covered with a finite number of open sets.
We discretize $K$ into cubes $\mathcal{O}_i$ such that, for a required precision of 
$n$, all the real numbers in one cube have the same first $n$ decimals (the n 
digits before and after the head are the same in the current configuration). 
The $\mathcal{O}_i$ are open sets: for all $x\in \mathcal{O}_i$, for all $\epsilon > 0$ small, we 
have $x \pm \epsilon \in \mathcal{O}_i$. Then, we have $K = \cup_{i\in I} \mathcal{O}_i$ with $I$ a finite set.

In the case of Turing machines, we use the graph's formalism, where the vertices 
are the open sets, and there is an edge between $\mathcal{O}_i$ and $\mathcal{O}_j$ 
if there exists $x\in \mathcal{O}_i$ such that $\delta(x) \in \mathcal{O}_j$, with 
$\delta$ taking a real number written on the tape and returning the real on the 
tape after one step in the perturbed machine.  

\olivier{perturbed, or $n$-perturbed, or combien-perturbed machine? (ok cela peut attendre, pas si urgent)}


We propose an alternative proof:

\begin{proofmanon}
We use the discretization described above. If a word (or a discretization) is not 
in $L_\omega$, then for this discretization, a loop is formed, which means, once 
the machine is visiting again an open set (a vertex), then the corresponding graph 
will visit only vertices already visited, but none of them intersects a terminating 
one. 
Since there is a finite number of cubes, the loop can be detected by a Turing 
machine. Thus, $L_\omega(\M) \in \Pi_1^0$.
\end{proofmanon}
}


%


\olivierplan{Some text and results stolen from \cite{asarin01perturbed}, puis variante}

Since a set that is c.e. and co-c.e. is decidable,  robust languages (i.e. $L_\omega(\M)=L(\M)$) are necessarily decidable.

\begin{corollary}
[{Robust $\Rightarrow$ decidable \cite[Corollary 3)]{asarin01perturbed}}]
 If $L_\omega(\M)=L(\M)$ then $L(\M)$ is decidable.
\end{corollary}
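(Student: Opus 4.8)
The plan is to combine two complementary computability facts about the two languages involved, exploiting the equality hypothesis to collapse them. First I would recall that, by the very definition of $L(\M)$ as the language semi-recognized by $\M$, the set $L(\M)$ is computably enumerable (it is a $\Sigma_1^0$ set): an enumerator can dovetail the simulation of $\M$ on all input words and output each word as soon as $\M$ reaches a configuration in $\mF$ on it. So the c.e.-ness of $L(\M)$ is immediate and requires no perturbation argument at all.

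Next I would invoke Theorem \ref{LspaceCoRec}, which asserts that $L_\omega(\M)$ lies in $\Pi_1^0$, i.e. is co-computably enumerable (its complement is c.e., witnessed there by the uniform-in-$n$ union of the decidable sets $L_n(\M)$). Under the standing hypothesis $L_\omega(\M)=L(\M)$, this $\Pi_1^0$ description transfers verbatim to $L(\M)$. Hence $L(\M)$ is simultaneously c.e. (from the definition of $\M$) and co-c.e. (from Theorem \ref{LspaceCoRec} via the equality).

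The conclusion then follows from the classical characterization of decidable sets (Post's theorem): any set that is both c.e. and co-c.e. is decidable. Concretely, to decide whether $w\in L(\M)$, one runs in parallel the enumerator of $L(\M)$ and the enumerator of its complement supplied by the $\Pi_1^0$ witness; exactly one of the two eventually lists $w$, and which one does so decides membership. I expect no genuine obstacle here, since all the substantive work is already contained in Theorem \ref{LspaceCoRec}: the corollary is precisely the observation that the intersection of the c.e. and co-c.e. classes is the class of decidable sets, applied once the hypothesis identifies $L(\M)$ with the co-c.e. set $L_\omega(\M)$.
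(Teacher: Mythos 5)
Your argument is exactly the paper's: it notes just before the corollary that a set which is both c.e. and co-c.e. is decidable, with $L(\M)$ c.e. by definition and $L_\omega(\M)$ co-c.e. by Theorem \ref{LspaceCoRec}, so the hypothesis $L_\omega(\M)=L(\M)$ immediately yields decidability. Your write-up is correct and matches that reasoning step for step.
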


The converse holds if another requirement on $\M$ is added.
%
Indeed, since $L(\M) \subseteq L_\omega(\M)$, $L_\omega(\M)\neq L(\M)$ means that there exists some word $w$, rejected by $\M$, but accepted by  any $n$-perturbed version $\M_{n}$. This $w$ is not rejected by $\M$ in finite time, otherwise it would use finitely many cells of the tape, and with $n$ sufficiently big, $\M_{n}$ would still reject it. In other words, this $w$ must nor be accepted nor rejected by $\M$.

\begin{proposition}
[{Decidable $\Rightarrow$ robust \cite[Proposition 1]{asarin01perturbed}}]  \label{decidablerobuste}

Assume $\M$ always halts. Then $L(\M)$ is decidable and 
$L_\omega(\M)=L(\M)$.

\end{proposition}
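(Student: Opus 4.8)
The plan is to prove the two assertions in turn; the decidability claim is routine, and the equality is where the real work lies.

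For decidability of $L(\M)$: since $\M$ always halts, the deterministic computation started from $C_{0}[w]$ reaches, after finitely many steps, either an accepting or a rejecting control state. A Turing machine simulating $\M$ step by step therefore terminates and reads off the verdict, so $L(\M)$ is decidable.

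For the equality, the inclusion $L(\M)\subseteq L_\omega(\M)$ is already given by the chain $L(\M)\subseteq L_\omega(\M)\subseteq\cdots\subseteq\PERTURBEDS{\M}{1}$ recalled above, so it suffices to show $L_\omega(\M)\subseteq L(\M)$, which I would establish by contraposition. Take $w\notin L(\M)$; since $\M$ always halts, $\M$ \emph{rejects} $w$, reaching a rejecting control state at some step $T$. During these first $T$ steps the head stays within the window $[-T,T]$, so only finitely many cells are read or written. I then fix any $n$ large enough (e.g.\ $n\ge 2T$) and claim that \emph{no} run of $\M_{n}$ can accept $w$, which gives $w\notin\PERTURBEDS{\M}{n}\supseteq L_\omega(\M)$.

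The crux — and the main obstacle — is that $\M_{n}$ is nondeterministic, so I must rule out \emph{every} accepting run, not merely exhibit a rejecting one. The key observation is that a perturbation applied before a transition only alters symbols at distance $\ge n$ from the current head, and with $n\ge 2T$ such symbols lie strictly outside the window $[-(T-1),T-1]$ that the head reads during the first $T$ steps (a perturbed cell $p$ and head $h$ satisfy $|p-h|\ge n\ge 2T$, whereas any read cell has $|p-h|\le 2T-2$). Hence the content of the central window together with the control state evolves, over these $T$ steps, exactly as under the deterministic machine $\M$, independently of which perturbations are chosen. Every run of $\M_{n}$ is therefore in the same rejecting control state as $\M$ after $T$ steps; and because rejecting states are absorbing (the transition relation forces $q'\in R$ once $q\in R$), each such run stays rejecting forever and none can accept. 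This yields $w\notin\PERTURBEDS{\M}{n}$, hence $w\notin L_\omega(\M)$, completing the contrapositive and thus the equality $L_\omega(\M)=L(\M)$.

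Finally, I would note that this is precisely the phenomenon described informally before the statement: any word in $L_\omega(\M)\setminus L(\M)$ would have to be neither accepted nor rejected by $\M$, i.e.\ arise from a genuinely non-halting ``looping'' computation, which the always-halting hypothesis excludes.
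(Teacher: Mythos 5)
Your proof is correct and follows essentially the same route as the paper, which argues (in the paragraph preceding the proposition) that a word not rejected in finite time would otherwise use only finitely many tape cells, so that for $n$ sufficiently large $\M_{n}$ would still reject it. You merely make the paper's terse argument quantitative by choosing $n\ge 2T$ and checking that perturbed cells are never read, which is a faithful elaboration rather than a different approach.
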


In general, $\omega$-perturbed languages are not computable enumerable. Some of them are complete among co-r.e. languages: perturbed reachability is complete in $\Pi_1^0$ \cite[Theorem 4]{asarin01perturbed}.

\begin{theorem}
[{Perturbed reachability is complete in $\Pi_1^0$ \cite[Theorem 4]{asarin01perturbed}}] \label{thbizarre}
 For every TM $\M$, we can effectively construct another TM  $\M^{\prime}$ such that $L_\omega\left(\M^{\prime}\right)=\overline{L(\M)}.$
\end{theorem}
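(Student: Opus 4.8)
The plan is to realize an arbitrary co-c.e.\ set as a perturbed language. Since $L(\M)$ ranges over all c.e.\ languages as $\M$ ranges over all TMs, it suffices, given $\M$, to build $\M'$ whose small-perturbation behaviour accepts exactly the words on which $\M$ fails to accept; formally I aim for $w \in L_{\omega}(\M')$ iff $\M$ does not accept $w$ (that is, $\M$ loops or halts rejecting on $w$). The whole difficulty is that a plain TM cannot detect non-acceptance, so I will let the perturbation do the detecting: recall that $\M'_{n}$ accepts $w$ iff \emph{some} run of the $n$-perturbed machine reaches an accepting state, and that the cells it may corrupt are exactly those at distance $\ge n$ from the head. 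I therefore want $\M'$ to expose, far from its head, a fragile structure that perturbation can damage precisely when $\M$'s computation is long (unbounded), so that acceptance through damage is available for every $n$ exactly in the non-accepting case.

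Concretely, $\M'$ on input $w$ simulates $\M$ on $w$ in a left region (over $\M$'s alphabet), separated by a marker $\#$ from an auxiliary unary block $1^{k}$ written to the right, using a dedicated symbol $1$. It appends one cell $1$ to this block at each simulated step, so that $k$ equals the number of simulated steps; after each increment it sweeps the block back to the marker. I set $\M'$ to halt non-accepting when the simulated $\M$ accepts, to accept when the simulated $\M$ reaches a rejecting state, and to accept whenever a sweep discovers a \emph{gap} in the block, i.e.\ a blank strictly between two $1$'s (the honest block is always $1^{k}$ followed only by blanks, so no gap can occur without corruption).

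For correctness I argue by cases on $\M$'s behaviour, using that an $n$-perturbed run may choose, cell by cell, to leave far cells unchanged (the identity is an allowed ``alteration''). If $\M$ rejects $w$ in finite time, then for every $n$ the run choosing identity on all far cells keeps the block clean, faithfully reaches $\M$'s rejection, and accepts, so $w \in L_{n}(\M')$ for all $n$ and $w \in L_{\omega}(\M')$. If $\M$ never accepts but runs forever, the block length $k$ grows without bound; fixing any $n$, once $k > n$ some block cell lies at distance $\ge n$ from the head at the moment the head sits at the right end, so the run may flip that $1$ to a blank, keep it fixed (identity) until the head comes within distance $n$, then detect the gap on the next sweep and accept; hence again $w \in L_{n}(\M')$ for all $n$ and $w \in L_{\omega}(\M')$. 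Conversely, if $\M$ accepts $w$, it does so in some finite time $T$, so the block never exceeds length $T$ and the whole used region stays within a bounded window; choosing $n$ larger than this window, no cell of the used region is ever at distance $\ge n$ from the head, perturbation is powerless there, every run faithfully simulates $\M$, reaches $\mF$, and halts non-accepting, whence $w \notin L_{n}(\M')$ and therefore $w \notin L_{\omega}(\M')$. This gives $L_{\omega}(\M') = \overline{L(\M)}$, and together with Theorem \ref{LspaceCoRec} it yields $\Pi_1^0$-completeness.

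The main obstacle, and the point the construction is engineered around, is making perturbation helpful in exactly the non-accepting case and harmless in the accepting case. The counter is the device that ties space to time: it forces the region exposed to corruption to grow without bound precisely when $\M$ loops, while keeping it bounded (hence shielded for large $n$) precisely when $\M$ accepts in finite time. The remaining care is that corruption produces legal symbols rather than visible ``errors'', so the damaging run must simultaneously create a \emph{detectable} inconsistency (a blank flanked by $1$'s) and, exploiting the cell-wise existential choice of perturbations, preserve that gap while keeping everything else honest until the head is close enough to read it.
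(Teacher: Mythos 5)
The paper does not reprove this statement---it is recalled verbatim from \cite[Theorem 4]{asarin01perturbed}---but your construction is correct and is essentially the one used there: a growing auxiliary structure (your unary step counter) ties the size of the perturbable region to the running time of the simulated machine, so that for every $n$ a perturbed run can manufacture a detectable inconsistency exactly when $\M$ fails to accept, while a sufficiently large $n$ shields the entire (bounded) used region when $\M$ accepts. The only point worth making explicit is the standard normalization that $\M$ either accepts, rejects, or runs forever (a machine that merely gets stuck in a state outside $F\cup R$ should first be converted to one that explicitly rejects), which your two-way case split on non-acceptance implicitly assumes.
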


\olivierplan{section{New results: SPACE (For Turing machines)}}

Actually is possible to go to some complexity issues, and not only restrict to computability.


%
%
%
%
%
%

Indeed, when a language is robust, 
%
 it makes sense to measure what level of perturbation $f$ can be tolerated:

\begin{definition}
Given some function $f: \N \to \N$, we write $\PERTURBEDSPACE{\M}{f}$ for the set of words accepted by $\M$ with space perturbation $f$: 
	{$\PERTURBEDSPACE{\M}{f} = \{w | w\in \PERTURBEDS{\M}{f(\length(w))}\}.$}
\end{definition}

The  proof above of Theorem \ref{LspaceCoRec} establishes explicitly: 


\begin{lemma}\label{lemmadirectionun}
$\PERTURBEDS{\M}{n} \in \sSPACE(poly(n))$.
\end{lemma}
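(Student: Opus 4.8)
The plan is to extract the statement directly from the graph construction already used in the proof of Theorem~\ref{LspaceCoRec}, keeping careful track of the fact that the space consumed depends only on the perturbation parameter $n$ and not on the length of the input word. The starting observation is that a single step of the $n$-perturbed machine $\M_{n}$ inspects and modifies only the finite window of radius $n$ around the head, everything strictly beyond position $\pm n$ lying in the perturbed zone; hence the behaviour of $\M_{n}$ factors through the abstraction $\varphi_{n}$ and through the finite graph $G_{n}=(V_{n},\rightarrow_{n})$, whose $|Q|\times(|\Sigma|+1)^{2n+1}$ vertices are each encodable in $O(n)$ bits.

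First I would recall from that proof that the edge relation $\rightarrow_{n}$ is decidable in space polynomial in $n$: checking whether $\VERTEX_{i}\rightarrow_{n}\VERTEX_{j}$ amounts to simulating one transition of $\M$ on the window $\VERTEX_{i}$, appending a nondeterministically guessed symbol on the side towards which the head moves and discarding the symbol that leaves the window. Next, acceptance of $w$ by $\M_{n}$ is, by construction, equivalent to the reachability assertion $\PATH(G_{n},\varphi_{n}(C_{0}[w]),\mF_{n})$, where $\mF_{n}=\varphi_{n}(\mF)$ is the (easily recognisable) set of vertices carrying an accepting control state. The source vertex $\varphi_{n}(C_{0}[w])$ is obtained by reading at most the first $n+1$ symbols of $w$ together with the left padding of blanks, and is stored in $O(n)$ work-tape cells; crucially, any symbol of $w$ beyond position $n$ falls into the perturbed region and is irrelevant to membership, so longer inputs do not enlarge the source vertex.

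It then remains to invoke Corollary~\ref{corosuccint}: since $G_{n}$ has at most $2^{O(n)}$ vertices encoded by words of length $s(n)=O(n)$ and the relation $\rightarrow_{n}$ is decidable in space polynomial in $s(n)$, reachability in $G_{n}$ is decidable in space polynomial in $n$. Composing the computation of the source vertex with this reachability test yields a decision procedure for the question ``$w\in\PERTURBEDS{\M}{n}$'' whose work-tape usage is $poly(n)$, uniformly in $n$, which is exactly the claim $\PERTURBEDS{\M}{n}\in\sSPACE(poly(n))$.

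The only point requiring care --- and the one I would treat as the main obstacle --- is the bookkeeping that makes the space bound a function of $n$ alone rather than of $n+\length(w)$. This hinges entirely on the finite-window phenomenon of the perturbed dynamics (the $Basis_{n}/Short_{n}$ decomposition appearing in the proof of Theorem~\ref{LspaceCoRec}): the input is consulted only to extract the bounded-length source vertex, after which the entire computation lives inside the fixed finite graph $G_{n}$. One must also handle short inputs, where $\length(w)<n$, by padding with blanks, which affects only the source vertex and not the space bound.
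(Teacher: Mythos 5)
Your proof is correct and follows essentially the same route as the paper: the lemma is stated there as an explicit by-product of the proof of Theorem~\ref{LspaceCoRec}, which builds the same finite graph $G_{n}$ with poly-space-decidable edge relation, applies Corollary~\ref{corosuccint}, and uses the $Basis_{n}/Short_{n}$ decomposition to make the bound depend on $n$ alone. Your added bookkeeping on why the space is $poly(n)$ rather than $poly(n+\length(w))$ is exactly the point that decomposition is there to handle.
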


We get a characterization of $\PSPACE$:

\begin{theorem}[Polynomial precision robust $\Leftrightarrow$ $\PSPACE$] \label{mainpspaceone}
$L \in \PSPACE$ iff for some $\M$ and some polynomial $p$, $L=L(\M)= \PERTURBEDSPACE{\M}{p}.$
\end{theorem}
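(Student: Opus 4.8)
The plan is to prove the two implications separately. The backward direction is essentially a corollary of Lemma \ref{lemmadirectionun}, while the forward direction requires an explicit construction and is where the real work lies.

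For the direction $(\Leftarrow)$, suppose $L = L(\M) = \PERTURBEDSPACE{\M}{p}$ for some machine $\M$ and polynomial $p$. By definition, $w \in L$ is equivalent to $w \in \PERTURBEDS{\M}{p(\length(w))}$. I would set $n = p(\length(w))$, which is polynomial in $\length(w)$, and invoke Lemma \ref{lemmadirectionun} (established constructively inside the proof of Theorem \ref{LspaceCoRec}): deciding $\PERTURBEDS{\M}{n}$ can be done in space $poly(n)$ by building the finite graph $G_n$, whose vertices have size $O(n)$, and running the Savitch-style reachability procedure of Corollary \ref{corosuccint}. Substituting $n = p(\length(w))$ gives space $poly(p(\length(w))) = poly(\length(w))$, so $L \in \PSPACE$. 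Note that only the initial window $\varphi_n(C_0[w])$, i.e. a prefix of $w$ of length $O(n)$, is needed to locate the source vertex, so the input itself never forces more than polynomial work space.

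For the direction $(\Rightarrow)$, suppose $L \in \PSPACE$ and fix a deterministic TM $\M_0$ deciding $L$ in space $S(N)$ for inputs of length $N$, where $S$ is a polynomial with $S(N) \ge N$; since $\M_0$ is a decider we may assume it always halts. I would build $\M$ so that, on input $w$ with $N = \length(w)$, it first marks off a workspace block of $S(N)$ cells (using that polynomials are space-constructible), keeps its head inside this block throughout, and simulates $\M_0$ there. Then $\M$ always halts and $L(\M) = L$. The crucial choice is $p(N) = S(N)$ (any polynomial $\ge S(N)-1$ works): in the $p(N)$-perturbed version of $\M$, every cell the head ever visits or reads lies within distance $S(N)-1 \le p(N)$ of the head, hence strictly inside the region the perturbation is forbidden to alter. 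Consequently no perturbation choice can ever change a cell that is later read, so every run of the $p(N)$-perturbed machine on $w$ reproduces exactly the head-region behaviour of the unperturbed run and accepts iff $w \in L$. This yields $\PERTURBEDSPACE{\M}{p} = L(\M) = L$ simultaneously.

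The main obstacle is this forward direction, and specifically the \emph{robustness} of the confinement. Since a perturbed machine always has strictly more behaviours than $\M$ (it may corrupt distant cells), the difficulty is not in recognizing $L$ but in neutralizing this extra power so that $\PERTURBEDSPACE{\M}{p}$ collapses down to $L$ rather than some larger language. I must therefore argue carefully that the boundary markers delimiting the workspace themselves remain inside the protected window, so that the head genuinely never escapes even for the nondeterministic perturbed machine, and that the perturbation horizon $n = p(N)$ provably exceeds the diameter of the workspace. Once the corrupted cells are unreadable they are computationally inert, which is precisely what forces the equality $\PERTURBEDSPACE{\M}{p} = L$.
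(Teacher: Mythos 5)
Your proof is correct and follows essentially the same route as the paper: the backward direction invokes Lemma \ref{lemmadirectionun} exactly as the paper does, and the forward direction is the paper's argument that a always-halting polynomial-space decider uses only $q(\length(w))$ cells, so choosing $p \ge q+2$ keeps every readable cell strictly inside the unperturbed window and collapses $\PERTURBEDSPACE{\M}{p}$ onto $L(\M)$. Your additional care about the confinement of the head and the inertness of corrupted cells makes explicit what the paper leaves implicit, but it is the same proof.
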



\begin{proofmanon}
	($\Rightarrow$) If $M$ always terminates and works in polynomial space, then there exists a polynomial $q(\cdot)$ that bounds the size of the used part of the tape of $M$. Considering a polynomial $p \ge q+2$, we have for $n\in N$ $\PERTURBEDS{\M}{p(\length(w))}  \subseteq L(M)$. We always have the other inclusion.
		
	($\Leftarrow$) We always have $\PERTURBEDS{\M}{p(n)} \in \PSPACE$ by previous lemma, and since $\PERTURBEDS{\M}{p(n)} = L$, then $L \in PSPACE$.
\end{proofmanon}

This considers space perturbations. Other types of perturbations are considered in Section \ref{sec:otherperturb}, leading to $\PTIME$ instead of $\PSPACE$. 
For now, we keep to space perturbations. 

%
%
%
%

\section{Embedding dynamical systems}
\label{sec:simulation}



Discussing issues for TMs has  the advantage that related computability and complexity issues are well-known or easier to discuss.
Many authors have then embedded TMs in various classes of dynamical systems in order to get hardness results, i.e. 
state that the difficulty of the reachability problem for the latter is at least as hard as  for Turing machines.

Generally speaking, the trick is the following:  if we fix the alphabet  to  $\Sigma=\{0,1\}$, and  $Q=\{1,\dots,q\}$ for some integer $q$,  and if we forget about blanks,  we can always consider that $\CONFIGURATIONSMT \subseteq \mathcal{C}=\N \times \Sigma^{*} \times \Sigma^{*}$, i.e. that a configuration is given by some control state, and two finite words.

\newcommand\gammac{\Upsilon}

Fix some encoding function of configurations into a vector of real (or integer) numbers:  $\gammac: \mathcal{C} \to \R^{d}$, with $d \in \N$.
For example, one can consider, $\gammac(q,w_{1},w_{2})= (q,\gamma(w_{1}),\gamma(w_{2}))$ with $\gamma: \Sigma^{*} \to \R$ taken as:
\newcommand\gammaentier{\gamma_{\N}}
\newcommand\gammaentierk{\gamma_{\N}^{k}}
\newcommand\gammacompact{\gamma_{[0,1]}}
\newcommand\gammacompactk{\gamma_{[0,1]}^{k}}
\newcommand\gammacompactpk{\gamma_{[0,1]}^{'k}}
\begin{itemize}
\item  the encoding $\gammaentier$  that maps the word $w=a_{1} \dots a_{n}$ to the integer whose binary expression is $w$,
\item or  the encoding $\gammacompact$  that maps $w$ to the real number of $[0,1]$ whose binary expansion is $w$,
\item or  more generally, the encoding $\gammacompactk$  or $\gammaentierk$, using base $k$  instead of base $2$ for some $k \ge 2$,
\item or  $\gammacompactpk$ that maps $w$ to $(\gammacompactk(w),\length(w))$. 
\end{itemize}

Assume you have a function $\tu f: X \subseteq \R^{d} \to X$ such that for any configuration $C$, if we denote by $C'$ the one step next configuration, we have $f(\gammac(C))=\gammac(C')$: i.e. one step of the Turing machine corresponds to one step of the dynamical system $(X,f)$ with respect to the encoding $\gamma$. That is,  the following diagram commutes for one step:
		\begin{center}
			\tikzset{every node/.style={align=center}} 
			\begin{tikzpicture}[x=1cm,y=0.8cm]
			\node at (0,0) (a) {$C$};
			\node at (2,0) (b) {$C'$};
			\node at (0,-2) (aa) {$\gammac(C)$};
			\node at (2,-2) (bb) {$\gammac(C') $};

			\node at (1, 0.3) (ah) {$\vdash$};
			\node at (2.2, -1) (i) {$\gammac$};
			\node at (1, -1.7) (h) {$\tu f$};
			\node at (-0.3, -1) (i) {$\gammac$};
			\draw [->] (a)--(b);
			\draw [->] (b)--(bb);
			\draw [->] (aa)--(bb);
			\draw [->] (a)--(aa);
			\end{tikzpicture}
		\end{center}
		
Then it will commutes for any number of steps:
		\begin{center}
			\tikzset{every node/.style={align=center}} 
			\begin{tikzpicture}[x=1cm,y=0.8cm]
			\node at (0,0) (a) {$C$};
			\node at (0,-2) (aa) {$\gammac(C)$};

			\node at (2,0) (b) {$C'$};
			\node at (2,-2) (bb) {$\gammac(C') $};
			\node at (4,0) (c) {$C''$};
			\node at (4,-2) (cc) {$\gammac(C') $};
			\node at (6,0) (d) {$C'''$};
			\node at (6,-2) (dd) {$\gammac(C''') $};
				\node at (8,0) (e) {};
			\node at (8,-2) (ee) {};

			\node at (-0.3, -1) (i) {$\gammac$};
	
			\node at (1, 0.3) (ah) {$\vdash$};
			\node at (1, -1.7) (h) {$\tu f$};
			\node at (2.2, -1) (i) {$\gammac$};

 		         \node at (3, 0.3) (ah) {$\vdash$};
			\node at (3, -1.7) (h) {$\tu f$};
			\node at (4.2, -1) (i) {$\gammac$};

 		         \node at (5, 0.3) (ah) {$\vdash$};
			\node at (5, -1.7) (h) {$\tu f$};
			\node at (6.2, -1) (i) {$\gammac$};
			
 		         \node at (7, 0.3) (ah) {$\vdash$};
			\node at (7, -1.7) (h) {$\tu f$};

			\draw [->] (a)--(aa);	
					
			\draw [->] (a)--(b);
			\draw [->] (b)--(bb);
			\draw [->] (aa)--(bb);
			
			\draw [->] (b)--(c);
			\draw [->] (c)--(cc);
			\draw [->] (bb)--(cc);
		
			\draw [->] (c)--(d);
			\draw [->] (d)--(dd);
			\draw [->] (cc)--(dd);
			
			\draw [->,dashed] (d)--(e);
			\draw [->,dashed] (dd)--(ee);		
		
			\end{tikzpicture}
		\end{center}

And hence, questions related to the existence of trajectories in the (dynamical system  associated to the) Turing machine will be mapped to corresponding questions about the existence of trajectories over the dynamical system $(X,f)$.

In particular, as reachability is undecidable (c.e. complete, and hence c.e. hard) for Turing machines,  this provides undecidability (c.e. hardness) of reachability for various classes of dynamical systems. As in most of the natural classes of dynamical systems, reachability is c.e. (just simulate the system to get a semi-algorithm), this leads to c.e. completeness.

Call such a situation a \motnouv{step-by-step} emulation.

%
%
%
%
%
%
%

Such embedding strategies do provide undecidability results. But,  encodings such as $\gammacompact$ or $\gammacompactk$, whose image is compact,  map some intrinsically different configurations to points arbitrarily closed to each other (as a sequence over a compact must have some accumulation point).
An encodings like $\gammaentier$ do not have a compact image, but involves emulations with arbitrarily big integers, which is another issue. 

\section{Discrete Time Dynamical Systems}
\label{sec:discretetime}

This leads to discuss now robustness issues for general dynamical systems over in $\R^{d}$ for some $d\in\N$.  

\begin{definition}[Discrete Time Dynamical System]  A \motnouv{discrete-time dynamical system} $\mP$ is given by a
set $X \subseteq \mathbb{R}^d$, and some  (possibly partial) function  $\tu f$ from $X$ to $X$.
\end{definition}


The dynamical system will be called  \motnouv{rational} when $\tu f$ preserves rational numbers, i.e. whenever $\tu f(\Q^{d}) \subseteq \Q^{d}$. We will say that a system  is $\Q$-computable, if it is rational and $\tu f: \Q^{d} \to \Q^{d}$ is computable. We say that the system is (respectively: locally)  \motnouv{Lipschitz} when the function is.
\olivierpasimportant{Peut etre pas mettre la def.:  (resp. For all $\tu \vx \in X$, there is some neighborhood $\mathcal{V}(\vx)$ of $\vx$ such that)  there exists some constant $L$ such that $\distance{\tu f(\vx)}{\tu f(\vy)} \le L \cdot \distance{\vx}{\vy}$ for all $\vx,\vy \in X$ (resp. $\vy,\vy \in X \cap \mathcal V(\vx)$). }

To each rational discrete time dynamical system $\mP$ is associated its reachability relation $\REACHP(\cdot, \cdot)$ on $\Q^d\times \Q^{d}$. Namely, for two rational points $\mathbf{x}$ and $\mathbf{y}$,  the relation $\REACHP(\mathbf{x}, \mathbf{y})$ holds iff there exists a trajectory of $\mP$ from $\mathbf{x}$ to $\vy$.

\subsection{The case of rational systems}
\label{sub:partone}

We first focus on the case of rational systems. Clearly the reachability relation of a $\Q$-computable system is computably enumerable: just simulate the dynamics with a TM. 
Actually, \cite{asarin01perturbed} considers only the special case of Piecewise affine maps, as representative of discrete time systems, which are particular $\Q$-computable Lipschitz systems.

\begin{definition}[PAM System]  A Piecewise affine map system (PAM) is a discrete-time dynamical system $\mP$ where $\tu f$ is a (possibly partial) function from $X$ to $X$ represented by a formula:
$
\tu f(\mathbf{x})=A_i \tu{x}+\mathbf{b}_i \text { for } \vx \in P_i, \quad i=1 \dots N
$
where $A_i$ are rational $d \times d$-matrices, $\mathbf{b}_i \in \Q^d$ and $P_i$ are convex rational polyhedral sets in $X$.
\end{definition}


In other words, a PAM system consists of partitioning the space into convex polyhedral sets (called \motnouv{regions}), and assigning an affine update rule $\mathbf{x}:=A_i \mathbf{x}+\mathbf{b}_i$ to all the points sharing the same region.


\begin{remark}
All constants in the PAM definitions 
are assumed to be rational so that this remains a $\Q$-computable system. No form of continuity is assumed on function $\tu f$. 
\end{remark}
%


The following result  on the computational power of PAMs is known, and has been established using the technique of step-by-step emulation described in previous section (using  $\gammacompact$ and taking $\tu f$ as piecewise affine). 

\begin{theorem}[{Computational power of PAMs \cite{Moo91} \cite{KCG94}
}] \label{chosemachin}
\  
%
%
%
Any c.e. language is reducible to the reachability relation of a PAM.
\end{theorem}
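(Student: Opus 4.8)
The plan is to realize $L$ as a slice of the reachability relation of a PAM via the \emph{step-by-step} emulation of Section~\ref{sec:simulation}, instantiated with the compact base-$k$ encoding $\gammacompactk$. First I would take a TM $\M$ that semi-recognizes $L$ and normalize it so that acceptance is witnessed by a \emph{single} configuration: upon entering an accepting state, $\M$ returns the head to the leftmost non-blank cell, erases the finitely many used cells, and halts in a fixed state $q_f$ with blank tape. Since an accepting run touches only finitely many cells, this clean-up terminates, and the unique accepting configuration is $C_f = \CONFIGMT{q_f}{\cdots B B B}{B B B \cdots}$, whose encoding is a single rational point $\vy^* = \gammac(C_f)$. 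This upstream normalization is what turns ``reach some accepting configuration'' into ``reach a fixed point''.

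Next I would fix the encoding. Viewing $\Sigma\cup\{B\}$ as digits (with the blank mapped to the digit $0$, so trailing blanks contribute nothing), I split the tape at the head and set $\gammac(\CONFIGMT{q}{\cdots a_{-2}a_{-1}}{a_0 a_1 \cdots}) = (q,\ \gammacompactk(a_{-1}a_{-2}\cdots),\ \gammacompactk(a_0 a_1 \cdots))$ for a base $k=|\Sigma\cup\{B\}|+1$. Then $X = \{1,\dots,|Q|\}\times[0,1]^2$, and the symbol under the head is recovered as $a_0 = \lfloor k\,y\rfloor$ from the third coordinate $y$.

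The heart of the argument is to build $\tu f$ as a genuine piecewise affine map. I would partition $X$ into regions indexed by the state $q$ and the read symbol $a$, namely $P_{q,a} = \{q\}\times[0,1]\times[a/k,(a+1)/k)$, refining these, for transitions that move the head left, by the leading digit $a'$ of the left coordinate, $P_{q,a,a'} = \{q\}\times[a'/k,(a'+1)/k)\times[a/k,(a+1)/k)$, so that decoding that digit stays inside a single region. Each such region is a convex rational polyhedron (a single point in the first coordinate being cut out by two inequalities). On it the applicable transition $(q,a)\to(q',b,\delta)$ is unique, and its action on $(x,y)\in[0,1]^2$ is affine: overwriting $a$ by $b$ and shifting digits for a move amounts to multiplying or dividing $x,y$ by $k$ and adding or subtracting rational digit contributions, giving $\tu f(q,x,y) = (q',\ A(x,y)+\vb)$ with rational matrix and vector. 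By construction the one-step diagram of Section~\ref{sec:simulation} commutes, $\tu f(\gammac(C)) = \gammac(C')$ whenever $C\vdash C'$, hence it commutes for any number of steps.

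Finally I would read off the reduction. For a word $w$ the initial configuration $C_0[w]$ has a computable rational encoding $\gammac(C_0[w])$, and by the commuting diagram $w\in L \iff C_0[w]\stackrel{*}{\vdash} C_f \iff \REACHP(\gammac(C_0[w]),\vy^*)$, so $w\mapsto(\gammac(C_0[w]),\vy^*)$ is a computable many-one reduction of $L$ to $\REACHP$. The step I expect to be the main obstacle is precisely making $\tu f$ well-defined and single-valued while faithfully simulating head moves: the left-move case forces the extra refinement by the leading digit of the other coordinate (otherwise the update is not affine on the region), and the base $k$ must be chosen so that regions stay disjoint and the blank digit stays unambiguous. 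Everything else is routine digit-shift arithmetic.
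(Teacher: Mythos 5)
Your proposal is correct and follows essentially the route the paper itself indicates: the paper gives no proof of its own, merely citing \cite{Moo91,KCG94} and noting that the result follows from the step-by-step emulation of Section~\ref{sec:simulation} with the compact encoding $\gammacompact$ and a piecewise affine $\tu f$, which is exactly the construction you carry out (including the standard refinements — normalizing to a single accepting configuration and splitting left-move regions by the leading digit so the update stays affine).
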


\begin{remark}
PAMs are introduced in \cite{asarin01perturbed}  only for the case where $X$ is necessarily some bounded polyhedral sets.  Actually, from considered $\gamma$,  the above result would also still hold when the regions $P_{i}$ are assumed to be rational boxes.
%
\end{remark}

\olivierplan{subsection{Perturbed PAMs (PPAMs)}}

\olivierplan{Some text and results stolen from \cite{asarin01perturbed}.}

This proves c.e.-completeness ($\Sigma_1^0$-completeness), and hence undecidability  of reachability for  $\Q$-computable systems. Let  discuss whether this  still holds for ``robust systems''. 

We can apply the paradigm of small perturbations: consider a discrete time dynamical system $\mP$ with function $\tu f$. For any $\varepsilon>0$ we consider the $\varepsilon$-perturbed system $\mP_{\varepsilon}$. Its trajectories are defined as sequences $\mathbf{x}_t$ satisfying the inequality $\distance{\mathbf{x}_{t+1}}{\tu f\left(\mathbf{x}_t\right)}<\varepsilon$ for all $t$. This non-deterministic system can be considered as $\mP$ submitted to a small noise with magnitude $\varepsilon$. For convenience, we write $\tu y \in \tu f_{\epsilon}(\vx)$ as a synonym for $\distance{\tu f(\vx)}{\tu y} < \epsilon$. 
We denote reachability in the system $\mP_{\varepsilon}$ by $\REACHPepsilon(\cdot, \cdot)$. 

All trajectories of a non-perturbed system $\mP$ are also trajectories of the $\varepsilon$-perturbed system $\mP_{\varepsilon}$. If $\varepsilon_1<\varepsilon_2$ then any trajectory of the $\varepsilon_1$-perturbed system is also a trajectory of the $\varepsilon_2$-perturbed PAM.
Like for TMs we can pass to a limit for $\varepsilon \rightarrow 0$. Namely $\REACHPomega(\mathbf{x}, \mathbf{y})$ iff $\forall \varepsilon>0$  $\REACHPepsilon(\mathbf{x}, \mathbf{y})$: this relation encodes  reachability with arbitrarily small perturbing noise. 
%
\begin{lemma}[{\cite[Lemma 3]{asarin01perturbed}}] For any $0<\varepsilon_2<\varepsilon_1$ and any  $\mathbf{x}$ and $\mathbf{y}$ the following implications hold: $\REACHP(\mathbf{x}, \mathbf{y})$ $ \Rightarrow \REACHPomega(\mathbf{x}, \mathbf{y}) \Rightarrow \REACHP_{\varepsilon_2}(\mathbf{x}, \mathbf{y}) \Rightarrow \REACHP_{\varepsilon_1}(\mathbf{x}, \mathbf{y})$.
\end{lemma}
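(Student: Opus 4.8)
The plan is to prove the chain as three independent implications, each obtained by directly unwinding the relevant definition; no limiting or re-routing argument is needed, because in every case the witnessing trajectory is literally the same sequence, and the only facts used are that an exact trajectory satisfies the perturbation constraint with slack zero, that $\REACHPomega$ is defined by a universal quantifier over all positive $\varepsilon$, and that the constraint $\distance{\mathbf{x}_{t+1}}{\tu f(\mathbf{x}_t)} < \varepsilon$ is monotone in $\varepsilon$.

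First I would treat $\REACHP(\mathbf{x},\mathbf{y}) \Rightarrow \REACHPomega(\mathbf{x},\mathbf{y})$. Given an exact trajectory $(\mathbf{x}_t)$ of $\mP$ witnessing $\REACHP(\mathbf{x},\mathbf{y})$, we have $\mathbf{x}_{t+1} = \tu f(\mathbf{x}_t)$, hence $\distance{\mathbf{x}_{t+1}}{\tu f(\mathbf{x}_t)} = 0 < \varepsilon$ for every $\varepsilon > 0$. So the very same sequence is a valid trajectory of $\mP_{\varepsilon}$ for each $\varepsilon$, giving $\REACHPepsilon(\mathbf{x},\mathbf{y})$ for all $\varepsilon > 0$; by the definition of $\REACHPomega$ as holding exactly when $\REACHPepsilon$ holds for all positive $\varepsilon$, this yields $\REACHPomega(\mathbf{x},\mathbf{y})$. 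The second implication, $\REACHPomega(\mathbf{x},\mathbf{y}) \Rightarrow \REACHP_{\varepsilon_2}(\mathbf{x},\mathbf{y})$, is then immediate: it is simply the instantiation of that universal quantifier at the particular value $\varepsilon = \varepsilon_2 > 0$.

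For the last implication, $\REACHP_{\varepsilon_2}(\mathbf{x},\mathbf{y}) \Rightarrow \REACHP_{\varepsilon_1}(\mathbf{x},\mathbf{y})$, I would take a trajectory $(\mathbf{x}_t)$ of $\mP_{\varepsilon_2}$, so that $\distance{\mathbf{x}_{t+1}}{\tu f(\mathbf{x}_t)} < \varepsilon_2$ for all $t$. Since $\varepsilon_2 < \varepsilon_1$, these same inequalities give $\distance{\mathbf{x}_{t+1}}{\tu f(\mathbf{x}_t)} < \varepsilon_1$, so the sequence is also a trajectory of $\mP_{\varepsilon_1}$, establishing $\REACHP_{\varepsilon_1}(\mathbf{x},\mathbf{y})$. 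Honestly, there is no real obstacle here: the statement is purely definitional, and it merely records formally the two observations made just before the lemma, namely that exact trajectories are $\varepsilon$-perturbed trajectories and that the class of perturbed trajectories grows with $\varepsilon$. The only thing worth stating carefully is that $\REACHPomega$ is the intersection (over all $\varepsilon > 0$) of the relations $\REACHPepsilon$, which is what makes both its ``entry'' from the exact system and its ``exit'' by instantiation trivial.
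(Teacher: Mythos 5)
Your proof is correct and matches the paper's treatment: the paper states this lemma without a separate proof, relying on exactly the two observations you formalize (exact trajectories are $\varepsilon$-perturbed trajectories with slack zero, and the perturbation constraint is monotone in $\varepsilon$) together with the definition of $\REACHPomega$ as the intersection over all $\varepsilon>0$. Nothing is missing.
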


\olivierplan{subsection{Results on PPAMs}}


We  prove  the perturbed reachability relation of Lipschitz $\Q$-computable system  is co-c.e, extending \cite[Theorem 5]{asarin01perturbed}. 

\begin{theorem} 
[{Perturbed reachability is co-c.e.}]  \label{hypothesesquivontbien}\label{thcinq} \label{Rcorec}
Consider a locally Lipschitz $\Q$-computable system whose domain $X$ is a closed rational box.
	
Then the relation $\REACHPomega(\mathbf{x}, \mathbf{y}) \subseteq \Q^d\times \Q^{d}$ is in the class $\Pi_1^0$.
\end{theorem}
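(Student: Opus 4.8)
The plan is to show that the complement of $\REACHPomega$ is computably enumerable, i.e.\ that the predicate ``$\neg\REACHPomega(\vx,\vy)$'' can be semi-decided. By definition, $\neg\REACHPomega(\vx,\vy)$ holds iff there exists some $\varepsilon>0$ such that $\vy$ is \emph{not} reachable from $\vx$ in the $\varepsilon$-perturbed system $\mP_\varepsilon$. So the overall strategy mirrors the Turing-machine proof of Theorem~\ref{LspaceCoRec}: I would build, for each rational precision parameter $m$, a \emph{finite} graph $G_m$ abstracting the $\varepsilon$-perturbed dynamics, decide non-reachability in that finite graph (which is a $\LOOP$ query, decidable by Corollary~\ref{corosuccint}), and observe that $\neg\REACHPomega$ is the union over $m$ of these decidable non-reachability conditions. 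The point is that this union is uniform in $m$, so it is c.e., whence $\REACHPomega\in\Pi_1^0$.

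\medskip

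Concretely, first I would fix the compact domain: since $X$ is a closed rational box, I can tile it by finitely many small closed rational boxes (cells) of diameter roughly $1/m$, giving a finite vertex set $V_m$ playing the role the configuration-classes $\varphi_n(C)$ played for Turing machines. Next I would define the edge relation: put an edge from cell $\VERTEX_i$ to cell $\VERTEX_j$ when some point of $\VERTEX_i$ can move under the $\varepsilon$-perturbed dynamics into $\VERTEX_j$, i.e.\ when there is $\vx\in \VERTEX_i$ with $\tu f(\vx)$ within $\varepsilon$ (for an $\varepsilon$ tied to $m$) of some point of $\VERTEX_j$. Because $\tu f$ is $\Q$-computable and locally Lipschitz on a compact set (hence genuinely Lipschitz with a computable global constant $L$), I can approximate $\tu f$ on each cell to controlled accuracy and decide this edge relation; the Lipschitz bound guarantees that sampling $\tu f$ at the rational center of a cell, together with an error term $L/m$, soundly over- and under-approximates the reachable image of the whole cell. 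This lets me choose $\varepsilon = \varepsilon(m)\to 0$ so that $G_m$ is sound and, as $m\to\infty$, increasingly precise.

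\medskip

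The key sandwiching lemma I would then prove is: if $\vy$ is \emph{un}reachable in $G_m$ (no path from the cell of $\vx$ to the cell of $\vy$), then there is a genuine $\varepsilon>0$ with $\neg\REACHPepsilon(\vx,\vy)$, witnessing $\neg\REACHPomega(\vx,\vy)$; and conversely, if $\neg\REACHPomega(\vx,\vy)$ holds, then for all $m$ large enough the abstraction $G_m$ is fine enough to certify non-reachability, i.e.\ $\LOOP$ succeeds in $G_m$. The first (soundness) direction uses that an actual $\varepsilon$-perturbed trajectory projects to a walk in $G_m$ once $\varepsilon$ is small relative to the cell size and the Lipschitz constant; the second (completeness) direction uses compactness: an unreachable target stays at a strictly positive distance from the true forward reach-set, so a sufficiently fine grid separates them. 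Combining these, $\neg\REACHPomega(\vx,\vy)$ is equivalent to ``$\exists m:\ \LOOP(G_m,\varphi_m(\vx),\varphi_m(\vy))$,'' and since each $\LOOP$ query is decidable and the family is uniform, the complement of $\REACHPomega$ is c.e., giving $\REACHPomega\in\Pi_1^0$.

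\medskip

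I expect the main obstacle to be making the abstraction simultaneously \emph{sound} and \emph{complete} with the right quantifier order over $\varepsilon$ and $m$ — this is precisely the subtlety the authors flag in their footnote about ``can make a transition'' meaning one step versus many steps. The delicate point is coupling the perturbation magnitude $\varepsilon$, the grid fineness $1/m$, and the Lipschitz constant $L$ so that a discrete walk in $G_m$ faithfully corresponds to a perturbed continuous trajectory in \emph{both} directions; too coarse an $\varepsilon$ spuriously connects cells, while too fine an $\varepsilon$ relative to the grid fails to capture genuine perturbed reachability. Local (rather than global) Lipschitzness is handled by compactness of $X$, which upgrades it to a uniform constant, so that part is routine once stated carefully.
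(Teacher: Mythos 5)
Your proposal is correct and follows essentially the same route as the paper's proof: a finite graph abstraction $G_m$ of the compact domain at each scale, with edges decided from the cell centers plus a Lipschitz error term, an over-approximation claim (perturbed trajectories project to graph walks at the matching scale) and an under-approximation claim (graph walks at scale $\delta \lesssim \epsilon/L$ yield $\epsilon$-perturbed trajectories), so that $\neg\REACHPomega(\vx,\vy)$ becomes a uniform-in-$m$ union of decidable $\LOOP$ queries and is therefore c.e. The one point where the paper has content your sketch glosses over is the under-approximation direction: it is proved not by a distance-separation/compactness argument (note the $\epsilon$-perturbed reach-set is essentially open, so an unreachable $\vy$ may lie at distance zero from it) but by an induction along the graph path with $\delta<\epsilon/(2L+2)$, chaining the triangle inequality with the Lipschitz bound at each edge — exactly the $\epsilon$--$\delta$--$L$ coupling you flag as the delicate step.
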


\begin{proofmanon}
\olivierplusimportant{Nouvelle définition du graphe: plus simple, pas de quantification existentielle. Et semble marcher même comme ca}

As $\tu f$ is locally Lipschitz, and $X$ is compact, we know that $\tu f$ is Lipschitz: there exists some $L>0$ so that $\distance{\tu f(\vx)}{\tu f(\vy)} \le L \cdot \distance{\vx}{\vy}$. 

	For every $\delta=2^{-m}$, $m \in \N$, we associate some  graph $G_{m}=(V_{\delta},\rightarrow_{\delta})$: its vertices, denoted  by $(\VERTEX_{i})_{i}$, correspond  to some finite covering of compact $X$ by rational open balls $\VERTEX_{i}=B(\vx_{i},\delta_{i})$  of  radius $\delta_{i} <\delta$. 
	%
%
	There is an edge from $\VERTEX_i$ to $\VERTEX_j$ in this graph, that is to say $\VERTEX_{i} \rightarrow_{\delta} \VERTEX_{j}$,  iff 
	$B(\tu f(\vx_{i}),(L+1)\delta) \cap \VERTEX_{j} \neq \emptyset$.
%
With our hypothesis on the domain, such a graph can be effectively obtained from $m$, considering a suitable discretization of the rational box $X$.  

	 \olivierplan{Nouvelle formulation et amélioration}
	 
\noindent Claim 1: 
	 assume $\REACHPepsilon(\vx,\vy)$ with $\vx \in \VERTEX_{i}$ for $\epsilon=2^{-n}$. Then $\VERTEX_{i} {\rightarrow_{\epsilon}} \VERTEX_{j}$ for all  $\VERTEX_{j}$ with $\vy \in \VERTEX_{j}$.
	 
	 This basically holds as the graph for $\delta=\epsilon$ is made to always  have more trajectories/behaviours than $\REACHPepsilon$. 
	\begin{proof} If $\vy \in \tu f_{\epsilon}(\vx)$, then $\distance{\tu f(\vx_{i})}{\vy} \le \distance{\tu f(\vx_{i})}{\tu f(\vx)} + \distance{\tu f(\vx)}{\vy} < L \distance{\vx_{i}}{\vx} + \epsilon \le L \epsilon + \epsilon = (L+1) \epsilon$, and hence there is an edge from $\VERTEX_{i} {\rightarrow_{\epsilon}} \VERTEX_{j}$ to any $\VERTEX_{j}$ containing $\vy$ by definition of the graph.
	\end{proof}
	
\noindent  Claim 2: for any $\epsilon=2^{-n}$, there is  some $\delta=2^{-m}$ so that if we have $\VERTEX_{i} \stackrel{*}{\rightarrow_{\delta}} \VERTEX_{j}$ then 
	$\REACHPepsilon(\vx,\vy)$ whenever $\vx \in \VERTEX_{i}$, $\vy \in \VERTEX_{j}$.
	
		 	That is, Claim 2 says that $\neg \REACHPepsilon(\vx,\vy)$ implies $\neg (\VERTEX_{i} \stackrel{*}{\rightarrow_{\delta}} \VERTEX_{j})$ whenever $\vx \in \VERTEX_{i}$, $\vy \in \VERTEX_{j}$, for the corresponding $\delta$.

	\begin{proof} Consider $\delta=2^{-m}$ with $\delta <\epsilon/(2L+2)$: assume  $\VERTEX_{i=i_{0}} {\rightarrow_{\delta}} \VERTEX_{i_{1}} \dots {\rightarrow_{\delta}}  \VERTEX_{i_{t}=j}$ with $\vx \in \VERTEX_{i}$, $\vy \in \VERTEX_{j}$. 
	
	Assume by contradiction that $\neg  \REACHPepsilon(\vx,\vy)$, and let  $\ell$ be the least index such that 
	$\neg \REACHPepsilon(\vx,\overline \vz)$ for some $\overline \vz \in \VERTEX_{i_{\ell+1}}$.
	
	As 	 $\VERTEX_{i_{\ell}} {\rightarrow_{\delta}} \VERTEX_{i_{\ell+1}}$ there is some $\overline {\tu y} \in  \VERTEX_{i_{\ell+1}}$ with $\distance{\tu f(\tu x_{i_{\ell}})}{\overline{\vy}}<(L+1) \delta$.   Take $\overline{\tu z} \in \VERTEX_{i_{\ell+1}}$. 
	
	If $\ell=0$, then $\distance{\tu f(\vx)}{\overline{\tu z}} \le \distance{\tu f(\vx)}{\tu f(\vx_{i_{\ell}})}+\distance{\tu f(\vx_{i_{\ell}})}{\overline{\vy}} + \distance{\overline{\vy}}{\overline{\vz}} < L \delta + (L+1) \delta + \delta = (2 L + 2) \delta < \epsilon$, and hence
	$\REACHPepsilon(\vx,\overline{\vz})$: contradiction.
	
	If $\ell>0$, as $\ell$ is the least index with the above property, $\REACHPepsilon(\vx,\vx_{i_{\ell}})$. But then 
	$\distance{\tu f(\vx_{i_{\ell}})}{\overline{\tu z}} \le \distance{\tu f(\vx_{i_{\ell}})}{\overline{\vy}} + \distance{\overline{\vy}}{\overline{\vz}}
	< (L+1) \delta + \delta < (2L + 2) \delta < \epsilon$. 
%
%
%
	And hence, $\REACHPepsilon(\vx_{i_{\ell}},\overline{\vz})$, and since we have $\REACHPepsilon(\vx,\vx_{i_{\ell}})$, we get $\REACHPepsilon(\vx,\overline \vz)$ and a contradiction.
\end{proof}


		 \olivierplusimportant{ Pour aider:
 etape 1	 
	 $\REACHPomega(\vx,\vy)$ holds iff for all $\delta=2^{-m}$, we have $\VERTEX_{i} \stackrel{*}{\rightarrow_{\delta}} \VERTEX_{j}$, for all  $\VERTEX_{i}$, $\VERTEX_{j}$ with  $\vx \in \VERTEX_{i}$, $\vy \in \VERTEX_{j}$.

 Etape 1'
	 	 If one prefers, 
}
		 
	 From the two above items, 
 $\neg \REACHPomega(\vx,\vy)$ holds iff for some $\delta=2^{-m}$, $\neg (\VERTEX_{i }\stackrel{*}{\rightarrow_{\delta}} \VERTEX_{j})$ for some $\VERTEX_{i}$, $\VERTEX_{j}$ with $\vx \in \VERTEX_{i}$, $\vy \in \VERTEX_{j}$.
	 This holds  iff for some integer $m$, $\LOOP(G_{m},\VERTEX_{i},\VERTEX_{j})$ for some  $\VERTEX_{i}$, $\VERTEX_{j}$ with $\vx \in \VERTEX_{i}$, $\vy \in \VERTEX_{j}$.

		The latter property is c.e., as it corresponds to a  union of decidable sets (uniform in $m$), as 		$\LOOP(G_{m},\VERTEX_{i},\VERTEX_{j})$ is a decidable property over finite graph $G_{m}$.
\end{proofmanon}

\olivierplusimportant{Ancienne ancienne preuve:

	 We define the relation $\PATH_{\omega}(\vx,\vy)$ to be true iff for all $\delta>0$, $\PATH(G_{\delta}, \varphi_{\delta}(\vx),
	 \varphi_{\delta}(\vy))$. 

%
	 
	We claim: 
	 $\forall \vx, \vy\in X, \PATH_\omega(\vx, \vy) \Leftrightarrow \REACHPomega(\vx,\vy)$.
 
	 \begin{itemize}
	 	\item ($ \Leftarrow $) This is direct, since  $\REACHPepsilon(\vx,\vy)$ implies that 	 $\PATH(G_{\epsilon}, \varphi_{\epsilon}(\vx),
	 \varphi_{\epsilon}(\vy))$: the graph is made to always  have more trajectories/behaviours than $\REACHPepsilon$.
	 
	 \item ($\Rightarrow$)  
%
	 
	Assume  $\varphi_{\delta}(\vx) \rightarrow \varphi_{\delta}(\vy)$ in $G_{\delta}$. By definition, this means that there exists $\overline{\vx} \in \varphi_\delta(\vx)$, $\overline{\vy} \in \varphi_{\delta}(\vy)$ with $\distance{f(\overline{\vx})}{\overline{\vy}}<\delta$.

	Function  $f$ is $L$ Lipschitz. Then
	$\| \vy - f(\vx) \| \le \| \vy - \overline{\vy} \| + \|\overline{\vy}-f(\overline{\vx})\| + \| f(\overline{\vx})- f(\vx)\| \le \delta + \delta + L \delta = (L+2) \delta$. 
	
	So, we know that $\REACHPepsilon(\vx,\vy)$ for $\epsilon=(L+2) \delta$.
	
	Consequently, if  $\varphi_{\delta}(\vx) \stackrel{*}{\rightarrow} \varphi_{\delta}(\vy)$ in $G_{\delta}$, then $\REACHPepsilon(\vx,\vy)$ for $\epsilon=(L+2) \delta$.
	
	This provides the conclusion, since if we have $\PATH_\omega(\vx, \vy)$ for all $\delta>0$, then we get $\REACHPepsilon(\vx,\vy)$ for all $\epsilon>0$.
	
			\olivierpasimportant{mettre des racines de d comme eux?}

	\end{itemize}
}
\olivierplusimportant{Comment j'ai galéré pour obtenir ca: Preuve longue, qqui explique pourquoi il ``faut'' passer par cette histoire de lipschitz (enfin j'ai l'impression.
	 
	 We define the relation $\PATHONE_{\omega}(\vx,\vy)$ to be true iff for all $\delta>0$, $\varphi_{\delta}(\vx) \rightarrow \varphi_{\delta}(\vy)$ in $G_{\delta}$.
		
	Write  $\REACHONE{\P}(\mathbf{x}, \mathbf{y})$  iff there exists a trajectory of $\mP$ from $\mathbf{x}$ to $\mathbf{y}$ in one step. 
	
	 We start by showing that:

	$\forall \vx, \vy\in K, \PATHONE_\omega(\vx, \vy) \Leftrightarrow \REACHONE{\mP}_\omega(\vx,\vy)$.

	\begin{itemize}
	 	\item ($ \Leftarrow $) 
		
		This is direct, since  $\REACHONE{\mP}_{\epsilon}(\vx,\vy)$ implies that 	 $\PATHONE(G_{\epsilon}, \varphi_{\delta}(\vx),
	 \varphi_{\delta}(\vy))$: the graph is made to always  have more trajectories/behaviours than $\REACHPepsilon$.

	Actually, 	 $\forall \vx, \vy\in K, \PATH_\omega(\vx, \vy) \Leftarrow \REACHPomega(\vx,\vy)$
	is true for the same reason.
	
	\item ($\Rightarrow$)  Let $\vx, \vy \in K$. Assume that $\PATHONE_\omega(G,\vx,\vy)$. For all $\delta>0$, there exists $\overline{\vx} \in \varphi_\delta(\vx)$, $\overline{\vy} \in \varphi_{\delta}(\vy)$ with $\distance{f(\overline{\vx})}{\overline{\vy}}<\delta$. 
	
	\olivierpasimportant{
	\textbf{``Assume'' $f$ is continuous (which cannot be  the case on a non-degenerated PPAM)}. Then this means that $f(\vx)=\vy$: as the radius is bounded by $\delta$, then when $\delta$ goes to $0$, $\overline{\vx}$ converges to $\vx$, $\overline{\vy}$ to $\vy$, and hence $\distance{f(\overline{\vx})}{\overline{\vy}}$ to $0$.
	
	Actually $f$ is not continuous, but 
	}
	
	Function  $f$ is $L$ Lipschitz (take $L=\max_{i} \| A_{i} \|$). Then
	$\| \vy - f(\vx) \| \le \| \vy - \overline{\vy} \| + \|\overline{\vy}-f(\overline{\vx})\| + \| f(\overline{\vx})- f(\vx)\| \le \delta + \delta + L \delta = (L+2) \delta$. Given $\epsilon>0$, considering $\delta=\epsilon/(L+2)$ we know that $\REACHONE{\mP}_\epsilon(\vx,\vy)$, and hence the conclusion.

\end{itemize}
 	 
	 We then claim	 
	 $\forall \vx, \vy\in K, \PATH_\omega(\vx, \vy) \Leftrightarrow \REACHPomega(\vx,\vy)$.
	 
	 \begin{itemize}
	 	\item ($ \Leftarrow $) Already seen.

	 	\item ($\Rightarrow$)  Let $\vx, \vy \in K$. Assume that $\PATH_\omega(G,\vx,\vy)$.
		
		\olivierpasimportant{En fait, ca me parait pas si évident... suis-je fatigué?}
		\end{itemize}
}
		
	\olivierplusimportant{Commenté: 
		 Consider $\delta>0$, with $\PATH(G_{\delta}, \varphi_{\delta}(\vx),	 \varphi_{\delta}(\vy))$. 
Denote the corresponding path $\varphi_{\delta}(\vx)=\vx_0 \in \VERTEX_{i_0} \rightarrow \vx_1 \in \VERTEX_{i_1} \rightarrow \dots \rightarrow \vx_n = \varphi_{\delta}(\vy) \in \VERTEX_{i_{n}}$.
		
		We have that, for all $i\in \{ 0, \dots, n-1\}$, $\REACHPepsilon(\vx_i, \vx_{i+1})$ iff there exists $\overline{\vx} = f_\epsilon(x_i) \in B(f(\vx_i), \epsilon) = B(\vx_{i+1}, \epsilon)$.
		
		\bigskip
			
		We denote the path taken by : $\vx=\vx_0 \in \VERTEX_0 \rightarrow \vx_1 \in \VERTEX_1 \rightarrow \dots \rightarrow \vx_n = \vy \in \VERTEX_n$.
	 	
	 	We have that, for all $i\in \{ 0, \dots, n-1\}$, $\REACHPepsilon(x_i, x_{i+1})$ iff there exists $\overline{x} = f_\epsilon(x_i) \in B(f(x_i), \epsilon) = B(x_{i+1}, \epsilon)$.
	 	
	 	We must fix $\epsilon$ according to $\delta$, to avoid having a $\overline{x}$ in $B(f(x_i), \epsilon) \backslash \VERTEX_{i+1}$.
	 	
	 	Thus, we take $\epsilon < \frac{1}{n} \inf_{i \in \{0, \dots, n-1\}} d(f(x_i), \VERTEX_{i+1}^C)$, so, for all $i \in \{0, \dots, n-1\}$, $B(f(x_i), \epsilon) \subset \VERTEX_{i+1} $.
	 	
	 	Thus, there exists a perturbation $\epsilon$ such that $\REACHP\epsilon^\mP(x,y)$. We conclude.
		}

%
%
%
\olivierplusimportant{Enlevé: 	Furthermore, $\REACHG_\omega$ is computably enumerable since $K$ is covered by a finite number of open sets. 

Commenter la dessus sur finitude? la complexité du calcul de l'abstraction? }

Notice  this would work even only assuming the domain to be a computable compact: we will  recall later what it is (using computable analysis), but let's say for now that the above proof only requires that given  $\delta=2^{-m}$, there is an effective way to determine an effective cover of it using finitely many rational balls of radius less than $\delta$. 

\begin{corollary}
[{Robust $\Rightarrow$ decidable \cite[Corollary 5]{asarin01perturbed}}]  \label{coro6}
Assume the hypotheses of Theorem \ref{hypothesesquivontbien}.

If $\REACHPomega=\REACHP$ then $\REACHP$ is decidable.
\end{corollary}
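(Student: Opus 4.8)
The plan is to combine the two containment results already available for the exact and perturbed reachability relations, and then invoke the elementary fact that a relation which is simultaneously computably enumerable and co-computably enumerable is decidable. This is exactly the analogue, for general discrete time systems, of the Turing-machine statement ``Robust $\Rightarrow$ decidable'' proved earlier, where robust languages were shown decidable because $L(\M)$ is c.e. and $L_\omega(\M)$ is co-c.e.

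First I would observe that the exact reachability relation $\REACHP$ is computably enumerable: since the system is $\Q$-computable, a Turing machine can, on input of a rational pair $(\vx,\vy)$, simulate $\tu f$ step by step, producing the trajectory $\vx, \tu f(\vx), \tu f(\tu f(\vx)), \dots$, and accept as soon as some iterate equals $\vy$. Hence $\REACHP \in \Sigma_1^0$ (this was already noted at the start of the section). Next, under the present hypotheses — a locally Lipschitz $\Q$-computable system whose domain $X$ is a closed rational box — Theorem \ref{thcinq} gives that the perturbed relation $\REACHPomega$ lies in $\Pi_1^0$, i.e.\ is co-computably enumerable.

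Finally I would apply the hypothesis $\REACHPomega=\REACHP$. This equality makes one and the same relation $\REACHP$ both $\Sigma_1^0$ (by the simulation argument) and $\Pi_1^0$ (since it coincides with $\REACHPomega$, which is $\Pi_1^0$ by Theorem \ref{thcinq}). As a relation that is both c.e.\ and co-c.e.\ is decidable ($\Sigma_1^0 \cap \Pi_1^0 = \Delta_1^0$), we conclude that $\REACHP$ is decidable. There is no genuine obstacle here: the only point deserving a moment's care is that both enumerations range over the same countable index set $\Q^d \times \Q^d$ of rational pairs, so that the intersection-of-classes argument is being applied to a single relation rather than to two superficially different objects; once this is noted, the conclusion is immediate.
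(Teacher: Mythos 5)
Your proposal is correct and follows exactly the paper's own argument: $\REACHP$ is c.e.\ by simulating the $\Q$-computable dynamics, $\REACHPomega$ is co-c.e.\ by Theorem \ref{thcinq}, and the hypothesis $\REACHPomega=\REACHP$ makes the single relation both c.e.\ and co-c.e., hence decidable. No differences worth noting.
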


\begin{proofmanon}
	$\REACHP$ is c.e. and we know from  Theorem \ref{thcinq} that $\REACHPomega$ is co-c.e.. If they are equal, then they are decidable, as a c.e.  and co-c.e. set  is decidable.
\end{proofmanon}

\begin{remark}
Notice that a similar statement holds even if $X$ is not compact: from the proof, it is sufficient that there exists some family of graphs $\mG=(G_{m})$ with $G_{m}=(V_{m},\rightarrow_{m})$  to get a similar reasoning with the following properties:  
\begin{enumerate} \label{def:abstraction}
\item $\REACHPepsilon(\vx,\vy)$ with $\vx \in \VERTEX_{i}$, $\epsilon=2^{-n}$, implies $\VERTEX_{i} {\rightarrow_{n}} \VERTEX_{j}$ for  all  $\VERTEX_{j}$ containing $\vy$.

\olivierplusimportant{Etait: $\REACHPepsilon(\vx,\vy)$ implies $\PATH(G_{\epsilon}, \varphi_{\epsilon}(\vx),
	 \varphi_{\epsilon}(\vy))$}
\item For any $\epsilon=2^{-n}$, there is  some $m$ such that if we have $\VERTEX_{i} \stackrel{*}{\rightarrow_{m}} \VERTEX_{j}$ then 
	$\REACHPepsilon(\vx,\vy)$ whenever $\vx \in \VERTEX_{i}$, $\vy \in \VERTEX_{j}$.
\olivierplusimportant{Etait $\varphi_{\delta}(\vx) \rightarrow \varphi_{\delta}(\vy)$ in $G_{\delta}$ implies $\REACHP_{\epsilon(\delta)}(\vx,\vy)$, where  $\epsilon(\delta)$ is some function that goes to $0$ when $\delta$ goes to $0$.}

\item  For all $m$, $G_{m}$ is a  finite computable graph: determining whether $\VERTEX_{i} \rightarrow_{m} \VERTEX_{j}$ in $G_{m}$ can be effectively determined given integers $m$, $i$, and $j$.
\end{enumerate}

When these three properties hold, we say that $\mG$ is  \emph{a computable abstraction} of the discrete time dynamical system.
\end{remark}



There is a kind of converse property if some condition is added. Before stated this as Corollary \ref{coroconversee}, we relate robustness to the concept of $\delta$-decidability in \cite{gao2012delta}, and also the existence of some witness of non-reachability.

Given $\vx$, we write $\REACHP(\vx)$ for the set of the points $\vy$ reachable from $\vx$:
$\REACHP(\vx)=\{\vy | \REACHP(\vx,\vy)\}.$ This is  easily seen to also corresponds the smallest set
such that $\vx \in \REACHP(\vx)$ and $\tu f(\REACHP(\vx)) \subseteq \REACHP(\vx)$. 

%

\newcommand \oR{\mathcal{R}^{*}} 
\newcommand \ooR{{R}^{*}} 

We say that  $\REACHP(\vx,\vy)$ is $\epsilon$-far from being true,  if there is $\oR \subseteq X$ 
 so that
\begin{enumerate}
\item $\vx \in \oR$,
\item $\tu f_{\epsilon}(\oR) \subseteq \oR$,
\item $\vy \not\in \oR$.
\end{enumerate}

When this holds, necessarily $\neg \REACHP(\vx,\vy)$: indeed, for all $\epsilon>0$, the set $\REACHPepsilon(\vx)=\{\vy | \REACHPepsilon(\vx,\vy)\}$  is the smallest set that satisfies $\vx \in \REACHPepsilon(\vx) $ and $\tu f_\epsilon(\REACHPepsilon(\vx))  \subseteq \REACHPepsilon(\vx)$. Consequently, as $\oR$ also satisfies these properties by the first two items, $\REACHPepsilon(\vx) \subseteq \oR$, and hence $\tu y \not\in \REACHP(\vx)$ as $\REACHP(\vx) \subseteq \REACHPepsilon(\vx) \subseteq \oR$ and $\tu y \not\in \oR$ from the third item.

In other words, $\oR$ can be seen as a \motnouv{witness} of the non-reachability of $\vy$ from $\vx$. We will say that it is \motnouv{at level $\epsilon$}.  

This provides a link to $\delta$-decidability  \cite{gao2012delta}:

\begin{proposition}[Robust $\Leftrightarrow$ reachability is true or $\epsilon$-far from being true] \label{prop28}

We have $\REACHPomega=\REACHP$
if and only if  for all  $\vx, \vy \in \Q^{d}$, either
\begin{enumerate}
\item $\REACHP(\vx,\vy)$ is true
\item or $\REACHP(\vx,\vy)$ is false, but furthermore, there exists $\epsilon>0$ such that it is $\epsilon$-far from being true. \\
(i.e. there is a witness of it for some $\epsilon>0$ level).
\end{enumerate}
\end{proposition}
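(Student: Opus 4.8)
The plan is to reduce the whole equivalence to a single clean fact: for a fixed pair $\vx,\vy$ and a fixed $\epsilon>0$, the statement ``$\REACHP(\vx,\vy)$ is $\epsilon$-far from being true'' is \emph{the same} as $\neg\REACHPepsilon(\vx,\vy)$. Once this is in place, the proposition is just a reformulation of the inclusion $\REACHPomega\subseteq\REACHP$ (the reverse inclusion being automatic), read through the quantifier $\REACHPomega(\vx,\vy)\iff\forall\epsilon>0\;\REACHPepsilon(\vx,\vy)$.

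First I would establish that equivalence. One implication is exactly the computation carried out just before the statement: given any witness $\oR$ at level $\epsilon$, the first two defining properties ($\vx\in\oR$ and $\tu f_{\epsilon}(\oR)\subseteq\oR$) together with the fact that $\REACHPepsilon(\vx)$ is the least set containing $\vx$ and closed under $\tu f_{\epsilon}$ give $\REACHPepsilon(\vx)\subseteq\oR$; since $\vy\notin\oR$, this yields $\vy\notin\REACHPepsilon(\vx)$, i.e. $\neg\REACHPepsilon(\vx,\vy)$. For the converse I would exhibit the canonical witness $\oR:=\REACHPepsilon(\vx)$ itself: it contains $\vx$, it satisfies $\tu f_{\epsilon}(\oR)\subseteq\oR$ by its very closure property, and it excludes $\vy$ precisely when $\neg\REACHPepsilon(\vx,\vy)$ holds. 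Thus ``$\epsilon$-far from being true'' and $\neg\REACHPepsilon(\vx,\vy)$ name the same condition.

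With this, both directions are immediate. For ($\Rightarrow$), assume $\REACHPomega=\REACHP$ and fix $\vx,\vy\in\Q^{d}$. If $\REACHP(\vx,\vy)$ holds we are in case~1; otherwise $\neg\REACHP(\vx,\vy)$, hence $\neg\REACHPomega(\vx,\vy)$, so by definition of $\REACHPomega$ there is some $\epsilon>0$ with $\neg\REACHPepsilon(\vx,\vy)$, and the equivalence above puts us in case~2. For ($\Leftarrow$), the inclusion $\REACHP\subseteq\REACHPomega$ holds automatically, since every trajectory of $\mP$ is also a trajectory of every $\mP_{\varepsilon}$. For the reverse inclusion, suppose $\neg\REACHP(\vx,\vy)$; the hypothesis forces case~2, so $\REACHP(\vx,\vy)$ is $\epsilon$-far from being true for some $\epsilon>0$, whence $\neg\REACHPepsilon(\vx,\vy)$ and therefore $\neg\REACHPomega(\vx,\vy)$. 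This gives $\REACHPomega\subseteq\REACHP$, and equality follows.

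I do not anticipate any real obstacle: all the content sits in the already-proved observation that $\REACHPepsilon(\vx)$ is the minimal $\tu f_{\epsilon}$-closed set containing $\vx$, which collapses ``$\epsilon$-far from being true'' and $\neg\REACHPepsilon(\vx,\vy)$ into one notion. The only points to state carefully are that $\REACHPepsilon(\vx)$ is itself a legitimate witness (so the existential in the definition of $\epsilon$-far is actually met), and that the $\forall\epsilon$ quantifier defining $\REACHPomega$ correctly converts the per-$\epsilon$ failure into the global dichotomy asserted by the proposition.
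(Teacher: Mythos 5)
Your proposal is correct and follows essentially the same route as the paper's proof: both directions rest on the minimality of $\REACHPepsilon(\vx)$ among sets containing $\vx$ and closed under $\tu f_{\epsilon}$, with $\REACHPepsilon(\vx)$ itself serving as the canonical witness. Your reorganization of this into an explicit equivalence between ``$\epsilon$-far from being true'' and $\neg\REACHPepsilon(\vx,\vy)$ is just a cleaner packaging of the same argument.
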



\begin{proof}		

$\Rightarrow$:
As we said, for all $\epsilon>0$, the set $\REACHPepsilon(\vx)$ satisfies $\vx \in \REACHPepsilon(\vx) $ and $\tu f_\epsilon(\REACHPepsilon(\vx))  \subseteq \REACHPepsilon(\vx)$ (this is even the smallest set such that this holds).

	Let $\vy \in \Q^d$, let us assume that $\REACHP(\vx, \vy) = \bigcap_\epsilon 
	\REACHPomega(\vx, \vy)$ is not true. Then	 there exists $\epsilon$ such that 
	$\REACHPepsilon(\vx, \vy)$ is false, i.e. $\vy \not\in \REACHPepsilon(\vx)$.%
\olivierplusimportant{Etait: (erreurs** de toute facon) So \olivier{Pourquoi en fait?} $\distance{\vy}{\closure{\REACHPepsilon(\vx)}}>0$. 
	 Mais pourqiuoi en fait. Pourrait etre 0. Consider
	$ \epsilon' = \min({\distance{\vy}{\closure{\REACHPepsilon(\vx, \vy)}}/2, \epsilon})$, and $\overline{R}=\REACHP_{\epsilon'}(\vx)$.
	Then, $\vx \in \oR$ and from first paragraph $f_{\epsilon'}(\oR) \subseteq \oR$. Now, $\distance{\vy}{\overline{R}} =
	\distance{\vy}{\REACHP_{\epsilon'}(\vx)} 	
	\ge \distance{\vy}{\REACHPepsilon(\vx}) > \epsilon'$, using the fact that $\REACHP_{\epsilon'}(\vx) \subseteq \REACHPepsilon(\vx)$.
		Consider  $\overline{R}=\closure{\REACHPepsilon(\vx)}$.

	}
		Consider  $\oR={\REACHPepsilon(\vx)}$.
	Then, $\vx \in \oR$ and from the first paragraph $\tu f_{\epsilon}(\oR) \subseteq \oR$ and $\tu y \not\in \oR$. 	
	
$\Leftarrow$:
	When $\REACHP(\vx, \vy)$ is true, then for all $\epsilon>0$, 
	$\REACHPepsilon(\vx, \vy)$ is true, so $\REACHPomega(\vx,\vy)$ is.
  	Now, when $\REACHP(\vx, \vy)$ is false, we know by hypothesis that $\REACHP(\vx, \vy)$ is $\epsilon$-far from 
	being true for some $\epsilon>0$: there exists a set $\oR$ satisfying $\vx \in \oR$ and $\tu f_{\epsilon}(\oR) \subseteq \oR$. As 
	$\REACHPepsilon(\vx)$ is the smallest such set,  $\REACHPepsilon(\vx) \subseteq \oR$.  Now, as  $\vy \not\in \oR$, necessarily $\tu y \not\in {\REACHPepsilon(\vx)}$. 	Hence $\REACHPomega(\vx,\vy)$ is false.
%
%
%
\end{proof}



We say that a subset $\ooR$ of $X$  is $\epsilon$-rejecting if it satisfies 2) and 3): that is to say, $\tu f_{\epsilon}(\ooR) \subseteq \ooR$,
and $\vy \not\in \ooR$.  A trajectory that reaches such a $\ooR$ will never leave it.
%
%
A  system is  \motnouv{eventually decisional} if for all $\vx$, $\vy$, there is some $\ooR$ $\epsilon$-rejecting so that either the trajectory starting from $\vx$ reaches $\vy$, or when not, it reaches $\ooR$. 

We come back to the converse of Corollary \ref{coro6}:
from Proposition \ref{prop28}, a robust dynamical system  (i.e. $\REACHPomega=\REACHP$) is eventually decisional, by considering 
$\ooR=\oR$ for the $\oR$ given by item 2) there. 
Conversely:

\begin{lemma} 
Consider a rational  system  not robust, with $\tu f$ continuous or Lipschitz: as $\REACHP \subseteq \REACHP_{\omega}$, this means that there exist some  $\vx$ and $\vy$ with $\REACHPomega(\vx,\vy)$  but not $\REACHP(\vx,\vy)$.  The trajectory starting from $\vx$ can not reach any $\epsilon$-rejecting subset.
\end{lemma}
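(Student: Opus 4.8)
The plan is to argue by contradiction. Suppose the exact (unperturbed) trajectory $\vx_0=\vx,\ \vx_1=\tu f(\vx),\dots$ does reach some $\epsilon$-rejecting set $\ooR$, say $\vx_T\in\ooR$. From this I will manufacture a witness of non-reachability at some strictly smaller level $\epsilon'>0$, i.e.\ a set $\oR$ with $\vx\in\oR$, $\tu f_{\epsilon'}(\oR)\subseteq\oR$ and $\vy\notin\oR$. By the argument given right after the definition of ``$\epsilon$-far from being true'', such a witness forces $\REACHP_{\epsilon'}(\vx)\subseteq\oR$, hence $\vy\notin\REACHP_{\epsilon'}(\vx)$ and thus $\neg\REACHP_{\epsilon'}(\vx,\vy)$; this contradicts $\REACHPomega(\vx,\vy)$, which by definition requires $\epsilon''$-reachability of $\vy$ for \emph{every} $\epsilon''>0$.

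Two observations set up the witness. First, since $\ooR$ is $\epsilon$-rejecting and $\vx_T\in\ooR$, we have $\tu f_{\epsilon}(\vx_T)=B(\vx_{T+1},\epsilon)\subseteq\ooR$, where $\vx_{T+1}=\tu f(\vx_T)$; so $\ooR$ already contains a full open ball of radius $\epsilon$ around $\vx_{T+1}$, which will ``catch'' the end of the trajectory. Second, because $\neg\REACHP(\vx,\vy)$, the target $\vy$ equals none of the $\vx_i$, so $d_i:=\distance{\vx_i}{\vy}>0$ for each $i\le T$.

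In the Lipschitz case I fix $\epsilon'\in(0,\epsilon]$ (to be chosen last) and wrap a tube of balls $B(\vx_i,r_i)$ around the finite trajectory prefix, with radii defined forward by $r_0=\epsilon'$ and $r_{i+1}=L r_i+\epsilon'$, setting $\oR=\ooR\cup\bigcup_{i=0}^{T}B(\vx_i,r_i)$. Using the Lipschitz bound $\tu f(B(\vx_i,r_i))\subseteq B(\vx_{i+1},L r_i)$ one checks that $\tu f_{\epsilon'}(B(\vx_i,r_i))\subseteq B(\vx_{i+1},L r_i+\epsilon')=B(\vx_{i+1},r_{i+1})$ for $i<T$, that $\tu f_{\epsilon'}(B(\vx_T,r_T))\subseteq B(\vx_{T+1},L r_T+\epsilon')\subseteq B(\vx_{T+1},\epsilon)\subseteq\ooR$ as soon as $L r_T+\epsilon'\le\epsilon$, and that $\tu f_{\epsilon'}(\ooR)\subseteq\tu f_{\epsilon}(\ooR)\subseteq\ooR$; hence $\tu f_{\epsilon'}(\oR)\subseteq\oR$, while $\vx=\vx_0\in\oR$ is immediate. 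Now each $r_i$ equals $\epsilon'$ times a constant $C_i=\sum_{k=0}^{i}L^k$ depending only on $L$ and $i$, so all $r_i\to 0$ as $\epsilon'\to 0$. Thus for $\epsilon'$ small enough we get simultaneously $L r_T+\epsilon'\le\epsilon$ (closing the tube into $\ooR$) and $r_i<d_i$ for every $i\le T$ (keeping $\vy$ out of each tube ball); together with $\vy\notin\ooR$ this yields $\vy\notin\oR$, completing the witness and the contradiction.

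The step I expect to be the crux is precisely the realization that one cannot argue at the original level $\epsilon$ (there the $\epsilon$-perturbed dynamics may wander off the exact trajectory and even reach $\vy$) but must pass to a strictly smaller $\epsilon'$ and build an $\tu f_{\epsilon'}$-invariant ``funnel'' of balls of radius $O(\epsilon')$ around the finite exact prefix, exploiting the free ball $B(\vx_{T+1},\epsilon)\subseteq\ooR$ to absorb the end of the funnel. The purely continuous (non-Lipschitz) case is handled in the same spirit, except the tube radii are produced \emph{backward} from $\ooR$ using the modulus of continuity of $\tu f$ at the finitely many points $\vx_0,\dots,\vx_T$, capping each radius below both the continuity requirement and $d_i$, and then choosing $\epsilon'$ small relative to all the (finitely many, positive) radii so obtained.
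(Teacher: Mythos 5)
Your proof is correct and follows essentially the same route as the paper's: argue by contradiction, pass to a strictly smaller perturbation level $\epsilon'$, and use Lipschitzness/continuity to keep the $\epsilon'$-perturbed reachable set close to the exact finite trajectory prefix until it is absorbed by the $\epsilon$-rejecting set $\ooR$ (via the free ball $B(\tu f(\vx_T),\epsilon)\subseteq\ooR$, which plays the role of the paper's ``possibly one more step'' remark) while staying at positive distance from $\vy$. Your explicit invariant funnel of balls $B(\vx_i,r_i)$ merely makes concrete the step the paper compresses into ``the $t$-th iteration of $\tu f$ is continuous or Lipschitz,'' and is if anything the more rigorous rendering of the same idea.
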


\begin{proof}
By contradiction, assume the trajectory starting from $\vx$ reaches some $\epsilon$-rejecting $\ooR$.  
Possibly by considering one more step, we can assume it reaches the interior of $\ooR$ for the first time at time $t$: indeed, if it reaches the frontier at $\vx^{*}$, then we know that $B(\tu f(\vx^{*}),\epsilon) \subseteq \ooR$, and $\tu f(\vx^{*})$ is in the interior of that ball. Now, from initial $\tu x$ until the position at time $t$, it remains at some positive distance of $\tu y$. As $\tu f$ is continuous or Lipschitz, the $t$-th iteration of $\tu f$ also is. So there is some $0<\epsilon'<\epsilon$ taken sufficiently small so that $\REACHP_{\epsilon'}$ intersects the interior of $\ooR$, and remains at a positive distance of $\tu y$. Once in $\ooR$, $\epsilon'$-perturbed trajectories will remain in it, since we have $\epsilon'<\epsilon$. We get $\tu y \not\in \REACHP_{\epsilon'}$, and consequently $\neg \REACHP_{\omega}(\vx,\vy)$: contradiction.
\end{proof}


\begin{corollary} \label{coro6r}
Consider a continuous or Lipschitz rational dynamical system.
It is robust iff it is eventually decisional. 
\end{corollary}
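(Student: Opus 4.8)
The plan is to prove both implications by invoking the two results that immediately precede the statement, so that the corollary is essentially a repackaging of Proposition \ref{prop28} together with the Lemma above it.

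For the direction \emph{robust $\Rightarrow$ eventually decisional}, I would argue exactly as in the paragraph following Proposition \ref{prop28}. Fix $\vx,\vy \in \Q^{d}$. Since the system is robust we have $\REACHPomega=\REACHP$, so Proposition \ref{prop28} applies to this pair: either $\REACHP(\vx,\vy)$ holds, in which case the trajectory starting from $\vx$ reaches $\vy$ and the defining alternative is satisfied, or $\REACHP(\vx,\vy)$ is false but $\epsilon$-far from being true for some $\epsilon>0$, witnessed by a set $\oR$ with $\vx \in \oR$, $\tu f_{\epsilon}(\oR) \subseteq \oR$, and $\vy \notin \oR$. Taking $\ooR=\oR$, the last two conditions say precisely that $\ooR$ is $\epsilon$-rejecting, and since $\vx \in \ooR$ the trajectory from $\vx$ already lies in $\ooR$ at time $0$, hence reaches it. This supplies the required witness, so the system is eventually decisional.

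For the converse, \emph{eventually decisional $\Rightarrow$ robust}, I would argue by contraposition using the Lemma stated just above. Suppose the system is not robust. As $\REACHP \subseteq \REACHPomega$ always holds, non-robustness produces a pair $\vx,\vy$ with $\REACHPomega(\vx,\vy)$ but $\neg\REACHP(\vx,\vy)$; in particular the unperturbed trajectory from $\vx$ never reaches $\vy$. By the Lemma — applicable because $\tu f$ is continuous or Lipschitz — this trajectory also cannot reach any $\epsilon$-rejecting subset. Hence, for this particular pair $(\vx,\vy)$, no $\epsilon$-rejecting $\ooR$ can satisfy the defining alternative of eventual decisionality, since the trajectory reaches neither $\vy$ nor any such $\ooR$. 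Thus the system fails to be eventually decisional, and contraposing yields the claim.

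Both implications being established, the equivalence follows. I do not expect a genuine obstacle here, as the content is already carried by Proposition \ref{prop28} and the Lemma; the only points requiring care are bookkeeping of the definitions: checking that a witness $\oR$ of $\epsilon$-farness is literally an $\epsilon$-rejecting set containing $\vx$, that ``reaches $\ooR$'' is permitted to occur at time $0$ so that $\vx \in \ooR$ counts, and that the same $\epsilon$ is used consistently between the two notions, which it is, since both are phrased through $\tu f_{\epsilon}$.
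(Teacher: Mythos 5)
Your proof is correct and follows essentially the same route as the paper: the forward direction is the paper's remark that a robust system is eventually decisional by taking $\ooR=\oR$ from item 2) of Proposition \ref{prop28}, and the converse is exactly the contrapositive supplied by the preceding Lemma. The bookkeeping points you flag (the witness of $\epsilon$-farness being literally an $\epsilon$-rejecting set containing $\vx$, and reaching $\ooR$ at time $0$) are the right ones and are handled correctly.
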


We can even compute the witnesses under the hypotheses of Theorem \ref{hypothesesquivontbien}.  
We say that some dynamical system is \motnouv{effectively eventually decisional} when there is an algorithm such that, given $\vx$ and $\vy$, it (terminates and) outputs such a $\ooR$ in the form of the union of rational balls.

\begin{proposition}[Reinforcement of Corollary \ref{coro6}] \label{proptrentedeux} 
Assume the hypotheses of Theorem \ref{hypothesesquivontbien}.  
If $\REACHPomega=\REACHP$ then $\REACHP$ is computable, and the system is  effectively eventually decisional.
\end{proposition}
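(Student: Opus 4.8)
The plan is to combine the effective construction from Theorem \ref{hypothesesquivontbien} with Proposition \ref{prop28}, turning the abstract witness $\oR$ into a concrete union of rational balls that can be output by an algorithm. First I would recall that under the hypotheses (locally Lipschitz $\Q$-computable $\tu f$, domain $X$ a closed rational box), Theorem \ref{hypothesesquivontbien} gives us not just that $\REACHPomega$ is co-c.e., but an explicit family of finite computable graphs $G_{m}=(V_{\delta},\rightarrow_{\delta})$ with the two key properties (Claim 1 and Claim 2). The assumption $\REACHPomega=\REACHP$ makes $\REACHP$ both c.e. (by simulation) and co-c.e. (by Theorem \ref{hypothesesquivontbien}), hence decidable; so on input $\vx,\vy$ we can actually decide which of the two cases of Proposition \ref{prop28} holds. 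Computability of $\REACHP$ is therefore immediate; the real content is producing the witness $\ooR$ effectively in the false case.

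Next I would show how to extract the witness. When $\neg\REACHP(\vx,\vy)$, by Proposition \ref{prop28} there is some $\epsilon=2^{-n}$ and a witness; and by the analysis in the proof of that proposition one may take $\oR=\REACHPepsilon(\vx)$. The algorithm searches over $n=1,2,\dots$ and over the graphs $G_{m}$ for the corresponding $\delta=2^{-m}$ (as produced by Claim 2), testing $\LOOP(G_{m},\VERTEX_{i},\VERTEX_{j})$ for the vertices $\VERTEX_{i}\ni\vx$, $\VERTEX_{j}\ni\vy$. By the co-c.e. argument of Theorem \ref{hypothesesquivontbien}, this search terminates exactly when $\neg\REACHP(\vx,\vy)$. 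Upon termination at some $m$, I would take the set $S$ of vertices $\VERTEX_{k}$ reachable from $\VERTEX_{i}$ in $G_{m}$ (computable by breadth-first search on the finite graph), and set
\[
\ooR=\bigcup_{\VERTEX_{k}\in S}\VERTEX_{k},
\]
a finite union of rational balls. This is computable, contains $\vx$, and excludes $\vy$ precisely because $\LOOP(G_{m},\VERTEX_{i},\VERTEX_{j})$ holds, so no $\VERTEX_{j}\ni\vy$ lies in $S$.

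The step I expect to require the most care is verifying that this concrete $\ooR$ is genuinely $\epsilon$-rejecting, i.e. that $\tu f_{\epsilon}(\ooR)\subseteq\ooR$ for an appropriate $\epsilon$ read off from $m$. The point is Claim 1 run in reverse: if $\vz\in\VERTEX_{k}\in S$ and $\tu w\in\tu f_{\epsilon}(\vz)$ with $\epsilon$ chosen so that $\delta<\epsilon/(2L+2)$ as in Claim 2 fails to help here — so instead I would fix $\epsilon=\delta=2^{-m}$ and use Claim 1, which guarantees $\VERTEX_{k}\rightarrow_{\delta}\VERTEX_{k'}$ for every $\VERTEX_{k'}\ni\tu w$; since $\VERTEX_{k}$ is reachable from $\VERTEX_{i}$, so is $\VERTEX_{k'}$, whence $\VERTEX_{k'}\in S$ and $\tu w\in\ooR$. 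Thus the closure of $S$ under $\rightarrow_{\delta}$, which is exactly the reachable set I computed, yields $\tu f_{\delta}(\ooR)\subseteq\ooR$, so $\ooR$ is $\delta$-rejecting with $\vy\notin\ooR$. The delicate bookkeeping is matching the perturbation level $\epsilon$ used in the witness definition to the graph granularity $\delta$, and ensuring the direction of the inequalities in Claim 1 is the one that closes $S$ forward under the perturbed dynamics rather than backward. Once this alignment is pinned down, the algorithm outputs $\ooR$ as the desired finite union of rational balls, establishing that the system is effectively eventually decisional.
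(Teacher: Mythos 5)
Your proposal is correct and follows essentially the same route as the paper: both extract the witness as the union $\REACH{G_{m}}$ of vertices reachable from $\VERTEX_{i}$ in the first graph $G_{m}$ found by an increasing search on $m$, and both rely on the construction of Theorem \ref{thcinq} to see that this union is a witness at level $\delta=2^{-m}$. Your additional care in noting that the rejecting property $\tu f_{\delta}(\ooR)\subseteq\ooR$ comes from Claim 1 applied at level $\epsilon=\delta$ (rather than from Claim 2's inequality) is a correct and welcome elaboration of what the paper leaves as ``from the properties of the construction in that proof.''
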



\begin{proof}
The proof of Theorem \ref{thcinq} shows that when $\REACHP_{\omega}(\vx,\vy)$ is false, then
$\REACHPepsilon(\vx,\vy)$ is false for some $\epsilon=2^{-n}$, and there is some $\delta=2^{-m}$, with some graph $G_{m}$ with some vertices $\VERTEX_{i}$ and
$\VERTEX_{j}$  with $\vx \in \VERTEX_{i}$, $\vy \in \VERTEX_{j}$ and $\neg (\VERTEX_{i }\stackrel{*}{\rightarrow_{\delta}} \VERTEX_{j})$. Denote by $\REACH{G_{m}}$ the union of the vertices $\VERTEX_{k}$ such that $\VERTEX_{i }\stackrel{*}{\rightarrow_{\delta}} \VERTEX_{k}$, for  $\vx \in \VERTEX_{i}$ in that graph. Consider $\oR= {\REACH{G_{m}}}$. This constitutes a witness at level $\delta=2^{-m}$ from the properties of the construction in that proof. 

Then $m$ can be found by testing increasing $m$ until a graph with the above properties is found. The corresponding $\oR= {\REACH{G_{m}}}$ for the first graph found will be a witness at level $\delta=2^{-m}$ from above arguments.
\end{proof}

%
%
%

An effectively eventually decisional system has its reachability relation necessarily decidable (given $\vx$ and $\vy$ compute the path until it reaches $\vy$ (then accept), or $\ooR$ (then reject)):

\begin{corollary}[Decidable $\Leftrightarrow$ robust, for  eventually decisional systems] \label{coroconversee}
Under the hypotheses of Theorem \ref{hypothesesquivontbien}, 
$\REACHP_{\omega}=\REACHP$ iff $\REACHP$ is decidable and it is effectively eventually decisional
iff it is effectively eventually decisional.
\end{corollary}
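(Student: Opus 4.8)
The plan is to prove the three-way statement as a single cycle of implications. Write $(A)$ for $\REACHP_{\omega}=\REACHP$; write $(B)$ for ``$\REACHP$ is decidable and the system is effectively eventually decisional''; and write $(C)$ for ``the system is effectively eventually decisional''. I would establish $(A)\Rightarrow(B)\Rightarrow(C)\Rightarrow(A)$, which yields both stated ``iff''s at once. The whole argument is an assembly of results already proved, so no new construction is needed.

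For $(A)\Rightarrow(B)$ there is nothing fresh to do: this is exactly the content of Proposition \ref{proptrentedeux}, which, under the hypotheses of Theorem \ref{hypothesesquivontbien}, states that robustness makes $\REACHP$ computable (hence decidable) and makes the system effectively eventually decisional. The implication $(B)\Rightarrow(C)$ is immediate, since $(B)$ explicitly asserts effective eventual decisionality.

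The only implication requiring an argument is $(C)\Rightarrow(A)$. Here I would first observe that an effectively eventually decisional system is a fortiori eventually decisional: an algorithm that outputs a witness $\ooR$ certifies, in particular, that such a $\ooR$ exists. Next, the hypotheses of Theorem \ref{hypothesesquivontbien} guarantee that the system is rational and, since $\tu f$ is locally Lipschitz on the compact box $X$, globally Lipschitz; thus Corollary \ref{coro6r} applies and gives precisely that eventual decisionality implies robustness, i.e. $\REACHP_{\omega}=\REACHP$. This closes the cycle and establishes $(A)\iff(B)\iff(C)$.

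I expect the main conceptual point---rather than a genuine obstacle---to be the direction $(C)\Rightarrow(A)$, and specifically the appeal to Corollary \ref{coro6r}, which itself rests on the preceding Lemma: its contrapositive says that a non-robust Lipschitz rational system admits a pair $(\vx,\vy)$ with $\REACHPomega(\vx,\vy)$ but not $\REACHP(\vx,\vy)$ whose forward orbit reaches no $\epsilon$-rejecting set, so the system fails to be eventually decisional. The one thing to double-check is that the Lipschitz (or continuity) assumption used there is available under the hypotheses of Theorem \ref{hypothesesquivontbien}, which it is. As an alternative route to closing the loop, one could instead argue $(C)\Rightarrow(B)$ directly from the remark preceding the statement---an effectively eventually decisional system has decidable reachability, by running the trajectory until it hits either $\vy$ or the computed $\ooR$---giving $(B)\iff(C)$, and then combine Proposition \ref{proptrentedeux} with Corollary \ref{coro6r} to obtain $(A)\iff(C)$.
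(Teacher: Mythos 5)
Your proposal is correct and matches the paper's intended argument: the paper leaves this corollary without an explicit proof precisely because it is the assembly you describe, namely Proposition \ref{proptrentedeux} for robustness implying decidability plus effective eventual decisionality, the remark immediately preceding the statement for effective eventual decisionality implying decidability, and Corollary \ref{coro6r} (via the contrapositive lemma) for eventual decisionality implying robustness. Your check that the hypotheses of Theorem \ref{hypothesesquivontbien} supply the Lipschitz and rationality assumptions needed for Corollary \ref{coro6r} is the right point to verify, and it goes through.
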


\olivierplusimportant{Etait: 
Actually, there is also a converse of Corollary \ref{coro6}, if (uniform) computability of witnesses is added.

\begin{corollary} \label{coro29}
Consider the hypotheses of Corollary \ref{coro6}.

If $\REACHPomega=\REACHP$ then $\REACHP$ is computable and there is an algorithm that, for all $\vx, \vy$ with
$\neg \REACHP(\vx,\vy)$ (terminates and) produces a computable witness of it.

Conversely, if  there is an algorithm that, for all $\vx, \vy$ with
$\neg \REACHP(\vx,\vy)$ (terminates and) produces some computable witness of it, then $\REACHP$ is computable and $\REACHPomega=\REACHP$.
\end{corollary}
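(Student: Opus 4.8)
The plan is to prove the two ``iff''s as a single cycle of implications among the three statements: (A) $\REACHPomega=\REACHP$; (B) $\REACHP$ is decidable and the system is effectively eventually decisional; (C) the system is effectively eventually decisional. I would establish $(A)\Rightarrow(B)\Rightarrow(C)\Rightarrow(A)$, which closes the loop and delivers both equivalences at once.

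For $(A)\Rightarrow(B)$ I would simply invoke Proposition \ref{proptrentedeux}: under the hypotheses of Theorem \ref{hypothesesquivontbien}, robustness ($\REACHPomega=\REACHP$) already yields that $\REACHP$ is computable and that the system is effectively eventually decisional, which is exactly (B). The step $(B)\Rightarrow(C)$ is a trivial weakening, since (B) is a conjunction whose second conjunct is (C). (One could alternatively close the gap $(C)\Rightarrow(B)$ directly using the remark that an effectively eventually decisional system has decidable reachability, but the cycle already subsumes this.)

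The crux is $(C)\Rightarrow(A)$, which I would argue by contradiction. Since $\REACHP\subseteq\REACHPomega$ always holds, non-robustness would mean there exist $\vx,\vy$ with $\REACHPomega(\vx,\vy)$ but $\neg\REACHP(\vx,\vy)$. Because the system is effectively eventually decisional it is in particular eventually decisional, so for these $\vx,\vy$ there is an $\epsilon$-rejecting set $\ooR$ such that the unperturbed trajectory from $\vx$ either reaches $\vy$ or reaches $\ooR$. As $\neg\REACHP(\vx,\vy)$ rules out the first alternative, the trajectory from $\vx$ must reach $\ooR$. This directly contradicts the preceding Lemma, which asserts precisely that for a non-robust continuous (or Lipschitz) rational system the trajectory from the witnessing $\vx$ cannot reach any $\epsilon$-rejecting subset. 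The hypotheses of Theorem \ref{hypothesesquivontbien} guarantee that $\tu f$ is locally Lipschitz on the compact box $X$, hence Lipschitz, so the Lemma applies; the contradiction forces robustness.

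The main obstacle, conceptually, is this step $(C)\Rightarrow(A)$: it is the only place where something genuinely has to be shown rather than quoted, and its correctness hinges on lining up the ``trajectory reaches $\ooR$'' guarantee coming from eventual decisionality against the ``cannot reach any $\epsilon$-rejecting subset'' conclusion of the Lemma. Everything else reduces to assembling Proposition \ref{proptrentedeux} and a definitional weakening. I would take care to note explicitly that only \emph{eventual} decisionality (not its effective form) is needed for $(C)\Rightarrow(A)$, the effectivity being relevant only for the decidability half of (B).
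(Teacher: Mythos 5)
Your proof is correct in substance, but it is really a proof of the neighbouring equivalence of Corollary \ref{coroconversee} (robust iff decidable and effectively eventually decisional iff effectively eventually decisional), transported to the present statement, and your converse takes a genuinely different route from the paper's. The forward direction is the same in both: the witness comes from the graph construction of Theorem \ref{thcinq}, i.e.\ the content of Proposition \ref{proptrentedeux}, taking $\oR=\REACH{G_{m}}$. For the converse, the paper argues in two independent steps: robustness follows at once from Proposition \ref{prop28} (a witness at level $\epsilon$ forces $\REACHPepsilon(\vx)\subseteq\oR$ and hence $\neg\REACHPomega(\vx,\vy)$ --- no dynamical argument needed), and computability of $\REACHP$ is obtained by dovetailing the c.e.\ search for a trajectory from $\vx$ to $\vy$ against the witness-producing algorithm, one of which must terminate. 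You instead obtain robustness by contradiction through the Lemma on $\epsilon$-rejecting sets and then recover computability by cycling back through $(A)\Rightarrow(B)$. Both are valid; the paper's version is shorter on robustness (one line from the definition of witness) and makes decidability visible directly from the two semi-decision procedures, whereas your cycle delivers both equivalences of Corollary \ref{coroconversee} at the same time.

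Two points deserve care. A witness and an $\epsilon$-rejecting set are not interchangeable: a witness must contain $\vx$, while the $\ooR$ of (effective) eventual decisionality need not; and the algorithm hypothesised here is only required to halt on pairs with $\neg\REACHP(\vx,\vy)$, whereas effective eventual decisionality demands termination on all pairs. Your step $(C)\Rightarrow(A)$ survives under the actual hypothesis because a witness is in particular an $\epsilon$-rejecting set reached by the trajectory at time $0$ (since $\vx\in\oR$), so the Lemma applies; but your step $(A)\Rightarrow$ (witness algorithm) needs the \emph{proof} of Proposition \ref{proptrentedeux} --- which outputs $\REACH{G_{m}}\ni\vx$, hence a genuine witness --- and not merely its statement, which only promises $\epsilon$-rejecting sets. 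With that reading made explicit, your argument is complete.
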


\begin{proof}
$\Rightarrow$: The proof of Theorem \ref{thcinq} shows that when $\REACHP_{\omega}(\vx,\vy)$ is false, then
$\REACHPepsilon(\vx,\vy)$ is false for some $\epsilon$, and we can effectively determine some $\delta=1/n$, with some graph $G_{\delta}$ and some vertices $\VERTEX_{i}$ and
$\VERTEX_{j}$ of it with $\vx \in \VERTEX_{i}$, $\vy \in \VERTEX_{j}$ and $\neg (\VERTEX_{i }\stackrel{*}{\rightarrow_{\delta}} \VERTEX_{j})$. If we denote by $\REACH{G_{\delta}}$ the union of the vertex $\VERTEX_{k}$ such that $\VERTEX_{i }\stackrel{*}{\rightarrow_{\delta}} \VERTEX_{k}$, for  $\vx \in \VERTEX_{i}$. Consider $\oR= {\REACH{G_{\delta}}}$. This constitutes a witness at level $\delta$ from the properties of the construction in that proof.

$\Leftarrow$: From previous statements, the only thing missing is to establish that $\REACHP$ is computable. But this is clear as given $\vx$ and $\vy$, we can in parallel test whether there is a path between $\vx$ and $\vy$, and in parallel simulates the algorithm producing a witness. Either the first process will terminate and we know that $\REACHP(\vx,\vy)$ is true, or the second will terminate and we know that $\REACHP(\vx,\vy)$ is false.
\end{proof}
}


We now go to complexity issues.

Assume the dynamical system is robust, i.e. $\REACHP=\REACHPomega$. That means that for all rationals $\vx,\vy$, we have
$\REACHP(\vx,\vy) \Leftrightarrow \REACHPomega(\vx,\vy)$. Consequently,  for all rationals $\vx,\vy$,  there exists some $\varepsilon$ (depending of $\vx$, $\vy$) such that $\REACHP(\vx,\vy)$ and $\REACHPepsilon(\vx,\vy)$ have the same truth value (and unchanged by  smaller $\epsilon$).

It is then natural to quantify on the level of required robustness according to $\vx$ and $\vy$, i.e. on the value  $\varepsilon$.

As we may always assume  $\varepsilon=2^{-n}$ for some integer $n$, we write\olivierplusimportant{gain de place:\footnote{There is no true conflict of notation, as the index is here some integer number $\ge 1$, while we can always assume that in $\REACHPepsilon$, $\epsilon$ is in $[0,1)$ (even $[0,1/2)$), i.e. is not an integer.}} 
 $\REACHPn$ for $\REACHP_{2^{-n}}$, and we then introduce:
%

%
%


%

\olivier{Est-ce un pb qu'un rationnel peut avoir plusieurs écriture? 
2/3=4/6 par exemple.... comment on la mesure}

\begin{definition}
Given some function $f: \N \to \N$, we write $\REACHPERTURBEDSPACE{\P}{f}$ for the relation defined as follows: for any rational points $\vx$ and $\vy$ the relation holds iff 
$\REACHP_{f(\length(\vx)+\length(\vy))}(\vx,\vy)$.
\end{definition}

\olivierpasimportant{polynomial time computable probablement trop fort. Mais est-ce grave docteur?}

\begin{lemma} \label{trucquivabien}
Consider a  locally Lipschitz $\Q$-computable system, with $\tu f: \Q \to \Q$ computable in polynomial time, whose domain $X$ is a closed rational box. 
For $\delta=2^{-m}$, consider the associated graph $G_{m}$ considered in the proof of Theorem \ref{thcinq}. Then $\LOOP(G_{m},\VERTEX_{i},\VERTEX_{j})$ is decidable using a space polynomial in $m$.
 \end{lemma}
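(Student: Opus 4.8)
The plan is to apply Corollary \ref{corosuccint} to the graph $G_m$, so that the work reduces to verifying its two hypotheses with a size function $s$ that is polynomial in $m$: namely, that the vertices of $G_m$ admit a binary encoding of length polynomial in $m$, and that the edge relation $\rightarrow_\delta$ is decidable in space polynomial in $m$. Once both are established, Corollary \ref{corosuccint} (which covers both reachability and its complement) immediately gives that $\LOOP(G_m,\VERTEX_i,\VERTEX_j)$ is decidable in space polynomial in $m$.

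First I would make the discretization from the proof of Theorem \ref{thcinq} explicit enough to bound the encoding size. Since $X=\prod_{k=1}^{d}[a_k,b_k]$ is a fixed rational box, I would cover it by balls centered at the dyadic grid points $\vx_i$ whose coordinates are multiples of $2^{-(m+1)}$ lying in $X$, each of radius $\delta_i<\delta=2^{-m}$. Each center is then a vector of $d$ dyadic rationals with denominator $2^{m+1}$ and numerators bounded in terms of the (fixed) box endpoints, so it is encodable with $O(d(m+c))$ bits for a constant $c$ depending only on $X$; this is the desired $s$, polynomial (indeed linear) in $m$. In particular $G_m$ is finite, with at most $2^{\,\mathrm{poly}(m)}$ vertices, as required by the corollary.

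The heart of the argument is the second hypothesis. To decide $\VERTEX_i\rightarrow_\delta\VERTEX_j$, i.e. whether $B(\tu f(\vx_i),(L+1)\delta)\cap\VERTEX_j\neq\emptyset$, I would first compute $\tu f(\vx_i)$. Because $\tu f$ is computable in polynomial time and $\vx_i$ has bit-size polynomial in $m$, the value $\tu f(\vx_i)$ is a rational of bit-size polynomial in $m$ and is obtained in time, hence space, polynomial in $m$. The Lipschitz constant $L$ (finite since $\tu f$ is locally Lipschitz on the compact $X$) is a fixed constant, so $(L+1)\delta$ is a rational of size polynomial in $m$; I would fix once and for all a rational upper bound for $L$. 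Two sup-norm balls $B(\vp,r)$ and $B(\vq,s)$ intersect exactly when $\distance{\vp}{\vq}<r+s$, so the test reduces to computing the sup-distance $\distance{\tu f(\vx_i)}{\vx_j}$ — a maximum of $d$ absolute differences of rationals — and comparing it with $(L+1)\delta+\delta_j$. All of this is plain rational arithmetic on numbers of size polynomial in $m$, hence feasible in space polynomial in $m$.

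With both hypotheses verified for $s$ polynomial in $m$, Corollary \ref{corosuccint} yields that $\LOOP(G_m,\VERTEX_i,\VERTEX_j)$ is decidable in space polynomial in $m$, which is the claim. The only point requiring care — and the main obstacle — is keeping every intermediate quantity of size polynomial in $m$: this is exactly where the polynomial-time hypothesis on $\tu f$ is used (to bound both the cost and the output size of evaluating $\tu f(\vx_i)$), and where it matters that $L$ is treated as a constant independent of $m$ rather than something recomputed at each scale.
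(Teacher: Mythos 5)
Your proposal is correct and follows essentially the same route as the paper: the paper's proof likewise just observes that $G_m$ has $\mathcal{O}(2^{dm})$ vertices and a successor relation computable in space polynomial in $m$, and then invokes Corollary \ref{corosuccint}. You simply spell out the two hypotheses of that corollary (the dyadic-grid encoding of vertices and the rational-arithmetic edge test using the polynomial-time computability of $\tu f$ and the fixed Lipschitz bound) in more detail than the paper does.
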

 
 \begin{proof}
This graph has less than $\mathcal{O}(2^{d*m})$ vertices. The graph has a successor relation $\rightarrow_{\delta}$  computable in space polynomial in $m$.  Hence, the analysis of Corollary \ref{corosuccint}  applies, and we can determine whether $\LOOP(G_{m},\VERTEX_{i},\VERTEX_{j})$ using a space polynomial in $m$.
\end{proof}

\begin{theorem} \label{thdirectionunp}
Assume the hypotheses of Lemma \ref{trucquivabien}.
Assume $p$ is some polynomial. 
Then 
$\REACHPERTURBEDSPACE{\P}{p} \in \PSPACE$.
\end{theorem}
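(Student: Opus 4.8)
The plan is to decide the fixed-level perturbed relation $\REACHP_{2^{-n}}(\vx,\vy)$, with $n = p(\length(\vx)+\length(\vy))$, by reducing it to a reachability question on the finite graph $G_m$ constructed in the proof of Theorem \ref{thcinq}. Write $N = \length(\vx)+\length(\vy)$ for the size of the input and $\epsilon = 2^{-n}$ for the perturbation magnitude; since $p$ is a polynomial, $n = p(N)$, hence $\epsilon = 2^{-p(N)}$, and every resource below is to be measured against $N$.

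First I would fix the discretization level. Recall that the proof of Theorem \ref{thcinq} provides two one-sided correspondences between the perturbed relation and reachability in $G_m$: by Claim 1, $\REACHP_{\epsilon}(\vx,\vy)$ forces $\VERTEX_i \stackrel{*}{\rightarrow_{\epsilon}} \VERTEX_j$ in the graph at level $\delta = \epsilon$, while by Claim 2, reachability $\VERTEX_i \stackrel{*}{\rightarrow_{\delta}} \VERTEX_j$ at any finer level $\delta = 2^{-m} < \epsilon/(2L+2)$ forces $\REACHP_{\epsilon}(\vx,\vy)$. Choosing $m = n + \lceil \log_2(2L+2)\rceil$ meets the threshold of Claim 2 while shifting $m$ only by the additive constant $\lceil \log_2(2L+2)\rceil$, so that $m$ stays polynomial in $N$. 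The crux is to turn this two-sided sandwich into a single graph-reachability statement equivalent to $\REACHP_{2^{-n}}(\vx,\vy)$; this is the step I expect to be the main obstacle, because at a fixed perturbation level the relation is only squeezed between reachability at two resolutions, and closing the gap requires exploiting the compactness of the rational box $X$ together with the global Lipschitz constant $L$ (which exists since $\tu f$ is locally Lipschitz on a compact) to control the slack $(L+1)\delta$ built into the edge relation $B(\tu f(\vx_i),(L+1)\delta)\cap\VERTEX_j\neq\emptyset$.

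Next I would bound the space. The graph $G_m$ covers $X$ by rational balls of radius below $2^{-m}$, so it has $2^{O(dm)}$ vertices, each encodable in $O(dm)$ bits, i.e. in space polynomial in $N$. Since $\tu f$ is polynomial-time (hence polynomial-space) computable and edges are decided by comparing rational balls, the successor relation $\rightarrow_{\delta}$ is decidable in space polynomial in $m$; thus $G_m$ satisfies the hypotheses of Corollary \ref{corosuccint}. By Lemma \ref{trucquivabien} --- equivalently, the Savitch-style recursive procedure underlying Corollary \ref{corosuccint} --- both $\PATH(G_m,\VERTEX_i,\VERTEX_j)$ and $\LOOP(G_m,\VERTEX_i,\VERTEX_j)$ are decidable in space polynomial in $m$, and hence in space polynomial in $N$.

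Finally, I would assemble the algorithm: compute $n$ and $m$ from the input, determine (from the discretization, in space polynomial in $m$) vertices $\VERTEX_i \ni \vx$ and $\VERTEX_j \ni \vy$, decide $\VERTEX_i \stackrel{*}{\rightarrow_{\delta}} \VERTEX_j$ in $G_m$, and output this as the value of $\REACHP_{2^{-n}}(\vx,\vy)$. Every stage runs in space polynomial in $N$, so $\REACHPERTURBEDSPACE{\P}{p} \in \PSPACE$. Apart from the faithfulness of the reduction at a fixed perturbation level, which is the genuine content, the space accounting is a direct transcription of the finite-graph reachability bounds already established.
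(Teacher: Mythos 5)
Your proposal follows the same route as the paper's own proof: fix $n=p(\length(\vx)+\length(\vy))$, pass to the finite graph $G_{m}$ from the proof of Theorem \ref{thcinq} for an $m$ affinely related to $n$, and invoke Lemma \ref{trucquivabien} (i.e.\ the Savitch-style analysis of Corollary \ref{corosuccint}) for the space accounting. That part of your argument is correct and is exactly what the paper does.

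The point you flag as ``the main obstacle'' is real, and your proposal does not close it. Claims 1 and 2 of Theorem \ref{thcinq} only give the sandwich $\REACHP_{2^{-m}}\subseteq \REACHGm \subseteq \REACHP_{2^{-n}}$ (Claim 1 applied at level $m$, Claim 2 for $2^{-m}<2^{-n}/(2L+2)$). Your final algorithm outputs $\VERTEX_{i}\stackrel{*}{\rightarrow_{\delta}}\VERTEX_{j}$ as the truth value of $\REACHPn(\vx,\vy)$: a positive answer is sound by Claim 2, but a negative answer refutes only $\REACHP_{2^{-m}}(\vx,\vy)$, not the larger relation $\REACHP_{2^{-n}}(\vx,\vy)$, so as written you decide some relation squeezed between the two perturbation levels rather than $\REACHPn$ itself. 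You should know that the paper's proof takes exactly the same shortcut --- it asserts that $\REACHPn$ and $\REACHGm$ ``have the same truth value'', which is not what Claims 1 and 2 deliver --- so you have located a gap in the source rather than introduced a new one. The statement is harmless in the way it is used downstream: Theorem \ref{thmaindeux} assumes $\REACHP=\REACHPERTURBEDSPACE{\P}{p}$, and under that robustness hypothesis $\REACHP\subseteq\REACHP_{2^{-m}}\subseteq\REACHP_{2^{-n}}=\REACHP$ forces all these relations (and hence $\REACHGm$) to coincide, so your algorithm is then exactly correct. For the unconditional statement one would have to either argue that the boundary cases (where $\vy$ sits at distance of order exactly $2^{-n}$ from the $\epsilon$-perturbed reachable set) can be separated at a resolution $m$ polynomial in the input, or build the robustness promise into the statement.
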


\begin{proof}
From the proof of Theorem \ref{thcinq}, we know that for all $n$ there exists some $m$ (depending on $n$), such that $\REACHPn(\vx,\vy)$   and $\REACHGm(\vx,\vy)$ have the same truth value, where $\REACHGm$ denotes reachability in the graph $G_{m}$.

With the hypotheses, given $\vx$ and $\vy$, we can determine whether $\REACHPERTURBEDSPACE{\P}{p}(\vx,\vy)$, by determining the truth value of $\REACHPn(\vx,\vy)$, taking $n$   polynomial in $\length(\vx)+\length(\vy)$. From the proof of Theorem \ref{thcinq}, the corresponding $m$ is polynomially related to $n$ (it is even affine in $n$). Now the analysis of  Lemma \ref{trucquivabien}, shows that the truth value of $\REACHGm(\vx,\vy)$ can be determined in space polynomial in $m$.
 \end{proof}


\begin{theorem}[Polynomially robust to precision $\Rightarrow$ $\PSPACE$] \label{thmaindeux} 
With same hypotheses, 
if   $\REACHP = \REACHPERTURBEDSPACE{\P}{p}$ for some polynomial $p$,
then $\REACHP \in \PSPACE$.
\end{theorem}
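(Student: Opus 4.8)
The plan is to obtain this as an immediate corollary of Theorem~\ref{thdirectionunp}, which already carries all the genuine content. That theorem guarantees, for any polynomial $p$ and under exactly the present hypotheses, that $\REACHPERTURBEDSPACE{\P}{p} \in \PSPACE$; its proof in turn rests on the graph construction of Theorem~\ref{thcinq} (for each precision $2^{-n}$ a corresponding $m$, affine in $n$, with $\REACHPn$ agreeing with reachability $\REACHGm$ in the finite graph $G_m$) together with the Savitch-style polynomial-space $\LOOP$ bound of Lemma~\ref{trucquivabien}. So the discretization of the rational box, the on-the-fly construction of $G_m$, and the space-bounded loop detection are all already encapsulated upstream.

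Given this, the only remaining step is logical. First I would instantiate Theorem~\ref{thdirectionunp} with the very polynomial $p$ furnished by the hypothesis, yielding $\REACHPERTURBEDSPACE{\P}{p} \in \PSPACE$. Then I would simply invoke the standing assumption $\REACHP = \REACHPERTURBEDSPACE{\P}{p}$: these are literally the same relation on $\Q^d \times \Q^d$, agreeing in truth value at every pair of rational points $\vx,\vy$. Since membership in $\PSPACE$ is a property of the relation (equivalently of the decision problem it defines), the coincidence forces $\REACHP \in \PSPACE$.

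I do not expect any real obstacle here; the statement is a one-line consequence once Theorem~\ref{thdirectionunp} is in hand. The single conceptual point worth stating is the meaning of the hypothesis: $\REACHP = \REACHPERTURBEDSPACE{\P}{p}$ asserts a \emph{quantitative} robustness, where the tolerated perturbation $2^{-p(\length(\vx)+\length(\vy))}$ shrinks only polynomially in the input size, and it is precisely this that lets us replace the a~priori merely co-c.e.\ exact reachability relation by a polynomially-bounded perturbed relation whose decision is already under control. In spirit this mirrors the $(\Leftarrow)$ direction of the Turing-machine statement Theorem~\ref{mainpspaceone}, where $\PERTURBEDS{\M}{p(n)} = L$ immediately yields $L \in \PSPACE$ via the space bound of Lemma~\ref{lemmadirectionun}.
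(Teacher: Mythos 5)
Your proposal is correct and coincides with the paper's own proof: the paper also derives the statement in one line by applying Theorem~\ref{thdirectionunp} to get $\REACHPERTURBEDSPACE{\P}{p} \in \PSPACE$ and then substituting via the hypothesis $\REACHP = \REACHPERTURBEDSPACE{\P}{p}$. Your additional remarks on where the real content lives (the graph construction and the Savitch-style bound upstream) are accurate but not needed for the argument itself.
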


\begin{proof}
	We  have $\REACHPERTURBEDSPACE{\P}{p} \in \PSPACE$ by Theorem \eqref{thdirectionunp} and since $\REACHP = \REACHPERTURBEDSPACE{\P}{p}$, then $\REACHP \in \PSPACE$.
\end{proof}

Actually, this is even a characterization of $\PSPACE$ (if one prefers: Reachability is $\PSPACE$-complete for $\Q$-computable poly-time computational,  polynomial robust to precision).


\begin{theorem}[Polynomially robust to precision $\Leftrightarrow$ $\PSPACE$]  \label{recipir}
Any  $\PSPACE$ language is reducible to the reachability relation of PAM with $\REACHP = \REACHPERTURBEDSPACE{\P}{p}$ for some polynomial $p$.
\end{theorem}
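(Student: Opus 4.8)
The plan is to transfer the space-perturbation characterization of $\PSPACE$ for Turing machines (Theorem \ref{mainpspaceone}) to PAMs through the step-by-step emulation of Section \ref{sec:simulation}, the key insight being that \emph{space} perturbation of the tape becomes \emph{precision} perturbation of the encoded real. First I would fix, for a given $L \in \PSPACE$, a machine $\M$ that always halts and decides $L$ in space bounded by a polynomial $q$; by the forward direction of Theorem \ref{mainpspaceone}, with $p \ge q+2$ we have $L = L(\M) = \PERTURBEDSPACE{\M}{p}$, so $\M$ is robust to polynomial space perturbation. I would also modify $\M$ so that upon acceptance it erases its tape and enters a canonical accepting configuration $C_{\mathrm{acc}}$, making acceptance equivalent to reaching the single point $\gammac(C_{\mathrm{acc}})$.

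Next I would embed $\M$ into a PAM $\mP$ as in Section \ref{sec:simulation}, but using a base-$k$ encoding $\gammacompactk$ with $k \ge 4$ whose tape symbols use only \emph{interior} digits (never $0$ or $k-1$). This choice suppresses carry propagation, so that altering tape symbols at distance $\ge n$ from the head changes the encoded coordinate by $\Theta(k^{-n})$, and conversely a precision perturbation of magnitude $\epsilon = \Theta(k^{-n})$ can affect only symbols at distance $\ge n$. This yields the crucial dictionary: the space-$n$-perturbed machine $\M_n$ corresponds configuration-by-configuration to the precision-$\epsilon$-perturbed PAM $\mP_\epsilon$ for $\epsilon = \Theta(k^{-n})$ (as the head moves, each coordinate is multiplied or divided by $k$, keeping ``digit at distance $n$'' aligned with ``precision $k^{-n}$''). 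Since the emulation commutes one step at a time, accepting runs of $\M_n$ match exactly the $\epsilon$-perturbed trajectories of $\mP$ reaching $\gammac(C_{\mathrm{acc}})$.

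Translating through this dictionary, $\REACHP(\gammac(C_0[w]), \gammac(C_{\mathrm{acc}}))$ holds iff $w \in L$, giving the reduction $w \mapsto (\gammac(C_0[w]), \gammac(C_{\mathrm{acc}}))$; moreover $\REACHPn$ (perturbation level $2^{-n}$) corresponds to the space-$\Theta(n)$ perturbation of $\M$. The identity $L = \PERTURBEDSPACE{\M}{p}$ then becomes $\REACHP = \REACHPERTURBEDSPACE{\P}{p'}$, where $p'$ is obtained from $p$ by the linear substitution relating the integer level $n$ to $\log_k(1/\epsilon)$ and by the $O(\length(w))$ overhead of writing $w$ as a rational point, hence still polynomial in $\length(\vx) + \length(\vy)$.

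The main obstacle is making this dictionary fully rigorous and, above all, ensuring the robustness identity holds on the \emph{whole} domain and not merely on the image of $\gammac$: one must arrange the dynamics on ``junk'' points (reals that are not valid configuration encodings) so that they flow in a controlled, robust way to a rejecting sink, so that $\REACHP = \REACHPERTURBEDSPACE{\P}{p'}$ genuinely holds everywhere. The remaining bookkeeping — absence of carries (handled by the interior-digit base-$k$ trick), the shift of coordinates as the head moves, and the verification that a polynomial bound on tape space yields a polynomial bound on precision once the logarithmic link $\epsilon = k^{-n}$ is accounted for — is routine but is where all the care must go.
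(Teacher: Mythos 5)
Your proposal follows essentially the same route as the paper's proof: a step-by-step emulation of a polynomial-space Turing machine into a PAM via a compact positional encoding of the tape, with the dictionary identifying space-$n$ perturbation of the tape with precision-$\Theta(k^{-n})$ perturbation of the encoded point. The paper's argument is only a few lines long and simply asserts that ``the emulation preserves robustness of TMs''; your extra care about carry propagation (interior digits in base $k$) and about the dynamics on reals that encode no valid configuration fills in precisely the details that sketch leaves implicit.
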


\begin{proof}
Let $L \in \PSPACE$. There is a TM $\M$ with $L(\M)=L$ that works in polynomial space $q(\cdot)$. Its step-by-step emulation considered in Theorem \ref{chosemachin}, using $\gammacompact$ is done using a precision $\mathcal{O}(2^{-q(n)})$ on words of length $n$. The obtained system satisfies $\REACHPERTURBEDSPACE{\P}{q + \mathcal{O}(1)}=\REACHP$ from  the properties of the emulation. 
In other words, this comes from the fact that the involved emulation preserves robustness of TMs.
\end{proof}

%
%
Assuming the same hypotheses as in Theorem \ref{thmaindeux}, when $\REACHP = \REACHPERTURBEDSPACE{\P}{p}$ for some polynomial $p$, we also see that we can determine a witness of the fact that $\neg \REACHP(\vx,\vy)$ in polynomial space (using a suitable representation of it).


\olivierplusimportant{subsection{Basic definitions}}

\subsection{The case of computable systems}
\label{sub:parttwo}


We consider now the case of general (possibly non-rational) discrete time dynamical systems.  In that case, $\tu f$ may take some non-rational values, and we  need to talk about computability for functions over the reals.  A  system is said computable  if the function $\tu f: \R^{d} \to \R^{d}$ is in computable analysis (CA).
 %

A crash course on CA can be found in the appendix (see e.g.  \cite{Wei00} or \cite{brattka2008tutorial}), but in very short:  \olivierplusimportant{Basically, the idea behind classical computability and complexity is to fix some representations of objects (such as graphs, integers, etc, \dots) using finite words  over some finite alphabet, say $\Sigma=\{0,1\}$, and to say that such an object is computable when such a representation can be produced using a Turing machine. The aim of computable analysis is to be able to talk also about objects such as real numbers, functions over the reals, closed subsets, compacts subsets, \dots, which cannot be represented by finite words over $\Sigma$ (a clear reason for it is that such words are countable while $\R$ for e.g. is not).  However, they can be however represented by some infinite words over $\Sigma$, and the idea is to fix such representations for these various objects, called \emph{names}, with suitable computable properties.  }
a name for a point $\vx \in \R^{d}$ is a sequence $(I_{n})$ of nested open rational balls  with $I_{n+1} \subseteq I_n$ for all $n \in \mathbb{N}$ and $\{x\}=\bigcap_{n \in \mathbb{N}} I_n$. A name for a  function $\tu f: \R^{d} \to \R^{d'}$ is a list of all pairs of open rational balls $(I, J)$ such that $\tu f(\closure{I}) \subseteq J$. A name for  a closed set $F$ is a sequence $(I_{n})$ of all open rational balls such that $\closure{I_{n}} \cap F=\emptyset$ and a  sequence $(J_{n})$ of all open rational balls such that $J_{n} \cap F \neq \emptyset$. A name for  a compact $K$ is a name of $F$ as a closed set, and an integer  $L$ such that $K \subset B(0,L)$.
%
  All these names can  be encoded as infinite sequences of symbols. The notion of computability involved is the one of Type 2 Turing machines, that is to say machines possibly working over infinite tapes, and outputting their results in possibly write-only output tapes.
Then: 
a point $\vx \in \R^{d}$ is computable if it has a computable name. And similarly for defining the concept of computable function, computable closed set, or computable compact: we mean for example, that a closed set is computable if it has a computable name (and a compact is computable consequently also as a closed set). 
In particular this concept of computability for compacts implies the property discussed after Theorem \ref{Rcorec} (see \cite[Lemma 5.2.5]{Wei00} for a proof). 
\olivierplusimportant{	  (This follows from \cite[Lemma 5.2.5]{Wei00}, that proves that  $\kappa_{\mathrm{mc}} \equiv \kappa$ for the representations defined there for compact sets, and the observation that being computable is equivalent to having a computable name (See Section \ref{Sec:analysecalculable})).  Furthermore, }
If $Y$ and $Z$ are spaces with an associated naming system, then an operator $f: Y \rightarrow Z$ is said computable if there is a computable function which associates each name of $y \in Y$ to a name of $f(y) \in Z$.


From the model of CA, given the name\footnote{Even assuming it is computable.} of $\tu f$, and (even for) some rational $\vx$ and $\vy$, this is impossible to tell effectively if  $\tu f(\tu \vx)=\vy$ in the general case. Consequently, given some rational ball $B(\tu y,\delta)$, we have to forbid ``frontier reachability'', that is to say the case where $B(\tu y,\delta)$ would not be reachable, but its frontier is, i.e. $\cB(\tu y,\delta)-B(\tu y,\delta)$ is reachable. A natural question arises then: given some ball such that either $B(\tu y,\delta)$ is reachable (that case implies that $\cB(\tu y,\delta)$ is), or such that $\cB(\tu y,\delta)$ is not, decide which possibility holds.  
We call this the \motnouv{ball (decision) problem}. Of course, from the above definitions, when $\REACHP(\vx)$ is a closed set, $\REACHP(\vx)$  is a computable closed set iff the associated ball problem is algorithmically solvable. 

For a computable system, the ball decision problem is c.e.:  simulate the evolution of the system starting from $\vx$ until step $T$, with increasing precision and $T$, until one finds the guarantee that the position $\vx_{T}$ at time $T$ remains in $B(\vy,\delta')$ for some $\delta'<\delta$. This works: if the ball is indeed reachable, this will terminate by eventually computing a sufficient approximation of  the corresponding $\vx_{T}$,  and conversely it can't terminate without guaranteeing reachability.   The ball problem is of course not co-c.e. in general. 

\olivierplusimportant{C'est là qu'on se met à parler de fonctions calculables}

To a discrete time system,  we can also associate its reachability relation $\REACHP(\cdot, \cdot, \cdot)$ over  
$\Q^d\times\Q^{d}\times\N$. Namely, for two rational points $\mathbf{x}$ and $\mathbf{y}$ and rational $0<\eta=2^{-p}$, encoded by the integer $p$, the relation $\REACHP(\mathbf{x}, \mathbf{y},p)$  holds iff there exists a trajectory of $\mP$ from $\vx$ to $\cB(\vy,\eta)$. We define $\REACHPepsilon$ similarly, and $\REACHPomega = \bigcap_{\epsilon} \REACHPepsilon$. This relation encodes  reachability with arbitrarily small perturbing noise to some closed ball.


\begin{lemma}
for any $0<\varepsilon_2<\varepsilon_1$ and any $\mathbf{x}$ and $\mathbf{y}$, $\eta$, the following implications hold:
$\REACHP(\mathbf{x}, \mathbf{y},p)$ $ \Rightarrow \REACHPomega(\mathbf{x}, \mathbf{y},p) \Rightarrow \REACHP_{\varepsilon_2}(\mathbf{x}, \mathbf{y},p) \Rightarrow \REACHP_{\varepsilon_1}(\mathbf{x}, \mathbf{y},p)$.
\end{lemma}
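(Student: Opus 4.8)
The plan is to establish the chain by proving the three implications separately, each reducing to an elementary inclusion between the families of admissible (perturbed) trajectories. Throughout, the parameter $p$ merely fixes the target set as the closed ball $\cB(\vy, 2^{-p})$ and plays no active role: the target is identical in all four systems, and only the family of trajectories allowed to reach it changes. Thus the whole argument is the precision-parametrized analogue of the earlier monotonicity lemma borrowed from \cite{asarin01perturbed}.

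First, for $\REACHP(\vx,\vy,p) \Rightarrow \REACHPomega(\vx,\vy,p)$, I would note that any unperturbed trajectory $(\vx_t)$, which satisfies $\vx_{t+1} = \tu f(\vx_t)$, trivially satisfies $\distance{\vx_{t+1}}{\tu f(\vx_t)} = 0 < \epsilon$ for every $\epsilon > 0$, hence is an $\epsilon$-perturbed trajectory. A witnessing unperturbed trajectory reaching $\cB(\vy, 2^{-p})$ therefore witnesses $\REACHPepsilon(\vx,\vy,p)$ for every $\epsilon > 0$, and since $\REACHPomega = \bigcap_\epsilon \REACHPepsilon$, this yields $\REACHPomega(\vx,\vy,p)$. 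The middle implication $\REACHPomega(\vx,\vy,p) \Rightarrow \REACHP_{\varepsilon_2}(\vx,\vy,p)$ is immediate from the same definition $\REACHPomega = \bigcap_\epsilon \REACHPepsilon \subseteq \REACHP_{\varepsilon_2}$. Finally, for $\REACHP_{\varepsilon_2}(\vx,\vy,p) \Rightarrow \REACHP_{\varepsilon_1}(\vx,\vy,p)$ with $\varepsilon_2 < \varepsilon_1$, any $\varepsilon_2$-perturbed trajectory satisfies $\distance{\vx_{t+1}}{\tu f(\vx_t)} < \varepsilon_2 < \varepsilon_1$ and is thus also an $\varepsilon_1$-perturbed trajectory, so the same trajectory reaching $\cB(\vy, 2^{-p})$ witnesses reachability in the larger-noise system.

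Since each step is a one-line inclusion argument, there is no real obstacle here; the only point deserving a moment of care is that reachability is to a closed ball rather than to a single point, so I would check that the target $\cB(\vy, 2^{-p})$ is preserved verbatim across all three transitions (which it is, as only the admissible trajectory families differ). This mirrors, with the added parameter $p$, the earlier chain $\REACHP \Rightarrow \REACHPomega \Rightarrow \REACHP_{\varepsilon_2} \Rightarrow \REACHP_{\varepsilon_1}$.
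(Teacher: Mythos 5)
Your proof is correct and follows exactly the intended argument: the paper states this lemma without proof, treating it (like its unparametrized predecessor borrowed from \cite{asarin01perturbed}) as an immediate consequence of the monotone inclusion of trajectory families, which is precisely what you spell out. Your added check that the target $\cB(\vy,2^{-p})$ is unchanged across the chain is a reasonable point of care but changes nothing.
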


%

Given  $\vx$, and $0<\varepsilon_2<\varepsilon_1$, we have $\REACHP(\vx) \subseteq \REACHP_{\epsilon_{2}}(\vx) \subseteq \closure{\REACHP_{\epsilon_{2}}(\vx)} \subseteq \REACHP_{\epsilon_{1}}(\vx) \subseteq \closure{\REACHP_{\epsilon_{1}}(\vx)}$. Consequently, $\REACHPomega(\vx)=\bigcap_{\epsilon>0} \REACHPepsilon(\vx) =  \bigcap_{\epsilon>0} \closure{\REACHPepsilon(\vx)}$   is a closed set. 



\begin{theoremd} 
[Perturbed reachability is co-r.e.]  \label{pRcorec} \label{thprisedetete}

Consider a  locally Lipschitz  computable system whose domain $X$ is a computable compact.
 $\REACHPomega(\mathbf{x}, \mathbf{y},p) \subseteq \Q^d \times \Q^{d} \times \N$ is in  $\Pi_1^0$.
\end{theoremd}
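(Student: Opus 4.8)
The plan is to transport the proof of Theorem \ref{thcinq} to the computable-analysis setting, the only genuine differences being that $\tu f$ can no longer be evaluated exactly and that the target is now the closed ball $\cB(\vy,2^{-p})$ rather than the point $\vy$. Concretely, I would reconstruct, for every scale $\delta = 2^{-m}$, a finite directed graph $G_{m}=(V_\delta,\rightarrow_\delta)$ playing the role of a computable abstraction (in the sense of the remark following Corollary \ref{coro6}, adapted to the ternary ball-reachability relation), and then conclude exactly as there: $\neg\REACHPomega(\vx,\vy,p)$ holds iff $\LOOP(G_{m},\VERTEX_i,\VERTEX_j)$ for some $m$ and some vertices $\VERTEX_i \ni \vx$ and $\VERTEX_j$ meeting $\cB(\vy,2^{-p})$. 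This exhibits the complement as a computable (uniform in $m$) union of decidable properties over finite computable graphs, hence $\REACHPomega \in \Pi_1^0$.

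First I would use that $X$ is a computable compact to obtain, effectively from $m$, a finite covering of $X$ by rational open balls $\VERTEX_i = B(\vx_i,\delta_i)$ with $\delta_i < \delta$; this is exactly the effective-covering property recalled after Theorem \ref{Rcorec} (cf. \cite[Lemma 5.2.5]{Wei00}). These are the vertices of $G_{m}$. Since $\tu f$ is locally Lipschitz on a compact, it is globally $L$-Lipschitz for some $L>0$; I stress that the \emph{algorithm} will not need to know $L$, only its existence is used in the correctness argument. The heart of the adaptation is the edge relation, which must be decidable while $\tu f$ is accessible only through its name. For each $i$ I would compute, from the name of $\tu f$ and the computable compact $\closure{\VERTEX_i}$, a rational open ball $J_i$ enclosing the image, $\tu f(\closure{\VERTEX_i}) \subseteq J_i$, and tight up to the current scale, i.e. $J_i \subseteq B(\tu f(\closure{\VERTEX_i}),\delta)$; such a $J_i$ is computable because the image of a computable compact under a computable map is a computable compact that can be over-approximated to any prescribed precision. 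I then declare $\VERTEX_i \rightarrow_\delta \VERTEX_j$ iff the rational ball $J_i$ enlarged by $\delta$ meets $\VERTEX_j$, a decidable condition since it is an intersection test between two rational balls. The enclosure $J_i$ plays the role that $B(\tu f(\vx_i),(L+1)\delta)$ played in Theorem \ref{thcinq}: its points stay within $(L+2)\delta$ of $\tu f(\vx_i)$, so the error per step remains $O(L\delta)$.

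With this graph I would re-prove the two claims of Theorem \ref{thcinq}. For Claim 1 (taking $\delta=\epsilon=2^{-n}$), if $\vy \in \tu f_{\epsilon}(\vx)$ with $\vx \in \VERTEX_i$, then $\tu f(\vx) \in J_i$ and $\distance{\tu f(\vx)}{\vy} < \epsilon = \delta$, so $\vy$ lies in $J_i$ enlarged by $\delta$, whence $\VERTEX_i \rightarrow_\delta \VERTEX_j$ for every $\VERTEX_j \ni \vy$; this gives the inclusion ``$\REACHPepsilon$-reachability $\Rightarrow$ graph path''. For Claim 2, given $\epsilon$ I would take $\delta < \epsilon/(2L+2)$ and run verbatim the non-accumulation induction of Theorem \ref{thcinq}: at each step the perturbation budget $\epsilon$ is spent to jump onto the actual grid centre, so the per-step discretization error never accumulates along the path, and a graph path from $\VERTEX_i$ to a vertex meeting $\cB(\vy,2^{-p})$ yields a genuine $\epsilon$-perturbed trajectory reaching $\cB(\vy,2^{-p})$. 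The closed-ball target with positive radius $\eta=2^{-p}$ is exactly what absorbs the residual error and sidesteps the frontier-reachability obstruction flagged just before the statement.

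I expect the main obstacle to be precisely this edge relation: guaranteeing simultaneously that $\rightarrow_\delta$ is decidable (which forces us to work with the computed rational enclosures $J_i$ rather than the inaccessible exact values $\tu f(\vx_i)$) and that the resulting over-approximation is tight enough for the non-accumulation induction of Claim 2 to close with a single $\epsilon$ independent of the path length. Everything else — decidability of $\LOOP$ on each finite $G_{m}$, uniformity in $m$, and the final c.e.\ characterization of the complement — is inherited from the graph reasoning already used for Theorem \ref{thcinq} and Corollary \ref{corosuccint}.
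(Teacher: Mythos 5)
Your overall strategy is the paper's own: cover the computable compact $X$ at each scale $\delta=2^{-m}$ by finitely many rational balls, make the edge relation decidable by replacing the inaccessible exact values of $\tu f$ with computed rational enclosures, re-prove Claims 1 and 2 with the extra approximation error folded into the constants, and express the complement of $\REACHPomega$ as a uniform union over $m$ of decidable properties of the finite graphs $G_{m}$. (The paper additionally routes through a compactness argument identifying $\cB(\vy,2^{-p})\cap\REACHPomega(\vx)=\emptyset$ with $\cB(\vy,2^{-p})\cap\closure{\REACHPepsilon(\vx)}=\emptyset$ for a single $\epsilon$; your more direct extraction of one $\epsilon$ from the negation of the ternary relation is equally valid.) Two of your steps, however, do not go through as written.

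First, the enclosure you postulate --- a \emph{single} rational open ball $J_i$ with $\tu f(\closure{\VERTEX_i})\subseteq J_i\subseteq B(\tu f(\closure{\VERTEX_i}),\delta)$ --- need not exist. The image $\tu f(\closure{\VERTEX_i})$ can have diameter up to about $2L\delta$, and any ball containing it also contains, by convexity, points at distance of order $L\delta$ from it; as soon as $L>1$ the required inclusion in the $\delta$-neighbourhood is unsatisfiable. You must either take $J_i$ to be a finite union of small rational balls covering the image inside its $\delta$-neighbourhood (computable from a name of the compact image), or accept a near-minimal enclosing ball and rework the constant in Claim 2, or do what the paper does: compute a rational $\delta$-approximation $\tu f_i$ of $\tu f(\vx_i)$ and use $B(\tu f_i,(L+2)\delta)$, hardwiring an integer bound on $L$ --- which is legitimate since the system is fixed, so your wish to keep $L$ out of the algorithm, while achievable, is not needed.

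Second, and more substantively, your final characterization quantifies the target vertex the wrong way: you claim $\neg\REACHPomega(\vx,\vy,p)$ iff $\LOOP(G_{m},\VERTEX_i,\VERTEX_j)$ for some $m$ and \emph{some} $\VERTEX_j$ meeting $\cB(\vy,2^{-p})$. The implication from this graph condition back to $\neg\REACHPomega(\vx,\vy,p)$ fails: an $\epsilon$-perturbed trajectory may enter $\cB(\vy,2^{-p})$ at a point lying only in \emph{other} vertices meeting the ball, so the absence of a path to one particular such $\VERTEX_j$ rules nothing out. For the point target of Theorem \ref{thcinq}, ``some $\VERTEX_j\ni\vy$'' suffices because Claim 1 produces paths to \emph{all} vertices containing the (known) endpoint; for a ball target the endpoint is no longer pinned down. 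The correct statement --- the one the paper proves --- demands $\LOOP(G_{m},\VERTEX_i,\VERTEX_j)$ for \emph{every} $\VERTEX_j$ meeting $\cB(\vy,2^{-p})$. This is still a finite conjunction of decidable conditions for each $m$, so the $\Pi_1^0$ conclusion survives unchanged once the quantifier is repaired.
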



This statement have similarities with \cite[Theorem 13]{BGHRobust10}: the result established their is about the language accepted by a system, but with very strong hypotheses on termination compared to ours which make their analysis really simpler.

\begin{proof}
As $\tu f$ is locally Lipschitz, and $X$ is compact, we know that $\tu f$ is Lipschitz: there exists some $L>0$ so that $\distance{\tu f(\vx)}{\tu f(\vy)} \le L \cdot \distance{\vx}{\vy}$. 

	For every $\delta=2^{-m}$, $m \in \N$, we associate some  graph $G_{m}=(V_{\delta},\rightarrow_{\delta})$: the vertices, denoted  $(\VERTEX_{i})_{i}$, of this graph correspond  to some finite covering of compact $X$ by rational open balls $\VERTEX_{i}=B(\vx_{i},\delta_{i})$  of  radius $\delta_{i} <\delta$. 

There is an edge from $\VERTEX_i$ to $\VERTEX_j$ in this graph, that is to say $\VERTEX_{i} \rightarrow_{\delta} \VERTEX_{j}$,  iff $B(\tu f_{i},(L+2)\delta) \cap \VERTEX_{j} \neq \emptyset$, given some rational $\tu f_{i}$ given by some (computed) $\delta$-approximation of $\tu f(\vx_i)$, i.e. $\tu f_{i}$ such that $\tu f(\vx_{i}) \in B(\tu f_{i}, \delta)$.  

This is done to guarantee to cover $B(\tu f(\vx_i), (L+1) \delta)$.

%
	 As we assumed compact $X$ to computable, such a graph can be effectively obtained from $m$, by computing suitable approximation $\tu f_{i}$ of the $\tu f(\vx_{i})$'s at precision $\delta$.
	 
	 We write, as expected, $\REACHP(\vx,\vy)$ if there is a trajectory from $\vx$ to $\vy$, allowing $\vy$ to be some real point (and similarly for 
	 $\REACHPepsilon(\vx,\vy)$.

\begin{itemize}
	\item Claim 1: 
	 assume $\REACHPepsilon(\vx,\vy)$ with $\vx \in \VERTEX_{i}$ for $\epsilon=2^{-n}$. Then $\VERTEX_{i} {\rightarrow_{\epsilon}} \VERTEX_{j}$ for all  $\VERTEX_{j}$ with $\vy \in \VERTEX_{j}$.
	 
	 This basically holds as the graph for $\delta=\epsilon$ is made to always  have more trajectories/behaviours than $\REACHPepsilon$. 
	
	
	\begin{proof} If $\vy \in \tu f_{\epsilon}(\vx)$, 
	then 
	$\distance{\tu f_{i}}{\tu y} \le \distance{\tu f_{i}}{\tu f(\vx_{i})} + \distance{\tu f(\vx_{i})}{\tu f(\vx)} + \distance{\tu f(\vx)}{\vy} < \epsilon + L \distance{\vx_{i}}{\vx} + \epsilon = (L+2) \epsilon$, and hence there is an edge from $\VERTEX_{i} {\rightarrow_{\epsilon}} \VERTEX_{j}$ to any $\VERTEX_{j}$ containing $\vy$ by definition of the graph.
%
%
%
	\end{proof}
	
	\item Claim 2: for any $\epsilon=2^{-n}$, there is  some $\delta=2^{-m}$ so that if we have $\VERTEX_{i} \stackrel{*}{\rightarrow_{\delta}} \VERTEX_{j}$ then  $\REACHPepsilon(\vx,\vy)$ whenever $\vx \in \VERTEX_{i}$, $\vy \in \VERTEX_{j}$.
	
	\begin{proof} Consider $\delta=2^{-m}$ with $\delta <\epsilon/(2L+4)$: assume  $\VERTEX_{i=i_{0}} {\rightarrow_{\delta}} \VERTEX_{i_{1}} \dots {\rightarrow_{\delta}}  \VERTEX_{i_{t}=j}$ with $\vx \in \VERTEX_{i}$, $\vy \in \VERTEX_{j}$. 

	Assume by contradiction that $\neg  \REACHPepsilon(\vx,\vy)$, and let  $\ell$ be the least index such that 
	$\neg \REACHPepsilon(\vx,\overline \vz)$ for some $\overline \vz \in \VERTEX_{i_{\ell+1}}$.

		As 	 $\VERTEX_{i_{\ell}} {\rightarrow_{\delta}} \VERTEX_{i_{\ell+1}}$ there is some $\overline {\tu y} \in  \VERTEX_{i_{\ell+1}}$ with 
		$\distance{\tu f_{i_{\ell}}}{\overline{\vy}}<(L+2) \delta$ with $\distance{\tu f_{i_{\ell}}}{\tu f(x_{i_{\ell}})} < \delta$. 
		Take $\overline{\tu z} \in \VERTEX_{i_{\ell+1}}$. 
		
			If $\ell=0$, then $\distance{\tu f(\vx)}{\overline{\tu z}} \le \distance{\tu f(\vx)}{\tu f(\vx_{i_{\ell}})}+\distance{\tu f(\vx_{i_{\ell}})}{\tu f_{i_{\ell}}}  +\distance{\tu f_{i_{\ell}}}{\overline{\vy}} + \distance{\overline{\vy}}{\overline{\vz}} < L \delta + \delta +  (L+2) \delta + \delta = (2 L + 4) \delta < \epsilon$, and hence
	$\REACHPepsilon(\vx,\overline{\vz})$: contradiction.
		
		If $\ell>0$, as $\ell$ is the least index with the above property, $\REACHPepsilon(\vx,\vx_{i_{\ell}})$. But then 
	$\distance{\tu f(\vx_{i_{\ell}})}{\overline{\tu z}} \le \distance{\tu f(\vx_{i_{\ell}})}{\tu f_{i_{\ell}}}  + \distance{\tu f_{i_{\ell}}}{\overline{\vy}} + \distance{\overline{\vy}}{\overline{\vz}}
	< \delta + (L+2) \delta + \delta < (2 L + 4) \delta < \epsilon$. 
		
	And hence, $\REACHPepsilon(\vx_{i_{\ell}},\overline{\vz})$, and since we have $\REACHPepsilon(\vx,\vx_{i_{\ell}})$, we get $\REACHPepsilon(\vx,\overline \vz)$ and a contradiction.
	\end{proof}

\end{itemize}

	 	That is, Claim 2 says that $\neg \REACHPepsilon(\vx,\vy)$ implies $\neg (\VERTEX_{i} \stackrel{*}{\rightarrow_{\delta}} \VERTEX_{j})$ whenever $\vx \in \VERTEX_{i}$, $\vy \in \VERTEX_{j}$, for the corresponding $\delta$.

	 From the two above items, $\REACHPomega(\vx,\vy)$ holds iff for all $\delta=2^{-m}$, we have $\VERTEX_{i} \stackrel{*}{\rightarrow_{\delta}} \VERTEX_{j}$, for all  $\VERTEX_{i}$, $\VERTEX_{j}$ with  $\vx \in \VERTEX_{i}$, $\vy \in \VERTEX_{j}$.
	 
	 	 If one prefers, $\neg \REACHPomega(\vx,\vy)$ holds iff for some $\delta=2^{-m}$, $\neg (\VERTEX_{i }\stackrel{*}{\rightarrow_{\delta}} \VERTEX_{j})$ for some $\VERTEX_{i}$, $\VERTEX_{j}$ with $\vx \in \VERTEX_{i}$, $\vy \in \VERTEX_{j}$.
		 
		 
		Then:

		 \begin{itemize}
		 \item Claim* (compactness argument):  Given a ball  $B(\vy,\eta)$, we have that $\cB(\vy,\eta) \cap \REACHPomega(\vx)=\emptyset$ iff $\cB(\vy,\eta) \cap \closure{\REACHPepsilon(\vx)}=\emptyset$ for some $\epsilon>0$.
		 \end{itemize}

		 \begin{proof} 
		 		$\Leftarrow$ (easy direction):  If $\cB(\vy,\eta) \cap \REACHPepsilon(\vx)= \emptyset$ for some $\epsilon>0$, we cannot have $\cB(\vy,\eta) \cap \REACHPomega(\vx) \neq \emptyset$, as it would contain a point that would necessarily be in $\cB(\vy,\eta) \cap \REACHPepsilon(\vx)$.
				 
				 $\Rightarrow$ (compactness argument): 
						Assume  $\cB(\vy,\eta) \cap \REACHPomega(\vx)=\emptyset$. Since $\REACHPomega(\vx)=\bigcap_{\epsilon>0} \closure{\REACHPepsilon(\vx)}$, this means that $\bigcup_{n \in \N} (\closure{\REACHP_{{2^{-n}}}(\vx)})^{c}$ is some covering of $\cB(\vy,\eta)$. As $\cB(\vy,\eta)$ is closed and bounded, it is compact. Consequently, from the covering can be extracted some finite covering. Consequently, $\bigcup_{n \ne n_{0}} (\closure{\REACHP_{{2^{-n}}}(\vx)})^{c}= (\closure{\REACHP_{{2^{-n_{0}}}}(\vx)})^c$ for some $n_{0}$ is a covering of $\cB(\vy,\eta)$.  In other words, 	$\cB(\vy,\eta) \cap \closure{\REACHPepsilon(\vx)}=\emptyset$ for $\epsilon=2^{-n_{0}}$. 	This proves the direction from left to  right.
					
		 \end{proof}

		 Consequently, we basically can use arguments really similar to those of  the proof of Theorem \ref{thcinq}, where the role played by $\tu y$ is now played by $\cB(\vy,\eta)$. With more details:

	\olivierpasimportant{Dans theorem d'avant, cétait: 
	
		 \olivierpasimportant{ Pour aider:
 etape 1	 
	 $\REACHPomega(\vx,\vy)$ holds iff for all $\delta=2^{-m}$, we have $\VERTEX_{i} \stackrel{*}{\rightarrow_{\delta}} \VERTEX_{j}$, for all  $\VERTEX_{i}$, $\VERTEX_{j}$ with  $\vx \in \VERTEX_{i}$, $\vy \in \VERTEX_{j}$.

 Etape 1'
	 	 If one prefers, 
}

	}
	
		 $\REACHPomega(\vx,\vy,\eta)$ holds iff for all $\delta=2^{-m}$, we have $\VERTEX_{i} \stackrel{*}{\rightarrow_{\delta}} \VERTEX_{j}$, for all  $\VERTEX_{i}$, $\VERTEX_{j}$ with  $\vx \in \VERTEX_{i}$, $\VERTEX_{j} \cap \cB(\vy,2^{-p})  \neq\emptyset$. 

\olivierpasimportant{Preuve de la phrase plus haut:}

	The direction from left to right is clear from Claim $1$. \olivierpasimportant{En fait, aps besoin  un claim 1 sur fermé. On parle bien de y dans le fermé.No bug.} Conversely, assume that for all $\delta=2^{-m}$, we have $\VERTEX_{i} \stackrel{*}{\rightarrow_{\delta}} \VERTEX_{j}$, for all  $\VERTEX_{i}$, $\VERTEX_{j}$ with  $\vx \in \VERTEX_{i}$, $\VERTEX_{j} \cap \cB(\vy,2^{-p})  \neq\emptyset$. Assume by contradiction that $\neg \REACHPomega(\vx,\vy,\eta)$.
From Claim*,  we know that $\cB(\vy,\eta) \cap \closure{\REACHPepsilon(\vx)}=\emptyset$ for some $\epsilon>0$. In particular $\cB(\vy,\eta) \cap {\REACHPepsilon(\vx)}=\emptyset$. Then for the corresponding $\delta=2^{-m}$ from Claim $2$, we cannot have $\VERTEX_{i} \stackrel{*}{\rightarrow_{\delta}} \VERTEX_{j}$, for any  $\VERTEX_{i}$, $\VERTEX_{j}$ with  $\vx \in \VERTEX_{i}$, $\VERTEX_{j} \cap \cB(\vy,2^{-p})  \neq\emptyset$. 
	This proves the direction from right to left.

\olivierpasimportant{VIEUX, VIEUX, HORS SUJET NORMALEMENT:
		 If there is some $\cB(\vy,2^{-p})$ that is not reachable from $\vx$. By claim $2$, for the corresponding $\delta$ by the $\epsilon$-perturbed dynamics, then any $\VERTEX_{i}$, $\VERTEX_{j}$ with $\vx \in \VERTEX_{i}$, $\VERTEX_{j} \cap B(\vy,2^{-p}) \neq \emptyset$ is such that $\neg(\VERTEX_{i }\stackrel{*}{\rightarrow_{\delta}} \VERTEX_{j})$.

	 	Conversely, for some $\delta=2^{-m}$, $\neg (\VERTEX_{i }\stackrel{*}{\rightarrow_{\delta}} \VERTEX_{j})$ for any $\VERTEX_{i}$, $\VERTEX_{j}$ with $\vx \in \VERTEX_{i}$, $\VERTEX_{j} \cap B(\vy,2^{-p}) \neq \emptyset$, then $\neg \REACHP_{\delta}(\vx,\vy,p)$ by Claim 1. 
		 }
		 
		 \olivierpasimportant{Donc, on reprend}
		 If one prefers, 
		$\neg \REACHPomega(\vx,\vy,\eta)$ holds  iff for some integer $m$, the following property $P_{m}$ holds: $\LOOP(G_{m},\VERTEX_{i},\VERTEX_{j})$ for any  $\VERTEX_{i}$ and $\VERTEX_{j}$ with $\vx \in \VERTEX_{i}$ and $\VERTEX_{j} \cap \cB(\vy,2^{-p})  \neq\emptyset$. 
	 
	The latter property is computably enumerable, as it corresponds to a  union of decidable sets (uniform in $m$), as 	the property $P_{m}$  is a decidable property over finite graph $G_{m}$.
\end{proof}


\begin{corollary}[Robust $\Rightarrow$ decidable]   \label{coro6p}
Assume the hypotheses of Theorem \ref{pRcorec}.   Assume that for all rational $\vx$, $\REACHP(\vx)$ is closed, and
$\REACHP(\vx)=\REACHPomega(\vx)$.
Then the ball decision problem is decidable.

\end{corollary}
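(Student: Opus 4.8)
The plan is to realize the ball decision problem as simultaneously c.e. and co-c.e. under the robustness hypothesis, and then conclude by the fact that a problem which is both semi-decidable and co-semi-decidable is decidable. Concretely, I would run two semi-algorithms in parallel, one certifying reachability and the other certifying non-reachability, and argue that the promise built into the ball problem forces exactly one of them to halt.

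First, I would recall the semi-decision procedure for reachability already established in the text: for a computable system the ball decision problem is c.e., since simulating the dynamics from $\vx$ with increasing time horizon and increasing precision eventually certifies, whenever $B(\vy,\delta)$ is genuinely reachable, that some iterate lies in $B(\vy,\delta')$ for a rational $\delta'<\delta$. This yields a semi-algorithm that halts precisely in the promised case where the open ball $B(\vy,\delta)$ is reachable.

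Second, I would produce a semi-algorithm for the opposite case using Theorem \ref{pRcorec} together with robustness. Reachability to the closed ball satisfies $\REACHP(\vx,\vy,p)$ iff $\cB(\vy,\delta)\cap\REACHP(\vx)\neq\emptyset$, and similarly $\REACHPomega(\vx,\vy,p)$ iff $\cB(\vy,\delta)\cap\REACHPomega(\vx)\neq\emptyset$. The hypothesis $\REACHP(\vx)=\REACHPomega(\vx)$ (with $\REACHP(\vx)$ closed) makes these two relations coincide after intersecting with $\cB(\vy,\delta)$. By Theorem \ref{pRcorec}, $\REACHPomega(\vx,\vy,p)$ lies in $\Pi_1^0$, so its complement is c.e.; transporting this along the equality shows that non-reachability of the closed ball $\cB(\vy,\delta)$ from $\vx$ is c.e. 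This is the second semi-algorithm, halting precisely when $\cB(\vy,\delta)$ is not reachable.

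Finally, I would combine the two under the promise defining the ball problem: either $B(\vy,\delta)$ is reachable (whence $\cB(\vy,\delta)$ is too) or $\cB(\vy,\delta)$ is not reachable, the frontier-only case being excluded. In the first promised case the reachability semi-algorithm halts, while the non-reachability one cannot, since then $\REACHP(\vx,\vy,p)$ and hence $\REACHPomega(\vx,\vy,p)$ hold; in the second promised case the non-reachability semi-algorithm halts, while the reachability one cannot, since an unreachable closed ball forces the open ball to be unreachable. Running both in parallel, exactly one halts and returns the correct verdict, so the ball decision problem is decidable. The main obstacle I anticipate is exactly the open-versus-closed bookkeeping: the reachability side certifies membership in the \emph{open} ball whereas the robustness side delivers non-reachability of the \emph{closed} ball, and it is precisely the promise excluding reachability of the frontier alone that renders these two c.e. conditions complementary and guarantees termination of the parallel search.
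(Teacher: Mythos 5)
Your proposal is correct and follows essentially the same route as the paper: the paper's proof also dovetails the c.e. semi-algorithm for the ball problem (simulation with increasing time and precision) with the c.e. algorithm for the complement of $\REACHPomega(\vx)=\REACHP(\vx)$ supplied by Theorem \ref{pRcorec}, accepting or rejecting according to which one halts. Your explicit treatment of the open-versus-closed bookkeeping and of why the promise makes the two semi-algorithms complementary is a welcome elaboration of details the paper leaves implicit, but it is not a different argument.
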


\begin{proof}
Given some instance $B(\tu y,\delta)$ of the ball problem, run in parallel the c.e. algorithm for it (and when its termination is detected, accepts) and the c.e. algorithm for $\left(\REACHP(\vx)\right)^{c}=\left(\REACHPomega(\vx)\right)^{c}$ (and when its termination is detected, rejects). 
%
%
\end{proof}

\olivierplusimportant{Est-ce que cela dit unh truc intéressant:
For rational $\vx$, $\vy$, define $\REACHPeomega(\vx,\vy)$ as true if for all  $p$ we have $\REACHPomega(\vx,\vy,p)$, and
$\REACHPe(\vx,\vy)$ as true if for all  $p$ we have $\REACHP(\vx,\vy,p)$: the latter holds iff $\vy \in \closure{\REACHP(\vx)}$.

\begin{corollary}[Robust $\Rightarrow$ decidable]   \label{coro6pp}
Assume the hypotheses of Theorem \ref{pRcorec}.   Assume $\REACHPe= \REACHPeomega$ over  $\Q^{d} \times \Q^{d}$.

Assume $\REACHPe(\vx,\vy)$ c.e.
Then $\REACHP$  is decidable.
\end{corollary}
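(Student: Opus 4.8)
The plan is to first settle the purely computability-theoretic content, and then to lift it to the reachability relation itself.

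By definition, $\REACHPeomega(\vx,\vy)$ holds iff $\REACHPomega(\vx,\vy,p)$ holds for every $p\in\N$. Theorem \ref{pRcorec} shows that, uniformly in $p$, each relation $\REACHPomega(\cdot,\cdot,p)$ lies in $\Pi_1^0$, and a uniform countable intersection of co-c.e.\ relations is again co-c.e.; hence $\REACHPeomega$ is co-c.e. Using the hypothesis $\REACHPe=\REACHPeomega$, the relation $\REACHPe$ is therefore co-c.e., and since it is assumed c.e.\ as well, it is decidable. Recalling moreover that $\REACHPe(\vx,\vy)$ means $\vy\in\closure{\REACHP(\vx)}$ whereas $\REACHPeomega(\vx,\vy)$ means $\vy\in\REACHPomega(\vx)$, the hypothesis $\REACHPe=\REACHPeomega$ reads, over rational points, as the robustness identity $\closure{\REACHP(\vx)}=\REACHPomega(\vx)$ for every rational $\vx$.

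To decide $\REACHP$ I would then run two semi-algorithms in parallel. For the positive side, the ball reachability relation is c.e.: as recalled in the discussion of the ball decision problem preceding Theorem \ref{pRcorec}, simulating the dynamics from $\vx$ with growing time horizon and precision eventually certifies that a computed position enters $B(\vy,2^{-p})$, and such a certificate makes the procedure accept. For the negative side, I would reuse the compactness argument (Claim$*$) of the proof of Theorem \ref{pRcorec}: the predicate $\neg\REACHPomega(\vx,\vy,p)$ is c.e., and it certifies that $\cB(\vy,2^{-p})$ is disjoint from $\REACHPomega(\vx)=\closure{\REACHP(\vx)}$, hence a fortiori from $\REACHP(\vx)$, i.e.\ non-reachability; such a certificate makes the procedure reject.

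The step I expect to be the main obstacle is the completeness of this negative side, namely the discrepancy between the exact reachable set $\REACHP(\vx)$ and its closure. The c.e.\ non-reachability test only fires when $\cB(\vy,2^{-p})$ is disjoint from $\closure{\REACHP(\vx)}$, while in principle the ball could meet the closure without meeting $\REACHP(\vx)$ (frontier reachability). This is exactly where the promise built into the ball decision problem has to be invoked, together with the robustness identity $\closure{\REACHP(\vx)}=\REACHPomega(\vx)$ and the open-versus-closed bookkeeping of Claim$*$: the promise rules out the pathological frontier configuration, so that in the non-reachable case the ball is at positive distance from the closure and the negative semi-algorithm is guaranteed to halt. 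Granting this, exactly one of the two parallel semi-algorithms terminates on every instance, and $\REACHP$ is decidable.
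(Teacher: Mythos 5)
Your first paragraph is, by itself, essentially the paper's entire proof: the paper argues that $\neg\REACHPe(\vx,\vy)$ is equivalent, under the hypothesis $\REACHPe=\REACHPeomega$, to $\exists p\;\neg\REACHPomega(\vx,\vy,p)$, which is c.e.\ because Theorem \ref{pRcorec} places $\REACHPomega(\cdot,\cdot,\cdot)$ in $\Pi_1^0$ uniformly; hence $\REACHPe$ is co-c.e.\ and, being assumed c.e., decidable. Your phrasing via a uniform countable intersection of co-c.e.\ relations is the same argument seen from the complement. So for what the paper actually establishes --- decidability of $\REACHPe$, i.e.\ of the predicate $\vy\in\closure{\REACHP(\vx)}$ --- you are done after your first paragraph.

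Your second and third paragraphs attempt strictly more, namely to decide the three-argument ball relation $\REACHP(\vx,\vy,p)$, and the obstacle you flag there is genuine and is not removed by the stated hypotheses. The negative semi-algorithm fires only when $\cB(\vy,2^{-p})$ misses $\REACHPomega(\vx)$, while the positive one certifies only that the trajectory enters the open ball; the hypothesis $\REACHPe=\REACHPeomega$ is an equality of two-argument relations on rational points and does not exclude a closed ball meeting $\closure{\REACHP(\vx)}$ (equivalently $\REACHPomega(\vx)$) without meeting $\REACHP(\vx)$ itself. Unlike Corollary \ref{coro6p}, this corollary does not assume $\REACHP(\vx)$ closed, nor is any promise built into its statement, so ``granting this'' is exactly the point that cannot be granted. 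The paper sidesteps the issue by proving only the decidability of $\REACHPe$; read literally, the conclusion that $\REACHP$ is decidable is delivered neither by your extension nor by the paper's own one-line proof.
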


\begin{proof}
 $\neg \REACHPe(\vx,\vy)$ iff $\REACHPe= \REACHPeomega$ does not hold, that is  there is some $p$ such that $\neg \REACHPomega(\vx,\vy,p)$. This is hence co-c.e. 
 \end{proof}
}

\olivierplusimportant{DU COUP, CA VRAI, MAIS PEUT ETRE MOINS NATUREL:

\olivierpourmanon{Est-ce que cette chose $\bd(B)$ a un nom? Le bord? C'est pas la frontière, car ca serait mooins l'interieur, si je comprends bien.}
Given a subset $B \subset \R^{d}$, we write $\bd(B)$ for $\closure{B}-B$.
\olivierpourmanon{meilleur nom pour cette chose}
We say that the ball $B(\vy,\eta)$ is \motnouv{limit unreachable} from $\vx$, if $\REACHP(\vx) \cap B(\vy,\eta) = \emptyset$, but not $\REACHP(\vx) \cap \cB(\vy,\eta) = \emptyset$: there is a trajectory reaching the $\bd$ of the ball, but not it.
\olivierpourmanoninfo{Du coup, besoin de relacher un peu les choses.}

\begin{corollary}[Robust $\Rightarrow$ decidable]   \label{coro6p}
Assume the hypotheses of Theorem \ref{pRcorec}.   There is an algorithm that takes  rationals $\vx,\vy$, and integer $p$ ($\eta=2^{-p}$) and decides whether $\REACHP(\vx,\vy,p)$ holds, except maybe if $B(\vy,\eta)$  is limit unreachable from $\vx$.
\end{corollary}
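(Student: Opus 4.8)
The plan is to decide $\REACHP(\vx,\vy,p)$ by \emph{dovetailing two complementary semi-decision procedures} and answering according to whichever halts first: one positive test that confirms reachability of $\cB(\vy,\eta)$, and one negative test that confirms non-reachability even under arbitrarily small perturbations. Soundness of each test will be immediate; the delicate point, and the place where the stated exception comes from, is the termination analysis.

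For the positive side, I would use the fact (recalled earlier for computable systems) that the \emph{ball problem} is computably enumerable: simulating $\mP$ from $\vx$ with an increasing time horizon $T$ and increasing precision, one eventually obtains a guaranteed rational enclosure of some $\vx_{T}$ lying strictly inside $B(\vy,\eta)$ whenever the open ball is genuinely reached by the exact dynamics. When this happens the procedure halts and we output that $\REACHP(\vx,\vy,p)$ is true; this is sound since reaching the open ball $B(\vy,\eta)$ a fortiori reaches the closed ball $\cB(\vy,\eta)$. For the negative side, I would invoke Theorem~\ref{pRcorec}, which states that $\REACHPomega(\vx,\vy,p)$ is $\Pi_1^0$, so that $\neg\REACHPomega(\vx,\vy,p)$ is computably enumerable. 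Concretely one dovetails over $m$, builds the graphs $G_{m}$ of that proof, and halts as soon as $\LOOP(G_{m},\VERTEX_{i},\VERTEX_{j})$ holds for every vertex $\VERTEX_{i}\ni\vx$ and every $\VERTEX_{j}$ meeting $\cB(\vy,\eta)$. When this halts we output that $\REACHP(\vx,\vy,p)$ is false; this is sound because an exact trajectory is in particular an $\epsilon$-perturbed trajectory for every $\epsilon>0$, so $\REACHP(\vx,\vy,p)\Rightarrow\REACHPomega(\vx,\vy,p)$, and hence $\neg\REACHPomega(\vx,\vy,p)\Rightarrow\neg\REACHP(\vx,\vy,p)$.

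The heart of the argument is then to organise the termination analysis by cases on reachability of the open versus the closed ball. If $B(\vy,\eta)$ is reached by the exact dynamics, the positive test halts with the (correct) answer true. If $\cB(\vy,\eta)$ is not even $\omega$-reachable, i.e.\ $\neg\REACHPomega(\vx,\vy,p)$, the negative test halts with the (correct) answer false. Thus the algorithm decides $\REACHP(\vx,\vy,p)$ whenever we are \emph{not} in the gap region where the open ball is unreached yet $\cB(\vy,\eta)$ is $\omega$-reachable. Within that gap, the case where $\cB(\vy,\eta)$ is in fact reached on its boundary $\cB(\vy,\eta)-B(\vy,\eta)$ but never in its interior is exactly the configuration we have named \emph{limit unreachable}: here reachability is witnessed only on the boundary sphere, so it is invisible to any finite-precision positive test, while it simultaneously forbids the negative test to conclude.

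I expect this boundary/limit mismatch to be the main obstacle, and it is precisely the situation the statement sets aside. The subtle point I would need to be careful about is that the genuine termination gap of the two-procedure scheme is the full region $\neg(B(\vy,\eta)\text{ reached})\wedge\REACHPomega(\vx,\vy,p)$, whose intersection with ``$\cB(\vy,\eta)$ exactly reached'' is the limit-unreachable case; so the cleanest formulation is that the algorithm halts with the correct verdict on every input outside this limit/boundary phenomenon, the positive branch handling all genuinely interior-reachable instances and the negative branch, via Theorem~\ref{pRcorec}, handling all $\omega$-unreachable instances. The remaining routine verifications are only that the two enumerations can be run in parallel and that the soundness implications above hold, both of which follow directly from the constructions already in place.
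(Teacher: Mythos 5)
Your proposal is correct and follows essentially the same route as the paper: the paper's (very terse) proof is exactly this parallel dovetailing of the positive simulation semi-test with a c.e.\ certificate of $\neg\REACHPomega(\vx,\vy,p)$, phrased there as searching for a finite covering of $\cB(\vy,\eta)$ by rational balls each certified $\omega$-unreachable, which amounts to the same compactness argument already inside Theorem~\ref{pRcorec}. Your termination analysis, which the paper omits, is a genuine addition, and your remark that the true non-termination region is $\neg(B(\vy,\eta)\text{ reached})\wedge\REACHPomega(\vx,\vy,p)$ --- coinciding with the stated ``limit unreachable'' exception only under a robustness assumption --- correctly identifies the point the statement glosses over.
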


\begin{proof}
Search in parallel for a proof that $\REACHP(\vx,\vy,p)$ holds  (reply true if this is found) (i.e. simulate the system)  and for a finite covering of $\cB(\vy,\eta)$ with open rational balls of the form $B(\vy_{i}, p_{i})$, all of them satisfying $\neg \REACHPomega(\vx,\vy_{i},p_{i})$ (reply false if this is found).
\olivier{expliquer pourquoi ca marche.}
\end{proof}
}

\olivierplusimportant{Meme punition:  A point $\vy \not \in \REACHP(\vx)$ with $\REACHPeomega(\vx,\vy)$ must be in $\bd(\REACHP(\vx))$.  When  $\REACHP(\vx)$ is closed, then they cannot be such $\vy$, since for  a closed set $B$, $\bd(B)=\emptyset$. More generally, if $\bd(\REACHP(\vx)) \cap \Q^{d} = \emptyset$. If this holds, then, for all rationals $\vx$, $\vy$, by increasing $p$, $\neg \REACHP(\vx,\vy)$ can be detected effectively.
}

%
%
%
%
%
%
%
%
%
%
%
%
%

\olivierplan{section{New Results: SPACE for Perturbed dynamical systems}}


\begin{definition}
Given some function $f: \N \to \N$, we write $\REACHPERTURBEDSPACE{\P}{f}$ as: for two rational points $\mathbf{x}$ and $\mathbf{y}$, and $p$, the relation holds iff 
$\REACHP_{f(\length(\vx)+\length(\vy)+p)}(\vx,\vy,p)$.
\end{definition}

We need also the concept of polynomial (poly.) time computable function: see \cite{Ko91}. In short, a quickly converging name of $\vx \in \R^{d}$ is a name  of $\vx$, with $I_{n}$ of radius $<2^{-n}$. Then $\tu f: \R^{d} \to \R^{d'}$ is computable in poly. time, if there is some oracle TM $M$, such that, for all $\vx$, given any fast converging name of $\vx$ as oracle, given $n$, $M$ produces some open rational ball of radius $<2^{-n}$ containing $\tu f(\vx)$,  in a time poly. in $n$.




\begin{theorem} \label{thdirectionunpbis}  \label{ab:ab:ab}
Consider a  locally Lipschitz system, with $\tu f$ polynomial time computable, whose domain $X$ is a closed rational box. 
Then $\REACHPERTURBEDSPACE{\P}{p} \subseteq \Q^{d}\times \Q^{d} \times \N \in \PSPACE$.
\end{theorem}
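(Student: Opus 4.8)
The plan is to replay, in the Computable Analysis setting, the pair Lemma~\ref{trucquivabien}/Theorem~\ref{thdirectionunp}, but using the graph $G_{m}$ built in the proof of the co-r.e. statement Theorem~\ref{pRcorec} (rather than the one of Theorem~\ref{thcinq}), and to check carefully that its edge relation remains polynomial-space computable when $\tu f$ is accessible only through a CA name. The skeleton is: (i) the graph $G_{m}$ is succinctly encodable and its successor relation is decidable in space polynomial in $m$; (ii) hence graph reachability in $G_{m}$ is decidable in space polynomial in $m$ by Corollary~\ref{corosuccint}; (iii) the two Claims of the proof of Theorem~\ref{pRcorec} reduce $\REACHPn(\vx,\vy,p)$ to such a graph reachability question for an $m$ affine in $n$; (iv) for a polynomial perturbation level this makes $m$ polynomial in the input size, giving the $\PSPACE$ bound.

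First I would establish the CA-analogue of Lemma~\ref{trucquivabien}. For $\delta=2^{-m}$ the graph $G_{m}$ has $\mathcal{O}(2^{dm})$ vertices, so each vertex $\VERTEX_{i}=B(\vx_{i},\delta_{i})$ is encoded by a word of length $\mathcal{O}(dm)$. The only delicate point is the edge test $\VERTEX_{i}\rightarrow_{\delta}\VERTEX_{j}$, which by the construction of that proof holds iff $B(\tu f_{i},(L+2)\delta)\cap\VERTEX_{j}\neq\emptyset$ for a rational $\delta$-approximation $\tu f_{i}$ of $\tu f(\vx_{i})$, i.e.\ $\distance{\tu f_{i}}{\tu f(\vx_{i})}<\delta$. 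Since $\tu f$ is polynomial-time computable and $\vx_{i}$ is rational, a quickly converging name of $\vx_{i}$ can be produced on demand (its $k$-th term computable in space polynomial in $k$), and feeding it to the machine for $\tu f$ yields such a $\tu f_{i}$ in time, hence in space, polynomial in $m$. Fixing once and for all a rational upper bound for the Lipschitz constant $L$ (which exists since $\tu f$ is locally Lipschitz on the compact box $X$, as noted in the proof of Theorem~\ref{pRcorec}), the intersection test reduces to finitely many pieces of rational arithmetic carried out in space polynomial in $m$. Thus $\rightarrow_{\delta}$ is decidable in space polynomial in $m$, and Corollary~\ref{corosuccint} gives $\PATH(G_{m},\VERTEX_{i},\VERTEX_{j})$ and $\LOOP(G_{m},\VERTEX_{i},\VERTEX_{j})$ in space polynomial in $m$.

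Next I would invoke Claims~1 and~2 from the proof of Theorem~\ref{pRcorec}. They show that for $\epsilon=2^{-n}$ one may choose $\delta=2^{-m}$ with $\delta<\epsilon/(2L+4)$, i.e.\ $m$ affine in $n$ (taking $m$ also large enough, say $m\ge p+\mathcal{O}(1)$, to resolve the target ball), so that $\REACHPn(\vx,\vy,p)$ has the same truth value as the graph property that $\VERTEX_{i}\stackrel{*}{\rightarrow_{\delta}}\VERTEX_{j}$ holds for some $\VERTEX_{j}$ with $\VERTEX_{j}\cap\cB(\vy,2^{-p})\neq\emptyset$ and $\vx\in\VERTEX_{i}$. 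This property is decided by enumerating the (polynomially encodable) vertices $\VERTEX_{j}$ meeting $\cB(\vy,2^{-p})$ and running the poly-space $\PATH$ procedure of the previous paragraph, the whole computation staying in space polynomial in $m$. Assembling as in Theorem~\ref{thdirectionunp}: given $(\vx,\vy,p)$, the relation $\REACHPERTURBEDSPACE{\P}{p}$ holds iff $\REACHPn(\vx,\vy,p)$ for $n$ equal to the fixed polynomial evaluated at $\length(\vx)+\length(\vy)+p$; this $n$ is polynomial in the input size, the associated $m$ is affine in $n$ hence polynomial in the input size, and the decision uses space polynomial in $m$. Therefore $\REACHPERTURBEDSPACE{\P}{p}\in\PSPACE$.

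I expect the main obstacle to be the second paragraph, namely guaranteeing that the edge relation of $G_{m}$ stays polynomial-space decidable once $\tu f$ is real-valued and given only by a CA name. Everything else is essentially a transcription of the rational case, but here one must genuinely use that $\tu f$ is \emph{polynomial-time} computable (and that polynomial-time computation runs in polynomial space) to obtain the rational approximations $\tu f_{i}$ to precision $\delta=2^{-m}$ within the space budget, and one must check that the fixed rational bound on $L$ together with the finitely many intersection tests do not exceed that budget.
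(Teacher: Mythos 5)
Your proposal is correct and follows essentially the same route as the paper's own proof: reduce $\REACHPn(\vx,\vy,p)$ to reachability in the graph $G_{m}$ of Theorem~\ref{pRcorec} with $m$ affine in $n$, and decide that reachability in space polynomial in $m$ via the Savitch-based argument of Corollary~\ref{corosuccint}. The paper merely gestures at "an analysis similar to Lemma~\ref{trucquivabien}" where you spell out the one genuinely new point — that polynomial-time computability of $\tu f$ in the CA sense yields the rational approximations $\tu f_{i}$ within the space budget — which is exactly the intended argument.
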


\olivierpasimportant{Reperctuer du coup}
\begin{proof}
The proof of Theorem \ref{pRcorec} (similar to the one of Theorem \ref{Rcorec})   shows that when $\REACHP_{\omega}(\vx,\vy,p)$ is false, then
$\REACHPepsilon(\vx,\vy,p)$ is false for some $\epsilon=2^{-n}$.
With the hypotheses, given $\vx$, $\vy$ and $p$, we can take $n$   polynomial in $\length(\vx)+\length(\vy)+p$. From the proof, the corresponding $m$ is polynomially related to $n$ (it is even affine in $n$). Now an analysis similar to the one of  Lemma \ref{trucquivabien}, shows that the truth value of $\REACHGm(\vx,\vy,p)$ can be determined in space polynomial in $m$.
\end{proof}



\olivierplusimportant{Du coup, faut parler de boules, e,n fait, non?}

\begin{theorem}[Polynomially robust to precision $\Rightarrow$ $\PSPACE$] \label{thmaindeuxp} 
Consider the hypotheses of Theorem \ref{ab:ab:ab}.  
%
Assume that for all rational $\vx$, $\REACHP(\vx)$ is closed, and
$\REACHP(\vx)=\REACHPERTURBEDSPACE{\P}{p}$ for some polynomial $p$.
Then the ball decision problem is in $\PSPACE$. 
%
\end{theorem}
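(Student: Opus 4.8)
The plan is to mirror exactly the short argument used for the rational case in the proof of Theorem~\ref{thmaindeux}, now relying on the computable-analysis version of the space bound, namely Theorem~\ref{ab:ab:ab}. Under the present hypotheses (locally Lipschitz system, $\tu f$ polynomial-time computable, $X$ a closed rational box), Theorem~\ref{ab:ab:ab} already states that the relation $\REACHPERTURBEDSPACE{\P}{p} \subseteq \Q^{d}\times\Q^{d}\times\N$ is decidable in $\PSPACE$. The robustness hypothesis $\REACHP = \REACHPERTURBEDSPACE{\P}{p}$ then transports this bound to $\REACHP$ itself: for all rational $\vx,\vy$ and all precision parameters, the truth value of $\REACHP(\vx,\vy,p)$ — i.e.\ of the reachability of the closed ball $\cB(\vy,2^{-p})$ from $\vx$ — coincides with that of $\REACHPERTURBEDSPACE{\P}{p}(\vx,\vy,p)$, and is hence computable in $\PSPACE$.

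Next I would spell out the reduction from the ball decision problem to deciding $\REACHP(\vx,\vy,p)$. Recall that an instance of the ball decision problem carries the promise that either $B(\vy,2^{-p})$ is reachable from $\vx$ (in which case $\cB(\vy,2^{-p})$ is reachable too) or $\cB(\vy,2^{-p})$ is not reachable, and asks which alternative occurs. In either alternative the answer is precisely the truth value of $\REACHP(\vx,\vy,p)$: true in the first case and false in the second. Thus the ball decision problem is solved by running the $\PSPACE$ decision procedure for $\REACHPERTURBEDSPACE{\P}{p}$ on input $(\vx,\vy,p)$ and reporting its output, which places the ball decision problem in $\PSPACE$. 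There is no composition overhead, since a single query to $\REACHPERTURBEDSPACE{\P}{p}$ suffices.

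The step that needs the most care is the interface between the reachability relation and the promise inherent to the ball decision problem. Unlike the rational case of Theorem~\ref{thmaindeux}, here $\tu f$ may take non-rational values, so exact reachability of a point — or even membership of a rational point in $\REACHP(\vx)$ — is only c.e.\ in general, because of frontier reachability. What makes the argument go through is precisely that we never need to resolve the frontier: the assumptions that $\REACHP(\vx)$ is closed and that $\REACHP = \REACHPERTURBEDSPACE{\P}{p}$ guarantee that the closed-ball relation $\REACHP(\cdot,\cdot,\cdot)$ already agrees with a polynomially perturbed relation that, by Theorem~\ref{ab:ab:ab}, carries a genuine $\PSPACE$ bound. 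I would therefore emphasize in the write-up that the closedness assumption is what legitimizes phrasing the decision task as the ball problem and ensures that a single $\PSPACE$ query to $\REACHPERTURBEDSPACE{\P}{p}$ answers it, with no separate boundary case left to handle.
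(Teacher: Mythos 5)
Your proposal is correct and follows essentially the same route as the paper: invoke Theorem~\ref{ab:ab:ab} to place $\REACHPERTURBEDSPACE{\P}{p}$ in $\PSPACE$, then transfer the bound to $\REACHP$ via the robustness equality, a single query to which answers the ball decision problem under its promise. The paper's own proof is just a terser version of this (it does not even spell out the interface with the ball-problem promise, which you rightly make explicit).
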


\begin{proof}
 	 We  have $\REACHPERTURBEDSPACE{\P}{p} \in \PSPACE$ by Theorem \ref{thdirectionunpbis} and since $\REACHP = \REACHPERTURBEDSPACE{\P}{p}$, then $\REACHP \in \PSPACE$.
\end{proof}

\section{Relating robustness to drawability}
\label{Sec:analysecalculable}

\olivierpasimportant{Enlevé: 
Notice that, even if arguments are similar in the end, the previous analysis of $\Q$-rational systems (Section \ref{sub:partone}) is not a special case of the analysis done for computable systems (Section \ref{sub:parttwo}).  Indeed, a function computable in computable analysis is necessarily continuous, while in the analysis of $\Q$-rational systems, functions could be discontinuous.}
\olivierpasimportant{Enlevé:  (and PAM systems, taken several time as examples, do not assume any form of continuity). For $\Q$-rational systems, classical computability over the rationals was involved, while for the second, this was computable analysis.  }

We can go even further, and go to geometric properties: in the previous sections, we associated to every discrete time dynamical system a reachability relation over the rationals. 
 But we could also see it as a relation over the reals, and use the framework of CA, talking about subsets of $\R^{d}\times \R^{d}$. 
%
\olivierpasimportant{Superflus je pense: Namely, for two  points $\mathbf{x}$ and $\mathbf{y}$ the relation $\REACHP(\mathbf{x}, \mathbf{y})$ holds iff there exists a trajectory of $\mP$ from $\vx$ to $\vy$.}
%
%
%
A closed subset of $\R^{d}$ is called  c.e. closed if one can effectively enumerate the rational open balls intersecting it.
From the statements of \cite{Wei00}, the following holds:
\begin{theoremd}	 \label{baba}  \label{thbaba} 
Consider a  computable discrete time system $\mP$ whose domain  is a computable compact.

For all $\vx$,  $\closure{\REACHP(x)} \subseteq \R^{d}$  is a c.e. closed subset.
\end{theoremd}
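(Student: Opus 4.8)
The plan is to directly exhibit an effective enumeration of the rational open balls that meet $\closure{\REACHP(\vx)}$, since this is precisely what c.e.\ closedness asks for. Everything will be uniform in a name of $\vx$, so for a computable point $\vx$ the enumeration is effective; more generally this shows that the operator $\vx \mapsto \closure{\REACHP(\vx)}$ is computable into the space of c.e.\ closed sets, which is the reading matching the cited results of \cite{Wei00}.

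First I would observe that, as the system is deterministic, $\REACHP(\vx)$ is exactly the forward orbit $\{\tu f^{t}(\vx): t \in \N\}$ (with $\tu f^{0}(\vx)=\vx$). Since $\tu f$ is computable in CA and maps the domain $X$ into itself, the map $t \mapsto \tu f^{t}(\vx)$ is a \emph{computable sequence} of points of $X$: from a name of $\vx$ and the name of $\tu f$ one produces a name of each iterate by applying $\tu f$ repeatedly. Next comes the key reduction: because every rational ball $B$ is open, we have $B \cap \closure{\REACHP(\vx)} \neq \emptyset$ iff $B \cap \REACHP(\vx) \neq \emptyset$. Indeed, a point $\vz \in B \cap \closure{\REACHP(\vx)}$ has $B$ as an open neighborhood, so $B$ must already meet $\REACHP(\vx)$. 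Hence it suffices to enumerate the rational balls $B$ for which $\tu f^{t}(\vx) \in B$ for some $t \in \N$.

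I would then note that the predicate ``$\tu f^{t}(\vx) \in B$'' is semi-decidable for an open ball $B$: the name of $\tu f^{t}(\vx)$ provides a sequence of nested rational balls shrinking to that point, and if the point lies in the open ball $B$ then eventually one of these enclosing balls is contained in $B$, which we can detect. The enumeration procedure dovetails over all pairs $(B,t)$, with $B$ ranging over rational open balls and $t$ over $\N$, runs the corresponding semi-decision for each, and emits $B$ to the output stream whenever a test succeeds. This enumerates exactly the balls meeting $\REACHP(\vx)$: any emitted $B$ contains some $\tu f^{t}(\vx)\in\REACHP(\vx)$, and conversely if $B$ meets $\REACHP(\vx)$ at $\tu f^{t}(\vx)$ then the $(B,t)$-test eventually halts and emits $B$. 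Combined with the reduction above, this is precisely an enumeration of the rational balls intersecting $\closure{\REACHP(\vx)}$.

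The construction is robust to $\tu f$ being only partial: if $\tu f^{t}(\vx)$ is undefined (the orbit has left $X$), that branch of the dovetailing simply never emits, which is harmless since such $t$ contribute no orbit point. The only genuine obstacle is the semi-decidability step, namely confirming membership of a computable point in an open rational ball; but this is exactly the standard fact that membership in an open ball is c.e.\ from a name of the point, and is the same observation underlying the result of \cite{Wei00} that the closure of a computable sequence of points is a c.e.\ closed set. This completes the plan.
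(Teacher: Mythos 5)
You have given a correct proof that reaches the same conclusion by a more elementary route than the paper's. The paper works at the level of sets: it shows by induction that the $T$-step reachable set $\REACH{\P,T}(\vx,\cdot)$ is a computable compact, invoking the facts that the image of a computable compact under a computable function is computable and that a computable closed set is c.e.\ closed, and then dovetails over $T$. You work at the level of points: the orbit $t \mapsto \tu f^{t}(\vx)$ is a computable sequence, membership of a computable point in an open rational ball is semi-decidable from its name, and an open ball meets $\closure{S}$ iff it meets $S$; dovetailing over pairs $(B,t)$ then yields the enumeration. Since for a deterministic system the $T$-step reachable set is just the finite orbit segment $\{\vx,\tu f(\vx),\dots,\tu f^{T}(\vx)\}$, the two arguments enumerate exactly the same balls; yours buys a proof that avoids the computable-compact machinery entirely and makes the uniformity in the names of $\vx$ and $\tu f$ transparent, while the paper's set-level formulation is what it later reuses when it passes from points $\vy$ to target balls $\cB(\vy,\eta)$ and to the continuous-time case. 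The one soft spot is your final remark on partial $\tu f$: a Type-2 machine realizing $\tu f$ is only constrained on names of points in its domain, so on a name of a point outside the domain it may well output a valid name of a spurious point rather than diverge, and that branch could then emit a ball not meeting the true orbit; so ``that branch simply never emits'' is not guaranteed. This does not undermine the theorem as actually used (the paper's own proof tacitly assumes $\tu f$ total on the compact domain, since it takes images of compacts under $\tu f$), but the partial case would need either the convention that the realizer diverges off the domain or an explicit enumeration of the domain.
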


\begin{proof}
We do the proof for the case of discrete time dynamical system. Write $\REACH{\P,T}(\mathbf{x}, \mathbf{y})$  iff there exists a trajectory of $\mP$ from $\vx$ to $\vy$ in less than $T$ steps. We can write $\REACH{\P,0}(\mathbf{x}, \mathbf{y})$ as $\{(\vx,\vy)| \vx=\vy\}$, which is a computable closed subset 
\mylabelwei{Example 5.1.3}). We can then also write  $\REACH{\P,T+1}(\mathbf{x}, .)= \tu F(\REACH{\P,T}(\mathbf{x}, .))$ where $\tu F(K):=K \cup \tu f(K)$. As $\tu f$ is computable, we know it is continuous, and by induction on $T$, $\REACH{\P,T+1}(\mathbf{x}, \vy)$ is a compact: indeed, as $\tu f$ is computable, it is continuous, and as $K$ is a closed subset living in a compact by induction, it is compact, and the image of a compact by some continuous function is compact. As $\tu f$ is computable, and $K$ is compact, we know that $\tu f(K)$ is computable (\mylabelwei{Theorem 6.2.4}), and hence also $\tu F(K)$ (\mylabelwei{Theorem 5.1.13}). And then by induction on $T$, that $\REACH{\P,T}(\mathbf{x}, \mathbf{y})$ is a closed computable subset. A computable closed set is computably enumerable-closed: we can enumerate the rational balls intersecting it (\mylabelbt{Proposition 5.16}).

Furthermore, as it can be checked in all the above references of theorems above from \cite{Wei00}, (see also \mylabelwei{Theorem 6.2.1} for the required iteration)  the above reasoning is even effective: we can even produce effectively in $T$  a name of it (even effectively from a name of $\tu f$). This means consequently by doing things in parallel (i.e. dovetailing) that we can effectively enumerate  the rational balls intersecting $\closure{\bigcup_{T}\REACH{\mP,T}(.,.)}$, by considering increasing $T$ and the balls in these enumerations.  
%
\end{proof}

A closed set is called co-c.e. closed if one  can effectively enumerate the rational closed balls in its complement. Using arguments similar to the proof of Theorems \ref{pRcorec} and \ref{thcinq}:


\begin{theoremd}	\label{thcoreanalyserecursive}
Consider a  computable locally Lipschitz  discrete time  system whose domain $X$ is a computable compact.

For all  $\vx$, $\closure{\REACHP_{\omega}(\vx)} \subseteq \R^{d}$ is a co-c.e. closed subset.
\end{theoremd}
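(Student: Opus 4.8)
The plan is to read the statement as a near-immediate consequence of Theorem \ref{pRcorec}, by matching up the two notions involved. First I would recall that, as observed just before Theorem \ref{pRcorec}, $\REACHPomega(\vx) = \bigcap_{\epsilon>0} \closure{\REACHPepsilon(\vx)}$ is already closed, so $\closure{\REACHPomega(\vx)} = \REACHPomega(\vx)$ and the closure operator costs nothing. By the definition of a co-c.e. closed set, it then remains to exhibit an effective enumeration (uniform in a name of $\tu f$ and in the rational point $\vx$) of the rational closed balls contained in the complement of $\REACHPomega(\vx)$.

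The key step is to identify the membership test ``$\cB(\vy,2^{-p})$ is contained in the complement of $\REACHPomega(\vx)$'' with the relation $\neg \REACHPomega(\vx,\vy,p)$ appearing in Theorem \ref{pRcorec}. Indeed, $\cB(\vy,2^{-p})$ lies in the complement iff $\cB(\vy,2^{-p}) \cap \REACHPomega(\vx) = \emptyset$, and I would show this is equivalent to $\neg \REACHPomega(\vx,\vy,p)$ using the compactness argument (``Claim*'') from the proof of Theorem \ref{pRcorec} together with the monotonicity $\closure{\REACHP_{\epsilon/2}(\vx)} \subseteq \REACHPepsilon(\vx)$: both sides are seen to say that for every $\epsilon>0$ the ball $\cB(\vy,2^{-p})$ meets $\REACHPepsilon(\vx)$. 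The proof of Theorem \ref{pRcorec} already establishes that $\neg \REACHPomega(\vx,\vy,p)$ is c.e., uniformly in $(\vy,p)$: it holds iff for some integer $m$ the $\LOOP$ condition holds over the finite computable graph $G_{m}$ for every pair of vertices $\VERTEX_i \ni \vx$ and $\VERTEX_j$ meeting $\cB(\vy,2^{-p})$, and this condition is decidable for each fixed $m$.

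Finally I would assemble these into the enumeration. Since the complement of $\REACHPomega(\vx)$ is open and the rational closed balls form a basis, every point of the complement is contained in some rational closed ball disjoint from $\REACHPomega(\vx)$; conversely, any ball the procedure outputs is genuinely disjoint from $\REACHPomega(\vx)$ and hence contained in the complement. Therefore, dovetailing the semi-decision procedure for $\neg \REACHPomega(\vx,\vy,p)$ over all rational pairs $(\vy,p)$ enumerates exactly the rational closed balls contained in the complement, which witnesses that $\REACHPomega(\vx)$ is co-c.e. closed. The only genuine obstacle here is bookkeeping rather than mathematics: one must carefully align the open-ball/closed-ball and $\REACHPepsilon$/$\closure{\REACHPepsilon}$ conventions so that the characterization of $\neg \REACHPomega(\vx,\vy,p)$ proved in Theorem \ref{pRcorec} transfers verbatim, since the genuinely hard reachability-to-graph analysis (Claims 1 and 2 and the compactness argument) was already carried out there.
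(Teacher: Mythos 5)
Your proposal is correct and follows essentially the same route as the paper: both derive the statement directly from the machinery of Theorem \ref{pRcorec} (the graphs $G_m$, Claims 1--2, and the compactness argument ``Claim*''), the only cosmetic difference being that the paper phrases the enumeration in terms of the finite witnesses $\oR_\delta$ extracted from $G_m$, while you dovetail the c.e.\ predicate $\neg \REACHPomega(\vx,\vy,p)$ over all rational pairs $(\vy,p)$ --- which produces the same set of balls. Your explicit verification that $\cB(\vy,2^{-p})\cap\REACHPomega(\vx)=\emptyset$ is equivalent to $\neg\REACHPomega(\vx,\vy,p)$, via Claim* and the monotonicity chain $\closure{\REACHP_{\epsilon/2}(\vx)}\subseteq\REACHPepsilon(\vx)$, is exactly the alignment the paper's (terser) proof relies on implicitly.
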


\begin{proofmanon}
From the proof of Theorem \ref{pRcorec}, for all the $\vx$, $\vy$ such that $\REACHPepsilon(\vx,\vy)$ is false, in a easily controllable computable neighborhood of $\vx$, with $\epsilon=2^{-n}$,  there exists some $\delta=2^{-m}$ and some witness $\oR=\oR_{\delta}$ at level $\delta$ of that fact: this witness guarantees $\REACHP(\vx) \subseteq \REACHPepsilon(\vx) \subseteq \oR_{\delta}$, and $\cB(\vy,\eta) \cap \REACHPepsilon(\vx) = \emptyset$ implies $\cB(\vy,\eta) \cap \oR_{\delta} = \emptyset$.
. 


Then a strategy to produce all the rational balls whose closure is not intersecting $\closure{\REACHP(\vx)}$, for  increasing $n$, generate in parallel all such balls in the corresponding witness. This will exhaust all such balls. 
%
%
%
\end{proofmanon}

\begin{corollary} [Robust $\Rightarrow$ computable)]  \label{coro8}
Assume the hypotheses of Theorem \ref{thcoreanalyserecursive}. 

If $\REACHPomega=\REACHP$ then $\closure{\REACHP} \subseteq \R^{d} \times \R^{d}$ is computable.
\end{corollary}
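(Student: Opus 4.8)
The plan is to use the standard characterization from computable analysis that a closed subset of a Euclidean space is computable precisely when it is both c.e. closed (the rational balls meeting it can be effectively enumerated) and co-c.e. closed (the rational balls whose closure lies in its complement can be effectively enumerated); see \cite{Wei00}. Hence it suffices to establish both halves for $\closure{\REACHP}$, regarded as a closed subset of $\R^{d}\times\R^{d}\cong\R^{2d}$. The two ingredients are already contained in Theorems \ref{thbaba} and \ref{thcoreanalyserecursive}, but they must be read uniformly in the first argument rather than merely sectionwise.

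First I would extract the c.e.-closed half from Theorem \ref{thbaba}. Although that statement is phrased for a fixed $\vx$, its proof proceeds by dovetailing the enumerations of the computable compacts $\REACH{\mP,T}(\cdot,\cdot)$ over increasing $T$, and this construction is effective jointly in both coordinates. It therefore produces an enumeration of the rational balls of $\R^{2d}$ meeting $\closure{\bigcup_{T}\REACH{\mP,T}}=\closure{\REACHP}$, i.e. $\closure{\REACHP}$ is c.e. closed as a subset of $\R^{2d}$. Symmetrically, Theorem \ref{thcoreanalyserecursive} gives co-c.e. closedness of each section $\closure{\REACHPomega(\vx)}$, and its proof already operates in a computable neighborhood of $\vx$ rather than at the single point; by the same dovetailing one then enumerates the rational balls of $\R^{2d}$ whose closure avoids $\closure{\REACHPomega}$, so that $\closure{\REACHPomega}$ is co-c.e. closed.

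Next I would invoke the robustness hypothesis. In general $\REACHP\subseteq\REACHPomega$, since every unperturbed trajectory is also a perturbed one; under the assumption $\REACHPomega=\REACHP$ the two relations coincide, and therefore so do their closures, $\closure{\REACHP}=\closure{\REACHPomega}$. Combining this with the previous paragraph, the single set $\closure{\REACHP}$ is simultaneously c.e. closed and co-c.e. closed, hence computable by the characterization recalled above, which is exactly the claim.

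The main obstacle I anticipate is the uniformity step: the two cited theorems are stated sectionwise, whereas the conclusion concerns the full relation in $\R^{2d}$. For the c.e. side this is immediate from the joint effectivity already present in the proof of Theorem \ref{thbaba}. For the co-c.e. side the delicate point is upgrading ``for each $\vx$, $\REACHPomega(\vx)$ misses a closed ball $\cB(\vy,\eta)$'' to a statement about a whole rational box $U\times V$ of $\R^{2d}$, which requires the avoidance to persist on a neighborhood of $\vx$ so that it can be certified by a rational box. This is precisely where the computable-neighborhood feature of the witness construction in Theorem \ref{thcoreanalyserecursive}, together with the compactness argument used there, is needed; once that uniform version is in hand, the combination with robustness is routine.
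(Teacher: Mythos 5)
Your proposal is correct and follows essentially the same route as the paper: the paper's proof likewise combines the characterization ``computable $\Leftrightarrow$ c.e.\ closed and co-c.e.\ closed'' with Theorems \ref{thbaba} and \ref{thcoreanalyserecursive}, remarking only that those statements are effective given a name of $\vx$. Your more explicit treatment of the uniformity-in-$\vx$ step is a reasonable elaboration of that same remark rather than a different argument.
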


\begin{proof}
This follows from the fact that a closed set is computable iff it is c.e. closed and co-c.e. closed (\mylabelbt{Proposition 5.16}), observing that  above statements are effective given a name of $\vx$.
\end{proof}

For closed sets, the notion of computability can be also interpreted as the possibility of being plotted with arbitrarily chosen precision: here the intuition is that $\vz/2^{n}$ corresponds to some pixel at precision $2^{n}$, and that $1$ is black (i.e. the pixel is plotted black), $0$ is white (i.e. the pixel is plotted white). 
\olivierpasimportant{Enlevé pour gain de plac: Clearly, we would assume a drawing of $A$ to satisfy the following conditions on $f$.}

\begin{theorem}[\mylabelbt{Proposition 5.7},\mylabelwei{pages 127--128}]  \label{thdessin}
For a closed set $A \subseteq \mathbb{R}^k$,  A is computable iff it can be plotted: there exists a computable function $ f: \mathbb{N} \times \mathbb{Z}^k \rightarrow \mathbb{N}$ with $range( f) \subseteq\{0,1\}$ and such that for all $n \in \mathbb{N}$ and $\vz \in \mathbb{Z}^k$
$$
f(n, \vz)= \begin{cases}1 & \text { if } B(\frac{\vz}{2^n},2^{-n}) \cap A \neq \emptyset,
\\ 0 & \text { if } B(\frac{\vz}{2^n},2.2^{-n}) \cap A = \emptyset, 
\\ 0 \text { or } 1 & \text { otherwise. }\end{cases}
$$
\end{theorem}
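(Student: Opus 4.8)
The plan is to reduce everything to the characterization that a closed set is computable iff it is simultaneously c.e. closed and co-c.e. closed (\mylabelbt{Proposition 5.16}), already invoked in the proof of Corollary \ref{coro8}. Concretely, $A$ computable means we can semi-decide whether a rational \emph{open} ball meets $A$ (the c.e.-closed side) and whether the closure of a rational open ball is disjoint from $A$ (the co-c.e.-closed side). The whole argument then amounts to matching these two one-sided semi-decisions with the two nontrivial clauses defining $f$, the gap between the radii $2^{-n}$ and $2\cdot 2^{-n}$ being exactly what makes $f$ total while leaving a harmless ``don't care'' zone.

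For the direction \textbf{computable $\Rightarrow$ plottable}, given $(n,\vz)$ I would dovetail two procedures: $P_1$ tries to confirm $B(\frac{\vz}{2^n},2\cdot 2^{-n})\cap A\neq\emptyset$ from the c.e.-closed enumeration, and $P_2$ tries to confirm $\cB(\frac{\vz}{2^n},2^{-n})\cap A=\emptyset$ from the co-c.e.-closed enumeration; set $f(n,\vz)=1$ if $P_1$ halts first and $f(n,\vz)=0$ if $P_2$ halts first. Totality is the key point: since $\cB(\frac{\vz}{2^n},2^{-n})\subseteq B(\frac{\vz}{2^n},2\cdot 2^{-n})$, the two tested conditions cannot both be false, so at least one procedure halts. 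Correctness is then immediate: if $B(\frac{\vz}{2^n},2^{-n})\cap A\neq\emptyset$ then $P_1$ halts while $P_2$ cannot (its closed ball already meets $A$), forcing output $1$; if $B(\frac{\vz}{2^n},2\cdot 2^{-n})\cap A=\emptyset$ then $P_2$ halts while $P_1$ cannot, forcing output $0$; in the remaining annular case both outputs are permitted, so halting order is irrelevant.

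For the converse \textbf{plottable $\Rightarrow$ computable}, I would read off the one-sided information in $f$: the clauses give $f(n,\vz)=1\Rightarrow B(\frac{\vz}{2^n},2\cdot 2^{-n})\cap A\neq\emptyset$ and $f(n,\vz)=0\Rightarrow B(\frac{\vz}{2^n},2^{-n})\cap A=\emptyset$. To get c.e.\ closed, I semi-decide $B(\vq,\rho)\cap A\neq\emptyset$ by searching for some $(n,\vz)$ with $B(\frac{\vz}{2^n},2\cdot 2^{-n})\subseteq B(\vq,\rho)$ and $f(n,\vz)=1$: soundness is the first implication, and completeness holds because any $a\in A\cap B(\vq,\rho)$ is interior, so for large $n$ a dyadic center with $a\in B(\frac{\vz}{2^n},2^{-n})$ forces $f(n,\vz)=1$ while keeping the larger ball inside $B(\vq,\rho)$. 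To get co-c.e.\ closed, I semi-decide $\cB(\vq,\rho)\cap A=\emptyset$ by searching, for increasing $n$, for a finite family of level-$n$ balls $B(\frac{\vz}{2^n},2^{-n})$ covering $\cB(\vq,\rho)$ with all $f(n,\vz)=0$: soundness is the second implication, and completeness is a compactness argument, since $\distance{\cB(\vq,\rho)}{A}>0$ forces every level-$n$ dyadic ball meeting $\cB(\vq,\rho)$ to miss $A$ (hence have $f$-value $0$) once $n$ is large. Combining both sides, $A$ is computable by \mylabelbt{Proposition 5.16}.

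The main obstacle I anticipate lies entirely in the first direction, namely securing totality and correctness of $f$ at once; both are resolved by the asymmetric choice of radii (intersection tested on the larger open ball, disjointness on the smaller closed ball), whose nesting guarantees that at least one test succeeds. The only genuinely nontrivial step on the converse side is the finite-covering/compactness argument for co-c.e.\ closedness, which is routine.
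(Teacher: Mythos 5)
The paper does not actually prove this statement: it is imported from the literature (\mylabelbt{Proposition 5.7}, \mylabelwei{pages 127--128}) and used as a black box, so there is no in-paper proof to compare against. Your argument is a correct, self-contained proof, and it routes through exactly the characterization the paper relies on elsewhere (in the proof of Corollary \ref{coro8}), namely that a closed set is computable iff it is c.e.\ closed and co-c.e.\ closed (\mylabelbt{Proposition 5.16}). Both directions are sound. In the forward direction, the dovetailed function is total precisely because $\cB(\vz/2^n,2^{-n})\subseteq B(\vz/2^n,2\cdot 2^{-n})$ makes the two tested predicates exhaustive, and each of the two defining clauses of $f$ falsifies the condition of one of the two procedures, so the surviving one forces the required output; this is the standard way the radius gap is exploited. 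The converse correctly extracts the two one-sided implications carried by $f$ and feeds them into a density argument (for c.e.\ closedness) and a compactness argument (for co-c.e.\ closedness). The only imprecision is in the last step: for the definition of $f$ to \emph{force} $f(n,\vz)=0$ you need the doubled ball $B(\vz/2^n,2\cdot 2^{-n})$ to be disjoint from $A$, not merely $B(\vz/2^n,2^{-n})$ ``to miss $A$'' as you write. This is harmless --- once $3\cdot 2^{-n}<\distance{\cB(\vq,\rho)}{A}$, every doubled level-$n$ ball whose core meets $\cB(\vq,\rho)$ also misses $A$ --- but the inequality should be stated for the larger radius.
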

This is called \motnouv{local computability} in \cite{MscBraverman}. 

\begin{corollaryd} [Robust $\Rightarrow$ drawable)]  \label{coroplot}
Assume the hypotheses of Theorem \ref{thcoreanalyserecursive}. 

If $\REACHPomega=\REACHP$ then $\closure{\REACHP} \subseteq \R^{d} \times \R^{d}$  can be plotted.
\end{corollaryd}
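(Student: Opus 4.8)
The plan is to obtain this statement as an immediate consequence of the two results that immediately precede it, namely Corollary \ref{coro8} and the characterisation of computability by plottability in Theorem \ref{thdessin}. First I would invoke Corollary \ref{coro8}: under the hypotheses of Theorem \ref{thcoreanalyserecursive} (a computable locally Lipschitz discrete time system whose domain $X$ is a computable compact), the assumption $\REACHPomega=\REACHP$ already yields that $\closure{\REACHP} \subseteq \R^{d} \times \R^{d}$ is a computable closed subset. Here the relevant notion of computability is the one of computable analysis: $\closure{\REACHP}$ is both c.e.\ closed and co-c.e.\ closed, via Proposition 5.16 of \cite{brattka2008tutorial} as used in the proof of Corollary \ref{coro8}.

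Then I would simply identify $\R^{d}\times\R^{d}$ with $\R^{k}$ for $k=2d$, and apply Theorem \ref{thdessin} with $A=\closure{\REACHP}$. That theorem states that a closed set $A\subseteq\R^{k}$ is computable if and only if it can be plotted, i.e.\ there is a computable $f:\N\times\Z^{k}\to\N$ with $range(f)\subseteq\{0,1\}$ behaving as specified on the pixels $B(\vz/2^{n},2^{-n})$. Since $\closure{\REACHP}$ is a computable closed subset of $\R^{2d}$, the ``computable $\Rightarrow$ plottable'' direction of Theorem \ref{thdessin} gives exactly the desired conclusion: $\closure{\REACHP}$ can be plotted.

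There is essentially no genuine obstacle in this corollary, as all the mathematical content has already been established upstream. The only point that deserves a word of care is the coherence of the computability notions: one must make sure that the ``computable closed set'' produced by Corollary \ref{coro8} is precisely the representation for which Theorem \ref{thdessin} applies. This is indeed the case throughout the paper, where computability of a closed set consistently means having a computable name (equivalently, being simultaneously c.e.\ closed and co-c.e.\ closed), so the two statements compose without any additional argument. Hence the proof reduces to the two-line chaining of Corollary \ref{coro8} and Theorem \ref{thdessin}.
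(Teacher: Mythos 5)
Your proposal is correct and matches the paper's proof exactly: the paper also derives the corollary by chaining Corollary \ref{coro8} (robustness implies $\closure{\REACHP}$ is a computable closed set) with Theorem \ref{thdessin} (a closed set is computable iff it can be plotted). Your extra remark about the coherence of the computability notions is a sensible sanity check but adds nothing beyond what the paper already relies on.
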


\begin{proofmanon}
This follows from Corollary \ref{coro8} and Theorem \ref{thdessin} (that is to say \mylabelbt{Proposition 5.7},\mylabelwei{page 127--128}).
\end{proofmanon}

This is even effective in a name of $\tu f$. 
Actually, the converse is true, if some topological  properties are assumed.

\begin{theorem}
Assume ${\REACHP}$ is closed, and can be plotted effectively in a name of $\tu f$. 
Then the system is robust, i.e. $\REACHPomega=\REACHP$. 
\end{theorem}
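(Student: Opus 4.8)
The plan is to prove the reverse inclusion $\REACHPomega \subseteq \REACHP$, since $\REACHP \subseteq \REACHPomega$ always holds (every trajectory of $\mP$ is a trajectory of $\mP_\epsilon$), so the two inclusions together give robustness. First I would record the immediate consequence of the hypothesis: by Theorem \ref{thdessin} a closed set that can be plotted is computable, so $\REACHP$, being closed, is a computable closed subset, effectively in a name of $\tu f$; in particular it is co-c.e.\ closed, i.e.\ one can effectively enumerate rational balls disjoint from $\REACHP(\vx)$. Combined with Theorem \ref{thcoreanalyserecursive}, which gives that $\closure{\REACHPomega(\vx)}$ is co-c.e.\ closed, and with $\REACHP(\vx) \subseteq \REACHPomega(\vx)$, the whole content reduces to showing that these two closed sets coincide.

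Next I would cut the goal down to a single perturbation level. By the compactness argument already used in the proof of Theorem \ref{pRcorec} (the step labelled Claim*), for a rational ball one has $\cB(\vy,\eta) \cap \REACHPomega(\vx) = \emptyset$ if and only if $\cB(\vy,\eta) \cap \closure{\REACHPepsilon(\vx)} = \emptyset$ for some $\epsilon>0$. Hence $\REACHPomega \subseteq \REACHP$ is equivalent to the implication: whenever a rational ball $\cB(\vy,\eta)$ is disjoint from $\REACHP(\vx)$, it is already disjoint from $\closure{\REACHPepsilon(\vx)}$ for some $\epsilon>0$. I would argue by contradiction: if this fails, there is a point $\vy \in \REACHPomega(\vx) \setminus \REACHP(\vx)$, so that $\cB(\vy,\eta)$ avoids the exact reachable set $\REACHP(\vx)$ yet meets $\REACHPepsilon(\vx)$ for every $\epsilon = 2^{-n}$.

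The crux, which I expect to be the main obstacle, is to derive a contradiction from such a non-robust point using the \emph{effectivity} of the plot, not merely the existence of a plot. The idea I would pursue is that the plot value certifying $\cB(\vy,\eta)$ disjoint from $\REACHP(\vx)$ is produced by a Type-2 machine reading only a finite prefix of a name of $\tu f$, and such a prefix constrains $\tu f$ through finitely many pairs of rational balls $(I,J)$ with $\tu f(\closure{I}) \subseteq J$, each having strictly positive slack $s>0$. I would then choose $\epsilon = 2^{-n} < s$, take an $\epsilon$-perturbed trajectory $\vx = \vz_0, \dots, \vz_T$ witnessing $\cB(\vy,\eta) \cap \REACHPepsilon(\vx) \neq \emptyset$, and use local Lipschitzness to build a genuine computable system $\mathcal{Q}=(X,\tu g)$ with $\distance{\tu g}{\tu f} < \epsilon < s$ that is consistent with the same finite prefix but realizes that trajectory exactly, so that $\cB(\vy,\eta) \cap \REACH{\mathcal{Q}}(\vx) \neq \emptyset$. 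By continuity of Type-2 computation the plotting machine returns the same (disjoint) pixel value on a name of $\tu g$ extending the shared prefix, which one then confronts with the finite-resolution graphs $G_m$ of Theorem \ref{pRcorec} that upper-approximate the perturbed dynamics. The delicate point — and where I expect the real work, or possibly an extra hypothesis, to be needed — is that the assumption only guarantees correctness of the plot for the given $\tu f$, so the contradiction genuinely requires that the plotting procedure be (at least locally) uniform across small perturbations of $\tu f$; establishing this local uniformity from "effectively in a name of $\tu f$", and reconciling the function-level perturbation $\tu g \approx \tu f$ with the state-level perturbation defining $\REACHPepsilon$, is the heart of the proof.
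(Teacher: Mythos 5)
Your proposal is correct and follows essentially the same route as the paper: both arguments rest on the finite-use/continuity property of the Type-2 procedure that plots $\REACHP$ effectively in a name of $\tu f$, so that a certificate of $\neg\REACHP(\vx,\vy)$ depends only on a finite prefix of that name and therefore survives small variations of $\tu f$, yielding $\neg\REACHPepsilon(\vx,\vy)$ for some $\epsilon>0$. The only cosmetic difference is that the paper channels the effectivity through the computable distance function $\distance{\cdot}{\closure{\REACHP}}$ rather than the pixel value directly, and it is just as terse as you are explicit about the step you rightly flag as the heart of the matter --- reconciling the function-level perturbation $\tu g\approx\tu f$ with the state-level perturbation defining $\REACHPepsilon$.
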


Actually,  we prove the stronger statement that, if $\closure{\REACHP}$ can be plotted effectively in a name of $\tu f$, then $\REACHPomega(\vx,\vy)=\REACHP(\vx,\vy)$ except maybe for some $(\vx,\vy)$ in $\closure{\REACHP}-\REACHP$.

%
%
%

\begin{proof}
By Theorem \ref{thdessin}, we know that  $\closure{\REACHP}$ is computable, and it is known to be equivalent to the fact that the distance function $\distance{\cdot}{\closure{\REACHP}}$ is computable \mylabelwei{Corollary 5.1.8}.
 That means that given some rational ball, a name for $\vx$, and for $\vy$, with $\neg \REACHP(\vx,\vy)$, the following procedure is guaranteed to terminate, when $(\vx,\vy)$ is not in $\closure{\REACHP}-\REACHP$: compute a name of $\distance{(\vx,\vy)}{\closure{\REACHP(\vx)}}$ until a proof that it is strictly positive is found: 
 $\distance{(\vx,\vy)}{\closure{\REACHP(\vx)}}=0$ would mean that $(\vx,\vy) \in \closure{\REACHP}$, but not in $\REACHP$.

 It answers by reading a finite part, say $m$ cells, of the names of $\vx$, $\vy$ and $\tu f$, and hence give the same answer if the names are altered after  symbol number $m$. That means there exists some precision $\epsilon$ (related to $m$, basically $2^{-m}$ if we consider names converging exponentially fast)  so that  $\neg \REACHP(\vx,\vy)$ remains true for some $\epsilon$-neighborhood of $\vx$ and $\vy$, and unchanged by a small variation of $\tu f$. In other words, for all $\vx$, $\vy$, when
$\neg \REACHP(\vx,\vy)$, 
there exists some $\epsilon$ such that $\neg \REACHPepsilon(\vx,\vy)$, i.e. $\neg \REACHPomega(\vx,\vy)$. 
When $\REACHP(\vx,\vy)$ holds, we always have that $\REACHPomega(\vx,\vy)$  holds.
\end{proof}


This is also possible to adapt this at the complexity level with hypotheses in the spirit of previous results. This would lead to a concept of \motnouv{local poly-space computability} 
in the spirit of the local poly-time complexity introduced in \cite{MscBraverman}. The latter is devoted to discussing equivalence at the poly-time complexity of various representations of compact sets.



\section{Continuous time and Hybrid  Systems}
\label{sec:continuoustime}

The previous approaches have a very high level of applicability, and are able to talk about systems that could be even continuous time, or hybrid.

\olivierplan{subsection{Basic definitions}}

\begin{definition}
A \motnouv{continuous-time dynamical system} $\mP$ is given by a
set $X \subseteq \mathbb{R}^d$, and some Ordinary Differential Equation (ODE) of the form $\dot{\mathbf{x}}=\tu f(\mathbf{x})$ on $X$.
\end{definition}

It is known that the maximal interval of existence of solutions can be non-computable, even for computable ODEs \cite{dsg06a}. To simplify the discussion, we assume in this article that the ODEs have solutions\footnote{Notice that a non-total solution must necessarily leave any compact, see e.g. \cite{Hartman}, so when $X$ is compact this is not  a restriction.} defined over all $\R$. A trajectory of $\mP$ starting at some $\mathbf{x}_0 \in X$ is a solution of the differential equation with initial condition $\dot{\vx}=\vx_0$, defined as a continuous right derivable function $\tu \xi: \mathbb{R}^{+} \rightarrow X$ such that $\tu \xi(0)=\tu f(\tu x_0)$ and for every $t$, $\tu f(\tu \xi(t))$ is equal to the right derivative of $\tu \xi(t)$.
To each continuous time dynamical system  $\mP$ we associate its reachability relation $\REACHP$ as before. 

\olivierplan{subsection{Perturbed continuous time dynamical system}}


For any $\varepsilon>0$ the $\varepsilon$-perturbed system $\mP_{\varepsilon}$ is described by the differential inclusion $\distance{\dot{\vx}}{\tu f(\vx)}<\varepsilon$. This non-deterministic system can be considered as $\mP$ submitted to a small noise of magnitude $\varepsilon$. We denote reachability in the system $\mP_{\varepsilon}$ by $\REACHPepsilon$. The limit reachability relation $\REACHPomega$ is introduced as before.


\olivierpasimportant{Partie pour inspiration: 

\begin{definition}[{\cite[Definition 2.6]{dsg06a}}]
Let $E=\bigcup_{n=0}^{\infty} B\left(a_n, r_n\right) \subseteq \mathbb{R}^m$ be a recursively enumerable open set, where $a_n \in \mathbb{Q}^m$ and $r_n \in \mathbb{Q}$ yield computable sequences satisfying $\overline{B\left(a_n, r_n\right)} \subseteq E$. A function $f: E \rightarrow \mathbb{R}^m$ is called effectively locally Lipschitz in the second argument if there exists a computable sequence $\left\{K_n\right\}$ of positive integers such that
$$
|f(t, x)-f(t, y)| \leq K_n|y-x| \text { whenever }(t, x),(t, y) \in \overline{B\left(a_n, r_n\right)} \text {. }
$$
\end{definition}

\begin{theorem}[{\cite[Theorem 3.1]{dsg06a}}]
 Let $E \subseteq \mathbb{R}^{m+1}$ be a recursively enumerable open set and $f: E \rightarrow$ $\mathbb{R}^m$ be an effectively locally Lipschitz function in the second argument. Let $(\alpha, \beta)$ be the maximal interval of existence of the solution $x(t)$ of the initial-value problem $\dot{x}=f(t, x) ; \quad x(0)=x_0$, where $\left(t_0, x_0\right)$ is a computable point in $E$. Then:
 \begin{enumerate}
 \item  The operator $\left(f, x_0\right) \mapsto(\alpha, \beta)$ is semicomputable (i.e. $\alpha$ can be computed from above and $\beta$ can be computed from below), and
\item The operator $\left(f, x_0\right) \mapsto x(\cdot)$ is computable.
\end{enumerate}
\end{theorem}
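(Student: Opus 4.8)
The plan is to pass to the equivalent Volterra integral equation $x(t) = x_0 + \int_{0}^{t} f(s, x(s))\,ds$ and to exploit the effective cover $E = \bigcup_n B(a_n, r_n)$ together with the computable Lipschitz constants $\{K_n\}$ supplied by the hypothesis. The overall strategy has three stages: first, show that on any time interval $[0,t]$ with $t < \beta$ the trajectory stays inside a compact subset of $E$ on which $f$ is Lipschitz with a computable constant, and that Picard iteration then converges at an effectively controlled rate; second, deduce computability of $x(\cdot)$; third, obtain $\beta$ as the supremum of certifiable rational lower bounds (hence lower semicomputable), and $\alpha$ symmetrically.

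First I would invoke the classical escape lemma (see e.g.\ \cite{Hartman}): if $\beta < \infty$ then $(s, x(s))$ leaves every compact subset of $E$ as $s \to \beta^{-}$. Hence for each $t < \beta$ the set $\{(s, x(s)) : s \in [0,t]\}$ is a compact subset of the open set $E$, so it is covered by finitely many of the closed balls $\overline{B(a_n, r_n)}$; their union is automatically a compact subset of $E$ (since each $\overline{B(a_n, r_n)} \subseteq E$ by hypothesis), and on it $f$ is Lipschitz with the computable constant $K = \max K_n$ over that finite index set. The crucial effective point is that such a covering can be \emph{discovered by search}: using the names of $f$ and $x_0$, a sufficiently fine Euler polygon stays within a thin tube of the true solution, and since the balls $\overline{B(a_n, r_n)}$ are enumerable, one can enumerate candidate finite subcovers until one is verified to contain the approximate trajectory with a margin exceeding its approximation error (so that the true trajectory lies inside too). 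This search halts precisely for $t < \beta$.

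Given a verified cover and the computable constant $K$, a standard Picard error estimate, of the form $\sup_{s \in [0,t]} |x_k(s) - x(s)| \le M\,\tfrac{(Kt)^k}{k!}$ with $M$ a computable bound on the first-iterate discrepancy, is fully effective; so taking $k$ large enough I can output $x|_{[0,t]}$ as an element of $C([0,t])$ to any prescribed precision $2^{-n}$. Carrying this out uniformly over increasing rational $t < \beta$ produces a name of $x(\cdot)$, which settles part (2): the operator $(f, x_0) \mapsto x(\cdot)$ is computable, uniformly in the names for $f$ and $x_0$.

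For part (1), the set $\{t \in \mathbb{Q} : 0 \le t < \beta\}$ is exactly the set of rationals for which the tube-search succeeds, and it is r.e.; enumerating it gives an increasing sequence of rational lower bounds converging to $\beta$, i.e.\ $\beta$ is computable from below, and running the construction backward in time makes $\alpha$ computable from above. The main obstacle — and the reason one obtains only semicomputability rather than full computability — is that the search can never certify $t \ge \beta$: as $t \uparrow \beta$ the trajectory escapes every compact subset of $E$, so either the admissible tube radius shrinks to $0$ or the Lipschitz constant blows up, and no finite stage of the enumeration terminates. In particular one cannot decide whether $\beta = \infty$, which is exactly the noncomputability phenomenon analysed in \cite{dsg06a}.
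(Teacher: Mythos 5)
This theorem is not proved in the paper at all: it is quoted verbatim from \cite{dsg06a} (and in the source it only appears inside a comment macro that does not even render), so there is no in-paper proof to compare against. Your sketch follows essentially the argument of that reference: effective Picard--Lindel\"of on the balls $\overline{B(a_n,r_n)}$ with the computable constants $K_n$, extension of the solution as long as a compact tube inside $E$ can be certified, and recovery of $(\alpha,\beta)$ as the r.e.\ set of time bounds for which the certification succeeds. Two technical points deserve care, though both are repairable. First, $f$ being $K_n$-Lipschitz on each ball of a finite subcover does \emph{not} give Lipschitzness with constant $\max K_n$ on the (generally non-convex) union, so the single global Picard estimate $M(Kt)^k/k!$ over all of $[0,t]$ is not justified as written; the standard fix, and the one used in \cite{dsg06a}, is to extend the solution stepwise, working inside one ball at a time for a computably bounded time step, which also removes the mild circularity in your Euler-polygon tube argument (bounding the polygon's error already presupposes a Lipschitz constant on a region containing the true trajectory). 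Second, the completeness of the search --- that for \emph{every} rational $t<\beta$ some candidate cover is eventually verified --- should be stated as following from the escape lemma plus compactness, as you do; with the stepwise formulation this becomes immediate. With those adjustments the argument is correct and is the intended one.
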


This has been improved to weaker hypotheses in \cite{collins2009effective}.

}

\begin{theorem} [Perturbed reachability is co-r.e.]  \label{pppRcorec} 
Consider a  continuous time dynamical system, with $\tu f$ locally Lipschitz and computable, and whose domain is a computable compact.

Then, for all $\vx$, $\closure{\REACHP_{\omega}(\vx)} \subseteq \R^{d}$ is a co-c.e. closed subset.
\end{theorem}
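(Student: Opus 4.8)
The plan is to mirror the discrete-time argument of Theorem \ref{pRcorec} (and its geometric reformulation Theorem \ref{thcoreanalyserecursive}), replacing the one-step map by the flow of the ODE and replacing discrete trajectories by continuous perturbed curves. First, since $\tu f$ is locally Lipschitz and $X$ is compact, $\tu f$ is globally Lipschitz on $X$: there is $L>0$ with $\distance{\tu f(\vx)}{\tu f(\vy)}\le L\distance{\vx}{\vy}$, and $\tu f$ is bounded, say $\distance{\tu f(\vx)}{0}\le M$ on $X$. I would then fix, for each $\delta=2^{-m}$, a finite covering of $X$ by rational open balls $\VERTEX_i=B(\vx_i,\delta_i)$ with $\delta_i<\delta$, obtainable effectively from $m$ because $X$ is a computable compact, and build a graph $G_m=(V_\delta,\rightarrow_\delta)$ on these balls. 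The edge relation $\VERTEX_i\rightarrow_\delta\VERTEX_j$ would express that a $\delta$-perturbed trajectory can make a short continuous hop from $\VERTEX_i$ into $\VERTEX_j$: concretely, using a computed $\delta$-approximation $\tu f_i$ of $\tu f(\vx_i)$, I would put an edge whenever the displacement segment $\{\vx_i+s\,\vd: 0\le s\le \delta,\ \distance{\vd}{\tu f_i}\le (L+c)\delta\}$ meets $\VERTEX_j$ for a suitable constant $c$; this test is decidable uniformly in $m,i,j$, so each $G_m$ is a finite computable graph.

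Then I would establish the two claims that are the continuous-time analogues of Claims 1 and 2 in the proof of Theorem \ref{pRcorec}. For the over-approximation (Claim 1), if $\REACHPepsilon(\vx,\vy)$ holds with $\vx\in\VERTEX_i$ and $\epsilon=2^{-n}$, then the $\epsilon$-perturbed solution of $\distance{\dot{\vx}}{\tu f(\vx)}<\epsilon$ is continuous of bounded speed $M+\epsilon$, so sampling it at times spaced to move $O(\delta)$ meets the covering in a sequence of short hops; a triangle-inequality estimate exactly as in the discrete proof (using the pointwise perturbation bound and the Lipschitz constant) shows each hop is realized by an edge of $G_n$, giving a path $\VERTEX_i\stackrel{*}{\rightarrow_{\epsilon}}\VERTEX_j$ for every $\VERTEX_j$ containing $\vy$. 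For the under-approximation (Claim 2), given $\epsilon=2^{-n}$ I would choose $\delta=2^{-m}$ of the form $\delta<\epsilon/c'$ for an explicit constant $c'$ depending only on $L$, and show that a path $\VERTEX_{i_0}\rightarrow_\delta\cdots\rightarrow_\delta\VERTEX_{i_t}$ yields $\REACHPepsilon(\vx,\vy)$ whenever $\vx\in\VERTEX_{i_0}$, $\vy\in\VERTEX_{i_t}$. Here I would reuse the ``least index'' contradiction argument verbatim: crucially the $\epsilon$-perturbation budget is available at every instant, so $\epsilon$-perturbed continuous reachability composes under concatenation of curves and the per-hop Gronwall/Lipschitz error does not accumulate along the path.

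With these two claims, $\neg\REACHPomega(\vx,\vy)$ is equivalent to the existence of some $m$ with $\LOOP(G_m,\VERTEX_i,\VERTEX_j)$ for suitable $\VERTEX_i\ni\vx$, $\VERTEX_j\ni\vy$, exactly as before. To pass from points to the closed set $\closure{\REACHPomega(\vx)}$ I would invoke the same compactness step as in the proof of Theorem \ref{pRcorec}: since $\REACHPomega(\vx)=\bigcap_{\epsilon>0}\closure{\REACHPepsilon(\vx)}$ is an intersection of closed sets, a closed ball $\cB(\vy,\eta)$ is disjoint from $\REACHPomega(\vx)$ iff it is already disjoint from some $\closure{\REACHP_{2^{-n}}(\vx)}$, which is in turn detected by a $\LOOP$ property in the finite graph $G_m$ for the matching $m$. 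Enumerating, over increasing $m$, all rational balls whose vertices are unreachable in $G_m$ from the ones meeting $\vx$ then produces exactly the rational closed balls disjoint from $\closure{\REACHP_{\omega}(\vx)}$, so this set is co-c.e. closed, and the whole enumeration is effective in a name of $\tu f$.

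The main obstacle is the under-approximation direction (Claim 2): one must turn a purely combinatorial path in $G_m$ into a genuine $\epsilon$-perturbed solution of the differential inclusion reaching the target at some finite time. This requires (i) a Gronwall-type estimate controlling how far a perturbed flow drifts from the vector-field direction over each short hop, (ii) care that a single graph ``hop'' corresponds to continuous motion at \emph{all} intermediate times and not just at sampled instants, handled by the bounded-speed bound $M+\epsilon$, and (iii) the observation that, because the perturbation bound $\epsilon$ holds pointwise in time, concatenating $\epsilon$-perturbed segments again yields an $\epsilon$-perturbed trajectory, which is precisely what prevents error accumulation over arbitrarily long paths. The effectivity of $G_m$ relies only on computing approximations of the values $\tu f(\vx_i)$ and on deciding the geometric edge test, both available since $\tu f$ is computable; the correspondence between graph paths and actual perturbed solutions in Claim 2 is a mathematical existence statement and needs no computation of the flow itself.
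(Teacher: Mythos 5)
Your over-approximation direction (Claim 1) and the final compactness/enumeration step are fine, but the under-approximation direction (Claim 2) has a genuine gap that comes precisely from your choice of edge relation. You define an edge of $G_m$ as a short Euler-type hop: a displacement of duration and magnitude $O(\delta)$ in a direction close to $\tu f_i$. Over such a hop the $\epsilon$-perturbation can only steer the true trajectory by $\epsilon\cdot(\text{duration})=O(\epsilon\delta)$ away from the flow, whereas each edge of your graph introduces positional slack of order $\delta$ (the radii of $\VERTEX_i$ and $\VERTEX_j$, the approximation $\tu f_i$ of $\tu f(\vx_i)$, and the fact that the hop's nominal start is the center $\vx_i$ rather than the point where the previous segment actually arrived). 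Since $O(\epsilon\delta)\ll\delta$ no matter how small you take $\delta$ relative to $\epsilon$, a single graph edge need not be realizable as an $\epsilon$-perturbed segment from the actual arrival point: if the $O(\delta)$ correction must be made against the drift (e.g.\ $\tu f\equiv(M,0)$ with $M>\epsilon$ and a correction with negative first coordinate), the differential inclusion $\distance{\dot{\vx}}{\tu f(\vx)}<\epsilon$ simply cannot produce it. Your point (iii) about concatenation of $\epsilon$-perturbed segments is correct but does not address this; the problem is realizing one hop, not gluing realized hops, and the uncorrected $O(\delta)$ errors then accumulate along the path. The ``least index'' argument from the discrete proof cannot be reused verbatim either, because there a single $\epsilon$-perturbed step from the center reaches a full $\epsilon$-ball with $\epsilon\gg\delta$, which is exactly the quantitative relation your short hops destroy.

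The repair --- and what the paper actually does, which is why its proof explicitly invokes the computability of solutions of Lipschitz computable ODEs \cite{dsg06a} --- is to define the edges of $G_m$ through the time-$\tau$ flow for a \emph{fixed} macroscopic step $\tau$ (independent of $\delta$): $\VERTEX_i\rightarrow_\delta\VERTEX_j$ iff a computed approximation of $\phi_\tau(\vx_i)$, inflated by $O(e^{L\tau}\delta)$, meets $\VERTEX_j$. Then from any point of $\VERTEX_i$ the $\epsilon$-perturbed inclusion reaches, in time $\tau$, a ball of radius bounded below by a quantity of order $\epsilon\tau$ (independent of $\delta$) around the nominal flow endpoint, which dominates all the $O(\delta)$ slack once $\delta$ is small enough; the induction of Claim 2 then closes. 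This is not a cosmetic difference: your remark that the construction ``needs no computation of the flow itself'' is exactly the step at which the argument breaks, since the edge test must be anchored to the flow over a fixed time, not to the vector field over an infinitesimal one.
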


Its proof can  be considered as the main technical result established in \cite{PVB95}.  Independently:

\begin{proof}
The proof is similar to the proof of Theorems \ref{pRcorec} and \ref{thcinq}: adapt the construction of the involved graph $G_{m}$ to cover the flow of the trajectory. With our hypotheses the solutions are defined over all $\R$. It is proved in \cite[Theorem 3.1]{dsg06a} that Lipschitz (and even effectively locally Lipschitz) homogeneous computable ODEs have computable solutions over their maximal domain, so this is feasible.
\end{proof}

\begin{corollary}[Robust $\Rightarrow$ decidable]   
Assume the hypotheses of Theorem \ref{pppRcorec}. 
If $\REACHPomega=\REACHP$ then $\closure{\REACHP} \subseteq \R^{d} \times \R^{d}$ is computable.
\end{corollary}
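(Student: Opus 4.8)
The plan is to show that $\closure{\REACHP}\subseteq\R^{d}\times\R^{d}$ is simultaneously c.e.\ closed and co-c.e.\ closed, and then to conclude by the characterization that a closed set is computable iff it is both c.e.\ closed and co-c.e.\ closed (\mylabelbt{Proposition 5.16}). This parallels the discrete-time argument of Corollary~\ref{coro8}, with Theorem~\ref{pppRcorec} playing the role taken there by Theorem~\ref{thcoreanalyserecursive}.

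First I would establish the c.e.\ closedness of $\closure{\REACHP}$, the continuous-time analog of Theorem~\ref{baba}, by the same scheme but replacing iteration of the one-step map by the flow. Under the present hypotheses the flow $\Phi\colon X\times\R^{+}\to X$, where $\Phi(\vx,t)$ is the value at time $t$ of the trajectory starting at $\vx$, is computable: this is \cite[Theorem 3.1]{dsg06a}, the very fact already invoked in the proof of Theorem~\ref{pppRcorec}. For each integer $T$, the image $S_{T}$ of the computable compact $X\times[0,T]$ under the computable map $(\vx,t)\mapsto(\vx,\Phi(\vx,t))$ is a computable compact (\mylabelwei{Theorem 6.2.4}), hence c.e.\ closed. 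Since $\REACHP=\bigcup_{T\in\N}S_{T}$ and an open ball meets $\closure{\bigcup_{T}S_{T}}$ iff it meets $\bigcup_{T}S_{T}$, dovetailing the ball-enumerations of the $S_{T}$ over $T$ enumerates every rational open ball of $\R^{d}\times\R^{d}$ meeting $\closure{\REACHP}$; this yields c.e.\ closedness.

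Then I would use robustness. Since $\REACHPomega=\REACHP$, for every $\vx$ we have $\closure{\REACHP(\vx)}=\closure{\REACHPomega(\vx)}$, which Theorem~\ref{pppRcorec} tells us is co-c.e.\ closed. Reading that result uniformly in $\vx$ exhibits $\closure{\REACHP}$ as a co-c.e.\ closed subset of $\R^{d}\times\R^{d}$, and \mylabelbt{Proposition 5.16} then delivers computability of $\closure{\REACHP}$.

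The step I expect to be the main obstacle is precisely this uniformity: passing from the fiberwise co-c.e.\ closedness of $\closure{\REACHP(\vx)}$, for a fixed $\vx$, to a single co-c.e.\ closedness statement for the whole relation. I would resolve it by checking that every construction in the proof of Theorem~\ref{pppRcorec}---building the finite graph $G_{m}$ and testing $\LOOP$---is carried out effectively from a name of $\vx$, so that the enumeration of rational closed balls in the complement of $\closure{\REACHP}$ can be produced uniformly; an analogous observation secures the uniformity needed on the c.e.\ side.
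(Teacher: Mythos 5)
Your proof is correct and follows exactly the route the paper intends: the corollary is stated there without an explicit proof, but its discrete-time analog (Corollary \ref{coro8}) is proved by combining c.e.\ closedness of $\closure{\REACHP}$ (Theorem \ref{thbaba}) with the co-c.e.\ closedness supplied by the perturbed-reachability theorem and the characterization of computable closed sets (\mylabelbt{Proposition 5.16}), ``observing that the statements are effective given a name of $\vx$'' --- which is precisely your argument, including your handling of the uniformity-in-$\vx$ issue. Your explicit continuous-time verification of the c.e.\ side via the computable flow from \cite[Theorem 3.1]{dsg06a} fills in a detail the paper leaves implicit.
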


We can also state the equivalent of previous statements at the complexity level, assuming basic hypotheses that make computability of the solutions to remain in polynomial space. 
\olivier{commenter}

Actually, we can even deal with the so-called hybrid systems. Various models have been considered in literature, but one common point is that they all correspond to continuous time dynamical systems, where the dynamics might be discontinuous (hence it is not computable). 
 To be very general, a dynamical system can be described by its flow $\phi(\vx,t)$ (the idea is that, given $\vx$, and time $t$, $\phi$ maps the trajectory starting from $\vx$ to the position at time $t$). By considering $T$ to be discrete, this even covers discrete time dynamical systems, and $T=[0,+\infty)$,  continuous and hybrid time systems.

\begin{definition}
A \motnouv{hybrid system} $\mP$ is given by a
set $X \subseteq \mathbb{R}^d$, and a semi-group $T$, and some flow function $\phi: X \times T \to X$ satisfying $\phi(\vx,0)=\vx$ and
$\phi(\phi(\vx,t),t')=\phi(\vx,t+t')$.
\end{definition}

 Previous proofs are basically using  the fact that  
 \begin{enumerate}
 \item  reachability $\REACHP$ is c.e;
 \item perturbed reachability is co-c.e.
\end{enumerate}
The first point is usually very clear in any of the considered models, as it is, roughly speaking, always expected that one can at least simulate the model. 
The second point is, on a given class of models, usually  less clear, but if we look at our proof methods, typically the proof of Theorems \ref{pRcorec} and \ref{thcinq}, we see that we only need to be able to construct some computable abstraction satisfying Claim 1 and Claim 2.  

The key remark is that these properties are not talking about function $\tu f$, but its graph. 
\olivierplusimportant{Enlevé gain de place: basically, Claim 1 says that we are able to cover $\closure{\REACHP(V)}$ effectively for a neighborhood of $\vx$ by rational open balls. Claim 2 says that, given $\epsilon=2^{-m}$, we can find some $\delta=2^{-n}$ so that not reachable vertices of the graph would have their closure not intersecting $\closure{\REACHP(V)}$. This is a property of the graph of the function, i.e. of the flow $\phi$ of the system.
}
The major point is that assuming a function has the closure of its graph computable is a way more general concept than assuming computability. For example, the characteristic function $\chi_{[0,\infty)}$ is not computable, as it is not continuous. But its graph, as well as its closure, is easy to draw (made of two segments): see discussions for e.g. in \cite{MscBraverman}. One usually expects to be able to draw the closure of the graph of the flow, and this is sufficient to get results similar to the previous ones, relating robustness to computability.  
In particular, this allowed us to talk about discontinuous functions, in particular not computable in CA, as we did.

\olivierplusimportant{Enlevé pour gain de place:
This leads to various generalizations, extending the case of Piecewise Constant Derivative systems (that may be considered as of PAM  for continuous times) considered in \cite{asarin01perturbed}. 
By lack of space, we prefer not formalizing this completely, but rather present some other interesting remarks, and relations to other concepts. 
}

\olivierpasimportant{
Pour l'instant manque preuve:

\begin{theoremaprouver} \label{thdirectionunpbisbis}  \label{commeavant}
Assume that the domain $X$ of $\tu f$ is a closed rational box.
\olivier{les hypothèses qui vont bien sur $\tu f$.} 

Then $\REACHPERTURBEDSPACE{\P}{p}\in \PSPACE$
over $\Q^{d} \times \Q^{d} \times \N$.
\end{theoremaprouver}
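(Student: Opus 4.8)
The plan is to transport the proof of Theorem \ref{thdirectionunpbis} to the present, more general, setting, the only new ingredient being a \emph{poly-space} version of the flow abstraction built in the proof of Theorem \ref{pppRcorec}. Concretely, the hypotheses on $\tu f$ (resp.\ on the flow $\phi$) should be exactly those that make the associated family of graphs $\{G_m\}$ a computable abstraction that is moreover effective in polynomial space: for $\delta = 2^{-m}$ the graph $G_m = (V_\delta,\rightarrow_\delta)$ should have at most $2^{O(dm)}$ vertices, each coded by a word of length $O(dm)$, and its edge relation $\VERTEX_i \rightarrow_\delta \VERTEX_j$ should be decidable in space polynomial in $m$. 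For a continuous-time system $\dot{\vx}=\tu f(\vx)$ this is what one obtains by requiring $\tu f$ to be polynomial-space computable on the compact box $X$, since the flow can then be integrated up to precision $2^{-m}$ in space polynomial in $m$ over the bounded time horizon involved in the construction.

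First I would recall from the proof of Theorem \ref{pppRcorec} that $\{G_m\}$ satisfies Claim 1 and Claim 2, so that, exactly as in the proof of Theorem \ref{thdirectionunp}, for every $n$ there is some $m$ that is affine in $n$ with $\REACHPn(\vx,\vy,k)$ and $\REACHGm(\vx,\vy,k)$ sharing the same truth value, where $\REACHGm$ denotes reachability in $G_m$ towards a vertex meeting $\cB(\vy,2^{-k})$. Next, given $\vx$, $\vy$ and a precision $k$, deciding whether the triple $(\vx,\vy,k)$ lies in $\REACHPERTURBEDSPACE{\P}{p}$ reduces to deciding $\REACHPn(\vx,\vy,k)$ with $n$ polynomial in $\length(\vx)+\length(\vy)+k$; the associated $m$ is then polynomial in the input size as well. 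Since the vertices of $G_m$ are coded by words of length $O(dm)$ and its edge relation is decidable in space polynomial in $m$ by hypothesis, Corollary \ref{corosuccint} (the trick of Savitch's theorem run over representations of vertices) lets me decide reachability in $G_m$ in space polynomial in $m$, hence polynomial in the input size. This yields $\REACHPERTURBEDSPACE{\P}{p}\in\PSPACE$ over $\Q^d\times\Q^d\times\N$.

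The hard part will be the poly-space computability of the edge relation of $G_m$ for genuinely continuous-time and, a fortiori, hybrid systems. In the continuous case, testing $\VERTEX_i \rightarrow_\delta \VERTEX_j$ means deciding whether the flow issued from $\vx_i$ comes within $(L+2)\delta$ of $\VERTEX_j$ (with $L$ a Lipschitz constant of $\tu f$), which requires a poly-space numerical integration that stays valid because trajectories remain in the compact box $X$; I expect this to be the step needing the most care, in particular to control the time horizon and the accumulation of approximation errors within a polynomial-space budget. For hybrid systems the flow $\phi$ may be discontinuous, so the hypothesis cannot be stated on $\tu f$ directly; instead, following the discussion after Theorem \ref{pppRcorec}, one phrases it on the closure of the graph of $\phi$, requiring this closed set to be decidable in the sense of Theorem \ref{thdessin} within space polynomial in the precision. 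Under such a hypothesis the edge relation of $G_m$ is again decidable in space polynomial in $m$, and the argument above goes through verbatim to yield the claimed $\PSPACE$ bound.
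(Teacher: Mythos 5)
Be aware that the paper offers no proof of this statement to compare yours against: in the source it appears in the environment reserved for conjectures still to be proved, preceded by a note that the proof is currently missing, followed by an empty proof environment, and with even the hypotheses on $\tu f$ left as a placeholder. Your proposal is therefore judged against the paper's evident intention --- transport Theorem \ref{thdirectionunpbis} to the continuous/hybrid setting via the graph abstraction of Theorem \ref{pppRcorec} plus Corollary \ref{corosuccint} --- and at the level of architecture you have reconstructed exactly that plan, including the correct observation that the hypotheses to be filled in are precisely those making the edge relation of $G_m$ decidable in space polynomial in $m$.

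The gap is that the step you yourself call ``the hard part'' is the entire mathematical content of the theorem, and your proposal defers it rather than discharging it. Concretely, two things are missing. First, the continuous-time analogue of Claim 2: in the discrete case the least-index induction shows that $\delta<\epsilon/(2L+4)$ works uniformly in the number of graph edges, which is what makes $m$ affine in $n$; for the differential inclusion $\distance{\dot{\vx}}{\tu f(\vx)}<\varepsilon$ the discrepancy between the true and perturbed flows over a time step $\tau$ grows like $e^{L\tau}$, so the time step and $\delta$ must be chosen together and the induction redone to prevent error accumulation over unboundedly many edges; you assert the affine relation between $m$ and $n$ by analogy without doing this. Second, poly-space decidability of $\VERTEX_i\rightarrow_\delta\VERTEX_j$ amounts to integrating the ODE to precision $2^{-m}$ in space polynomial in $m$; this does hold for Lipschitz ODEs with polynomial-time computable right-hand side over a bounded time step, but it is a substantive imported result (essentially the $\PSPACE$ upper bound for ODE solving), not a consequence of the definitions, and it is exactly the hypothesis the paper left blank. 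For hybrid systems your idea of imposing local poly-space computability on the closure of the graph of the flow is the right shape of hypothesis, but Claims 1 and 2 would then have to be re-established for a discontinuous flow, which neither you nor the paper does.
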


\begin{proof}

\end{proof}

\begin{theorem}[Polynomially robust to precision $\Rightarrow$ $\PSPACE$]
Assume the hypotheses of Theorem \ref{commeavant}. 

If 
$\REACHP = \REACHPERTURBEDSPACE{\P}{p}$ for some polynomial $p$, 
then $\REACHP \in \PSPACE$.
\end{theorem}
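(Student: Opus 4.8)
The plan is to mirror exactly the two-step structure already used for the discrete-time Theorem \ref{thmaindeux}. The statement is an immediate corollary of Theorem \ref{commeavant}: once we know that $\REACHPERTURBEDSPACE{\P}{p} \in \PSPACE$ for the given polynomial $p$, the robustness hypothesis $\REACHP = \REACHPERTURBEDSPACE{\P}{p}$ transports this membership verbatim to $\REACHP$, so $\REACHP \in \PSPACE$. Thus the proof itself is one line, and all the genuine content sits in establishing Theorem \ref{commeavant}, i.e.\ the continuous/hybrid analogue of Theorem \ref{thdirectionunpbis}. I sketch how I would obtain that below.

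First I would transport the graph construction from the proofs of Theorems \ref{pRcorec} and \ref{thcinq} to the flow setting. For $\delta = 2^{-m}$ I again cover the computable compact $X$ by $\mathcal{O}(2^{dm})$ rational open balls $\VERTEX_i = B(\vx_i,\delta_i)$ of radius $<\delta$, and put an edge $\VERTEX_i \rightarrow_\delta \VERTEX_j$ whenever a $\delta$-approximated, $(L+2)\delta$-inflated image of $\vx_i$ under one discretization step of the flow meets $\VERTEX_j$. The aim is to obtain $G_m$ as a computable abstraction in the sense defined after Corollary \ref{coro6}: Claim~1 (the graph over-approximates the $\epsilon$-perturbed dynamics) and Claim~2 (for fine enough $\delta$, graph reachability implies $\epsilon$-perturbed reachability) must both survive the passage from a single application of $\tu f$ to one step of the flow. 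As observed in Section \ref{sec:continuoustime}, these two claims refer only to the closure of the graph of the flow, not to $\tu f$ itself, so they extend to continuous-time and even hybrid systems as soon as that graph can be covered effectively.

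Then I would read off the space bound exactly as in Lemma \ref{trucquivabien} and Theorem \ref{thdirectionunpbis}: the graph $G_m$ has $2^{\mathcal{O}(dm)}$ vertices and its edge relation $\rightarrow_\delta$ is decidable in space polynomial in $m$, so by Corollary \ref{corosuccint} (Savitch's trick applied over representations of vertices) both $\PATH$ and $\LOOP$ on $G_m$ are decidable in space polynomial in $m$. Since $\REACHPn$ and graph reachability $\REACHGm$ share the same truth value for an $m$ that is affine in $n$, and since evaluating $\REACHPERTURBEDSPACE{\P}{p}(\vx,\vy,p)$ amounts to deciding $\REACHPn(\vx,\vy)$ for $n$ polynomial in $\length(\vx)+\length(\vy)+p$, the whole decision runs in space polynomial in the input size. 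This is Theorem \ref{commeavant}, and the stated result follows.

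The main obstacle is the edge relation of $G_m$ in the continuous/hybrid case. For discrete time, one step is a single polynomial-space evaluation of $\tu f$; for a flow there is no canonical ``one step'', so I would have to fix a time/space discretization of the flow that is simultaneously (i)~computable in space polynomial in $m$ and (ii)~faithful enough that Claims~1 and~2 hold. For genuinely hybrid systems, where the flow may be discontinuous and hence not computable in the sense of CA, the right hypothesis is that the closure of the graph of the flow be poly-space drawable, a local poly-space computability condition in the spirit of Section \ref{Sec:analysecalculable}; the robustness assumption is exactly what guarantees that a sufficiently fine discretization never confuses reachable and non-reachable configurations across such discontinuities.
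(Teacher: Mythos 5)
Your one-line derivation is exactly the paper's proof of this statement: it invokes Theorem \ref{commeavant} to get $\REACHPERTURBEDSPACE{\P}{p} \in \PSPACE$ and then substitutes via the robustness hypothesis $\REACHP = \REACHPERTURBEDSPACE{\P}{p}$. Your additional sketch of Theorem \ref{commeavant} itself goes beyond the paper, which states that result without proof, but your outline (transporting the abstraction graph of Theorems \ref{pRcorec} and \ref{thcinq} to the flow and reading off the space bound as in Lemma \ref{trucquivabien} and Theorem \ref{thdirectionunpbis}) follows the paper's own template, and your closing remarks correctly identify the genuine difficulty the paper also leaves open, namely fixing a poly-space-computable discretization of the flow for which Claims~1 and~2 survive.
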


\begin{proof}
	We  have $\REACHPERTURBEDSPACE{\P}{p} \in \PSPACE$ by Theorem \ref{thdirectionunpbisbis} and since $\REACHP = \REACHPERTURBEDSPACE{\P}{p}$, then $\REACHP \in \PSPACE$.
\end{proof}
}

\section{Other perturbations}
\label{sec:otherperturb}
%
%

We can also consider time perturbed TM: the idea, is that given $n>0$, the $n$-perturbed version of the machine $\M$ is unable to remain correct after a time $n$:
%
%
  given an integer $n>0$, the $n$-perturbed version of the machine $\M$ is defined exactly as $\M$ except that after a time greater than $n$ 
   then its internal state $q$ can change in a non-deterministic manner:
given a configuration
$
\CONFIGMT{q}{\cdots a_{-n-1} a_{-n} a_{-n+1} \cdots a_{-1}}{ a_0 a_1 \cdots a_{n-1} a_n a_{n+1} \cdots}
$
(with $\neg (q \in F \cup R)$) 
the $n$-perturbed version of $\M$ may go to $
\CONFIGMT{q'}{\cdots a_{-n-1} a_{-n} a_{-n+1} \cdots a_{-1}}{ a_0 a_1 \cdots a_{n-1} a_n a_{n+1} \cdots}
$
for any \footnote{In particular, may accept. More subtle perturbations can be considered, keeping the results valid.} $q' \in Q$. 

\olivierplusimportant{
Now, let us introduce the concept of length perturbed Turing machines (PTMs for short). Given an integer $n>0$, the $n$-perturbed version of the machine $\M$ is defined exactly as the time perturbed machine, replacing time by length. 
}

\olivierplusimportant{Some text and results stolen from \cite{asarin01perturbed}, puis variatne sur length}

Let $\PERTURBEDT{\M}{n}$ be the time $n$-perturbed language of $\M$, i.e., the set of words in $\Sigma^*$ that are accepted by the time $n$-perturbed version of $\M$. From definitions, and using similar ideas:

%

\begin{lemma}
	$L(\M) \subseteq  L^\omega(\M)  \subseteq \cdots \subseteq \PERTURBEDT{\M}{2}  \subseteq \PERTURBEDT{\M}{1}.$
\end{lemma}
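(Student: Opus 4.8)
The plan is to reproduce, for time perturbations, the elementary argument behind the space-perturbation chain (the statement immediately after the definition of $\PERTURBEDS{\M}{n}$): the whole display follows from the single structural observation that a smaller perturbation index yields a machine with \emph{more} nondeterministic runs. I would first fix the definition $L^\omega(\M) = \bigcap_n \PERTURBEDT{\M}{n}$, exactly as $L_\omega$ was obtained from the $\PERTURBEDS{\M}{n}$, so that the two outer inclusions reduce to comparisons between individual perturbed languages together with a passage to the limit.

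First I would establish monotonicity in $n$, i.e. $\PERTURBEDT{\M}{n+1} \subseteq \PERTURBEDT{\M}{n}$. The key point is that the time $(n+1)$-perturbed version may apply a nondeterministic change of control state only after a time greater than $n+1$, whereas the time $n$-perturbed version may already do so after a time greater than $n$. Hence every legal run of the $(n+1)$-perturbed machine, including each nondeterministic state change it performs, is also a legal run of the $n$-perturbed machine, the latter simply permitting a strict superset of perturbations. Consequently, if some run of the $(n+1)$-perturbed machine reaches an accepting state on input $w$, the very same run witnesses acceptance of $w$ by the $n$-perturbed machine, which gives the inclusion; iterating yields the tail $\cdots \subseteq \PERTURBEDT{\M}{2} \subseteq \PERTURBEDT{\M}{1}$.

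Next I would show $L(\M) \subseteq \PERTURBEDT{\M}{n}$ for every $n$. Here the observation is dual: the unperturbed machine $\M$ is recovered from its time $n$-perturbed version by always resolving the nondeterministic state change (available only after time $n$) in favour of keeping $q' = q$. Thus every accepting computation of $\M$ is, verbatim, an accepting run of the $n$-perturbed machine, so $L(\M) \subseteq \PERTURBEDT{\M}{n}$. Taking the intersection over $n$ gives $L(\M) \subseteq \bigcap_n \PERTURBEDT{\M}{n} = L^\omega(\M)$, and from the definition $L^\omega(\M) \subseteq \PERTURBEDT{\M}{n}$ for each $n$, in particular $L^\omega(\M) \subseteq \PERTURBEDT{\M}{2}$; chaining these with the monotonicity step assembles the full statement.

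There is no genuine obstacle: this is a soundness/monotonicity bookkeeping lemma. The only point requiring a little care is to check that the family of admissible perturbations of the $(n+1)$-perturbed machine is really contained in that of the $n$-perturbed machine, so that runs inject in the correct direction, and that the ``do nothing'' choice $q' = q$ is always available, which is exactly what embeds $\M$ into every perturbed version. Both facts are immediate from the definition of the time $n$-perturbed machine given above, so I expect the write-up to be short.
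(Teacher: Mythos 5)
Your proposal is correct and matches the paper's (implicit) argument: the paper justifies this chain only by ``From definitions, and using similar ideas'' in analogy with the space-perturbation case, whose justification is exactly your two observations that smaller perturbation indices admit strictly more nondeterministic behaviours and that the unperturbed run (always choosing $q'=q$) embeds into every perturbed version. You have simply written out the bookkeeping the paper leaves to the reader.
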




\begin{theoremd} \label{thpreuveun}
 $L^{\omega}(\M)$ is in the class $\Pi_1^0$.
\end{theoremd}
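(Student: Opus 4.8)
The plan is to mirror the argument used for the space-perturbed case in the proof of Theorem \ref{LspaceCoRec}, exploiting the fact that in the time-perturbed setting the relevant finite objects are even simpler. The crux is to obtain a clean characterization of each language $\PERTURBEDT{\M}{n}$. First I would observe that, within its first $n$ steps, the $n$-perturbed version of $\M$ runs exactly like $\M$, since the perturbation only acts after time $n$. After time $n$, whenever the current control state $q$ is not already halting (i.e. $\neg(q \in F \cup R)$), it may be replaced nondeterministically by an arbitrary $q' \in Q$, in particular by an accepting one; conversely, once $\M$ has entered a rejecting state the decision is sticky and no perturbation can undo it. Hence I claim that $w \in \PERTURBEDT{\M}{n}$ if and only if $\M$, started on $C_{0}[w]$, does \emph{not} reach a rejecting state within its first $n$ steps: either it accepts within $n$ steps (a genuine run), or it is still running at time $n$, in which case the perturbation can steer it into $\mF$.

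Once this characterization is established, deciding $\PERTURBEDT{\M}{n}$ becomes trivial and uniform in $n$: simulate the deterministic machine $\M$ on $w$ for $n$ steps and check whether a rejecting state occurs. This shows that each $\PERTURBEDT{\M}{n}$, and hence its complement, is decidable, uniformly in $n$.

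Finally I would pass to the limit. By the monotonicity lemma preceding the statement, $L^{\omega}(\M) = \bigcap_{n} \PERTURBEDT{\M}{n}$, and from the characterization above this equals the set of words on which $\M$ never reaches a rejecting state. Dually, the complement is $\overline{L^{\omega}(\M)} = \bigcup_{n} \overline{\PERTURBEDT{\M}{n}}$, a union (uniform in $n$) of decidable sets, which is moreover semidecided directly by simulating $\M$ on $w$ and halting as soon as a rejecting state appears. Therefore $\overline{L^{\omega}(\M)}$ is computably enumerable, i.e. $L^{\omega}(\M) \in \Pi_1^0$.

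I expect the only genuinely delicate point to be pinning down the characterization of $\PERTURBEDT{\M}{n}$: one must handle the boundary condition of the perturbation carefully, verifying both that a still-running machine can indeed be driven to acceptance after time $n$ and that a machine already in a rejecting state cannot be rescued, since the whole argument rests on this equivalence. Everything after that is routine and strictly simpler than the corresponding graph/reachability reasoning needed in the space-perturbed proof of Theorem \ref{LspaceCoRec}.
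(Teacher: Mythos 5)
Your proof is correct and follows essentially the same route as the paper's: the paper simply asserts that each $L^{n}(\M)$ is decidable uniformly in $n$ and concludes that the complement of $L^{\omega}(\M)=\bigcap_{n}L^{n}(\M)$ is a uniform union of decidable sets, hence c.e. Your explicit characterization of $L^{n}(\M)$ as ``$\M$ does not reject $w$ within its first $n$ steps'' (modulo the harmless boundary convention you flag) is exactly the detail the paper leaves implicit, and your direct semi-decision procedure for the complement is a valid shortcut to the same conclusion.
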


\begin{proofmanon} \label{preuveun}
	For a word $w$, $w\not\in L^\omega(\M)$, iff   there exists $n\in\N$ such that $w\not\in L^n(\M)$. As $L^{n}(\M)$ is decidable uniformly in $n$,  the complement of $L^\omega(\M)$ is computably enumerable, as it is the uniform in $n$ union of decidable sets.  We get that $L^{\omega}(\M) \in \Pi_1^0$ (co-computably enumerable).
\end{proofmanon}


\begin{corollarymanon}[Length robust $\Rightarrow$ decidable]
 If $L^{\omega}(\M)=L(\M)$ then $L(\M)$ is decidable.
\end{corollarymanon}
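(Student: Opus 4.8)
The plan is to obtain this as an immediate consequence of pairing two complexity facts about the languages involved, exactly in the spirit of the analogous space-perturbation corollary stated earlier for $L_\omega(\M)$. First I would recall that $L(\M)$ is computably enumerable: this holds by definition, since $L(\M)$ is the language semi-recognized by $\M$, so a Turing machine enumerates it by dovetailing the simulation of $\M$ on all inputs and emitting each word whose run reaches an accepting state. Second, I would invoke Theorem \ref{thpreuveun}, which already established that $L^{\omega}(\M)$ lies in $\Pi_1^0$, i.e. is co-computably enumerable; concretely, $w \notin L^{\omega}(\M)$ iff there is some $n$ with $w \notin L^{n}(\M)$, and the $L^{n}(\M)$ are decidable uniformly in $n$, so the complement is a uniform union of decidable sets.

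Under the hypothesis $L^{\omega}(\M)=L(\M)$, the single set $L(\M)$ is therefore simultaneously $\Sigma_1^0$ (viewed as $L(\M)$) and $\Pi_1^0$ (viewed as $L^{\omega}(\M)$). By Post's theorem, $\Sigma_1^0 \cap \Pi_1^0 = \Delta_1^0$, so $L(\M)$ is decidable. Operationally, the decision procedure I would exhibit runs in parallel the enumeration witnessing membership in $L(\M)$ and the enumeration witnessing membership in its complement supplied by the co-c.e. certificate for $L^{\omega}(\M)$; on any input exactly one of the two halts, and the procedure answers accordingly, giving a total algorithm.

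There is essentially no obstacle specific to this corollary: all the genuine content sits in Theorem \ref{thpreuveun}, whose proof provides the uniform-in-$n$ decidability of the time-perturbed languages $L^{n}(\M)$ that makes the complement of $L^{\omega}(\M)$ computably enumerable. The corollary itself is merely the combination ``c.e. and co-c.e. implies decidable'', and the only difference from the space-perturbed statement (Corollary following Theorem \ref{LspaceCoRec}) is that the co-c.e. bound is now drawn from Theorem \ref{thpreuveun} rather than from the space-perturbation analysis. I would present it in one or two lines accordingly.
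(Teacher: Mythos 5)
Your proof is correct and matches the paper's intended argument exactly: the corollary is the standard ``c.e.\ and co-c.e.\ implies decidable'' combination, with the c.e.\ half coming from the definition of $L(\M)$ and the co-c.e.\ half from Theorem \ref{thpreuveun}, precisely as in the space-perturbation analogue. Nothing further is needed.
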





\begin{theoremd} \label{thpreuvedeux} 
When $M$ always stops, $L^{\omega}(\M)=L(\M)$.
\end{theoremd}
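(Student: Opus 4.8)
The plan is to prove the two inclusions separately. The inclusion $L(\M) \subseteq L^{\omega}(\M)$ is already supplied by the preceding lemma (perturbed machines only have \emph{more} behaviours than $\M$), so the entire content lies in establishing $L^{\omega}(\M) \subseteq L(\M)$ under the hypothesis that $\M$ always halts. I would prove this second inclusion by contraposition, closely following the space-perturbation argument of Proposition \ref{decidablerobuste}, with the bound on the number of tape cells used there replaced by a bound on the number of computation steps.

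First I would take a word $w \notin L(\M)$. Since $\M$ always halts, $w$ is not merely non-accepted: it is actually \emph{rejected}. Hence there is a finite time $T = T(w)$ such that the unperturbed run of $\M$ from the initial configuration $C_{0}[w]$ enters a rejecting state $q \in R$ at step $T$, and, by the convention that decisions are unchanged once taken ($q \in R \Rightarrow q' \in R$), stays in a rejecting state at all later steps.

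Next I would fix any integer $n \ge T$ and analyse the time $n$-perturbed machine. By definition, the control-state perturbation can only fire \emph{after} a time greater than $n$, and moreover only from a state $q$ with $\neg(q \in F \cup R)$. For all steps up to $T \le n$ no perturbation is available, so the run of the $n$-perturbed machine coincides exactly with the run of $\M$ and reaches a rejecting state at step $T$. From that point on the current state lies in $R$, which disables the perturbation rule through its state guard; consequently the run remains rejecting forever and no run of the $n$-perturbed machine ever reaches an accepting state. Therefore $w \notin \PERTURBEDT{\M}{n}$. Since $L^{\omega}(\M) = \bigcap_{m} \PERTURBEDT{\M}{m} \subseteq \PERTURBEDT{\M}{n}$, we conclude $w \notin L^{\omega}(\M)$, which is the desired contrapositive.

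The only delicate point—and the place I expect the real argument to live—is the interaction between the two guards on the perturbation rule: the temporal guard (``after time $> n$'') and the state guard (``$q \notin F \cup R$''). One must check that a rejection occurring within the first $n$ steps can never be ``undone'' by a later perturbation, precisely because the state guard keeps the perturbation disabled once an absorbing rejecting state is reached. Everything else is bookkeeping; the choice $n \ge T$ is exactly what forces the rejection to happen inside the perturbation-free initial window.
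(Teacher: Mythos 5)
Your proof is correct and follows essentially the same route as the paper's: the inclusion $L(\M)\subseteq L^{\omega}(\M)$ is taken from the preceding lemma, and the converse is obtained by observing that a word rejected by the always-halting $\M$ at some finite time $T$ is still rejected by the time $n$-perturbed machine for every $n\ge T$, hence lies outside $L^{\omega}(\M)$. The only difference is cosmetic (contraposition versus contradiction, and your more explicit discussion of the state guard $q\notin F\cup R$ keeping the perturbation disabled after rejection, a point the paper leaves implicit).
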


\begin{proofmanon}
	We directly have $L(\M) \subseteq L^\omega(\M)$.
	
	Let $w\in L^\omega(\M) = \bigcap_{n\in\N} L^n(\M)$, so $\forall n \in \N$, $w\in L^n(\M)$. 
	By contradiction, we assume that $L(\M) \subsetneq L^\omega(\M)$. So there exists $w \in L^\omega(\M)$ such that $w \not\in L(\M)$. Since $\M$ always terminates, it rejects $w$  after using a time $q(\length(w))$.
	But, then $w\not\in L^{n}(\M)$ for any $n \ge q(\length(w))+2$, and hence $w\not\in L^\omega(\M)$, a contradiction.
\end{proofmanon}

\olivierplan{Some text and results stolen from \cite{asarin01perturbed}}

\olivierpasimportant{
\begin{theorem} \label{thpreuvetrois}
	For every TM $\M$, we can effectively construct another $T M \M^{\prime}$ such that $L^\omega\left(\M^{\prime}\right)=\overline{L(\M)}$
\end{theorem}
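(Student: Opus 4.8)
The plan is to reduce the entire construction to a trivial transformation of $\M$, after first pinning down exactly which words survive arbitrarily small time perturbations. The key is an explicit characterization of $L^\omega$, after which the completeness statement becomes a one-line observation.

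First I would establish the characterization: for any machine, $w \in L^\omega(\M')$ if and only if the unperturbed $\M'$ does \emph{not} reject $w$ (i.e.\ $\M'$ either accepts $w$ or runs forever on $w$). To see this, fix $n$ and analyse the time $n$-perturbed machine. Up to time $n$ it is deterministic and coincides with $\M'$, while perturbations occur only after time $n$ and only in non-halting states (halting states being absorbing by the convention fixed in the definition of a TM). Consequently, if the normal run of $\M'$ rejects $w$ within $n$ steps, the computation is trapped in a rejecting state before any perturbation can apply, so no perturbed run can accept and $w \notin \PERTURBEDT{\M'}{n}$. Conversely, if the normal run has not rejected by time $n$ — because it has already accepted, or because it is still in a non-halting state — then there is an accepting run: either the normal accepting run, or a run that stays alive past time $n$ and is then steered by the perturbation into an accepting state (the definition explicitly allows jumping to any $q' \in Q$). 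Hence $w \in \PERTURBEDT{\M'}{n}$. Taking the intersection over all $n$, $w \in L^\omega(\M') = \bigcap_n \PERTURBEDT{\M'}{n}$ holds iff for every $n$ the normal run does not reject $w$ within $n$ steps, i.e.\ iff $\M'$ never rejects $w$.

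Second, I would define $\M'$ to be $\M$ with its accepting and rejecting sets exchanged, namely $F' = R$ and $R' = F$; this is plainly effective. Then $\M'$ rejects $w$ exactly when $\M$ accepts $w$, and $\M'$ runs forever exactly when $\M$ does. Combining this with the characterization above gives $L^\omega(\M') = \{\, w : \M' \text{ does not reject } w \,\} = \{\, w : \M \text{ does not accept } w \,\} = \overline{L(\M)}$, as required. The same argument applies verbatim to length perturbation, since for a Turing machine the length of a computation equals its number of steps.

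The hard part is solely the characterization step, and within it the boundary case at time $n$: one must use that halting states are absorbing (so an early rejection cannot be reversed by a later perturbation) and, symmetrically, that a computation still alive after time $n$ can always be driven into $F$. The effectiveness of the state swap and the passage to the intersection over $n$ are then entirely routine.
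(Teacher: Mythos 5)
Your proof is correct. There is, however, nothing in the paper to compare it against: this statement is given without any proof (in the source it even sits inside a disabled comment macro, and the analogous space-perturbation result, Theorem~\ref{thbizarre}, is only cited from Asarin--Bouajjani rather than proved). Judged on its own terms, your argument is sound and essentially optimal for the time-perturbation model. The characterization $w\in L^\omega(\M')$ iff $\M'$ never rejects $w$ is exactly right: the two facts you isolate --- that states of $R$ are absorbing and exempt from perturbation, so a rejection within $n$ steps kills every run of the $n$-perturbed machine, and that any run still outside $F\cup R$ after time $n$ may be steered into $F$ (the paper's own footnote ``in particular, may accept'' licenses this) --- are precisely the facts the paper itself uses implicitly in the proofs of Theorems~\ref{thpreuveun} and~\ref{thpreuvedeux}, so your lemma is consistent with everything surrounding the statement. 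Swapping $F$ and $R$ then gives $L^\omega(\M')=\{w : \M \text{ does not accept } w\}=\overline{L(\M)}$, and the construction is trivially effective. The only caveat concerns your closing sentence: in this paper the length of a computation is \emph{not} the number of steps but a sum of configuration distances, bounded above and below only polynomially in the configuration size (a ``time metric''); the transfer to length perturbation still works because length and time then dominate each other by unbounded monotone functions, which is all the $\omega$-intersection needs, but ``verbatim'' overstates it slightly.
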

}

\olivierplan{section{New results: TIME (for Turing machines)}}

\begin{definition}
Given some function $f: \N \to \N$, we write $\PERTURBEDTIME{\M}{f}$ for the set of words accepted by $\M$ with time perturbation $f$: 
{	$\PERTURBEDTIME{\M}{f} = \{w | w\in \PERTURBEDT{\M}{f(\length(w))}\}.$}
\end{definition}

\olivierpasimportant{ Enlevé pour gain de place:
\begin{theorem} \label{thdirectionundd}
$\PERTURBEDT{\M}{n} \in DTIME(p(n))$
for some polynomial $p$.
\end{theorem}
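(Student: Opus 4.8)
The plan is to reduce membership in $\PERTURBEDT{\M}{n}$ to a bounded simulation of $\M$. First I would pin down the semantics of the time-perturbed machine. Assuming $F\neq\emptyset$ (otherwise $\PERTURBEDT{\M}{n}=\emptyset$ and the claim is trivial), I claim that $w\in\PERTURBEDT{\M}{n}$ iff the deterministic computation of $\M$ on $w$ does \emph{not} enter a rejecting state within its first $n$ steps. Indeed, for the first $n$ steps the perturbed machine is forced to follow $\M$, so there is a single possible prefix of length $n$. If along this prefix $\M$ reaches an accepting state, the word is accepted; if it reaches neither an accepting nor a rejecting state within $n$ steps, then at step $n+1$ the state is non-halting and the perturbation may switch it to some $q'\in F$, so the word is again accepted. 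Conversely, if $\M$ enters a rejecting state at some step $k\le n$, this happens before any perturbation is permitted (perturbations occur only after time $n$), and rejecting states are absorbing, so every run stays in $R$ and the word is rejected.

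Next I would turn this characterisation into a deterministic algorithm: on input $w$, simulate $\M$ from $C_0[w]$ for $n$ steps and accept iff no rejecting state is encountered. Correctness is exactly the equivalence above.

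For the complexity bound, the key observation is that only a bounded prefix of $w$ is relevant. During $n$ steps the head of $\M$ never leaves the window of positions within distance $n$ of its initial position, so the first $n$ steps depend only on the first $\min(n+1,\length(w))$ symbols of $w$ (every other cell, whether it carries a symbol of $w$ or a blank, is never scanned). Hence the simulation reads at most $n+1$ input symbols, visits $O(n)$ tape cells, and performs $n$ transitions of the fixed machine $\M$; each transition is simulated in time polynomial in the current configuration size, which is $O(n)$. Therefore the whole decision runs in time $p(n)$ for some polynomial $p$, giving $\PERTURBEDT{\M}{n}\in\DTIME(p(n))$.

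The main obstacle is getting the timing conventions exactly right, rather than any hard computation: one must verify that a rejection occurring at a step $k\le n$ genuinely blocks acceptance (because the absorbing halting state is reached strictly before perturbations become available), while a machine merely still running at step $n$ can always be pushed into $F$. The second, milder point is to make the bounded-prefix dependence precise, which is what keeps the running time polynomial in the perturbation parameter $n$ alone; composed with $n=f(\length(w))$ as in the definition of $\PERTURBEDTIME{\M}{f}$, this is what yields the $\PTIME$ bound needed downstream.
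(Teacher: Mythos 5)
Your proof is correct and follows the route the paper intends: the paper in fact omits an explicit proof of this statement, indicating only (in the proof of Theorem~\ref{mainptimeone}) that the perturbation bound acts as a time-out for a bounded deterministic simulation of $\M$, which is exactly your algorithm. Your explicit characterisation of $\PERTURBEDT{\M}{n}$ (accept iff $\M$ does not enter a rejecting state within $n$ steps, assuming $F\neq\emptyset$) and the bounded-prefix observation that keeps the running time polynomial in $n$ alone are details the paper leaves implicit, and both are right.
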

}

\begin{theoremd}[Polynomially  robust to time $\Leftrightarrow$ $\PTIME$] \label{mainptimeone}
A language $L$ is in $\PTIME$ iff  for some $\M$ and some polynomial $p$, 
$L= L(\M)=\PERTURBEDTIME{\M}{p}.$
\end{theoremd}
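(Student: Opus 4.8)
The plan is to mirror the proof of Theorem \ref{mainpspaceone} (the $\PSPACE$ case), with time playing the role of space. The first thing I would do is pin down exactly what the time-perturbation buys. Since after time $n$ any non-halting state may be switched nondeterministically to an arbitrary state of $Q$, in particular to an accepting one, a word $w$ is accepted by the time $n$-perturbed version of $\M$ if and only if $\M$ does not reject $w$ within the first $n$ steps. Indeed, up to time $n$ the perturbed machine behaves exactly like $\M$, so if $\M$ accepts (resp.\ rejects) $w$ within $n$ steps the perturbed machine accepts (resp.\ is forced to reject, accepting and rejecting states being absorbing); and if $\M$ has not halted by time $n$, the perturbation can push it into an accepting state at the next step. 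This clean characterisation is the time analogue of the description of $\PERTURBEDS{\M}{n}$ used in the space case, and it is the conceptual crux of the whole argument.

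From this characterisation I would extract the time analogue of Lemma \ref{lemmadirectionun}, namely $\PERTURBEDT{\M}{n} \in \DTIME(poly(n))$: to decide membership of $w$, one simply simulates $\M$ on $w$ for $n$ steps, rejecting iff $\M$ rejects within that bound and accepting otherwise. In $n$ steps the head visits at most $n+\length(w)$ cells, so each simulated step costs only polynomially in $n$, and the whole simulation runs in time polynomial in $n$. For the ($\Leftarrow$) direction I then assume $L = L(\M) = \PERTURBEDTIME{\M}{p}$ for a polynomial $p$; given $w$, I compute $n = p(\length(w))$, which is polynomial in $\length(w)$, and decide $w \in \PERTURBEDT{\M}{n}$ by the simulation above. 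This runs in time polynomial in $n$, hence polynomial in $\length(w)$, so $L \in \PTIME$.

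For the ($\Rightarrow$) direction, I would start from a TM $\M$ deciding $L$ that always halts within polynomial time $q(\length(w))$, and pick any polynomial $p \ge q$. For input $w$ the machine $\M$ is already in a halting state by time $q(\length(w)) \le p(\length(w)) = n$, so --- halting states being absorbing --- the time-perturbation, which acts only at times strictly greater than $n$ and only on non-halting states, never triggers; thus the $n$-perturbed machine accepts $w$ exactly when $\M$ does. Combined with the general inclusion $L(\M) \subseteq \PERTURBEDTIME{\M}{p}$ (perturbed machines only add behaviours, as for the space case), this yields $\PERTURBEDTIME{\M}{p} = L(\M) = L$, exactly as in the analogue Theorem \ref{thpreuvedeux}.

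The main obstacle is not any single calculation but getting the characterisation of the first paragraph exactly right: one must be careful that ``after a time greater than $n$'', combined with the absorbing nature of accepting and rejecting states, yields precisely ``$\M$ does not reject $w$ within $n$ steps''. Once this is nailed down, both the polynomial-time decidability of $\PERTURBEDT{\M}{n}$ and the collapse $\PERTURBEDTIME{\M}{p} = L(\M)$ for a poly-time $\M$ fall out cleanly, and the two directions combine to give the stated equivalence with $\PTIME$.
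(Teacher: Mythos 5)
Your proof is correct and follows essentially the same route as the paper's: simulate $\M$ with $p(\length(w))$ as a time-out for the ($\Leftarrow$) direction, and observe that a polynomial-time always-halting decider is unaffected by perturbations occurring after its halting time for the ($\Rightarrow$) direction. Your explicit characterisation of $\PERTURBEDT{\M}{n}$ as ``$\M$ does not reject $w$ within $n$ steps'' (so the simulator must \emph{accept} on time-out, not reject) is a welcome sharpening of the paper's terser argument.
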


\begin{proof}
This can be established as for space perturbation. In an independent view, the intuition of the proof is that the polynomial in $n$ can be seen as a time-out. $M$ works in polynomial time $p(n)$, so in at most $p(n)$ steps, so the machine for $L^n(M)$ can reject if it has not accepted or rejected in $p(n)$ steps.
	
	($\Rightarrow$) If $M$ always terminates and works in polynomial time, then there exists a polynomial $q$ that bounds the execution time of $M$, so we have a polynomial $p$  ($p\geq q$) such that, for $n\in N$, $\PERTURBEDT{\M}{p}  \subseteq L(M)$. We have the other inclusion by definition.
	
	($\Leftarrow$) We always have $\PERTURBEDT{\M}{p(n)} \in \PTIME$ \olivierpasimportant{arnaque de pas dire que? by Theorem \eqref{thdirectionundd}}
	 and since $\PERTURBEDT{\M}{p(n)} = L$, then $L \in \PTIME$.
\end{proof}

\begin{theorem}[Polynomially robust to length $\Leftrightarrow$ $\PTIME$] 
Any  $\PTIME$ language is reducible to the reachability relation of PAM with $\REACHP = \PERTURBEDTIME{\P}{p}$ for some polynomial $p$.
\end{theorem}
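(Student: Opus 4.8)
The plan is to follow the proof of Theorem \ref{recipir} almost line by line, simply trading the space (precision) perturbation for the time (equivalently, length) perturbation and invoking the time-robust characterisation of $\PTIME$ for Turing machines (Theorem \ref{mainptimeone}) in place of the space-robust characterisation of $\PSPACE$. So I would let $L \in \PTIME$ and fix a Turing machine $\M$ with $L(\M)=L$ running in polynomial time $q(\cdot)$. It is convenient first to normalise $\M$ so that an accepting computation ends in a single distinguished absorbing configuration $C_{\mathrm{acc}}$ and a rejecting one in a single distinguished absorbing configuration $C_{\mathrm{rej}}$ (once $\M$ decides, erase the tape and move to a unique looping accepting/rejecting state); this makes acceptance equivalent to the reachability of one encoded point. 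By Theorem \ref{mainptimeone}, for any polynomial $p \ge q$ we have $\PERTURBEDT{\M}{p(\length(w))}=L(\M)$, i.e. $\M$ is polynomially robust to time.

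Then I would apply the step-by-step emulation of Theorem \ref{chosemachin}, using the compact encoding $\gammacompact$, to obtain a PAM $\mP=(X,\tu f)$ with $X \subseteq [0,1]^{d}$ and $\tu f(\gammac(C))=\gammac(C')$ whenever $C \vdash C'$. The reduction sends a word $w$ to the pair $(\gammac(C_{0}[w]),\gammac(C_{\mathrm{acc}}))$; by commutativity of the emulation diagram, $w \in L$ holds iff $\REACHP(\gammac(C_{0}[w]),\gammac(C_{\mathrm{acc}}))$. Because the emulation is step-by-step, one step of $\M$ is exactly one step of $\mP$, so a run of $\M$ of length $t$ becomes a $\mP$-trajectory of $t$ steps, and $C_{\mathrm{acc}},C_{\mathrm{rej}}$ map to fixed points of $\tu f$. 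The time-$n$ perturbation of $\mP$ (allowing the dynamics to deviate only once the elapsed time exceeds $n$, and never on the absorbing accepting/rejecting points) then corresponds exactly to the time-$n$ perturbation of $\M$. Since $\M$ decides every input within $q(\length(w)) \le p(\length(w))$ steps, the perturbation never affects the outcome, and the robustness $\PERTURBEDT{\M}{p}=L(\M)$ transfers to $\REACHP=\PERTURBEDTIME{\mP}{p}$, exactly as the emulation transferred space-robustness in Theorem \ref{recipir}.

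It remains to reconcile \emph{time} and \emph{length} in the statement. Since $X$ lies in the compact box $[0,1]^{d}$ and each affine branch of $\tu f$ has bounded operator norm, a single step displaces a point by a bounded amount, $\distance{\vx_{t+1}}{\vx_{t}} \le M$ for a fixed constant $M$; hence the arc length of a $t$-step trajectory is at most $M t$. Thus a polynomial bound on the number of steps entails a polynomial bound on length, and the identity $\REACHP=\PERTURBEDTIME{\mP}{p}$ above yields the corresponding length statement for a suitable polynomial, every input being decided within polynomially many steps and hence within polynomial length, after which the trajectory sits at a fixed point and its length stops growing.

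The main obstacle, which I would spell out carefully, is making precise the time/length perturbation for the PAM so that it faithfully mirrors the Turing-machine perturbation: it must leave the absorbing accepting and rejecting points untouched (mirroring that the TM perturbation acts only on non-halting states), and \emph{length} must serve as a faithful clock. The upper bound $\distance{\vx_{t+1}}{\vx_{t}} \le M$ is all the reduction needs; the delicate converse — recovering a step-count bound from a length bound — is not required here, but would demand that successive encoded configurations stay at a distance bounded below, which I would note explicitly to justify that, for the emulating PAM, robustness to polynomial length and robustness to polynomial time coincide. Either way, the single fact doing the work is that the step-by-step emulation carries the (time/length) robustness of $\M$ over to the reachability relation of $\mP$.
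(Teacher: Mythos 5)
Your proposal is correct and follows exactly the route the paper intends: the paper in fact states this theorem without an explicit proof, but its proof of the analogous $\PSPACE$ version (Theorem \ref{recipir}) is precisely the ``step-by-step emulation preserves robustness'' argument you carry out, now invoking Theorem \ref{mainptimeone} in place of the space-robust characterisation. Your extra care about absorbing halting configurations and about reconciling time with length --- the upper bound on the displacement per step for one direction, and the lower bound that the paper later formalises as the time-metric property (Definition \ref{deftimemetric} and Proposition \ref{camarche}) for the other --- supplies details the paper leaves implicit.
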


\newcommand\ddistance[2]{\delta(#1,#2)}

Fix some distance $\ddistance{\cdot}{\cdot}$ over the domain $X$. A finite trajectory of discrete time dynamical system $\mP$ is a finite sequence $(x_t)_{t \in 0\dots T}$ such that ${x}_{t+1}= f\left({x}_t\right)$ for all $0 \le t <T$. Its  associated \motnouv{length} is defined as $\mL = \sum_{i=0}^{T-1} \ddistance{x_{i}}{x_{i+1}}$.

We could also consider length perturbed discrete time dynamical system: the idea, is that given $L>0$, the $L$-perturbed version of the system is unable to remain correct after a length $L$.
We then define $\REACHPERTURBEDL{\mP}{L}(\vx,\vy)$ as there exists a finite trajectory of $\mP$ from $\vx$ to $\vy$ of length $\mL  \le L$.
\olivier{Truc louche. Pas inclusions entre classes comme avant. Grave docteur?}
%

When considering TMs as such dynamical systems, $\ddistance{\cdot}{\cdot}$ is basically some distance over configurations of TMs.  Word $w$ is said to be accepted in length $d$ if the trajectory starting from $C_{0}[w]$ to the accepting configuration has length $\le d$.

%
\begin{definition} \label{deftimemetric}
Distance $\ddistance{C}{C'}$  is called \emph{time metric} iff for $C \vdash C'$, we have $\ddistance{C}{C'} \le p(\length(C))$, 
and $\ddistance{C}{C'} \geq \frac{1}{p(\length(C))}$ 
for some polynomial $p$.
\end{definition}

Write $\LENGTH{M}{t}$ for the set of words accepted by $M$ in a length less than $t$.
%
Given some function $f: \N \to \N$, we write $\EXACTLENGTH{\M}{f}$  for 
{$\EXACTLENGTH{\M}{f} = \{w | w\in \LENGTH{\M}{f(\length(w)}\}.$}

\begin{theoremd}[Length robust  for some time-metric distance $\Leftrightarrow \PTIME$] \label{thmSympaTime}
Assume $\ddistance{\cdot}{\cdot}$ is time metric.
Then, 
a language $L$ is in $PTIME$ iff for some Turing machine $\M$, and some polynomial $p(n)$, 
$L=L(\M)=\EXACTLENGTH{\M}{f}.$
\end{theoremd}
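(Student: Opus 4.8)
The plan is to mirror the proof of Theorem~\ref{mainptimeone} (polynomial robustness to time $\Leftrightarrow \PTIME$), the only genuinely new ingredient being the dictionary, supplied by Definition~\ref{deftimemetric}, that translates between the \emph{length} $\mL$ of a trajectory and its number of steps $T$. Writing $n=\length(w)$ and letting $p_0$ be the polynomial witnessing the time-metric property, I would first record the elementary bound $\length(C_t)\le n+ c\,t$ for a constant $c$ (a single transition enlarges a configuration by at most a constant number of bits). Combined with Definition~\ref{deftimemetric} this gives, along any run, the per-step estimate $\frac{1}{p_0(n+c\,t)}\le \ddistance{C_t}{C_{t+1}}\le p_0(n+c\,t)$; summing over $t<T$ yields the two-sided relation between $\mL$ and $T$ that drives both directions.

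For the direction $L\in\PTIME\Rightarrow$, I would take a machine $\M$ deciding $L$ within $q(n)$ steps for a polynomial $q$. Every configuration reached then has size $\length(C_t)\le n+q(n)$, which is polynomial, so by the \emph{upper} bound of the time metric each step contributes at most $p_0(n+q(n))$ to the length, whence an accepting run from $C_0[w]$ has length $\mL\le q(n)\,p_0(n+q(n))=:p(n)$, a polynomial. Thus $L\subseteq\EXACTLENGTH{\M}{p}$, and since any length-bounded accepting run is in particular an accepting run we always have $\EXACTLENGTH{\M}{p}\subseteq L(\M)=L$; hence $L=L(\M)=\EXACTLENGTH{\M}{p}$.

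For the converse, given $\M$ and a polynomial $p$ with $L=L(\M)=\EXACTLENGTH{\M}{p}$, I would decide membership of $w$ by simulating $\M$ from $C_0[w]$ while accumulating the trajectory length: accept as soon as an accepting configuration is reached with accumulated length $\le p(n)$; reject as soon as the accumulated length exceeds $p(n)$ (sound, since then $w\notin\EXACTLENGTH{\M}{p}=L$) or $\M$ halts rejecting. The \emph{lower} bound of the time metric ensures each step adds at least $\frac{1}{p_0(\length(C_t))}$, so the accumulated length strictly increases and the budget $p(n)$ is eventually exhausted; correctness then follows from the hypothesis $L=L(\M)=\EXACTLENGTH{\M}{p}$, placing $L$ in $\PTIME$.

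The hard part is precisely the step-count bound needed in this last paragraph: the lower rate $\frac{1}{p_0(\length(C_t))}$ \emph{degrades} as the configuration grows, and a priori $\length(C_t)$ could be as large as $n+t$, so exhausting the length budget $p(n)$ does not by itself cap the number of simulated steps by a polynomial. I would therefore have to show that within the length budget the configurations stay of polynomial size: each step that enlarges the tape from size $s$ to $s+1$ costs length at least $\frac{1}{p_0(s+1)}$, so reaching size $S$ already consumes at least $\sum_{s\le S}\frac{1}{p_0(s)}$ of the budget, and one uses this, together with the lower bound, to force both $\length(C_t)$ and $T$ to be polynomial in $n$. Making this quantitative — i.e. pinning down which growth rates of $p_0$ the time-metric hypothesis really permits so that poly length entails poly time — is the only delicate point, and it is exactly where the lower bound of Definition~\ref{deftimemetric} must be exploited; everything else is a faithful transcription of the time-perturbation argument of Theorem~\ref{mainptimeone}.
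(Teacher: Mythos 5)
Your forward direction is exactly the paper's: bound the number of steps of a $\PTIME$ machine by a polynomial $q(n)$, observe that configurations then have polynomial size, and use the upper bound of Definition~\ref{deftimemetric} to get $\mL\le q(n)\,p_0(n+q(n))$. Your converse is also the paper's strategy (simulate while accumulating length, cut off once the budget $p(n)$ is exhausted), and the delicate point you flag is genuine --- in fact the paper's own proof leaves it unaddressed: it takes $T=p(n)\times poly(\length(C_{min}))$ without ever arguing that the largest configuration met along the run has size polynomial in $n$.

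However, the repair you sketch does not close the gap. If $p_0$ has degree $\ge 2$, the series $\sum_{s}1/p_0(s)$ converges, so growing the tape to an arbitrary size $S$ need only consume a bounded portion of the length budget; and since $\length(C_t)\le n+t$, the accumulated length after $T$ steps is only guaranteed to be at least $\sum_{t<T}1/p_0(n+t)$, which for quadratic $p_0$ stays below a constant for all $T$, and for linear $p_0$ grows only like $\log T$, yielding an exponential rather than polynomial bound on $T$. Thus, with the lower bound of Definition~\ref{deftimemetric} taken literally (a per-step rate $1/p_0(\length(C_t))$ depending on the current configuration), polynomial length does not entail polynomial time, and your simulation may never exhaust its budget on a non-accepted word. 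What saves the statement in the case the paper actually uses is Proposition~\ref{camarche}: the concrete distance there satisfies a \emph{constant} lower bound ($\ge 1/2$ per step), so $T\le 2p(n)$ immediately. In general one needs either such a uniform (or input-size-dependent) lower bound, or an extra hypothesis keeping configurations polynomially bounded along runs of length $\le p(n)$. You were right to isolate this as the one step that does not transcribe from Theorem~\ref{mainptimeone}; it requires strengthening the hypothesis, not just more care in the estimate.
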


\begin{proof}
Let $w$ be the input of size $n$. The execution of a Turing 
	machine is a sequence $(C_i)= \CONFIGMT{q_{i}}{l_i}{ r_i}$.
	
	$(\Rightarrow)$ If $L$ is in $\PTIME$, so there is a Turing machine 
	$\M$ that computes $L$ in polynomial time $p(n)$. Since the 
	distance between two successive configurations is bounded by a polynomial 
	$q(n)$, we have that the total length $\mL$ is $\mL=\sum_{i=0}^{p(n)-1} \distance{C_i}{C_{i+1}} \le \sum_{i=0}^{p(n)-1} q(n)$, which is a polynomial in $n$. 
	
	Thus $L$ is computable in polynomial length.
	
	$(\Leftarrow)$ If $L$ is computable in polynomial length $p(n)$. 
	
	Let $T$ to be fixed. 
	Then we have, for all $i\in \{ 1 \dots T \}$: $\distance{C_i}{ C_{i+1}}\geq \frac{1}{poly(\length(C_{i}))}$, thus $\sum_{i=0}^{T-1} \distance{C_i}{ C_{i+1}}\geq \frac{T}{poly(\length(C_{min}))} $, where $C_{min}$ is chosen to minimize the previous lower bound. 
	
	Take $T = p(n) \times poly(\length(C_{min}))$, we have that a Turing machine simulating the trajectory will accepts or rejects after a polynomial number of steps, thus $L \in \PTIME$.	
%
\end{proof}

%
%

One way to obtain such a distance $\ddistance{C}{C'}$ is to take the Euclidean between $\gammac(C)$ and $\gammac(C')$ 
for $\gamma=\gammacompact$. 

\begin{propositiond} The obtained distance is time metric. \label{camarche}
\end{propositiond}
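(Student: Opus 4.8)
The plan is to read both inequalities off an explicit description of how a single transition $C \vdash C'$ acts, coordinate by coordinate, on the triple $\gammac(C) = (q, \gammacompact(w_{1}), \gammacompact(w_{2}))$. Writing $C = \CONFIGMT{q}{\cdots a_{-2} a_{-1}}{a_{0} a_{1} \cdots}$ with $w_{1} = a_{-1} a_{-2} \cdots$ and $w_{2} = a_{0} a_{1} \cdots$, we have $\gammacompact(w_{1}) = \sum_{j \ge 1} a_{-j} 2^{-j}$ and $\gammacompact(w_{2}) = \sum_{i \ge 0} a_{i} 2^{-(i+1)}$, both lying in $[0,1]$, while the first coordinate is the integer $q$. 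First I would observe that a transition $(q,a) \rightarrow (q',b,\delta)$ acts on these real coordinates by simple affine ``shift'' maps, and split into the three cases $\delta \in \{0,-1,+1\}$: for $\delta = 0$ only the leading bit of $\gammacompact(w_{2})$ is rewritten; for $\delta = +1$ one gets $\gammacompact(w_{2}) \mapsto 2\gammacompact(w_{2}) - a$ and $\gammacompact(w_{1}) \mapsto (b + \gammacompact(w_{1}))/2$; and symmetrically for $\delta = -1$.

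For the upper bound, I would note that in each of these cases the new coordinates again lie in $[0,1]$, so each real coordinate moves by at most $1$, while the state coordinate moves by at most $|Q|$. Hence $\ddistance{C}{C'} \le \sqrt{|Q|^{2} + 2}$, a constant depending only on the machine, which is certainly dominated by $p(\length(C))$ for any polynomial $p$ larger than this constant. This settles the first inequality with essentially no work.

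For the lower bound I would argue by cases on what changes. If the control state changes, the first coordinate alone contributes $|q' - q| \ge 1$. If the head stays put ($\delta = 0$) but the scanned symbol is overwritten ($a \ne b$), then $\gammacompact(w_{2})$ moves by exactly $|b-a|/2 = 1/2$. Since $\delta = 0$, $a = b$ and $q = q'$ together force $C = C'$, the remaining genuinely nontrivial case is a head move with $q' = q$. Here $\gammac(C)$ and $\gammac(C')$ are distinct points (whenever $C \ne C'$) whose coordinates are dyadic rationals with denominators dividing $2^{O(\length(C))}$, so $\ddistance{C}{C'} > 0$, and the goal is to bound this gap below by $1/p(\length(C))$.

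I expect this last case to be the main obstacle. A nonzero difference of dyadic rationals of bit-length $\length(C)$ is a priori only guaranteed to be at least $2^{-O(\length(C))}$, which is exponentially rather than polynomially small: concretely, a head moving rightward in a fixed state through a long run of blanks acts on $\gammacompact(w_{2})$ by the doubling map, and the displacement can be as tiny as $2^{-k}$ when the nearest non-blank symbol is $k$ cells away. To force an inverse-polynomial bound I would therefore either restrict to transitions in which the control state or a symbol near the head changes (guaranteeing a constant-sized coordinate move), or, more robustly, replace $\gammacompact$ by the richer encoding $\gammacompactpk$, which additionally records $\length(w_{i})$; under that encoding every head move changes one integer coordinate by exactly $1$, which yields the constant lower bound immediately while the upper-bound computation goes through unchanged. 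Reconciling the literal encoding $\gammacompact$ with the required polynomial lower bound on $\ddistance{C}{C'}$ is, I believe, the crux of the argument.
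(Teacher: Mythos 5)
Your treatment of the upper bound coincides with the paper's: both observe that the two real coordinates stay in $[0,1]$ and hence move by at most a constant per step, so property 1 of Definition \ref{deftimemetric} holds trivially. The real content of your proposal is the analysis of the lower bound, and there you have put your finger on a genuine weak point. The paper's own proof disposes of property 2 in one sentence, asserting that ``the gap between two consecutive configurations is at least $\frac{1}{2}$'' under the sole assumption that the machine is not allowed to leave the configuration unchanged. That assumption only excludes the identity transition $(q,a)\rightarrow(q,a,0)$; it does not exclude the case you isolate, namely a pure head move with $q'=q$ and $b=a$ across a long run of identical symbols. Your computation there is correct: with the literal encoding $\gammacompact$, a rightward move over $k$ leading zeros (empty left word) displaces the encoded point by roughly $2^{-k}$, and since $\length(C)$ is of order $k$ this is exponentially, not polynomially, small --- so the claimed bound $\ddistance{C}{C'}\ge 1/p(\length(C))$ fails for this encoding, and no choice of digit alphabet in a purely compact encoding repairs it.

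Your proposed fix --- replacing $\gammacompact$ by $\gammacompactpk$, which additionally records the lengths of the two half-tapes, so that every head move shifts an integer coordinate by exactly $1$ --- is the right one, and it is consistent with the rest of the paper: the encoding $\psik(w)=(\sum_i\gamma(w_i)k^{-i},|w|)$ underlying the cited characterization of $\PTIME$ (Theorem \ref{th:p_gpac_un}) carries exactly this length component, and it is what makes the analogous lower bound (condition 4 there) hold. So the honest summary is: you prove the same upper bound as the paper, you correctly refuse to accept the paper's one-line justification of the lower bound, and your counterexample shows the proposition needs either the strengthened encoding $\gammacompactpk$ or an added hypothesis forcing each transition to change the state or a symbol near the head. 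What you have not done --- because it cannot be done --- is establish property 2 for $\gammacompact$ itself.
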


\begin{proof} 
	We consider $C_i = \CONFIGMT{q_i}{ {l_i}}{ {r_i}}$ and $C_{i+1}=\CONFIGMT{q_{i+1}}{{l_{i+1}}}{{r_{i+1}}}$, with $C_{i} \vdash C_{i+1}$. We write $\overline{l_{i}}= \gammac(l_{i})$ and $\overline{r_{i}}= \gammac(r_{i})$.
	
	Then, from the definition of $\gammac$ and $\gammacompact$:  

	\begin{itemize}
		\item $\left| \overline{r_{i+1}}  - \overline{r_{i}} \right| \le 1$.
		\item  $\left|  \overline{l_{i+1}} - \overline{l_{i}}  \right| \leq 1$.
		\item And the gaps between  $\overline{r_{i+1}}$ and $\overline{r_{i}}$, and $\overline{l_{i+1}} $ and $\overline{l_{i}}$ remain polynomial in  the size of a configuration.	\end{itemize}
  This provides property 1.
  
  By the encoding of the real numbers over the tapes of the Turing machines, the gap between two consecutive configurations is at least $\frac{1}{2} $
  (we assume that the Turing machine is not allowed not to do anything: that would clearly corresponds to a looping situation).
  This provides property 2.
	
\end{proof}

Given some function $f: \N \to \N$, we write $\REACHPERTURBEDLENGTH{\mP}{f}$ for the set of words accepted by $\M$ with length perturbation $f$: \\
	$\REACHPERTURBEDLENGTH{\M}{f} = \{w | w\in \REACHPERTURBEDL{\M}{f(\length(w)}\}.$

\begin{theoremd}[Polynomially length robust $\Leftrightarrow \PTIME$] \label{encoreun}
Assume distance $d$ is time metric. 
Assume $\REACHP= \REACHPERTURBEDLENGTH{\mP}{p}$ for some polynomial $p$.
Then $\REACHP$ is in $\PTIME$. 
\end{theoremd}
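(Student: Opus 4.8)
The plan is to mirror the proof of Theorem~\ref{thmSympaTime}, converting the robustness hypothesis $\REACHP=\REACHPERTURBEDLENGTH{\mP}{p}$ into a polynomial bound on the number of steps of the (unique) trajectory that one has to simulate, using the two-sided estimates of Definition~\ref{deftimemetric}. First I would unfold the hypothesis: for rational $\vx,\vy$, writing $L=p(\length(\vx)+\length(\vy))$, it says that $\REACHP(\vx,\vy)$ holds iff there is a finite trajectory from $\vx$ to $\vy$ of length $\mL\le L$. Since $\tu f$ is a (partial) function, the forward trajectory issued from $\vx$ is unique, so reachability of $\vy$ means that $\vy$ occurs on this trajectory, and length-bounded reachability means that $\vy$ occurs before the accumulated length exceeds $L$. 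Under the standing hypotheses of the section (in particular $\tu f$ computable in polynomial time), this gives the decision algorithm: simulate $x_0=\vx$, $x_{t+1}=\tu f(x_t)$, maintaining the running length $\mL_t=\sum_{i<t}\ddistance{x_i}{x_{i+1}}$; accept as soon as $x_t=\vy$, and reject as soon as $\mL_t>L$. Correctness is immediate, since by the equality $\REACHP=\REACHPERTURBEDLENGTH{\mP}{p}$ a point not reached within length $L$ is not reached at all.

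The complexity analysis is where the time-metric property is used. By the lower bound $\ddistance{x_t}{x_{t+1}}\ge 1/p(\length(x_t))$, each simulated step consumes at least $1/p(\length(x_t))$ of the length budget, so the number $T$ of steps performed before the budget $L$ is exhausted satisfies $T\le L\cdot\max_{t\le T}p(\length(x_t))$. It then remains to bound the configuration sizes $\length(x_t)$ along the trajectory by a polynomial in the input size: the upper bound $\ddistance{x_t}{x_{t+1}}\le p(\length(x_t))$ forces each step to move by a controlled amount, and together with polynomial-time computability of $\tu f$ this limits how fast the sizes can grow. In the concrete and most important case, namely the $\gammacompact$ metric shown to be time metric in Proposition~\ref{camarche}, the lower bound is in fact a constant ($\ge 1/2$), so one gets directly $T\le 2L$ and $\length(x_t)\le \length(\vx)+O(T)$, hence a polynomial number of poly-size, poly-time steps, i.e. $\REACHP\in\PTIME$. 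This mirrors exactly the $C_{\min}$ estimate of Theorem~\ref{thmSympaTime}.

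The main obstacle I expect is precisely breaking the apparent circular dependency between the bound on $T$ (which needs a bound on the sizes $\length(x_t)$) and the bound on those sizes (which a priori depends on $T$). The clean way out is to observe that the time-metric lower bound prevents too many ``size-growing'' steps from fitting inside the length budget $L$: once the sizes would have to grow the lower bound per step stays bounded away from zero along the relevant prefix of the trajectory, so the maximal size encountered before the budget is exhausted is itself polynomially bounded, closing the estimate. For a fully general time-metric distance this requires care, whereas for the natural distances produced by the encodings considered here (Proposition~\ref{camarche}) the constant lower bound makes the argument unconditional, which is why I would carry out the detailed bound in that concrete setting and present the general statement as its immediate abstraction.
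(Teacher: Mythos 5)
Your proposal is correct and follows essentially the same route as the paper, whose proof simply observes that for a time-metric distance polynomial length and polynomial time coincide and defers to the argument of Theorem~\ref{thmSympaTime} (simulate the trajectory, use the lower bound $\ddistance{C}{C'}\ge 1/p(\length(C))$ to convert the length budget into a step budget). You are in fact more careful than the paper: the circular dependency between the step count and the configuration sizes that you flag is hidden in the paper's $C_{\min}$ shorthand, and your resolution via the constant lower bound of Proposition~\ref{camarche} is a legitimate tightening of the same argument.
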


\begin{proof}

	Since $d$ is time metric, a polynomial time and polynomial length are essentially the same, so the proof is very analogous to the one of  Theorem \ref{thmSympaTime}. 

\end{proof}

\olivierplan{subsection{New stuffs: Length vs time}}

\olivierplan{subsection{Length perturbation}}


\section{Analog complexity under robustness Prism}
\label{sec:jebrode}

%

\newcommand\FPTIME{\cp{FPTIME}}
\newcommand{\Rp}{\R_{+}}
\newcommand{\glen}[1]{\myop{len}_{#1}}
\newcommand\psik{\gammacompactk}
\newcommand\myOmega{\amalg}

The following was established in \cite{JournalACM2017vantardise} ($\glen{\tu y}(0,t)$ stands for the length of the curve between time $0$ and $t$):

\begin{theorem}[{Analog characterization of $\PTIME$ \cite[Theorem 2.2]{JournalACM2017vantardise}}]\label{th:p_gpac_un}
A decision problem (language) $\mathcal{L}$  belongs to the class $\PTIME$
if and only if it is poly-length-analog-recognizable. That is to say: there exist vectors $\tu p$, and $\tu q$ of polynomials with rational coefficients 
and a polynomial $\myOmega: \Rp \to \Rp$,
such that for all $w\in\Sigma^*$,
there is a (unique) $\tu y:\Rp\rightarrow\R^d$ such that for all $t\in\Rp$: \\
1)  $\tu y(0)=\tu q(\psik(w))$ and $\tu y'(t)=\tu p(\tu y(t))$ \\
2)  if $|\tu y_1(t)|\geqslant1$ then 
$|\tu y_1(u)|\geqslant1$ for all $u\geqslant t$\\
3) if $w\in\mathcal{L}$ (resp. $\notin\mathcal{L}$) and $\glen{\tu y}(0,t)\geqslant\myOmega(|w|)$
then $\tu y_1(t)\geqslant1$ (resp. $\leqslant-1$) \\
4)  $\glen{\tu y}(0,t)\geqslant t$
\end{theorem}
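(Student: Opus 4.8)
The plan is to prove the two directions of this characterization separately. For the easy direction ($\Leftarrow$), suppose $\mathcal{L}$ is poly-length-analog-recognizable via some $\tu p$, $\tu q$ and $\myOmega$. Given a word $w$, the idea is to numerically solve the initial value problem $\tu y(0)=\tu q(\psik(w))$, $\tu y'(t)=\tu p(\tu y(t))$ with rigorous error bounds and read off the sign of $\tu y_1$. The crucial point — and the whole motivation for measuring cost by length rather than by time — is that a polynomial ODE can be solved with guaranteed precision in a number of arithmetic operations polynomial in the length $\glen{\tu y}(0,t)$ of the solution curve, in the requested precision, and in the degree and coefficients of $\tu p$. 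Condition 3) guarantees that the decision is readable once the length reaches $\myOmega(|w|)$, which is polynomial in $|w|$; condition 4) guarantees $\glen{\tu y}(0,t)\geq t$, so that we never integrate over an exponentially long time horizon; and condition 2) guarantees the decision is stable once taken. One therefore integrates until the length threshold $\myOmega(|w|)$ is met, at a precision fine enough to separate $\tu y_1\geq1$ from $\tu y_1\leq-1$. This whole procedure runs in time polynomial in $|w|$, so $\mathcal{L}\in\PTIME$.

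For the hard direction ($\Rightarrow$), suppose $\mathcal{L}\in\PTIME$ is decided by a Turing machine $\M$ running in polynomial time. The goal is to build a single autonomous polynomial ODE that simulates $\M$ and meets all four conditions. I would follow the step-by-step philosophy of Section \ref{sec:simulation}: encode a configuration of $\M$ as a vector of reals through a base-$k$ encoding (essentially $\psik$), and realize one transition step of $\M$ as the evolution of a polynomial ODE over a bounded length interval. Several ingredients are needed, each realizable by polynomial ODEs: smooth approximations of the rounding and selection functions required to read the symbol under the head and update the tape, and a ``clock'' variable driving the iteration. One then arranges that each simulated step costs only a bounded amount of length, so that after the polynomially many steps of $\M$ the total length stays polynomial, which is exactly what lets $\myOmega$ be taken polynomial. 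Finally $\tu y_1$ is driven to $\geq1$ when $\M$ accepts and to $\leq-1$ when it rejects and then frozen there (condition 2)), and the vector field is rescaled so that its speed is bounded below, giving the reparametrization $\glen{\tu y}(0,t)\geq t$ (condition 4)).

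The main obstacle is squarely the error control in this forward construction. Since a polynomial vector field is continuous, its flow cannot exactly preserve a discrete encoding, and a naive continuous simulation accumulates perturbations that would corrupt the encoded configuration after only a few steps. One must therefore interleave an error-correcting (targeting/reaching) mechanism that contracts the encoding back toward the intended lattice at each round, while keeping the whole dynamics polynomial and — critically — while adding only a bounded (at worst polynomially bounded) amount of length per round. Building such a robust polynomial iterator, and carrying out the length accounting that certifies polynomiality, is the technical heart of \cite{JournalACM2017vantardise}; everything else (the encoding, digit extraction, the accept/reject gadget, and the final reparametrization) is comparatively routine once this robust iterator is available. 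It is worth noting that this robustness-to-perturbation requirement is precisely the flavour of robustness studied in the earlier sections of this paper, which is what makes the length-based characterization fit naturally into the present framework.
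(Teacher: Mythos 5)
The paper you are comparing against does not actually prove this statement: Theorem \ref{th:p_gpac_un} is recalled verbatim from \cite{JournalACM2017vantardise} (their Theorem 2.2) and is used in Section \ref{sec:jebrode} only as imported background, with a short remark interpreting condition 2) as length-robustness and condition 4) as the analogue of the second condition of Definition \ref{deftimemetric}. So there is no in-paper proof to match your argument against; the only fair comparison is with the strategy of the cited work.

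Measured against that, your outline is faithful. The ($\Leftarrow$) direction in \cite{JournalACM2017vantardise} is indeed a rigorous numerical integration of the pODE whose cost is polynomial in the curve length (not in the time horizon), with conditions 2)--4) playing exactly the roles you assign them; the ($\Rightarrow$) direction is indeed a step-by-step emulation of a polynomial-time machine in the spirit of Section \ref{sec:simulation}, with a contracting error-correction mechanism and a bounded length cost per simulated step. But be aware that what you have written is a roadmap, not a proof: the two load-bearing ingredients --- the theorem that polynomial ODEs can be solved with guaranteed error in time polynomial in $\glen{\tu y}(0,t)$ and the requested precision, and the construction of a robust polynomial iterator whose perturbations stay contracted across polynomially many rounds while the per-round length stays bounded --- are named and motivated but not established. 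You explicitly defer both to \cite{JournalACM2017vantardise}, which is acceptable here precisely because the statement is a citation, but if this were expected to stand alone, each of those two ingredients would need a proof of its own (and each is a substantial piece of analysis, e.g.\ the solver must cope with the fact that neither the time horizon nor the magnitude of $\tu y$ is polynomially bounded a priori, only the length is). One small point worth adding to your sketch: in the ($\Rightarrow$) direction you should also say why the constructed $\tu y$ is \emph{unique} and total on $\Rp$, since the theorem asserts a unique global solution; this follows from polynomial right-hand sides being locally Lipschitz together with an a priori bound keeping the trajectory in a controlled region, but it deserves a sentence.
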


Conditions 2) is basically stating that this is a length-robust system. Condition 4) is the equivalent of second condition of Definition \ref{deftimemetric}, that guarantees the equivalence with length for Turing machines. And the result is obtained using a step-by-step emulation using encoding $\psik$. 
\newcommand{\myclass}[1]{\operatorname{#1}}
\newcommand{\mygpc}[1][]{\ensuremath{\myclass{ALP}_{#1}}}
\newcommand\K{\mathbb{K}}
\olivierpasimportant{Commenté: 
\olivierpasimportant{plein de notations trucs pas definis là}
%
%

Actually, a key argument is the following:

\begin{definition}[Emulation]\label{def:fp_gpac_embed_multi}
$f:\left(\Gamma^*\right)^n\rightarrow\left(\Gamma^*\right)^m$ is
called  \emph{emulable}
if there exists $g\in\mygpc$ and $k\in\N$ such that for any word $\vec{w}\in\left(\Gamma^*\right)^n$, $g(\psik(\vec{w}))=\psik(f(\vec{w}))$ where: $$\psik(x_1,\ldots,x_\ell)=\left(\psi(x_1),\ldots,\psi(x_\ell)\right)$$
and $\psik$ is defined as in Definition~\ref{def:fp_gpac_embed}.
\end{definition}

\begin{theorem}[Multidimensional $\FP$ equivalence]\label{th:fp_gpac_multi}
For any 
$f:\left(\Gamma^*\right)^n\rightarrow\left(\Gamma^*\right)^m$,
$f\in\FPTIME$ if and only if $f$ is 
emulable.
\end{theorem}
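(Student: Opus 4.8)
The plan is to prove the two directions separately, mirroring the structure of Theorem~\ref{th:p_gpac_un} but lifting it from the decision case (one output bit) to genuine functions with $n$ inputs and $m$ outputs. Both directions rest on the two pillars already used there: on one side, the fact that a pODE of polynomial length can be numerically integrated in polynomial time; on the other, the step-by-step emulation of a polynomial-time Turing machine by a pODE whose solution curve stays of polynomial length.

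\emph{Soundness (emulable $\Rightarrow\FPTIME$).} Suppose $g\in\mygpc$ and $k$ witness emulability, so $g(\psik(\vec w))=\psik(f(\vec w))$ for every input tuple $\vec w$. Given $\vec w$, I would first compute the rational point $\psik(\vec w)$ in polynomial time, use it as initial condition of the polynomial ODE defining $g$, and integrate until the length of the solution reaches the polynomial bound attached to $g$, at which point the relevant output coordinates approximate $\psik(f(\vec w))$. The engine behind this step is the polynomial-time solvability of polynomial-length pODEs already underlying Theorem~\ref{th:p_gpac_un}: computing the solution to precision $2^{-N}$ costs time polynomial in $N$ and in the length bound. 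It then remains to decode: since $\psik$ writes the $m$ output words on separate, well-separated scales, a sufficiently fine approximation of $\psik(f(\vec w))$ can be rounded coordinate-by-coordinate to recover $f(\vec w)$ exactly. The whole procedure runs in time polynomial in $\length(\vec w)$.

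\emph{Completeness ($\FPTIME\Rightarrow$ emulable).} Let $\M$ compute $f$ in time $p(n)$. Following the step-by-step emulation paradigm of Section~\ref{sec:simulation} with the encoding $\psik$, I would encode configurations of $\M$ as real vectors and build a polynomial ODE whose flow performs one transition of $\M$ per unit of computational progress; iterating $p(n)$ transitions and reading the $m$ output tracks then yields $g$ with $g(\psik(\vec w))=\psik(f(\vec w))$. The essential ingredient is a \emph{robust}, error-correcting continuous iteration, in which each simulated step both resets the accumulated numerical error, keeping the trajectory near the discrete lattice of encoded configurations, and adds only a bounded increment to the arc length. Summing the $p(n)$ increments keeps the total length polynomial in $n$, matching conditions 2) and 4) of Theorem~\ref{th:p_gpac_un}, so $g\in\mygpc$. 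Closer to the algebraic treatment of \cite{JournalACM2017vantardise}, one may alternatively show that the emulable functions are closed under composition and contain the basic building blocks from which every $\FPTIME$ function is assembled.

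The main obstacle is, as in the single-output case, the completeness direction, and specifically the design of the error-correcting analog iteration so that errors do not accumulate over the $p(n)$ simulated steps while each step contributes only $O(1)$ to the length; this is exactly where robustness of the analog model is indispensable, since a naive continuous simulation would either lose precision or inflate the length to exponential. A secondary and more routine difficulty is the bookkeeping for the arities $n$ and $m$: one must block-encode the input and output tuples through $\psik$ and verify that composition preserves both the polynomial length bound and the exactness of the final rounding step.
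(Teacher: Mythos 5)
The paper does not actually prove this statement: Theorem~\ref{th:fp_gpac_multi} is quoted from \cite{JournalACM2017vantardise} (in the source it even sits inside a block that is suppressed at compile time), so there is no in-paper argument to compare yours against. Your outline faithfully reproduces the strategy of that reference — polynomial-time numerical integration of polynomial-length pODEs for soundness, and a length-bounded, error-correcting step-by-step Turing machine simulation (or, equivalently, closure of the emulable functions under composition from basic blocks) for completeness — but both pillars are precisely the hard theorems of that paper, so what you have is a correct roadmap that defers all of the real work to the citation rather than an independent proof.
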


where

\begin{definition}[{Discrete emulation \cite[Definition 6.1]{JournalACM2017vantardise}  }]\label{def:fp_gpac_embed}
$f:\Gamma^*\rightarrow\Gamma^*$ is called $\K$-\emph{emulable} if 
there exists $g\in\mygpc[\K]$
and $k\geqslant1+\max(\gamma(\Gamma))$ such that for any word $w\in\Gamma^*$,
$g(\psik(w))=\psik(f(w))$ where 
\[\psik(w)=\left(\sum_{i=1}^{|w|}\gamma(w_i)k^{-i},|w|\right).\]
We say that $g$ $\K$-\emph{emulates} $f$ with $k$. When the field $\K$ is
unambiguous, we will simply say that $f$ is emulable.
\end{definition}
}

Very recently, a characterization of $\PSPACE$ has also been obtained: see \cite{TheseRiccardo}, for full details. Basically, of the form.

\begin{theorem}[{\cite[Theorem 3.11]{GozziPSPACE},\cite{TheseRiccardo}}]
\label{Th:PSPACE}A language $L\subseteq\Gamma^{\ast}$ belongs to
$\operatorname{PSPACE}$ iff there are ODEs 
$\tu y^{\prime}=\tu p(\tu y,\tu z)$ and
$\tu z^{\prime}=\tu q(\tu y,\tu z),$
 where $\tu p,\tu q$ are vectors of polynomials,
%
there is  a polynomial $u$,  (vector-valued) functions $\tu r,\tu s\in\operatorname{GPVAL}$, a $\gammaentierk$-bound
$\phi$, and
$\varepsilon>0,$ $\tau\geq\alpha>0$, $\alpha,\tau\in\mathbb{Q}$,
such that, for all $w\in 
\Sigma^{\ast}$, one has that the solution $(\tu y,\tu z)$ of
the ODEs with the initial condition $\tu y(0)=\tu r(x)$
and $\tu z(0)=\tu s(\tu y(0))=\tu s\circ \tu r(0)$, with $\distance{\gammaentierk(w)}{\tu x}<1/4$, satisfies:\\
1. 
If $\bar{t}_{1}>0$ is such that
$\left\vert \tu y_{1}(\bar{t}_{1}) \right\vert \geq1$, then $\left\vert \tu y_{1}(t) \right\vert \geq1$ for all $t\geq \bar{t}_{1}$ and
$\left\vert \tu y_{1}(t) \right\vert \geq3/2$ for all $t\geq \bar{t}_{1}+1$;\\
2. 
If $w\in L$ (respectively $\notin L$)
then there is some $\bar{t}_{1}>0$ such that $\tu y_{1}(\bar{t}_{1})\geq1$ (respectively
$\leq-1$); \\
3.  
$\left\Vert (\tu y(t),\tu z(t))\right\Vert \leq
\phi\circ u(|w|)$, for all $t\geq0$;\\
4.  
Suppose that $(\tilde{\tu y},\tilde
{\tu z})$ satisfies the ODE with 
$\tu y(0)=\tu r(x)$, $\tu z(0)=\tu s(\tu y(0))=\tu s\circ \tu r(\tu x)$, except possibly at time instants $t_{i}$, for $i=1,2,\ldots$, satisfying:
a)  $t_{i}-t_{i-1}\geq\tau$, where $t_0=0$;
b)  $\|\tilde{\tu y}(t_i)-\lim_{t\to t_i^-}\tilde{\tu y}(t)\|\leq\varepsilon$ and $\tilde{\tu z}(t_i)=\tu s(\tilde{\tu y}(t_i))$;
c)  $\lim_{t\to t_i^-}\tilde{\tu z}_1(t)> 1$.
Then conditions 1,2,3,5 hold for $(\tilde{\tu y},\tilde
{\tu z})$;\\ 
5) 
For any $b>a\geq0$ such that $\left\vert
b-a\right\vert \geq\tau$, there is an interval $I=[c,d]\subseteq\lbrack a,b]$,
with $\left\vert d-c\right\vert \geq\alpha$, such that $\tu z_{1}(t)\geq3/2$ for
all $t\in I$.
\end{theorem}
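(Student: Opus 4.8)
The plan is to prove the two directions separately, in both cases building on the step-by-step emulation machinery already underlying Theorem~\ref{th:p_gpac_un}, but upgrading it so that the simulation stays correct over \emph{exponentially} many steps while remaining in polynomially bounded space. Throughout I would rely on the fact established in \cite{JournalACM2017vantardise} that one step of a Turing machine can be emulated by a polynomial ODE acting on a base-$k$ encoding of configurations; the novelty for $\PSPACE$ is entirely in the robustness and space-accounting, i.e.\ in producing conditions~3, 4 and 5.

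For the direction $\PSPACE \Rightarrow$ ODEs, I would start from a machine $\M$ deciding $L$ in polynomial space $s(n)$ and encode each configuration of $\M$ as a point whose coordinates are base-$k$ rationals with $O(s(n))$ digits. The encoded values then have magnitude up to $k^{O(s(n))}$ --- exponential in size but polynomial in bit-length, which is exactly what a $\gammaentierk$-bound $\phi$ measures, yielding condition~3. The variables $\tu y$ carry the emulated configuration together with the decision bit $\tu y_1$, so that conditions~1 and 2 encode ``the decision is reached and then stabilizes''. The auxiliary variables $\tu z$, reset through $\tu s$ and periodically activated (condition~5), implement an analog error-correction that snaps $\tu y$ back to a valid encoding after each emulated step. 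The construction must be arranged so that condition~4 holds, i.e.\ so that every admissible perturbation of size $\le\varepsilon$ occurring at instants at least $\tau$ apart is absorbed by the reset $\tu z=\tu s(\tilde{\tu y})$ rather than amplified.

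For the direction ODEs $\Rightarrow \PSPACE$, I would numerically integrate the system while reusing space. Condition~3 confines the trajectory to a region describable with polynomially many bits, so each integration segment between two consecutive reset instants --- of length at least $\tau$ --- can be carried out to precision $\varepsilon$ in polynomial space, overwriting the work of the previous segment. Robustness (condition~4) guarantees that the bounded error introduced by this finite-precision integration, combined with the reset, does not change the trajectory qualitatively, and conditions~1 and 2 let the $\PSPACE$ machine read off membership from the sign of $\tu y_1$ once $|\tu y_1|\ge 3/2$ certifies that the decision has stabilized.

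The main obstacle, and the reason this is genuinely harder than the $\PTIME$ characterization of Theorem~\ref{th:p_gpac_un}, is controlling error over exponentially many steps. In the polynomial-length setting errors are bounded directly against the length budget $\myOmega(|w|)$, whereas here the emulation may run for exponential time, so a merely online, continuously-acting error-correction does not suffice: one needs the periodic reset of $\tu z$ (guaranteed to fire by condition~5) to act as a bona fide discrete-time error-corrector, and one must prove it never corrupts a correct computation. Verifying conditions~4 and 5 for the constructed ODE --- that the reset mechanism performs exact error correction under \emph{all} admissible $\varepsilon$-perturbations at well-separated instants --- is the technical heart, and is precisely what is carried out in \cite{GozziPSPACE,TheseRiccardo}.
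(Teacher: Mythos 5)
You should first be aware that the paper contains no proof of this statement: Theorem~\ref{Th:PSPACE} is imported verbatim from \cite{GozziPSPACE} and \cite{TheseRiccardo}, and the paper only adds a short interpretive paragraph (conditions 4 and 5 impose the existence of an abstraction graph, condition 3 keeps the $\logsize$ polynomial, conditions 1 and 2 make the system eventually decisional). So there is no in-paper argument to measure your proposal against, only that commentary. Against that commentary, your reading is consistent and essentially the same: your ``periodic reset acting as a discrete-time error-corrector'' is the dynamical-systems rendering of what the paper calls the abstraction graph, your use of the $\gammaentierk$-bound for condition 3 matches the paper's remark that the encoding is not compact and the values grow exponentially in magnitude (hence polynomially in bit-length), and your treatment of conditions 1--2 is exactly ``eventually decisional.''

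That said, what you have written is a plan rather than a proof, and the two places where it would actually have to earn its keep are precisely the ones you defer. First, in the forward direction, your description of the $\tu z$ variables slightly inverts the mechanism of condition 4: the perturbation model moves $\tilde{\tu y}$ by at most $\varepsilon$ and then \emph{recomputes} $\tilde{\tu z}$ exactly as $\tu s(\tilde{\tu y})$, so the correction of $\tu y$ itself must be built into the continuous dynamics $\tu p$ (gated by the windows where $\tu z_{1}\geq 3/2$ from condition 5), not performed by the reset; proving that this absorbs \emph{every} admissible perturbation without ever corrupting a correct run is the technical core and cannot be waved at. Second, in the converse direction, the algorithm does not get to observe the $t_{i}$: it must itself discretize time into steps of length $\tau$, integrate each step to precision $\varepsilon$ in space polynomial in $|w|$ (possible because condition 3 bounds the coordinates by polynomially many bits), and then invoke condition 4 to conclude that the exponentially many accumulated $\varepsilon$-errors do not change the verdict read off via conditions 1--2. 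Your sketch names both obstacles correctly, but since it explicitly leaves their resolution to \cite{GozziPSPACE,TheseRiccardo}, it is an accurate summary of the known proof strategy rather than an independent proof --- which, for a theorem the paper itself only cites, is arguably all one can ask.
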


Condition $\distance{\gammaentierk(w)}{x}<1/4$,  4) and 5) basically impose that there exists some abstraction graph. Conditions 3) makes it keep  a polynomial $\logsize$, and conditions 1) and 2) impose to the system to be eventually decisional.  This guarantees $\PSPACE$, and allows a robust emulation of a TM. However, the space used by the ODEs cannot be ``read'' easily (as by a concept such as length in previous theorem).

This is using a rather natural encoding, but the system is not living in a compact. If one wants to remain bounded, an alternative is to use the trick of \cite{BGHRobust10}, based on a change of variable, at the price of a rather adhoc encoding. This is basically based on the following:


%
\begin{theorem}[{Robust simulation of a TM over $\R^{6}$ \cite{gcb08}}]
\label{Teo:Graca} For any TM $M$, there is an analytic and
computable ODE $y^{\prime}=g_{M}(y)$ defined over $\mathbb{R}^{6}$ which
simulates $M$ using the encoding $\gammaentier$, 
and remains valid for perturbations less than  $\varepsilon\leq1/4.$
\end{theorem}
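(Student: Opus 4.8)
The plan is to follow the construction of Graça, Campagnolo and Buescu \cite{gcb08}, whose architecture I sketch here. First I would fix the integer encoding $\gammaentier$, so that a configuration $\CONFIGMT{q}{w_L}{w_R}$ is represented by a triple of (essentially) integers $(q,\gammaentier(w_L),\gammaentier(w_R)) \in \R^{3}$, where the symbol currently under the head and the control state are recoverable as the least significant digit (in base $k$) of the relevant coordinate. The single-step relation $\vdash$ then becomes a map $\Phi : \R^{3} \to \R^{3}$ which, on valid encodings, reads the current state and head symbol, selects the matching transition among the finitely many pairs $(q,a)$, and updates the half-tapes by the appropriate arithmetic (a base-$k$ shift together with the addition or deletion of one digit).

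Second, I would realize $\Phi$ as an analytic map that is moreover \emph{error-contracting}. The key gadget is an analytic ``rounding'' function $\sigma$ that maps any point within distance $1/4$ of an integer strictly closer to that integer; composing digit extraction and finite case selection (done by analytic interpolation over the finite set of state/symbol pairs) with $\sigma$ yields an analytic $\Phi_{\mathrm{an}}$ that agrees with $\Phi$ on exact encodings and sends the $1/4$-ball around a valid encoding into a much smaller ball around the correct successor encoding. This is precisely what will later absorb perturbations: since distinct encodings are separated by at least $1$, a perturbation of magnitude $<1/4$ never changes which integer configuration is represented, and $\sigma$ repairs the displacement. Since $\sigma$, the interpolation and all arithmetic use rational/effectively given data, the resulting $g_M$ is computable.

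Third, I would turn the iteration of $\Phi_{\mathrm{an}}$ into an autonomous analytic ODE using the target-and-approach (Branicky-type) mechanism: auxiliary variables alternately compute $\Phi_{\mathrm{an}}$ of the stored configuration and hold the previous value, driven by a smooth periodic control making one TM step correspond to one unit of time. Counting the three coordinates of the encoding together with the auxiliary ``target'' and phase variables is what brings the total dimension to $6$; analyticity of $g_M$ follows because every ingredient ($\sigma$, the interpolation, the driving terms) is analytic, and global existence on all of $\R$ holds since the trajectory stays in a bounded neighborhood of the lattice of valid encodings, where the right-hand side is well behaved.

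The hard part, and the place where the threshold $\varepsilon \le 1/4$ is earned, is the quantitative error analysis showing that robustness does not degrade across steps: one must check that the approximation error of $\Phi_{\mathrm{an}}$ plus any injected perturbation of size $<1/4$ stays within the basin on which $\sigma$ contracts, so that at the end of each time unit the state is again well within $1/4$ of the exact successor encoding. Establishing this invariant uniformly in time --- so that the continuous dynamics between sampling instants cannot let the error escape the contraction region --- is the technical crux; once it is in place, correctness of the simulation under all perturbations below $1/4$ follows by induction on the number of steps.
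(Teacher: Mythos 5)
The paper does not actually prove this statement: it is imported verbatim from \cite{gcb08}, and the only surrounding text explains how to transport the simulation to a bounded domain via $\arctan$, not how the simulation itself is built. Your sketch is a faithful outline of the construction carried out in that reference --- integer encoding of configurations in $\R^{3}$, an analytic error-contracting extension of the transition map built from a rounding function $\sigma$ that repairs displacements below $1/4$ (distinct encodings being at distance $\ge 1$), and a Branicky-style target-and-hold mechanism turning the iteration into an autonomous ODE; the one imprecision is that the dimension $6$ arises as two interleaved copies of the full $3$-dimensional state (the ``target'' block is itself $3$-dimensional), rather than from a handful of scalar phase variables. The quantitative invariant you single out as the crux --- that the $\sigma$-contraction applied at each step keeps the accumulated simulation error plus the injected perturbation strictly inside the $1/4$-basin uniformly in time --- is indeed where the technical work lies in \cite{gcb08}.
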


The idea is that if $\phi$ is a solution of $y^{\prime}=\tu g_{M}(y)$ simulating
$M$ on $\mathbb{R}^{6}$, we can consider $\phi_{1}=\frac{2}{\pi}\arctan\phi$ 
as a corresponding
emulation of $M$ on $(-1,1)^{6}$.  
Then $\phi_{1}$ will be solution of the ODE $\phi_{1}^{\prime} =\tu f_{M}(\phi_{1})\nonumber$ with
$\tu f_{M}(x)=\frac{2}{\pi}\frac{1}{1+\tan^{2}\left(  \frac{x\pi}{2}\right)  }%
\tu g_{M}\left(  \tan\left(  \frac{x\pi}{2}\right)  \right)  $.
Consequently, the continuous time dynamical system given by  $y^{\prime}=\tu f_{M}(y)$ simulates TM $M$ on $X$, if the input word  $w=a_{1} \dots a_{n}$  is 
encoded in $X$ by  $\gamma_{arctan}(w)=\frac{2}{\pi}\arctan(\gammaentier(w))= \frac{2}{\pi}\arctan(w_{0}+w_{1}2+ \dots +w_{n}2^{n})$. 
Then a computation will remain correct if the states are not perturbed more than
$\arctan(s(w)+\varepsilon)-\arctan(s(w))$, where $s(w)$ is an upper bound of the size of tape on word $w$.

%

\olivier{Enlevé pour gain de place:
This was used to prove that for any decidable language $L$, there is a robust analytic and computable continuous-time (resp. discrete) system over
$\mathcal{S}=(-1,1)^{6}$ computing $L$ \cite[Theorem 22]{BGHRobust10} (resp. \cite[Theorem 23]{BGHRobust10}).}  However, the used encoding $\gamma_{arctan}$ is rather artificial, and mainly the system is defined over some open bounded domain, not compact. Furthermore, robustness holds for points close to the image by $\gammac$ of configurations, but not for arbitrary points. 

The question of a simpler characterization of $\PSPACE$, over a compact, using a simple encoding remains open. However, we believe that our results help to understand that space-complexity is related to robustness to precision over a compact, or more generally to the $\logsize$ of some abstraction graph over general domains.

Notice that 
we can prove that reachability is $\PSPACE$-complete for polynomially robust pODE systems,  
But the point of the previous results (and open question) is to get a \emph{uniform} embedding: given some machine $M$, we would like a same ODE that works for any input size.

%
%

%
%
%
%

\newpage
\bibliographystyle{plain}
\bibliography{bournez,perso}

\begin{thebibliography}{10}

\bibitem{asarin01perturbed}
Eugene Asarin and Ahmed Bouajjani.
\newblock Perturbed {Turing} machines and hybrid systems.
\newblock In {\em Proceedings of the 16th Annual {IEEE} Symposium on Logic in
  Computer Science ({LICS}-01)}, pages 269--278, Los Alamitos, CA, June~16--19
  2001. IEEE Computer Society Press.

\bibitem{AMP95}
Eugene Asarin, Oded Maler, and Amir Pnueli.
\newblock Reachability analysis of dynamical systems having piecewise-constant
  derivatives.
\newblock {\em Theoretical Computer Science}, 138(1):35--65, February 1995.

\bibitem{BloTsi00}
Vincent Blondel and John Tsitsiklis.
\newblock A survey of computational complexity results in systems and control.
\newblock {\em Automatica}, 36(9):1249--1274, 2000.

\bibitem{CIEChapter2007}
Olivier Bournez and Manuel~L. Campagnolo.
\newblock New computational paradigms. changing conceptions of what is
  computable.
\newblock chapter A Survey on Continuous Time Computations, pages 383--423.
  Springer-Verlag, New York, 2008.

\bibitem{BGDPRiccardo2022}
Olivier Bournez, Riccardo Gozzi, Daniel Gra{\c c}a, and Amaury Pouly.
\newblock A continuous characterization of pspace using polynomial ordinary
  differential equations.
\newblock Submitted, 2022.

\bibitem{GozziPSPACE}
Olivier Bournez, Riccardo Gozzi, Daniel~S. Gra{\c c}a, and Amaury Pouly.
\newblock A continuous characterization of {PSPACE} using polynomial ordinary
  differential equations.
\newblock 2022.

\bibitem{JournalACM2017vantardise}
Olivier {Bournez}, Daniel~S. {Gra{\c c}a}, and Amaury {Pouly}.
\newblock {Polynomial Time corresponds to Solutions of Polynomial Ordinary
  Differential Equations of Polynomial Length}.
\newblock {\em Journal of the ACM}, 64(6):38:1--38:76, 2017.
\newblock Journal version of ICALP 2017 best paper award (Track A).

\bibitem{BGHRobust10}
Olivier Bournez, Daniel~S. Gra\c{c}a, and Emmanuel Hainry.
\newblock Robust computations with dynamical systems.
\newblock In {\em Mathematical Foundations of Computer Science, MFCS'2010},
  volume 6281 of {\em Lecture Notes in Computer Science}, pages 198--208.
  Springer, 2010.

\bibitem{LMCS2018}
Olivier Bournez and Amaury Pouly.
\newblock A universal ordinary differential equation.
\newblock {\em Logical Methods in Computer Science}, 16(1), 2020.

\bibitem{bournez2021survey}
Olivier Bournez and Amaury Pouly.
\newblock A survey on analog models of computation.
\newblock In {\em Handbook of Computability and Complexity in Analysis}, pages
  173--226. Springer, 2021.

\bibitem{brattka2008tutorial}
Vasco Brattka, Peter Hertling, and Klaus Weihrauch.
\newblock A tutorial on computable analysis.
\newblock In {\em New computational paradigms}, pages 425--491. Springer, 2008.

\bibitem{MscBraverman}
Mark Braverman.
\newblock Computational complexity of euclidean sets: Hyperbolic julia sets are
  poly-time computable.
\newblock Master's thesis, University of Toronto, 2004.

\bibitem{Bus31}
Vannevar Bush.
\newblock The differential analyzer. {A} new machine for solving differential
  equations.
\newblock {\em J. Franklin Inst.}, 212:447--488, 1931.

\bibitem{CMSB17}
Francois Fages, Guillaume Le~Guludec, Olivier Bournez, and Amaury Pouly.
\newblock Strong turing completeness of continuous chemical reaction networks
  and compilation of mixed analog-digital programs.
\newblock In {\em Computational Methods in Systems Biology-CMSB 2017}, 2017.

\bibitem{Franzle99}
Martin Fr{\"a}nzle.
\newblock Analysis of hybrid systems: An ounce of realism can save an infinity
  of states.
\newblock In J{\"o}rg Flum and Mario Rodr{\'\i}guez-Artalejo, editors, {\em
  Computer Science Logic, 13th International Workshop, {CSL} '99, 8th Annual
  Conference of the {EACSL}, Madrid, Spain, September 20-25, 1999,
  Proceedings}, volume 1683 of {\em Lecture Notes in Computer Science}, pages
  126--140. Springer, 1999.

\bibitem{gao2012delta}
Sicun Gao, Jeremy Avigad, and Edmund~M Clarke.
\newblock Delta-decidability over the reals.
\newblock In {\em Logic in Computer Science (LICS), 2012 27th Annual IEEE
  Symposium on}, pages 305--314. IEEE, 2012.

\bibitem{TheseRiccardo}
Riccardo Gozzi.
\newblock {\em Analog Characterization of Complexity Classes}.
\newblock PhD thesis, Instituto Superior T{\'e}cnico, Lisbon, Portugal and
  University of Algarve, Faro, Portugal, 2022.

\bibitem{gcb08}
D.~S. Gra{\c{c}}a, M.~L. Campagnolo, and J.~Buescu.
\newblock Computability with polynomial differential equations.
\newblock {\em Adv. Appl. Math.}, 40(3):330--349, 2008.

\bibitem{GC03}
Daniel~S. Gra{\c c}a and Jos{\'e}~F{\'e}lix Costa.
\newblock Analog computers and recursive functions over the reals.
\newblock {\em Journal of Complexity}, 19(5):644--664, 2003.

\bibitem{dsg06a}
Daniel~S. Gra{\c c}a, N.~Zhong, and J.~Buescu.
\newblock Computability, noncomputability and undecidability of maximal
  intervals of {IVP}s.
\newblock {\em Transactions of the American Mathematical Society}, 2006.
\newblock To appear.

\bibitem{Hartman}
Philip Hartman.
\newblock {\em Ordinary Differential Equations}.
\newblock John Wiley and Sonc, 1964.

\bibitem{HKPV95}
Thomas~A. Henzinger, Peter W.~W. Kopke, Anuj Puri, and Pravin Varaiya.
\newblock What's decidable about hybrid automata?
\newblock {\em Journal of Computer and System Sciences}, 57(1):94--124, August
  1998.

\bibitem{HR00}
Thomas~A. Henzinger and Jean-Fran\c{c}ois Raskin.
\newblock Robust undecidability of timed and hybrid systems.
\newblock In Nancy~A. Lynch and Bruce~H. Krogh, editors, {\em Hybrid Systems:
  Computation and Control, Third International Workshop, HSCC 2000, Pittsburgh,
  PA, USA, March 23-25, 2000, Proceedings}, volume 1790 of {\em Lecture Notes
  in Computer Science}, pages 145--159. Springer, 2000.

\bibitem{immerman1988nondeterministic}
N.~Immerman.
\newblock {Nondeterministic space is closed under complementation}.
\newblock In {\em Structure in Complexity Theory Conference, 1988.
  Proceedings., Third Annual}, pages 112--115. IEEE, 1988.

\bibitem{Ko91}
Ker-I Ko.
\newblock {\em Complexity Theory of Real Functions}.
\newblock Progress in Theoretical Computer Science. Birkha{\"u}ser, Boston,
  1991.

\bibitem{KCG94}
Pascal Koiran, Michel Cosnard, and Max Garzon.
\newblock Computability with low-dimensional dynamical systems.
\newblock {\em Theoretical Computer Science}, 132(1-2):113--128, September
  1994.

\bibitem{Moo91}
Cristopher Moore.
\newblock Generalized shifts: unpredictability and undecidability in dynamical
  systems.
\newblock {\em Nonlinearity}, 4(3):199--230, 1991.

\bibitem{TheseAmauryenglish}
Amaury Pouly.
\newblock {\em Continuous models of computation: from computability to
  complexity}.
\newblock PhD thesis, Ecole Polytechnique and Unidersidade Do Algarve, Defended
  on July 6, 2015. 2015.
\newblock https://pastel.archives-ouvertes.fr/tel-01223284, Awarded the PhD
  award \emph{Prix de Th{\`e}se de l'Ecole Polyechnique 2016}, and the
  international PhD award \emph{Ackermann Award 2017}.

\bibitem{PVB95}
A.~Puri, V.~Borkar, and P.~Varaiya.
\newblock Epsilon-approximation of differential inclusions.
\newblock In {\em Proc. of the 34th IEEE Conference on Decision and Control},
  pages 2892--2897, 1995.

\bibitem{Sha41}
Claude~E. Shannon.
\newblock Mathematical theory of the differential analyser.
\newblock {\em Journal of Mathematics and Physics MIT}, 20:337--354, 1941.

\bibitem{SS95}
Hava~T. Siegelmann and Eduardo~D. Sontag.
\newblock On the computational power of neural nets.
\newblock {\em Journal of Computer and System Sciences}, 50(1):132--150,
  February 1995.

\bibitem{Sip97}
Michael Sipser.
\newblock {\em Introduction to the Theory of Computation}.
\newblock PWS Publishing Company, 1997.

\bibitem{Szelepcsenyi}
R.~Szelepcs{\'e}nyi.
\newblock The method of forced enumeration for nondeterministic automata.
\newblock {\em Acta Inf.}, 26(3):279--284, November 1988.

\bibitem{Wei00}
Klaus Weihrauch.
\newblock {\em Computable Analysis: an Introduction}.
\newblock Springer, 2000.

\end{thebibliography}

%
%
%
%

\newpage
\appendix

\section*{Crash course in computable analysis}

We  recall here the very basics of computable analysis: see  \cite{Wei00} or \cite{brattka2008tutorial}. 
\olivierplan{Un peu de concepts de l'analyse calculable. Autrees refs? Certaines parties à placer ailleurs. Prise de \cite{Wei00} partie introduction.}

Basically, the idea behind classical computability and complexity is to fix some representations of objects (such as graphs, integers, etc, \dots) using finite words  over some finite alphabet, say $\Sigma=\{0,1\}$, and to say that such an object is computable when such a representation can be produced using a Turing machine. The aim of computable analysis is to be able to talk also about objects such as real numbers, functions over the reals, closed subsets, compacts subsets, \dots, which cannot be represented by finite words over $\Sigma$ (a clear reason for it is that such words are countable while the set $\R$ for example is not).  However, they can be represented by some infinite words over $\Sigma$, and the idea is to fix such representations for these various objects, called \emph{names}, with suitable computable properties. In particular, in all the following proposed representations, it can be proved that an object is computable iff 
it has some computable representation.

\begin{remark}
Here the notion of computability involved is the one of Type 2 Turing machines, that is to say computability over possibly infinite words: the idea is that such a machine has some read-only input tape(s), that contain the input(s), which can correspond to either a finite or infinite word(s), a read-write working tape, and one (or several) write only output tape(s). It evolves as a classical Turing machine, the only difference being that we consider it outputs an infinite words when it  writes forever the symbols of that words on its (or one of its) write-only infinite output tape(s): see \cite{Wei00} for details.
\end{remark}

A name for a point $\vx \in \R^{d}$ is a sequence $(I_{n})$ of nested open rational balls  with $I_{n+1} \subseteq I_n$ for all $n \in \mathbb{N}$ and $\{x\}=\bigcap_{n \in \mathbb{N}} I_n$.   Such a name can  be encoded as infinite sequence of symbols. 

We call a real function $f: \subseteq \mathbb{R} \rightarrow \mathbb{R}$ computable, iff some (Type 2 Turing) machine maps any name of any $x \in \operatorname{dom}(f)$ to a name of $f(x)$. For real functions $\tu f: \subseteq \mathbb{R}^n \rightarrow \mathbb{R}$ we consider machines reading $n$ names in parallel. 

It can be proved that a computable function is necessarily continuous. A name for a function $\tu f$ is a list of all pairs of open rational balls $(I, J)$ such that $\tu f(\bar{I}) \subseteq J$. Following above remark, one can prove that a real function is computable iff it has some computable name. 

A name for a closed set $F$ is a sequence $(I_{n})$ of open rational balls such that $\closure{I_{n}} \cap F=\emptyset$ and a  sequence $(J_{n})$ of open rational balls such that $J_{n} \cap F \neq \emptyset$.

Given some closed set $F$, the distance function $d_F: \mathbb{R}^n \rightarrow \mathbb{R}$ is defined by $d_F(x):=\inf _{y \in F} \distance{x}{y}$. Closed subset $F \subseteq \mathbb{R}^n$ is  computable iff its distance function $d_A: \mathbb{R}^n \rightarrow \mathbb{R}$ is  (\mylabelwei{Corollary 5.1.8}). 
A name for a compact $K$ is a name of $F$ as a closed set, and an integer  $L$ such that $K \subset B(0,L)$.

A closed set is called  computably-enumerable closed if one can effectively enumerate the rational open balls intersecting it: $\left\{(q, \varepsilon) \in \mathbb{Q}^n \times \mathbb{Q}_{+} \mid \mathrm{B}(q, \varepsilon) \cap A \neq \emptyset\right\}$ is computably enumerable (\mylabelbt{Definition 5.13},\mylabelwei{Definition 5.1.1}).
A closed set is called co-computably-enumerable closed if one  can effectively enumerate the rational closed balls in its complement: 
the set $\left\{(q, \varepsilon) \in \mathbb{Q}^n \times \mathbb{Q}_{+} \mid \overline{\mathrm{B}}(q, \varepsilon) \subseteq U\right\}$ is computably enumerable
(\mylabelbt{Definition 5.10},\mylabelwei{Definition 5.1.1}).

We need also the concept of polynomial time computable function in computable analysis: see \cite{Ko91}. In short, a quickly converging name of $\vx \in \R^{d}$ is a name  of $\vx$, with $I_{n}$ of radius $<2^{-n}$. 
A function $\tu f: \R^{d} \to \R^{d'}$ is said to be computable in polynomial time, if there is some oracle TM $M$, such that, for all $\vx$, given any fast converging name of $\vx$ as oracle, given $n$, $M$ produces some open rational ball of radius $<2^{-n}$ containing $\tu f(\vx)$,  in a time polynomial in $n$.















\newpage


\olivierplan{
subsection{Notations}
}


\olivierpasimportant{
Consider $\M$ a Turing machine. Consider some function $f: \N \to \N$.

\begin{itemize}
\item $L(\M)$ is the set of words accepted by $\M$.
\item $\TIME{\M}{n}$ is the  set of words accepted by $\M$ at a time $\le n$.

\item $\EXACTTIME{\M}{f}$ is the set of words accepted by $\M$ in time $f$: 
	$$\EXACTTIME{\M}{f} = \{w | w\in \TIME{\M}{f(|w|}\}.$$

\item $\SPACE{\M}{n}$ is the  set of words accepted by $\M$ using a space $\le n$.
\item $\EXACTSPACE{\M}{f}$ is the set of words accepted by $\M$ in space $f$: 
	$$\EXACTSPACE{\M}{f} = \{w | w\in \SPACE{\M}{f(|w|}\}.$$

\end{itemize}
}

\olivierpasimportant{
subsection{Trivial facts}

From definitions:

We have $$L(\M) = \bigcup_{n \in \N} \TIME{\M}{n}.$$

We have $$PTIME = \bigcup_{k \in \N} \EXACTTIME{\M}{n^{k}}.$$

We have $$PSPACE = \bigcup_{k \in \N} \EXACTSPACE{\M}{n^{k}}.$$
}

\olivierpasimportant{
subsection{Rather trivial}
Rather trivial (playing with definitions)

\begin{theorem}
$L \in PTIME$ iff for some $\M$ and some polynomial $p$
$$L= L(\M)= \EXACTTIME{\M}{p}.$$
\end{theorem}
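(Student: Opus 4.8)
The plan is to prove both implications directly by unwinding the definitions, since this is really a bookkeeping statement about time-bounded acceptance; there is no deep content to extract. I would first recall that $\EXACTTIME{\M}{p}=\{w \mid w \in \TIME{\M}{p(\length(w))}\}$ collects exactly the words that $\M$ accepts within $p(\length(w))$ steps, whereas $L(\M)$ is the (a priori only semi-decidable) language of all words eventually accepted.

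First I would treat the direction $(\Leftarrow)$. Suppose there are a machine $\M$ and a polynomial $p$ with $L=L(\M)=\EXACTTIME{\M}{p}$. I would decide $L$ as follows: on input $w$, simulate $\M$ on $w$ while maintaining a step counter, abort the simulation after $p(\length(w))$ steps, and accept iff $\M$ has reached an accepting state by then. Correctness uses only the equality $L=\EXACTTIME{\M}{p}$: if $w\in L$ then $\M$ accepts within the bound and we accept, and if $w\notin L$ then $w\notin\EXACTTIME{\M}{p}$, so $\M$ does not accept within $p(\length(w))$ steps and we reject. Simulating $p(\length(w))$ steps together with the counter bookkeeping costs time polynomial in $p(\length(w))$, hence polynomial in $\length(w)$, so $L\in\PTIME$.

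For the converse $(\Rightarrow)$, I would take a machine $\M$ witnessing $L\in\PTIME$: it halts within $q(\length(w))$ steps on every input for some polynomial $q$, accepting exactly the words of $L$ (using the paper's convention that a decision, once reached, is never revised). Then $L(\M)=L$, and since $\M$ always halts within $q(\length(w))$ steps, it accepts $w$ iff it accepts within $q(\length(w))$ steps; hence $\EXACTTIME{\M}{q}=L(\M)=L$. Taking $p=q$ yields $L=L(\M)=\EXACTTIME{\M}{p}$, as required.

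The only step that is not purely definitional — and thus the closest thing to an obstacle — is the polynomial overhead of the bounded simulation in the $(\Leftarrow)$ direction: one must observe that capping the simulated steps at $p(\length(w))$ and tracking the counter keeps the total running time polynomial in $\length(w)$ (the precise exponent depending on the multitape simulation conventions used). This is entirely standard, so I expect no real difficulty; everything else follows immediately from the definitions of $L(\M)$, $\TIME{\M}{\cdot}$, and $\EXACTTIME{\M}{\cdot}$.
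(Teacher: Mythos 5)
Your proof is correct and follows essentially the same route as the paper's own (very brief) argument: for $(\Rightarrow)$ take a polynomial-time decider $\M$ for $L$, so that acceptance coincides with acceptance within the time bound, and for $(\Leftarrow)$ observe that $\EXACTTIME{\M}{p}$ is decidable by a clocked simulation in time polynomial in $p$. Your write-up merely spells out the step counter and the polynomial simulation overhead that the paper leaves implicit.
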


\begin{proof}
$\Rightarrow$. Consider $M'$ that accepts $L$ in time $p$. Then $L= \EXACTTIME{\M}{p}$.

$\Leftarrow$. Assume $L= \EXACTTIME{\M}{p}$. Then just simulate $\M$ during time $p$ to determine whether $w \in L$ or not. If you prefer, $\EXACTTIME{\M}{p}$ is in $DTIME(poly(p))$, and as $L= L(\M)= \EXACTTIME{\M}{p}$ so is $L$.
\end{proof}

\begin{theorem}
$L \in \PSPACE$ iff for some $\M$ and some polynomial $p$
$$L=L(\M)= \EXACTSPACE{\M}{p}.$$
\end{theorem}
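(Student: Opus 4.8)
The plan is to prove both implications directly from the definitions, exactly parallel to the $\PTIME$ characterization stated just above, the only genuine subtlety being that a space bound (unlike a time bound) does not by itself bound the number of steps, so non-halting computations have to be handled explicitly.

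For the forward implication, suppose $L \in \PSPACE$. I would fix a Turing machine $\M$ that decides $L$ (and hence always halts) while using space bounded by some polynomial $p$ in $\length(w)$ on every input $w$; such a machine exists by definition of $\PSPACE$. Then $L = L(\M)$, and since every word accepted by $\M$ is by assumption accepted within space $p(\length(w))$, we get $L(\M) = \EXACTSPACE{\M}{p}$. This yields $L = L(\M) = \EXACTSPACE{\M}{p}$ immediately.

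For the converse, suppose $L = L(\M) = \EXACTSPACE{\M}{p}$ for some $\M$ and some polynomial $p$. It suffices to show $\EXACTSPACE{\M}{p} \in \PSPACE$, since the hypothesis then transfers this membership to $L$ (this last step being the trivial part, as in the $\PTIME$ proof above). I would build a machine $M'$ that, on input $w$, computes $s = p(\length(w))$ and simulates $\M$ on $w$ while refusing to use more than $s$ tape cells, accepting exactly when $\M$ accepts within this space budget.

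The hard part --- the one point where this is not completely routine --- is guaranteeing that $M'$ halts and stays within polynomial space even when $\M$ loops forever inside the space bound $s$. The standard fix is a configuration-counting argument: a configuration of $\M$ confined to $s$ cells is determined by the control state, the head position, and the tape contents, so there are at most $|Q|\cdot s\cdot|\Gamma|^{s}$ of them, a quantity exponential in $s$ and therefore encodable by a counter of $O(s)$ bits. I would have $M'$ maintain such a counter and reject as soon as it overflows (loop detected), as soon as $\M$ rejects, or as soon as $\M$ attempts to exceed $s$ cells; $M'$ accepts precisely when $\M$ accepts within space $s$. Then $w \in \EXACTSPACE{\M}{p}$ iff $M'$ accepts $w$, and $M'$ uses space $O(s)$, which is polynomial in $\length(w)$. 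Hence $\EXACTSPACE{\M}{p} \in \PSPACE$, and since $L = L(\M) = \EXACTSPACE{\M}{p}$, the characterization follows.
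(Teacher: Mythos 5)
Your proof is correct and follows essentially the same route as the paper's: the forward direction takes a polynomial-space decider for $L$ and observes $L=L(\M)=\EXACTSPACE{\M}{p}$, and the converse simulates $\M$ within the space budget to place $\EXACTSPACE{\M}{p}$ in $\PSPACE$. The only difference is that you make explicit the configuration-counting loop-detection argument that the paper's terse claim ``$\EXACTSPACE{\M}{p}$ is in $DSPACE(poly(p))$'' leaves implicit --- a worthwhile detail, but not a different approach.
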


\begin{proof}
$\Rightarrow$. Consider $M'$ that accepts $L$ in space $p$. Then $L= \EXACTSPACE{\M}{p}$.

$\Leftarrow$. Assume $L= \EXACTSPACE{\M}{p}$. Then just simulate $\M$ during time $p$ to determine whether $w \in L$ or not. If you prefer, $\EXACTSPACE{\M}{p}$ is in $DSPACE(poly(p))$, and as $L= L(\M)= \EXACTSPACE{\M}{p}$ so is $L$.
\end{proof}
}

\olivierpasimportant{
subsection{To come: length}

Fix some distance $d$.
We define a notion of length.

\begin{itemize}
\item $L(\M)$ is the set of words accepted by $\M$.
\item $\LENGTH{\M}{n}$ is the  set of words accepted by $\M$ at a length $\le n$.

\item $\EXACTLENGTH{\M}{f}$ is the set of words accepted by $\M$ in length $f$: 
	$$\EXACTLENGTH{\M}{f} = \{w | w\in \LENGTH{\M}{f(|w|}\}.$$
\end{itemize}
}

\olivierpasimportant{\subsection{To come? time perturbations}

Olivier thinks it is possible to introduce the concept of (time) perturbed Turing machine $\M^{n}$.

\begin{itemize}
\item $\PERTURBEDT{\M}{n}$ corresponds to the associated language
\item $\PERTURBEDTIME{\M}{f}$ is the set of words accepted by $\M$ with time perturbation $f$: 
	$$\PERTURBEDTIME{\M}{f} = \{w | w\in \PERTURBEDT{\M}{f(|w|)}\}.$$
\end{itemize}
}

\olivierpasimportant{
subsection{To come: space perturbations}

Then Asarin and Bouajjani introduces the concept of (space) perturbed Turing machine $\M_{n}$.
	
\begin{itemize}
\item  $\PERTURBEDS{\M}{n}$ corresponds to the associated language.
\item $\PERTURBEDSPACE{\M}{f}$ is the set of words accepted by $\M$ with space perturbation $f$: 
	$$\PERTURBEDSPACE{\M}{f} = \{w | w\in \PERTURBEDS{\M}{f(|w|)}\}.$$
\end{itemize}

One of our result is:

\begin{theorem}
$L \in \PSPACE$ iff for some $\M$ and some polynomial $p$
$$L= L(\M)=\PERTURBEDSPACE{\M}{p}.$$
\end{theorem}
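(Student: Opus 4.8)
The plan is to prove the two implications separately, using Lemma~\ref{lemmadirectionun} for the easy ($\Leftarrow$) direction and a ``the noise is never read'' argument for the harder ($\Rightarrow$) direction.

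For ($\Leftarrow$), assume $L=L(\M)=\PERTURBEDSPACE{\M}{p}$. To decide $w\in\PERTURBEDSPACE{\M}{p}$ I would, by definition, decide whether $w\in\PERTURBEDS{\M}{p(\length(w))}$. Setting $n=p(\length(w))$, Lemma~\ref{lemmadirectionun} provides a decision procedure for $\PERTURBEDS{\M}{n}$ running in space $\mathrm{poly}(n)$; since $p$ is a polynomial, $\mathrm{poly}(p(\length(w)))$ is again polynomial in $\length(w)$, so the whole test uses space polynomial in $\length(w)$. Hence $\PERTURBEDSPACE{\M}{p}\in\PSPACE$ and therefore $L\in\PSPACE$. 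The only point to check carefully is that the space bound of Lemma~\ref{lemmadirectionun} is uniform in $n$, which it is, since it comes from reachability in the explicitly described graph $G_n$ of the proof of Theorem~\ref{LspaceCoRec}.

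For ($\Rightarrow$), assume $L\in\PSPACE$ and fix a TM $\M$ that always halts and decides $L$ in space bounded by a polynomial $q$; then $L(\M)=L$. I will exhibit a polynomial $p$ with $L=\PERTURBEDSPACE{\M}{p}$. The inclusion $L(\M)\subseteq\PERTURBEDS{\M}{n}$ always holds, since perturbed machines only gain behaviours, so it suffices to establish the reverse inclusion once the perturbation radius exceeds the used space. Concretely I would take $p\ge q+2$ and argue: on input $w$ the head of $\M$ stays inside a contiguous window $W$ of at most $q(\length(w))$ cells; if the perturbation radius is $n=p(\length(w))\ge q(\length(w))$, then every cell of $W$ is always at distance strictly less than $n$ from the current head position, so the perturbation, which may only alter cells at distance $\ge n$, never touches a cell that the machine reads. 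By induction on the number of steps, any run of $\M_n$ on $w$ then agrees with the unique run of $\M$ on all cells of $W$, and in particular accepts iff $\M$ accepts. Thus $\PERTURBEDS{\M}{p(\length(w))}$ accepts $w$ iff $w\in L(\M)$, giving $\PERTURBEDSPACE{\M}{p}=L(\M)=L$.

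The main obstacle is this inductive claim in the converse: one must verify that the perturbed (nondeterministic) machine cannot ``escape'' the window $W$ by first corrupting a far-away cell and later walking onto it. This is handled by noting that the head moves only one cell per step, so to reach a corrupted cell at distance $\ge n$ it would first have to traverse the cells of $W$, all of which carry the correct unperturbed contents and hence force exactly $\M$'s trajectory, which never leaves $W$; the window and the distance-$n$ frontier shift together with the head, so corrupted cells stay out of reach throughout the computation. Establishing this carefully, together with the observation that the composition of $\mathrm{poly}$ with $p$ stays polynomial, completes the proof.
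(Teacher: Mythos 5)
Your proposal is correct and follows essentially the same route as the paper: the ($\Leftarrow$) direction invokes Lemma~\ref{lemmadirectionun} and closes under composition with the polynomial $p$, and the ($\Rightarrow$) direction takes an always-halting machine deciding $L$ in polynomial space $q$ and chooses $p\ge q+2$ so that the perturbation can only touch cells the machine never reads, giving $\PERTURBEDS{\M}{p(\length(w))}\subseteq L(\M)$ while the reverse inclusion is automatic. Your write-up merely makes explicit the window/induction argument that the paper leaves implicit in its one-line claim of that inclusion.
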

}

\olivier{Ce que avait gardé:}

\olivierpasimportant{ \begin{itemize}

%
%
%
%
%
%
\item Du coup,  

\begin{equation}
\mbox{\begin{minipage}{10cm} pour que $P(M)$ soit dans $PSPACE$, 
 il  suffit que le graphe n'aie pas trop de sommets, et que $n$ reste bien pas trop énorme.
\end{minipage}}
\end{equation}

(pas trop, c'est polynomial essentiellement)
\end{itemize} }

\olivierpasimportant{ \begin{itemize}

\item 
Et ca, finalement, c'est un résultat que je treouve super joli. On sait quand est que c'est dans $PSPACE$. Et c'est quoi un langage $PSPACE$. 

\item Et d'ailleurs le nombre de cubes, ca mesure un truc: la complexité en espace de la chose. Et changer la fa\c con de construire les cubes, ne change pas trop de choses, tant qu'on a bien les arguments plus haut. 

 \item Meme chose pour le temps: mais la propriété sur les graphes pour le temps, elle a avoir avec la longueur du chemin dans le graphe: un modèle de perturbation, ou on fait n'importe quoi après un certain temps, pas un certain espace (= certain n).

\end{itemize}
}


\end{document}